\newcommand{\theoremlike}[2]{\par\medskip\penalty-250\refstepcounter{theorem}{{\bfseries\noindent#2
\ref{#1}.}}}
\newcommand{\thmhelperpre}[2]{\theoremlike{#1}{#2}}
\newcommand{\thmhelperpost}{\par\medskip}
\newenvironment{reflemma}[1]{\thmhelperpre{#1}{Lemma}}{\thmhelperpost}
\newenvironment{refproposition}[1]{\thmhelperpre{#1}{Proposition}}{
\thmhelperpost }
\newcommand{\appref}[1]{Appendix~\ref{#1}}
\newcommand{\myspacebig}{\vspace*{-1em}}
\newcommand{\myspacesmall}{\vspace*{-.6em}}
\newcommand{\cylinderSet}{R^{s_0,\ldots,s_n}_{I_0,\ldots,I_{n-1}}}
\newcommand{\Nset}{\mathbb{N}}
\newcommand{\Nseto}{\Nset_0}
\newcommand{\Rset}{\mathbb{R}}
\newcommand{\Rsetp}{\mathbb{R}_{>0}}
\newcommand{\Rsetpo}{\mathbb{R}_{\ge 0}}
\renewcommand{\vec}[1]{\mathbf{#1}}
\newcommand{\de}[1]{\mathit{d#1}}
\newcommand{\eps}{\varepsilon}
\newcommand{\dist}{\mathcal{D}}
\newcommand{\probm}{\mathrm{Pr}}
\newcommand{\expred}{E}
\newcommand{\expected}{\mathbb{E}}
\newcommand{\constFactor}{N}
\newcommand{\size}[1]{||#1||}
\newcommand{\fdC}{C}
\newcommand{\states}{S}
\newcommand{\sta}{s}
\newcommand{\goalStates}{G}
\newcommand{\initstate}{\sta_{in}}
\newcommand{\gen}{Q}
\newcommand{\prob}{\mathrm{P}}
\newcommand{\allact}{\states_{\!\mathrm{fd}}}
\newcommand{\paramspace}{D}
\newcommand{\paramspacePO}{D(\dmin,\dmax,\equiv)}
\newcommand{\paramspacePOShort}{D}
\newcommand{\trans}{\mathrm{F}}
\newcommand{\delays}{d}
\newcommand{\timeouts}{\mathbf{d}}
\newcommand{\nodel}{\infty}
\newcommand{\nxt}{\mathrm{next}}
\newcommand{\prtr}[4]{\probm_{\rightarrow}(#1,#2,#3;#4)}
\newcommand{\allruns}{\Omega}
\newcommand{\sigmafield}{\mathcal{F}}
\newcommand{\rateRew}{\mathcal{R}}
\newcommand{\impRew}{\mathcal{I}}
\newcommand{\impRewExp}{\mathcal{I}_\prob}
\newcommand{\impRewFix}{\mathcal{I}_\trans}
\newcommand{\costFdC}{\mathit{Cost}}
\newcommand{\expectedRewExp}{\mathcal{J}_\gen}
\newcommand{\expectedRewFix}{\mathcal{J}_\trans}
\newcommand{\dmin}{\underline{d}}
\newcommand{\dmax}{\overline{d}}
\newcommand{\sink}{K}
\newcommand{\sinkValue}{V}
\renewcommand{\psi}{\text{Pois}}
\newcommand{\algbelow}{\vspace{-0em}}
\newcommand{\algabove}{\vspace{-0em}}
\newcommand{\computedOutcome}{x}
\newcommand{\computedValue}{x}
\newcommand{\minPst}{minP}
\newcommand{\maxRew}{maxR}
\newcommand{\minRew}{minR}
\newcommand{\first}[2]{\Diamond_{#1}^{#2}}
\newcommand{\Value}[1]{\mathit{Val}\left[#1\right]}
\newcommand{\CostBound}[1]{\mathit{Bound}\left[#1\right]}
\newcommand{\contMdp}{\mathcal{M}}
\newcommand{\discMdp}{\contMdp_\eps}
\newcommand{\mactsDisc}{\macts_\eps}
\newcommand{\guess}{\timeouts}
\newcommand{\guessMdp}{\contMdp_\guess}
\newcommand{\mtranGuess}{\mtran_\guess}
\newcommand{\mcostGuess}{\mcost_\guess}
\newcommand{\matchmdp}{\mdp^\star}
\newcommand{\matchtimeouts}{\timeouts^\star}
\newcommand{\otherError}{\gamma}
\newcommand{\brambory}{S'}
\newcommand{\todoc}[2]{
}
\newcommand{\mverts}{V}
\newcommand{\mlocs}{\mverts}
\newcommand{\vtx}{v}
\newcommand{\macts}{\mathit{Act}}
\newcommand{\mtran}{T}
\newcommand{\mcost}{\text{\euro}} 
\newcommand{\msteps}{\#} 
\newcommand{\mcostd}{A}
\newcommand{\mdp}{\mathcal{M}}
\newcommand{\hist}{w}
\newcommand{\Runs}{\mathit{Run}}
\newcommand{\run}{\rho}
\newcommand{\initvertex}{{\vtx_{in}}}
\newcommand{\ctrun}{\omega}
\newcommand{\entrysta}{\sta}
\newcommand{\revisit}[1]{\#^\goalStates_{#1}}
\newcommand{\discconst}{D_1}
\newcommand{\cutconst}{D_2}
\newcommand{\maxValue}[1]{\overline{\mathit{Val}}\left[#1\right]}
\newcommand{\maxvalconst}{M}
\newcommand{\POconst}{B}
\begin{document}

\title{
Optimizing Performance of Continuous-Time Stochastic Systems using Timeout Synthesis\thanks{
The research leading to these results has received funding from the People Programme (Marie Curie Actions) of the European Union's Seventh Framework Programme (FP7/2007-2013) under REA grant agreement $\text{n}^\circ$ [291734].
This work is partly supported by the German Research Council (DFG) as part of the Transregional Collaborative Research Center AVACS (SFB/TR 14), by the EU 7th Framework Programme under grant agreement no. 295261 (MEALS) and 318490 (SENSATION), by the Czech Science Foundation, grant No.~15-17564S, and by the CAS/SAFEA International Partnership Program for Creative Research Teams.
	}}

\author{
Tom\'a\v{s} Br\'azdil\inst{1}
\and \v{L}ubo\v{s} Koren\v{c}iak\inst{1}
\and Jan Kr\v{c}\'al\inst{2}
\and Petr Novotn\'{y}\inst{3}
\and Vojt\v{e}ch~{\v{R}}eh\'ak\inst{1}}

\institute{
	Faculty of Informatics, Masaryk University, Brno, Czech Republic
		\texttt{\{brazdil,\,korenciak,\,\,rehak\}\!@fi.muni.cz}
	\and 
	Saarland University -- Computer Science, Saarbr\"ucken, Germany \\
		\texttt{krcal@cs.uni-saarland.de}
	\and
	IST Austria, Klosterneuburg, Austria \\
		\texttt{petr.novotny@ist.ac.at}
}

\maketitle

\begin{abstract}
We consider parametric version of fixed-delay continuous-time Markov chains (or equivalently deterministic and stochastic Petri nets, DSPN)
where fixed-delay transitions are specified by parameters, rather than concrete values. Our goal is to synthesize values of these parameters that, for a given cost function, minimize 
expected total cost incurred before reaching a given set of target states.
We show that under mild assumptions, optimal values of parameters can be effectively approximated using translation to a Markov decision process (MDP) whose actions correspond to discretized values of these parameters. 
To this end we identify and overcome several interesting phenomena arising in systems with fixed delays.

\end{abstract}

\section{Introduction}
\label{sec-intro}

Continuous-time Markov chains (CTMC) are a fundamental model of stochastic
systems with discrete state-spaces that evolve in continuous-time. Several
higher level modelling formalisms, such as stochastic Petri nets and
stochastic process algebras, use CTMC as their semantics.  As such, CTMC
have been applied in performance and dependability analysis in various
contexts ranging from aircraft communication protocols (see,
e.g.~\cite{Tiassou-phd}) to models of biochemical systems (see,
e.g.~\cite{CTMC-biochem}).

There are several equivalent definitions of CTMC
(see,~e.g.~\cite{Feller:book-I,Norris:book}). We may 
define
a~(uniformized, finite-state) CTMC to consist of a finite set of states $S$
coupled with a common rate $\lambda$ and a stochastic matrix $\prob\in
\Rsetpo^{S\times S}$ specifying probabilities of transitions between 
states. An~execution starts in a given initial state. In every step, the CTMC waits
for a duration that is selected randomly according to the exponential
distribution with the rate
$\lambda$, 
and then moves to a state $s'$ randomly chosen with probability
$\prob(s,s')$.

The practical interpretation of the above semantics is that in every state
the system waits for an~event to occur and then reacts by changing its
state.  A typical example is a model of a simple queue to which new
customers come in random intervals and are also served in random intervals.
However, in practice, events are usually not exponentially distributed, and,
in fact, their distributions may be quite far from being exponential. To
deal with such events, phase-type approximation technique~\cite{Neuts:book}
is usually applied. Unfortunately, as already noted in~\cite{Neuts:book},
some distributions cannot be efficiently fit with phase-type
approximation. In particular, degenerate distributions of events with fixed
delays, i.e., events that occur after a fixed amount of time with
probability 1, form a distinguished example of this phenomenon (for more
details see~\cite{nas-EPEW}).  However, as events with fixed delays play a
crucial role in many systems, especially in communication
protocols~\cite{TTC-book}, time-driven real-time
scheduling~\cite{scheduling-in-real-time}, etc., they should be handled
faithfully in modelling and analysis.

Inspired by deterministic and stochastic Petri nets~\cite{marsan1987petri}
and delayed CTMC~\cite{guet2012delayed} with at most one non-exponential 
transition enabled in any time, we study fixed-delay CTMC (fdCTMC),
the CTMC extended with {\em fixed-delay transitions}. More concretely, we
specify a set of states $\allact\subseteq S$ where fixed-delay transitions
are enabled and add a stochastic matrix $\trans\in \Rsetpo^{\allact\times
  S}$ specifying probabilities of fixed-delay transitions between states. In
addition, we consider a {\em delay function} $\timeouts:\allact\rightarrow
\Rsetp$.
The semantics can be intuitively described as follows. Imagine a~CTMC
extended with an alarm clock. At the beginning of an execution, the alarm
clock is turned off and the process behaves as the original CTMC. Whenever a
state $s$ of $\allact$ is visited and the alarm clock is off at the time, it
is turned on and set to ring after $\timeouts(s)$ time units. Subsequently,
the process keeps behaving as the original CTMC until either a state of
$S\smallsetminus \allact$ is visited (in which case the alarm clock is
turned off), or the alarm clock rings in a state $s'$ of $\allact$. In the
latter case, a fixed-delay transition takes place, which means that the
process changes the state randomly according to the distribution
$\trans(s',\cdot)$, and the alarm clock is either turned off or newly set
(when entering a state of $\allact$).

In most practical applications mentioned above, fixed-delay transitions are
determined by the design of the system and often strongly influence
performance of the system. Indeed, both timeouts in network protocols as
well as scheduling intervals in real-time systems directly influence
performance of the respective systems and their manual setting usually
requires considerable effort and expertise.  This motivates us to consider
the fixed-time delays $\timeouts(s)$ as {\em free parameters} of the model,
and develop techniques for their optimization with respect to a given
performance measure.

\begin{example} \label{ex:protocols}
We demonstrate the concept on two different models of sending \emph{one} segment of data in the \emph{alternating bit protocol}. 
In the protocol, each segment of data is retransmitted until an acknowledgement is received. 
The delay between retransmissions has impact on throughput of the protocol as well as on network congestion.
In the simpler model below on the left, the data is sent in state
$init$. The exp-delay transitions, drawn as solid arrows, model message loss
(with probability $0.2$) and delivery (with probability $0.8$). 
For simplicity we use rate 1 and omit self loops of exponential transitions in all examples. The
fixed-delay transitions, drawn as dashed arrows, cause the data to be
retransmitted. Note that whenever the data is retransmitted, the previous
message with the data is canceled in this model.

\myspacesmall
  \begin{center}
  \begin{tikzpicture}[outer sep=0.1em, xscale=1, yscale=1]

      \tikzstyle{fixed}=[dashed,->]; \tikzstyle{fixed label}=[font=\small];
      \tikzstyle{exp}=[->,rounded corners,,>=stealth]; \tikzstyle{exp rate}=[font=\small];
      \tikzstyle{loc}=[draw,circle, minimum size=2em,inner sep=0.1em];
      \tikzstyle{accepting}+=[outer sep=0.1em]; \tikzstyle{loc
        cost}=[draw,rectangle,inner sep=0.07em,above=6, minimum
      width=0.8em,minimum height=0.8em,fill=white,font=\footnotesize];
      \tikzstyle{trans cost}=[draw,rectangle,minimum width=0.8em,minimum
      height=0.8em,solid,inner sep=0.07em,fill=white,font=\footnotesize];
      \tikzstyle{prob}=[inner sep=0.03em, auto,font=\footnotesize];

\begin{scope}[yshift=-6cm]

      \node[loc] (s) at (0,0) {${init}$}; 
      \node[loc] (u) at (2.5,0) {${lost}$}; 
      \node[loc, accepting] (t) at (0,-1.12) {${OK}$};

      \path[->,>=stealth] ($(s)+(-0.7,0)$) edge (s);

\path[exp] (s) edge node[prob] {0.2} 
(u); 

\path[exp] (s) edge node[prob,pos=0.4] {0.8}
(t);

\path[bend right=45,fixed] (u) edge 
(s);
\path[loop above,fixed,looseness=5] (s) edge 
(s);

\end{scope}
\begin{scope}[yshift=-6cm,xshift=7.5cm]

      \node[loc] (i2) at (0,0) {${init}$}; 
      \node[loc] (l2) at (2.5,0) {${lost}$}; 
      \node[loc] (c2) at (-2.5,0) {${two}$}; 
      \node[loc, accepting] (t2) at (0,-1.12) {${OK}$};

      \path[->,>=stealth] ($(i2)+(-0,.7)$) edge (i2);

\path[exp] (i2) edge node[prob] {0.2} 
(l2); 

\path[exp] (i2) edge node[prob,pos=0.4] {0.8}
(t2);
\path[exp] (c2) edge node[prob] {0.2} 
(i2);

\draw[exp] (c2) |- node[prob,pos=0.4] {0.8}
(t2);

\path[bend right=45,fixed] (l2) edge 
(i2);
\path[bend right=45,fixed] (i2) edge 
(c2);
\path[loop above,fixed,looseness=5] (c2) edge 
(c2);

\end{scope}

\end{tikzpicture}
  \end{center} 	
\myspacesmall

\noindent
The more faithful model on the right models up to two messages with the data
segment being delivered concurrently. 
For choosing an optimal delay between retransmissions, we need to formalize how to express performance of the protocol.
\end{example}

To express performance properties, we use standard cost (or reward)
structures (see,~e.g.~\cite{Puterman:book}) that assign numerical rewards to
states and transitions. More precisely, we consider the following three cost
functions: $\rateRew: \states \to \Rsetpo$, which assigns a cost rate
$\rateRew(s)$ to every state $s$ so that the cost $\rateRew(s)$ is paid for
every unit of time spent in the state $s$, and functions $\impRewExp,
\impRewFix: \states \times \states \to \Rsetpo$ that assign to each
exp-delay and fixed-delay transition, respectively, the cost that is
immediately paid when the transition is taken. 
Note that $\rateRew$ is
usually used to express time spent in individual states, while the other two
cost functions are used to quantify the difficulty of dealing with events
corresponding to transitions.  The performance measure itself is the {\em
  expected total cost incurred before reaching a given set of states $G$
  starting in a given initial state $\initstate$}. For this moment, let us
denote this measure by $\expred_{\timeouts}$, stressing the fact that it depends
on the delay function $\timeouts$ which is the only variable quantity in our
optimization task:
\begin{problem}[Cost optimization]\label{prob:optim}
For a subspace of delay functions $\paramspace \subseteq (\Rsetp)^{\allact}$ and a given approximation error $\eps > 0$, compute a delay function $\timeouts \in \paramspace$ that is \emph{$\eps$-optimal within $\paramspace$}, i.e.
$$
\left\lvert \;
\inf_{\timeouts'\in\paramspace} \expred_{\timeouts'} - \expred_{\timeouts}
\; \right\rvert
\;\; < \;\; \eps.$$
\end{problem}

\myspacebig
\paragraph{Example~\ref{ex:protocols} (cont.)}
We can model the expected cost of sending one data segment in our examples as follows.
To take into account the expected time of data delivery, we set the cost rate of each state to, e.g., $1$.
To take into account the expected number of retransmissions, we set the cost of each fixed-delay transition, e.g., to $3$. The cost of each exp-delay transition is set to $0$.
Now the goal for the left model is to find a delay $\timeouts(init)$
optimizing the expected total cost incurred before reaching the state
$OK$. Note that $\timeouts$ is never set in the state $lost$.
The goal is the same for the model on the right where $\timeouts$ is set also
in the state $two$. Note that it makes no sense to synthesize different
delays $\timeouts(init)$ and $\timeouts(two)$ as the states $init$ and $two$
are indistinguishable in the implementation of the protocol. Therefore, we
need to require that the synthesised delay function satisfies
$\timeouts(init) = \timeouts(two)$.
\vspace{0.2em}

\paragraph{Our contribution:} We consider fixed-delay CTMC as a natural extension of CTMC suitable for algorithmic synthesis of fixed timeouts. Upon this model, we investigate algorithmic complexity of the cost optimization problem. This is, to the best of our knowledge, the most general attempt at fully automatic synthesis of timeouts in continuous-time stochastic systems. 
We provide algorithms for solving the following two special cases of the cost optimization problem  under the assumption that the reward rate $\rateRew(\sta)$ is {\em positive} in every state $s$:
\begin{enumerate}
\item \textbf{Unconstrained optimization} where we demand $\paramspace =
  (\Rsetp)^{\allact}$, i.e. the set of all delay functions. We solve this problem by reduction to a finite Markov decision process (MDP) whose actions correspond to {\em discretized} (i.e. rounded onto a~finite mesh) values of delays in the individual states, and then apply standard polynomial time algorithms for synthesis of the delays (note that a brute force search through a "discretized" subset of $D$ would be exponentially worse).
   The most non-trivial part is to prove that the delays may be
   discretized. We show that a na\"ive rounding of a near-optimal delay
   function may cause arbitrarily high {\em absolute} error. Our solution,
   based on rather non-trivial insights into the structure of fdCTMCs, avoids
   this obstacle by identifying "safe" delay functions that may be rounded
   with an error bounded (exponentially) in the size of the system. This
   leads to an exponential time algorithm for solving the cost optimization
   problem.
  
\item \textbf{Bounded optimization under partial observation} where we introduce bounds $\dmin,\dmax>0$ together with an equivalence relation $\equiv$ on $\allact$ and demand $D$ to be the set of all delay functions $\timeouts$ satisfying the following conditions:
\begin{itemize}
\item $\dmin \leq \timeouts(s) \leq \dmax$ for all $s\in \allact$,
\item $\timeouts(s)=\timeouts(s')$ whenever $s\equiv s'$.
\end{itemize}
Like in the Example~\ref{ex:protocols}, the equivalence $\equiv$ can be
used to hide information about detailed internal structure of states which
is often needed in practical applications. 
In this paper, we show that
the bounded optimization under partial observation can be solved in time doubly exponential in $\dmax$ and exponential in all other parameters.

We also consider the
corresponding approximate threshold variant: For a~given $\computedValue$ decide whether
$\inf_{\timeouts'\in\paramspace} \expred_{\timeouts'}>\computedValue+\varepsilon$, or
$\inf_{\timeouts'\in\paramspace} \expred_{\timeouts'}<\computedValue-\varepsilon$ (for
$\inf_{\timeouts'\in\paramspace} \expred_{\timeouts'}\in
[\computedValue-\varepsilon,\computedValue+\varepsilon]$ an arbitrary answer may be given). We show
that this bounded optimization problem is NP-hard, thus a polynomial time solution of the bounded optimization under partial observation is unlikely.

The assumption that all delays are between fixed thresholds $\dmin$ and
$\dmax$ is crucial in our approach. As we discuss in
Section~\ref{sec:results-multi}, without this assumption the optimization
under partial observation becomes much trickier and we leave its solution
for future work.
\end{enumerate}

\myspacebig
\paragraph{Related work.}

Various forms of continuous-time stochastic processes with fixed-delay transitions have already been studied, see e.g. \cite{marsan1987petri,DBLP:dblp_conf/apn/ChoiKT93,ACD:model-checking-real-time,DBLP:conf/concur/BrazdilKKR11,Chirstel_TA_CTMC}. In particular, as noted above, our definition of fdCTMC is closely related to the original definition of deterministic and stochastic Petri nets~\cite{marsan1987petri}. 
Papers on verification of continuous-time systems with timed automata (TA) specifications~\cite{CHKM:CTMC-quant-timed,Chirstel_TA_CTMC,BBBBG:One-clock-almost-sure} are also related to our work as the constraints in timed automata resemble fixed-delay transitions.
None of these works, however, considers synthesis of fixed-delays (or other parameters).

Parameter synthesis techniques have been developed for several models, such as parametric timed automata~\cite{Alur-TA-params}, parametric one-counter automata~\cite{hkow-concur09}, parametric Markov models~\cite{DBLP:journals/sttt/HahnHZ11}, etc.
In continuous-time stochastic systems, \cite{Katoen-param-synt,CTMC-biochem} study synthesis of rates in CTMC which is a problem orthogonal to timeouts.
Furthermore, optimal control of continuous-time (Semi)-Markov decision processes~\cite{TBR-in_CTMDP-Neuhausser,Prob_reach__param_DTMC,BKKKR:GSMP-games-TA,DBLP:journals/iandc/BrazdilFKKK13}
can be viewed as synthesis of \emph{discrete} parameters in continuous-time systems. 

The problem of synthesizing timeouts as \emph{continuous}  parameters has been studied
in variety of engineering contexts such as vehicle communication systems 
\cite{cars-timeout-synthesis} and avionic subsystems
\cite{ARINC_629-analysis,Tiassou-phd}. 
To the best of our knowledge, no generic framework for synthesis of timeouts in stochastic continuous-time systems has been developed so far. In theoretical literature, only simpler cases have been addressed. For instance
\cite{CRV:real-time-testing-synthesis,WDQ:real-time-test-execution} consider a finite test case, a sequence of input and output actions, and synthesize times for input actions that maximize the probability of executing this acyclic sequence. Allowing cycles in fdCTMC makes the timeout synthesis problem much more demanding, e.g., due to potentially unbounded number of stochastic events between timeouts.
Instead of static timeouts, \cite{KNSS:PTA-scheduler-synthesis,JT:PTMDP-learning} consider synthesis of ``dynamic'' timeouts where the delay is chosen based on the history of the execution so far. Consequently, the delay can be changed \emph{while it is elapsing} whenever stochastic events occur. This makes it much simpler to solve and also adequate for a different application domain.

Section~\ref{sec-prelims} introduces fixed-delay CTMC and cost structures. Section~\ref{sec:results-single} and Section~\ref{sec:results-multi} are devoted to unconstrained optimization and bounded optimization under partial observation, respectively. Due to space constraints, full proofs are in~\cite{BKKNR:new-arxiv}.

\section{Preliminaries}
\label{sec-prelims}

We use $\Nseto$, $\Rsetpo$, and $\Rsetp$ to denote the set of all
non-negative integers, non-negative real numbers, and positive real numbers,
respectively. Furthermore, for a countable set $A$, we denote by $\dist(A)$ the set of discrete probability distributions over $A$, i.e. functions $\mu: A \to \Rsetpo$ such that $\sum_{a\in A} \mu(a) = 1$. Encoding size of an object $O$ is denoted by $\size{O}$.

\begin{definition}
  A \emph{fixed-delay CTMC structure} (fdCTMC structure) $\fdC$ is a tuple
  $(\states, \lambda, \prob, \allact, \trans, \initstate)$ where
  \begin{itemize}
  \item $\states$ is a finite set of states,
  \item $\lambda \in \Rsetp$ is a (common) rate of exp-delay transitions,
  \item $\prob: \states \times \states \to \Rsetpo$ is a stochastic matrix specifying probabilities of exp-delay transitions,
  \item $\allact \subseteq \states$ is a set of states where fixed-delay transitions are enabled,
  \item $\trans: \allact \times \states \to \Rsetpo$ is a stochastic matrix specifying probabilities of fixed-delay transitions, and
  \item $\initstate \in \states$ is an initial state.
  \end{itemize}
A {\em fixed-delay CTMC} (fdCTMC) is a pair
$\fdC(\timeouts)=(\fdC,\timeouts)$ where $\fdC$ is a fdCTMC structure and
$\timeouts: \allact \to \Rsetp$ is a {\em delay function} which to every
state where fixed-delay transitions are enabled  assigns a~positive
delay.
\end{definition}

A \emph{configuration} of a fdCTMC is a pair
$(\sta,\delays)$ where $\sta\in\states$ is the current state and $\delays \in \Rsetp \cup \{\nodel\}$ is the remaining time before a fixed-delay transition takes place. We assume that $d=\infty$ iff  $\sta\not\in \allact$. To simplify notation, we similarly extend any delay function $\timeouts$ to all states $\states$ by assuming $\timeouts(\sta) = \infty$ iff $\sta \not\in\allact$.

An execution of $\fdC(\timeouts)$ starts in the configuration $(\sta_0,\delays_0)$ with $\sta_0 = \initstate$ and $\delays_0 = \timeouts(\initstate)$. In every step, assuming that the current configuration is $(\sta_i,\delays_i)$, the fdCTMC waits for some time $t_i$ and then moves to a~next configuration $(s_{i+1},d_{i+1})$ determined as follows:
\begin{itemize}
\item First, a waiting time $t_{exp}$ for exp-delay transitions from $s_i$ is chosen randomly according to the exponential distribution with the rate $\lambda$.
\item Then 
	\begin{itemize}
	\item If $t_{exp}<d_i$, then an exp-delay transition occurs, which
          means that
          $t_i=t_{exp}$, $s_{i+1}$ is chosen randomly with probability $P(s_i,s_{i+1})$, and $d_{i+1}$ is determined by
			$$\delays_{i+1} = \begin{cases}
	\delays_i - t_{exp}& \text{if $\sta_{i+1} \in \allact$ and $\sta_i\in \allact$ \small{(previous delay remains)},} \\
	\timeouts(\sta_{i+1}) & \text{if $\sta_{i+1} \not\in \allact$ or $\sta_i \not\in\allact$ \small{(delay is newly set or disabled)}.} 
	\end{cases}
	$$
	\item If $t_{exp}\geq d_i$, then a fixed-delay transition occurs,
          which means that 
	$t_i=d_i$, $s_{i+1}$ is chosen randomly with probability
        $F(s_i,s_{i+1})$, and $d_{i+1}=\timeouts(\sta_{i+1})$.
	\end{itemize}
\end{itemize}

This way, the execution of a fdCTMC forms a \emph{run},
an alternating sequence of configurations and times $(\sta_0,\delays_0)t_0 (\sta_1,\delays_1) t_1 \cdots$. The probability measure $\probm_{\fdC(\timeouts)}$ over runs of $\fdC(\timeouts)$ is formally defined in~\appref{subsec:prob-spaceFdCTMC}.

\paragraph{Total cost before reaching a goal} 
To allow formalization of performance properties, we enrich the model in a standard way (see,~e.g.~\cite{Puterman:book}) with costs (or rewards).
A \emph{cost structure} over a fdCTMC structure $\fdC$ with state space $\states$ is a
tuple $\costFdC = (\goalStates, \rateRew, \impRewExp, \impRewFix)$ where $\goalStates\subseteq S$ is a set of goal states, $\rateRew:
\states \to \Rsetpo$ assigns a cost rate to every state, and $\impRewExp,
\impRewFix: \states \times \states \to \Rsetpo$ assign an impulse cost to
every exp-delay and fixed-delay transition, respectively.
Slightly abusing the notation, we denote by $\costFdC$ also the random variable assigning to each run $\ctrun = (\sta_0,\delays_0) t_0 \cdots$ the \emph{total cost before reaching $\goalStates$} (in at least one transition), given by
$$
\costFdC(\ctrun)
= \begin{cases}
\sum_{i=0}^{n-1} \left(
t_i\cdot\rateRew(s_i) +
\impRew_i(\ctrun)
\right)
 & \text{for minimal $n>0$ such that $s_n\in\goalStates$,} \\
\infty & \text{if there is no such $n$,}
\end{cases}
$$ 
where $\impRew_i(\ctrun)$ equals $\impRewExp(s_i,s_{i+1})$ for an exp-delay transition, i.e. when $t_i < \delays_i$, and
equals $\impRewFix(s_i,s_{i+1})$
for a fixed-delay transition, i.e. when $t_i = \delays_i$.

We denote the expectation of $\costFdC$ with respect to $\probm_{\fdC(\timeouts)}$ simply by $\expred_{\fdC(\timeouts)}$, or by $\expred_{\fdC(\timeouts)}^\costFdC$ when $\costFdC$ is not clear from context. 
Our aim is to (approximatively) minimize the expected cost, i.e. to find a delay function $\timeouts$ such that $\expred_{\fdC(\timeouts)} \leq \Value{\fdC} + \eps$ where $\Value{\fdC}$ denotes the optimal cost $\inf_{\timeouts'} \expred_{\fdC(\timeouts')}$.

\paragraph{Non-parametric analysis}
Due to \cite{DBLP:journals/pe/Lindemann93}, we can easily analyze a fdCTMC where the delay function is \emph{fixed}. Hence, both the expected total cost before reaching $\goalStates$ and the reaching probabilities of states in $\goalStates$ can be efficiently approximated. 
\begin{proposition}\label{prop:compute-cost}
There is a polynomial-time algorithm that for a given fdCTMC
$\fdC(\timeouts)$, cost structure $\costFdC$ with goal states $\goalStates$,
and an approximation error $\eps > 0$ computes $\computedOutcome \in
\Rsetp \cup \{\infty\}$ and $p_s \in \Rsetp$, for each $s\in \goalStates$, such that 
$$\left\lvert \; \expred_{\fdC(\timeouts)} - \computedOutcome \;
\right\rvert  < \eps \quad \text{ and } \qquad
\left\lvert \probm_{\fdC(\timeouts)}(\first{G}{s}) - p_s \right\rvert < \eps$$
where $\first{G}{s}$ is the set of runs that reach $s$ as the first state of $\goalStates$ (after at least one transition).
\end{proposition}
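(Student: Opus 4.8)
The plan is to reduce the (fixed-delay, hence Markov regenerative) process $\fdC(\timeouts)$ together with the cost structure $\costFdC$ to a \emph{finite absorbing discrete-time Markov chain} (DTMC) equipped with one-step costs, so that the expected total cost before reaching $\goalStates$ equals the expected accumulated cost before absorption, and the probability of hitting $s\in\goalStates$ first equals the absorption probability in $s$; this chain is then solved by standard linear algebra. Identify each $\sta\in\states$ with the ``fresh'' configuration in which the alarm clock is off (if $\sta\notin\allact$) or has just been set to $\timeouts(\sta)$ (if $\sta\in\allact$). The key observation, which is exactly what underlies Lindemann's analysis of DSPNs with at most one deterministic transition enabled at a time~\cite{DBLP:journals/pe/Lindemann93}, is that starting from a fresh configuration the process returns to a fresh configuration after a single ``macro-step'': from $\sta\notin\allact$ the macro-step is just one exp-delay transition, while from $\sta\in\allact$ it is a sojourn of the \emph{subordinated CTMC} -- the CTMC with generator $\lambda(\prob-I)$, killed as soon as it leaves $\allact$ or enters $\goalStates$ -- run for exactly $\timeouts(\sta)$ time units, followed (if the alarm rings before the subordinated CTMC is killed) by one fixed-delay transition drawn from $\trans$. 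Hence the macro-step transition probabilities and the expected cost accrued during a macro-step are explicit functions of the killed transient distribution $e^{Q_\allact t}$ of the subordinated CTMC (where $Q_\allact$ is the relevant submatrix of $\lambda(\prob-I)$, with $\|Q_\allact\|\le 2\lambda$), of its time integral $\int_0^{\timeouts(\sta)}e^{Q_\allact t}\,dt$, and of the associated expected accumulated-reward functionals, combined with $\prob$, $\trans$ and the cost functions $\rateRew,\impRewExp,\impRewFix$.

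Next I would compute these subordinated-CTMC quantities up to an auxiliary precision $\eps'>0$. They are transient measures of a finite CTMC, for which standard numerical routines -- uniformisation together with a rigorous Poisson-tail truncation bound (and, if $\timeouts$ is given in binary, additionally writing $e^{Q_\allact t}=(e^{Q_\allact t/2^m})^{2^m}$ with $m=O(\log(\lambda t))$ and propagating the error through the $m$ squarings, using that each $e^{Q_\allact\cdot}$ is sub-stochastic) -- produce $\eps'$-approximations in time polynomial in $\size{\fdC(\timeouts)}$, $\size{\costFdC}$ and $\log(1/\eps')$. This yields an $\eps'$-approximation of the one-step cost vector $c$ and of the (sub-)stochastic transition matrix $P$ of the embedded chain. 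It then remains to solve the chain: a standard graph analysis on $P$ decides whether $\goalStates$ is reached almost surely from $\initstate$; if not, and positive cost is accrued on the escaping runs, we output $\computedOutcome=\infty$ (which then equals $\expred_{\fdC(\timeouts)}$); otherwise, on the transient states the expected accumulated cost is the unique solution of $x=c+Px$, while for each $s\in\goalStates$ the hitting probability $p_s$ is the unique solution of a companion linear system, both obtained in polynomial time by Gaussian elimination, and $\computedOutcome$ and the $p_s$ are read off from these solutions.

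The step I expect to need the most care -- though it is routine bookkeeping rather than anything deep, consistent with the word ``easily'' in the text -- is propagating the auxiliary error $\eps'$ through the reduction to guarantee the final bounds $|\expred_{\fdC(\timeouts)}-\computedOutcome|<\eps$ and $|\probm_{\fdC(\timeouts)}(\first{\goalStates}{s})-p_s|<\eps$. For this one uses a priori bounds on $\computedOutcome$ itself (bounded by the expected number of macro-steps before absorption, controlled by $|\states|$ and the least positive entry of $P$, times the maximal expected per-macro-step cost, which involves $\max_\sta\timeouts(\sta)$, $\lambda$ and the maximal cost constants) and on the conditioning of the two linear systems (again governed by $|\states|$ and the least positive transition probability); picking $\eps'$ with a polynomial number of extra bits relative to $\eps$ and these bounds makes the perturbation analysis go through. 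Two easy boundary cases must also be dispatched: the ``at least one transition'' clause forces us to exclude $\initstate$ from the absorbing set when $\initstate\in\goalStates$ (e.g. by working with a fresh transient copy of $\initstate$), and states from which $\goalStates$ is unreachable are handled by the $\computedOutcome=\infty$ branch above.
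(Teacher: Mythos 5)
Your proposal is correct and follows essentially the same route as the paper, which simply invokes Lindemann's transient analysis of DSPNs and, in its appendix, computes the embedded chain's transition probabilities and one-step costs by uniformisation of the subordinated CTMC with a Poisson-tail truncation, then solves the resulting finite chain. You merely spell out details the paper leaves implicit (error propagation through the linear systems, the treatment of binary-encoded delays, and the boundary cases), so there is no substantive difference in approach.
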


\paragraph{Markov decision processes.} 
In Section~\ref{sec:results-single} we use a reduction of fdCTMC to discrete-time Markov decision processes (DTMDP,~see~e.g.~\cite{Puterman:book}) with uncountable space of actions.

\begin{definition}
	A \emph{DTMDP} is a tuple $\mdp = (\mlocs,\macts,\mtran, \initvertex, V')$, where $\mlocs$ is a finite set of \emph{vertices}, $\macts$ is a (possibly uncountable) set of \emph{actions}, $\mtran\colon \mlocs \times \macts \rightarrow \dist(\mlocs)\cup\{\bot\}$ is a \emph{transition function}, $\initvertex\in\mlocs$ is an initial vertex, and $V'\subseteq \mlocs$ is a set of {\em goal vertices}.
\end{definition}
\noindent
An action $a$ is \emph{enabled} in a vertex $\vtx$ if $\mtran(\vtx,a)\neq \bot$.
A~{\em strategy} is a function $\sigma\colon \mlocs \rightarrow \macts$
which assigns to every vertex $\vtx$ an action enabled in $\vtx$. The
behaviour of $\mdp$ with a fixed strategy $\sigma$ can be intuitively
described as follows: A run starts in the vertex $\initvertex$. In every step, assuming that the current vertex is $v$, the process moves to a new vertex $v'$ with probability $T(v,\sigma(v))(v')$.
Every strategy $\sigma$ uniquely determines a probability measure $\probm_{\mdp(\sigma)}$ on the set of \emph{runs}, i.e. infinite alternating sequences of vertices and actions $\vtx_0 a_1 \vtx_1 a_2 \vtx_2\cdots \in (\mlocs\cdot \macts)^{\omega}$; see~\appref{subsec:prob-spaceDTMDP} for details.

Analogously to fdCTMC, we can endow a DTMDP with a {\em cost function} $\mcostd\colon\mlocs\times\macts \rightarrow \Rsetpo$. We then define for each run $\vtx_0 a_1 \vtx_1 a_2 \dots$, the \emph{total cost incurred before reaching $V'$} as 
$\sum_{i=0}^{n-1} \mcostd(\vtx_{i},a_{i+1})$ if there is a minimal $n > 0$ such that $\vtx_n \in V'$, and as $\infty$ otherwise. 
The expectation of this cost w.r.t. $\probm_{\mdp(\sigma)}$ is similarly denoted by 
$\expred_{\mdp(\sigma)}$ or by $\expred_{\mdp(\sigma)}^\mcostd$ if the cost function is not clear from context.

Given $\varepsilon\geq 0$, we say that a strategy $\sigma$ is {\em $\varepsilon$-optimal}
in $\mdp$ 
if $\expred_{\mdp(\sigma)}\leq 
\Value{\mdp}
+\varepsilon$ where $\Value{\mdp}=\inf_{\sigma'} \expred_{\mdp(\sigma')}$; we call it {\em optimal} if it is $0$-optimal. 
For any $s\in S$, let us denote by $\mdp[s]$ the DTMDP obtained from $\mdp$ by replacing the initial state by $s$.
We call a strategy \emph{globally ($\eps$-)optimal} if it is ($\eps$-)optimal in $\mdp[s]$ for every $s\in S$.
Sometimes, we restrict to a subset $D$ of all strategies and denote by $\Value{\mdp,D}$ the restricted infimum $\inf_{\sigma' \in D} \expred_{\mdp(\sigma')}$.

\section{Unconstrained Optimization}
\label{sec:results-single}
\begin{theorem}\label{thm:unconstrained}
There is an algorithm that given a fdCTMC structure $\fdC$, a cost structure $\costFdC$ with $\rateRew(\sta) > 0$ for all $\sta\in\states$, and $\eps>0$ computes in exponential time a delay function $\timeouts$ with
$$
\left\lvert 
\expred_{\fdC(\timeouts)}
\ - \ 
\inf_{\timeouts'}\ 
\expred_{\fdC(\timeouts')}
\; \right\rvert
\;\; < \;\; \eps.$$
\end{theorem}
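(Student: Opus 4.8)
The plan is to reduce the fdCTMC optimization problem to an optimization problem over a finite-action DTMDP and then invoke standard polynomial-time MDP algorithms. First I would build an (infinite-action) DTMDP $\mdp$ whose vertices are states of $\fdC$ together with some bookkeeping about whether the alarm clock is on, and whose action in a state $s\in\allact$ is the chosen delay $\timeouts(s)\in\Rsetp$; the transition probabilities and costs of a single ``macro-step'' (run the CTMC until either a fresh state of $\allact$ is hit or the clock rings) are computable by Proposition~\ref{prop:compute-cost} applied to the fixed-delay fragment. I would verify that $\expred_{\fdC(\timeouts)}$ equals $\expred_{\mdp(\sigma_\timeouts)}$ for the induced strategy, so that $\Value{\fdC} = \Value{\mdp}$, and that it suffices to find a near-optimal \emph{memoryless deterministic} strategy in $\mdp$. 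Because $\rateRew$ is positive everywhere, every macro-step from outside $\goalStates$ incurs cost bounded below by a computable constant, which gives an a~priori finite bound $\valueBound$ on $\Value{\fdC}$ whenever it is finite (and lets us detect the infinite case), and also bounds the number of macro-steps a near-optimal run can afford.

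The core of the argument is discretization: I want to replace the uncountable action set $\Rsetp$ by a finite mesh so that the reduction becomes algorithmic. The naive approach — take a near-optimal $\timeouts$ and round each $\timeouts(s)$ to the nearest grid point — fails, and the excerpt explicitly warns that such rounding can blow up the \emph{absolute} error arbitrarily; this is the main obstacle. The reason is that the probability of the alarm ringing in a particular state, and the conditional cost accumulated in that phase, can depend very sensitively on $\timeouts(s)$ (e.g. a tiny change in the delay can flip which of two ``racing'' behaviours dominates, or can drastically change a reaching probability that is then amplified by a long expected return time). To get around this I would \emph{not} round an arbitrary near-optimal delay function; instead I would argue that there always exists a near-optimal delay function that is ``safe'' in the sense that all the relevant quantities (reaching probabilities out of each clock-phase, expected residence costs) are bounded away from the degenerate regime, and show that \emph{safe} delay functions can be rounded to a mesh of size polynomial in $1/\eps$ and exponential in $\size{\fdC}$ with only an $O(\eps)$ increase in expected cost. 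Concretely, I expect the argument to fix a delay function, perturb each $\timeouts(s)$ slightly to land on the mesh, bound the resulting perturbation of each one-macro-step transition probability and cost by something like (Lipschitz constant)$\times$(mesh width), and then propagate this through the value equation of $\mdp$; the Lipschitz constants are controlled because on the ``safe'' set the denominators that appear (minimal reaching probabilities, $1-e^{-\lambda\dmin}$-type quantities) are bounded below, and the value itself is bounded by $\valueBound$, so the total error telescopes as (number of affordable macro-steps)$\times$(per-step error). The delicate point is producing the safe near-optimal delay function in the first place — this is where the ``non-trivial insights into the structure of fdCTMCs'' promised in the introduction are needed, presumably an argument that delays which push some reaching probability to zero or a delay to infinity can be bent back to a finite safe regime without increasing the cost by more than $\eps$.

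Having established that, the algorithm is: compute the bound $\valueBound$, derive the mesh granularity $\delta$ (inverse polynomial in $\eps$, inverse exponential in $\size{\fdC}$) and a delay cap $\dmax$ from the structural lemma, form the \emph{finite}-action DTMDP $\discMdp$ restricting each action set to the resulting finite grid, compute an optimal memoryless strategy of $\discMdp$ in polynomial time (in the size of $\discMdp$) using Proposition~\ref{prop:compute-cost} to fill in the macro-step transition matrix and cost vector entrywise, and output the corresponding delay function $\timeouts$. Since $|\discMdp|$ is exponential in $\size{\fdC}$, the overall running time is exponential; correctness follows from $\Value{\discMdp}\le \Value{\fdC}+\eps/2$ (safe rounding) together with $\Value{\fdC}\le\Value{\discMdp}$ (every grid delay function is a legal delay function) and the $\eps/2$ approximation error of Proposition~\ref{prop:compute-cost} when evaluating the chosen $\timeouts$. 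I expect the bulk of the work, and essentially all of the difficulty, to lie in the structural ``safe delay function'' lemma and the attendant Lipschitz/perturbation bounds; the MDP reduction and the final assembly are routine.
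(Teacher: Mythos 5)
Your overall route is the same as the paper's: reduce to a DTMDP $\contMdp$ whose actions are delays, observe that na\"ive rounding of an arbitrary near-optimal delay function fails, argue that some \emph{well-behaved} near-optimal delay function exists and can be rounded onto a finite mesh with controlled error, and then solve the resulting finite MDP with standard polynomial-time algorithms. However, there is a genuine gap exactly at the point you yourself flag as the crux. First, your supporting claim that ``because $\rateRew$ is positive everywhere, every macro-step from outside $\goalStates$ incurs cost bounded below by a computable constant, which \ldots{} bounds the number of macro-steps a near-optimal run can afford'' is false: when a state $s\in\allact$ is given a tiny delay and the outgoing fixed-delay transitions have zero impulse cost, the expected cost of the corresponding macro-step is roughly $\rateRew(s)\cdot\timeouts(s)$, which tends to $0$ with $\timeouts(s)$. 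This is precisely the phenomenon of Example~\ref{ex:non-stability}: $\eps$-optimal delay functions may perform an \emph{unbounded} expected number of macro-steps, so no a~priori bound of the form $\Value{\contMdp}/(\text{min one-step cost})$ is available, and the per-step Lipschitz errors (which are in fact always bounded by $\lambda$ and the maximal cost rate, so the one-step quantities are never ``sensitive'' in your sense) cannot simply be telescoped against an affordable step count.

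Second, the structural lemma you invoke to repair this --- the existence of a near-optimal delay function that is ``safe'', i.e.\ whose expected number of steps is controlled --- is left entirely unproven (``presumably an argument that delays which push some reaching probability to zero \ldots{} can be bent back''). In the paper this is Lemma~\ref{lem:bounded-step-existence}, and it is the technically hardest part of the proof: one must show that any globally near-optimal delay function can be modified so that $\CostBound{\msteps,\timeouts}$ is proportional to $\CostBound{\mcost,\timeouts}$, and the construction goes through identifying \emph{bad sinks} of the fd-skeleton (bottom SCCs in which all fixed-delay impulse costs are zero and all delays lie below a threshold $\dmin$), choosing in each such sink a state minimizing a suitable one-exponential-step value, and inflating its delay to $\dmin$; the cost increase of this surgery has to be bounded \emph{without} any bound on the step count of the original function (so Lemma~\ref{lem:perturbation-error} cannot be used), which requires the quantitative estimates of Lemmas~\ref{lem:sink-value-bounds}--\ref{lem:delay-increase}. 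Your proposal correctly locates the difficulty but supplies neither the correct characterization of the pathological behaviour (bad sinks, not vanishing reaching probabilities or infinite delays) nor any argument for repairing it, and the one quantitative shortcut it does offer (a uniform lower bound on macro-step costs) is exactly what the counterexample in the paper rules out.
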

The rest of this section is devoted to a proof of
Theorem~\ref{thm:unconstrained}, which consists of two parts. First, we
reduce the optimization problem in the fdCTMC to an optimization problem in
a \emph{discrete time} Markov decision process (DTMDP) with uncountably many
actions. Second, we present the actual approximation algorithm based on a straightforward discretization of the space of actions of the DTMDP. However, the proof of its error bound is actually quite intricate.
The time complexity is exponential because the discretized DTMDP needs exponential size in the worst case. For complexity with respect to various parameters see ~\appref{subs:mesh-error}.

For the rest of this section we fix a fdCTMC structure $\fdC = (\states,
\lambda, \prob, \allact, \trans, \initstate)$, a cost structure $\costFdC = (\goalStates, \rateRew,
\impRewExp, \impRewFix)$, and $\eps > 0$. 
We assume that $\Value{\fdC} < \infty$. The opposite case can be easily detected
by fixing an arbitrary delay function $\timeouts$ and finding out whether $\expred_{\fdC(\timeouts)} = \infty$ by Proposition~\ref{prop:compute-cost}.
This is equivalent to $\Value{\fdC} = \infty$ by the following observation.
\begin{lemma}
\label{lem:finite}
	For any delay functions $\timeouts, \timeouts'$ we have $\expred_{\fdC(\timeouts)} = \infty$ if and only if $\expred_{\fdC(\timeouts')} = \infty$.
\end{lemma}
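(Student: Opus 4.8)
The statement asserts that finiteness of the expected total cost is independent of the choice of delay function. The key observation is that the delay function only influences \emph{when} and \emph{where} fixed-delay transitions fire, but it cannot change the fundamental reachability structure of the chain in the following sense: the underlying ``graph'' of the fdCTMC (the set of state pairs $(s,s')$ that can be reached in one step with positive probability, via either an exp-delay or a fixed-delay transition) does not depend on $\timeouts$. Indeed, from any configuration $(s,d)$ with $s\in\allact$, both an exp-delay step to any $s'$ with $\prob(s,s')>0$ and (if the run proceeds long enough) a fixed-delay step to any $s'$ with $\trans(s,s')>0$ occur with positive probability, regardless of the concrete positive value $d=\timeouts(s)$; and for $s\notin\allact$ only exp-delay steps are available, again independently of $\timeouts$. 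So the set of states reachable from $\initstate$, and in particular whether $\goalStates$ is reachable, is the same for every $\timeouts$.

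\textbf{Main steps.} First I would make precise the notion of the \emph{state-level transition graph} $\mathcal{G}$ on $\states$: put an edge $s\to s'$ iff $\prob(s,s')>0$, or $s\in\allact$ and $\trans(s,s')>0$. I would argue that for any $\timeouts$, a finite run prefix $(\sta_0,\delays_0)t_0\cdots(\sta_n,\delays_n)$ visiting states $\sta_0,\dots,\sta_n$ has positive probability density under $\probm_{\fdC(\timeouts)}$ \emph{iff} $\sta_0\to\sta_1\to\cdots\to\sta_n$ is a path in $\mathcal{G}$ --- this is immediate from the transition rules, since exponential waiting times have full support on $\Rsetpo$ and the fixed delays are finite and positive. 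Hence the set $\mathrm{Reach}$ of states reachable from $\initstate$ is the same for all $\timeouts$. Second, I would split into two cases. If $\goalStates\cap\mathrm{Reach}=\emptyset$ (no goal state reachable), then for every $\timeouts$ the run a.s.\ never reaches $\goalStates$, so $\costFdC=\infty$ a.s.\ and $\expred_{\fdC(\timeouts)}=\infty$; the claim holds trivially. If some goal state is reachable, I would show $\expred_{\fdC(\timeouts)}<\infty$ for every $\timeouts$, which gives the other direction.

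\textbf{The finite case --- the main obstacle.} Proving $\expred_{\fdC(\timeouts)}<\infty$ when $\goalStates$ is reachable is the substantive part. The plan is a standard geometric-tail argument, but one must handle the subtlety that the ``memory'' of the fdCTMC is the full configuration $(\sta,\delays)$, so the process is not a finite Markov chain. The clean way: observe that between two consecutive visits to states of $\states\smallsetminus\allact$ (or from the start until the first such visit), the number of steps, the elapsed time, and the accrued cost all have finite expectation --- indeed, while staying inside $\allact$ with a fixed pending alarm, the alarm rings after a bounded time $\max_{s\in\allact}\timeouts(s)$, during which the number of exp-delay steps is Poisson-distributed (finite mean), and the cost rates and impulse costs are bounded, so each such ``segment'' contributes finite expected cost. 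Then I would pass to the embedded process observed at a suitable sequence of \emph{regeneration configurations} --- e.g.\ configurations of the form $(s,\infty)$ with $s\notin\allact$, together with $(\initstate,\timeouts(\initstate))$ --- at which the future is governed by a finite-state Markov chain on $\states$ with a fixed positive lower bound on the one-step probability of every edge of $\mathcal{G}$. Since $\goalStates$ is reachable in $\mathcal{G}$ from every state of $\mathrm{Reach}$, this finite chain reaches $\goalStates$ in geometrically-many regeneration steps, each of bounded expected cost, giving $\expred_{\fdC(\timeouts)}<\infty$. Care is needed to confirm that $\goalStates$ is reachable from every reachable state (not merely from $\initstate$): this follows because on runs that never hit $\goalStates$ the cost is $\infty$, so either all reachable states can reach $\goalStates$ --- the favourable case --- or we are effectively back in the first case for the relevant sub-chain; more directly, one restricts attention to runs and uses that from any reachable non-goal state the graph $\mathcal{G}$ still contains a path to $\goalStates$ whenever $\goalStates\cap\mathrm{Reach}\neq\emptyset$ only if $\mathcal G$ is such that every state of $\mathrm{Reach}$ reaches $\goalStates$; if not, one shows $\expred=\infty$ for all $\timeouts$ anyway since a bottom strongly connected component of $\mathcal{G}$ avoiding $\goalStates$ is hit with positive probability. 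Either way both sides of the ``iff'' agree, which is all that is required.
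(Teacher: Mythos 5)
Your high-level route coincides with the paper's: since every delay lies in $(0,\infty)$, from any configuration with the alarm on both an exp-delay successor (any $s'$ with $\prob(s,s')>0$) and a fixed-delay successor (any $s'$ with $\trans(s,s')>0$) occur with positive probability, so the state-level reachability structure — in particular whether, with positive probability, the run gets trapped in a bottom strongly connected component from which $\goalStates$ is unreachable — is the same for all delay functions. The paper's proof consists of essentially this observation (infinite expected cost iff such a trap is reached with positive probability) and leaves the converse direction, finiteness of $\expred_{\fdC(\timeouts)}$ when no such trap is reachable, implicit; you attempt to prove it, which is where the problem lies.

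The gap is in your regeneration scheme for the finite case. You cut the run at visits to $\states\smallsetminus\allact$ (plus the initial configuration) and justify finite expected cost per segment by noting that ``the alarm rings after at most $\max_{s}\timeouts(s)$, during which the number of exp-delay steps is Poisson''. But a segment between two consecutive visits to $\states\smallsetminus\allact$ can contain arbitrarily many alarm cycles: when the alarm rings, the fixed-delay transition may lead back into $\allact$, the alarm is newly set, and the run need never touch $\states\smallsetminus\allact$ at all. Your one-window bound covers only a single alarm cycle, not the whole segment, and in cases such as Example~\ref{ex:non-stability} ($a,b\in\allact$ cycling via fixed-delay transitions, goal $t$) your regeneration set degenerates to the initial configuration and the goal, so the claimed per-segment finiteness is exactly the statement to be proved and the argument becomes circular. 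The repair is to regenerate at every configuration where the delay is newly set or switched off, i.e., after each fixed-delay transition and upon entering or leaving $\allact$ — in the paper's notation the set $\brambory = S^{reset}\cup(\states\smallsetminus\allact)\cup\goalStates$ used later for the MDP reduction. Each such epoch lasts at most one alarm window of length $\timeouts(s)$ or one exponential step, hence has uniformly bounded expected cost (Poisson-many exp-delay steps plus at most one fixed-delay step, bounded rates and impulse costs), and the embedded finite chain on $\brambory$ reaches $\goalStates$ within geometrically many epochs whenever $\goalStates$ is graph-reachable from every reachable state. With that substitution your argument is sound and yields a more explicit proof than the paper's terse one.
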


To further simplify our presentation, we assume that each state $s$ of $\allact$ directly encodes whether the delay needs to be reset upon entering $s$. 
Formally, we assume $\allact = S^{reset}\uplus S^{keep}$ where $s' \in S^{reset}$ if $P(s,s') > 0$ for some $s\in S\setminus \allact$ or if $F(s,s') > 0$ for some $s\in \allact$; and $s' \in S^{keep}$ if $P(s,s') > 0$ for some $s \in \allact$.
We furthermore assume that $\initstate \in S^{reset}$ if $\initstate \in \allact$.
Note that each fdCTMC structure can be easily transformed to satisfy this assumption in polynomial time by duplication of states $\allact$, see, e.g.,  Example~\ref{ex:mdp-creation}.

\subsection{Reduction to DTMDP $\contMdp$ with Uncountable Space of Actions.} \label{subsec:reduction-DTMDP}
We reduce the problem into a \emph{discrete-time} problem by capturing the evolution of the fdCTMC only at \emph{discrete moments} when a transition is taken after which the fixed-delay is (a) newly set, or (b) switched off, or (c) irrelevant as the goal set is reached. 
This happens exactly when one of the states of $\brambory = S^{reset}\cup \left(S\smallsetminus \allact\right)\cup G$ is reached.
We define a DTMDP $\mdp=(\brambory,\macts,\mtran,\initstate,\goalStates)$ with a cost function $\mcost$: 
\begin{itemize}
\item $\macts := \Rset_{>0}\cup\{\infty\}$; where actions $\Rset_{>0}$ are enabled in $\sta \in S^{reset}$ and action $\infty$ is enabled in $\sta \in \states\setminus\allact$.
\item Let $s\in \brambory$ and $d$ be an action of $\mdp$. Intuitively, we define $\mtran(s,d)$ and $\mcost(s,d)$ to summarize the behaviour of the fdCTMC starting in the configuration $(s,d)$ until the first moment when $\brambory$ is reached again. 

Formally, let $\fdC[s](d)$ denote a fdCTMC obtained from $\fdC$ by changing initial state to $s$ and fixing a delay function that assigns $d$ to $s$ (and arbitrary values elsewhere).
We define $\mcost(\entrysta,\delays)$ as the cost accumulated before reaching another state of $\brambory$ and $\mtran(\sta,\delays)(\sta')$ as the probability that $s'$ is the first such a reached state of $\brambory$. That is,
\[
\mcost(s,\delays)=\expred_{\fdC[s](d)}^{\costFdC[\brambory]}
\qquad \text{and} \qquad
\mtran(s,\delays)(s')=\probm_{\fdC[s](d)}(\first{\brambory}{s'})
\]
where $\costFdC[\brambory]$ is obtained from $\costFdC$ by changing the set of goal states to $\brambory$. Note that the definition is correct as it does not depend on the delay function apart from its value $d$ in the initial state $s$.
\end{itemize}

\begin{example} \label{ex:mdp-creation}
	Let us illustrate the construction on the fdCTMC from Section~\ref{sec-intro}. The model, depicted on the left is modified to satisfy the assumption $\allact = S^{reset}\uplus S^{keep}$: we duplicate the state $init$ into another state $one \in S^{keep}$; the states from $S^{reset}$ are then depicted in the top row. As in Section~\ref{sec-intro}, we assign cost rate $1$ to all states and impulse cost $3$ to every fixed-delay transition (and zero to exp-delay transitions).

	  \begin{center}
	  	\begin{tikzpicture}[outer sep=0.1em, xscale=0.97, yscale=0.7]

	  	\tikzstyle{fixed}=[dashed,->]; \tikzstyle{fixed label}=[font=\small];
      	\tikzstyle{exp}=[->,rounded corners,,>=stealth]; \tikzstyle{exp rate}=[font=\small];
	  	\tikzstyle{loc}=[draw,circle, minimum size=1.8em,inner sep=0.1em];
	  	\tikzstyle{accepting}+=[outer sep=0.1em]; \tikzstyle{loc cost} = [draw,rectangle,inner sep=0.07em,above right=9, minimum
	  	width=0.8em,minimum height=0.8em,fill=white,font=\footnotesize];
	  	\tikzstyle{trans cost}=[draw,rectangle,minimum width=0.8em,minimum
	  	height=0.8em,solid,inner sep=0.07em,fill=white,font=\footnotesize,auto];
	  	\tikzstyle{prob}=[inner sep=0.03em, auto,font=\footnotesize];
	  	\tikzstyle{mcost}=[inner sep=0.03em, auto,font=\scriptsize];
	  	
	  	\begin{scope}[]
	  	
	  	\node[loc] (i2) at (0,0) {${init}$}; 
	  	\node[loc] (i3) at (0,-1) {${one}$}; 
	  	\node[loc] (l2) at (2.5,-1) {${lost}$}; 
	  	\node[loc] (c2) at (-2.5,0) {${two}$}; 
	  	\node[loc, accepting] (t2) at (0,-2.5) {${OK}$};

	  	\path[->,>=stealth] ($(i2)+(-0,.8)$) edge (i2);
	  	
	  	\path[exp] (i2) edge node[prob] {0.2} 
	  	(l2); 
	  	
	  	\draw[exp] (i2.35) -| node[prob,pos=0.4,swap] {0.8}
	  		+(3,-1) |-
	  	(t2);
	  	\path[exp] (c2) edge node[prob] {0.2} 
	  	(i3); 
	  	
	  	\path[exp] (i3) edge node[prob,pos=0.4] {0.8} 
	  	(t2);
	  	\path[exp] (i3) edge node[prob,swap] {0.2} 
	  	(l2);

	  	\draw[exp] (c2) |- node[prob,pos=0.4] {0.8}
	  	(t2);
	  	
	  	\path[bend right=30,fixed] (l2) edge 
	  	(i2);
	  	\path[bend right=35,fixed] (i2) edge 
	  	(c2);
	  	\path[bend right=30,fixed] (i3) edge 
	  	(c2);
	  	\path[loop above,fixed,looseness=5] (c2) edge 
	  	(c2);
	  	
	  	\end{scope}
	  	
      	\tikzstyle{exp}=[->,rounded corners]; \tikzstyle{exp rate}=[font=\small];
	  	\begin{scope}[xshift=8.5cm]
	  	
	  	\node[loc] (i) at (-3.0,0) {${init}$}; 
	  	\node[loc] (c) at (-3.0,-2.5) {${two}$}; 
	  	\node[loc, accepting] (t) at (0,-2.5) {${OK}$};
	  	
	  	\path[->] ($(i)+(-0,.8)$) edge (i);

	  	\draw[exp] (c) -- node[prob,pos=0.75] {$\approx 0.08$}
		  	 node[prob,pos=0.25] {$0.1$}
	  	     node[mcost,pos=0.25,swap] {$\approx 2.9$ \euro} 
	  	(t);
	  	
	  	\draw[exp] (i) -| node[prob,pos=0.4,below] {$\approx 0.26$}
	  		 node[prob,pos=0.11] {$0.4$}
	  		 node[mcost,pos=0.11,swap] {$\approx 2.7$ \euro} 
	  	(t);
	  	
	  	\path[exp] ($(i)+(1.3,0)$) edge[out=40,in=40,looseness=2.0] node[prob,pos=0.22,right] {$\approx 0.07$} 
	  	(i);
	  	\path[exp] ($(i)+(1.3,0)$) edge[out=320,in=40] node[prob,pos=0.35,right] {$\approx 0.67$} 
	  	(c);
	  	
	  	\path[exp] ($(c)+(1.3,0)$) edge[out=320,in=320,looseness=2.0] node[prob,pos=0.22,right] {$\approx 0.92$} 
	  	(c);
	  	\path[exp] ($(c)+(1.3,0)$) edge[out=40,in=320] node[prob,pos=0.35,right] {$\approx 0$} 
	  	(i);

	  	\node (ia) at ($(i)+(0.42,0)$) {$\cdot$};
	  	\node (ib) at ($(i)+(0.42,0.1)$) {$\cdot$};
	  	\node (ic) at ($(i)+(0.42,0.2)$) {$\cdot$};

	  	\node (id) at ($(i)+(0.42,-0.12)$) {$\cdot$};
	  	\node (ie) at ($(i)+(0.42,-0.22)$) {$\cdot$};
	  	\node (if) at ($(i)+(0.42,-0.32)$) {$\cdot$};

	  	\node (ca) at ($(c)+(0.42,0)$) {$\cdot$};
	  	\node (cb) at ($(c)+(0.42,0.1)$) {$\cdot$};
	  	\node (cc) at ($(c)+(0.42,0.2)$) {$\cdot$};

	  	\node (cd) at ($(c)+(0.42,-0.12)$) {$\cdot$};
	  	\node (ce) at ($(c)+(0.42,-0.22)$) {$\cdot$};
	  	\node (cf) at ($(c)+(0.42,-0.32)$) {$\cdot$};
	  	
	  	\end{scope}
	  	\end{tikzpicture}
	  \end{center} 
	  On the right, there is an excerpt of the DTMDP $\contMdp$ and of the cost function $\mcost$. For each non-goal state, we depict only one action out of uncountably many: for state $two$ it is action $0.1$ with cost $\approx 2.9$, for state $init$ it is action $0.4$ with cost $\approx 2.7$. The costs are computed in PRISM.
\qed
\end{example}

\noindent
Note that there is a one-to-one correspondence between the delay functions in $\fdC$ and strategies in $\mdp$.
Thus we use $\timeouts, \timeouts', \ldots$ to denote strategies in $\contMdp$.
Finally, let us state correctness of the reduction.

\begin{proposition}
\label{prop:mdp-formulation}
For any delay function $\timeouts$ it holds 
$\expred_{\fdC(\timeouts)}=\expred_{\contMdp(\timeouts)}$. Hence, 
$$\Value{\fdC} = \Value{\contMdp}.$$
\end{proposition}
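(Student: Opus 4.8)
The plan is to show the equality $\expred_{\fdC(\timeouts)}=\expred_{\contMdp(\timeouts)}$ for every fixed delay function $\timeouts$; the second equation $\Value{\fdC}=\Value{\contMdp}$ then follows immediately by taking infima over all $\timeouts$, using the one-to-one correspondence between delay functions in $\fdC$ and strategies in $\contMdp$ noted above. So the whole content is the per-$\timeouts$ equality, which I would establish by a measure-preserving decomposition of runs of the fdCTMC into ``macro-steps'' corresponding to single transitions of $\contMdp$.

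\medskip
\noindent\textbf{Step 1: Segmenting a run of $\fdC(\timeouts)$.} Fix $\timeouts$ and consider a run $\ctrun=(\sta_0,\delays_0)t_0(\sta_1,\delays_1)t_1\cdots$ of $\fdC(\timeouts)$. By the assumption $\allact = S^{reset}\uplus S^{keep}$ and $\initstate\in S^{reset}$ whenever $\initstate\in\allact$, the initial state lies in $\brambory = S^{reset}\cup(S\smallsetminus\allact)\cup G$, and the set of indices $i$ with $\sta_i\in\brambory$ is (almost surely) infinite as long as $G$ is not hit, or ends exactly when $G$ is first hit. Let $0 = k_0 < k_1 < k_2 < \cdots$ enumerate these indices (up to and including the first one in $G$, if any). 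The $j$-th macro-step is the finite run fragment from configuration $(\sta_{k_j},\delays_{k_j})$ to $(\sta_{k_{j+1}},\delays_{k_{j+1}})$. The key observation, which I would isolate as a small lemma, is that along such a fragment the delay value is never reset until its very end: between $k_j$ and $k_{j+1}$ all visited states lie in $\allact\setminus S^{reset} = S^{keep}$ (or the fragment has length one), so the remaining-delay coordinate only decreases by elapsed time and the behaviour of $\fdC(\timeouts)$ on this fragment depends on $\timeouts$ only through the single value $\delays_{k_j}$, which equals $\timeouts(\sta_{k_j})$ if $\sta_{k_j}\in S^{reset}$ and is $\infty$ if $\sta_{k_j}\notin\allact$. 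This is precisely why $\mcost(s,d)$ and $\mtran(s,d)$ in the definition ``do not depend on the delay function apart from its value $d$ in the initial state $s$,'' and it is the main technical point to get right.

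\medskip
\noindent\textbf{Step 2: Matching the probability measures and the costs.} Define the map sending a run $\ctrun$ of $\fdC(\timeouts)$ to the sequence $\sta_{k_0}\,\delays_{k_0}\,\sta_{k_1}\,\delays_{k_1}\cdots$ of $\brambory$-states together with the action taken (namely $\delays_{k_j}$), i.e. a run of $\contMdp(\timeouts)$. By Step 1 and the strong Markov property of $\fdC(\timeouts)$ at the stopping times $k_j$ (formally at the times $\sum_{i<k_j}t_i$), the conditional distribution of the $j$-th macro-step given $\sta_{k_j}$ is exactly the law of $\fdC[\sta_{k_j}](\delays_{k_j})$ until it first reaches $\brambory$. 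Hence the pushforward of $\probm_{\fdC(\timeouts)}$ under this map is $\probm_{\contMdp(\timeouts)}$, because $\mtran(s,d)(s')=\probm_{\fdC[s](d)}(\first{\brambory}{s'})$ by construction. For the cost, the total cost $\costFdC(\ctrun)$ accumulated before first reaching $G$ splits additively over the macro-steps (cost rate times time plus impulse costs are all additive along the run), and the cost of the $j$-th macro-step is a random variable whose conditional expectation given $\sta_{k_j}$ is $\expred_{\fdC[\sta_{k_j}](\delays_{k_j})}^{\costFdC[\brambory]} = \mcost(\sta_{k_j},\delays_{k_j})$. Moreover the first index $i$ with $\sta_i\in G$ is one of the $k_j$'s (since $G\subseteq\brambory$), so ``reaching $G$ in $\fdC$'' corresponds exactly to ``reaching $G$ in $\contMdp$''.

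\medskip
\noindent\textbf{Step 3: Assembling the expectation equality.} Combining Step 2 with the tower property (conditioning successively on $\sta_{k_0},\sta_{k_1},\dots$) gives
\[
\expred_{\fdC(\timeouts)}
= \expred\Big[\sum_{j\ge 0}\big(\text{cost of macro-step }j\big)\,\mathbf{1}[\text{not yet in }G]\Big]
= \expred_{\contMdp(\timeouts)}\Big[\sum_{j=0}^{m-1}\mcost(\sta_{k_j},\delays_{k_j})\Big]
= \expred_{\contMdp(\timeouts)},
\]
where $m$ is the first index with $\sta_{k_m}\in G$ (and both sides are $\infty$ simultaneously when $G$ is reached with probability less than one, consistently with the convention $\costFdC(\ctrun)=\infty$; note Lemma~\ref{lem:finite} and the standing assumption $\Value{\fdC}<\infty$ are not needed here, the identity holds unconditionally). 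Interchanging expectation and the nonnegative sum is justified by the monotone convergence theorem since all costs are nonnegative ($\rateRew,\impRewExp,\impRewFix\ge 0$). Taking $\inf_{\timeouts}$ on both sides and using the bijection between delay functions and strategies yields $\Value{\fdC}=\Value{\contMdp}$.

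\medskip
\noindent\textbf{Main obstacle.} The delicate part is Step 1 together with the measurability/strong-Markov bookkeeping in Step 2: one must verify that the macro-step decomposition is well defined almost surely (the times $k_j$ are finite a.s.\ before $G$ is hit, which follows from the exponential clock firing infinitely often), that the construction of $\mtran$ and $\mcost$ genuinely ignores $\timeouts$ outside the first state of each fragment, and that applying the strong Markov property at the (random) indices $k_j$ is legitimate. Once the fragment-independence from $\timeouts$ is nailed down, the rest is a routine renewal-style computation, so I would devote most of the written proof to stating and proving the ``delay-is-not-reset-within-a-fragment'' lemma cleanly.
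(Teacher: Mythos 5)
Your proposal is correct and takes essentially the same route as the paper's own proof: the paper likewise decomposes runs of $\fdC(\timeouts)$ into ``big steps'' delimited by visits to $\brambory$ (its partition into the sets $R'_{s_0\cdots s_n}$), matches the step-wise probabilities and the truncated costs $W_n$, $W'_n$ against those of $\contMdp(\timeouts)$, and passes to the limit. Your macro-step decomposition, strong-Markov conditioning at the hitting indices $k_j$, and monotone-convergence passage are just a slightly different bookkeeping of that same induction-plus-limit argument, so there is nothing to add.
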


\noindent
In particular, 
in order to solve the optimization problem for $\fdC$ it suffices to find an $\varepsilon$-optimal strategy (i.e., a delay function) $\timeouts$ in $\contMdp$.

\subsection{Discretization of the Uncountable MDP $\contMdp$}

Since the MDP $\contMdp$ has uncountably many actions, it is not directly suitable for algorithmic solutions. 
We proceed in two steps. In the first and technically demanding step, we show that we can restrict to actions on a finite mesh. Second, we argue that we can also approximate all transition probabilities and costs by rational numbers of small bit length.

\subsubsection{Restricting to a Finite Mesh.}
For positive reals $\delta > 0$ and $\dmax > 0$, we 
define a subset of delay functions 
$\paramspace(\delta,\dmax) = \{\timeouts \mid \forall s\in\brambory \; \exists k\in \Nset: \timeouts(s) = k\delta \leq \dmax\}$. Here, all delays are multiples of $\delta$ bounded by $\dmax$.

We need to argue that for the fixed $\eps>0$ there are some appropriate values $\delta$ and $\dmax$ such that $\paramspace(\delta,\dmax)$ contains an $\eps$-optimal delay function. A na\"{\i}ve approach would be to take any, say $\eps/2$-optimal 
delay function $\timeouts$, round it to closest delay function $\matchtimeouts \in \paramspace(\delta,\dmax)$ on the mesh, and show that the expected costs of these two functions do not differ by more than $\eps/2$. However, this approach does not work as shown by the following example.

\begin{example}
\label{ex:non-stability}
	Let us fix the fdCTMC structure $\fdC$ on the left (with cost rates in small boxes and zero impulse costs).
	An excerpt of $\contMdp$ and $\mcost$ is shown on the right (where only a few actions are depicted).
	\begin{center}
		\begin{tikzpicture}[outer sep=0.1em, xscale=1, yscale=0.7]
		\tikzstyle{fixed}=[dashed,->];
		\tikzstyle{fixed label}=[font=\small];
		\tikzstyle{exp}=[->,>=stealth];
		\tikzstyle{exp rate}=[font=\small];
		\tikzstyle{loc}=[draw,circle, minimum size=1.8em,inner sep=0.1em];
		\tikzstyle{accepting}+=[outer sep=0.1em];
		\tikzstyle{loc cost} = [draw,rectangle,inner sep=0.07em,above right=9, minimum
		width=0.8em,minimum height=0.8em,fill=white,font=\footnotesize];
		\tikzstyle{action}=[font=\tiny];
		
		\begin{scope}
			\node[loc] (s) at (0,0) {$a$};
			\node[loc, accepting] (t) at (1.5,0) {$t$};
			\node[loc] (u) at (3,0) {$b$};
			
			\node[loc cost] at (s) {$2$};
			\node[loc cost] (cost) at (u) {$1$};
			
			\path[->,>=stealth] ($(s)+(-1,0)$) edge (s);
			
			\draw[fixed,rounded corners] (s) |- +(1,1.2) -| (u);
			\draw[fixed,rounded corners] (u) |- +(-1,-1) -| (s);
			
			\path[exp] (s) edge (t);
			\path[exp] (u) edge (t);
		\end{scope}

		\tikzstyle{exp}=[->];
		
		\begin{scope}[xshift=5cm]
			\node[loc] (s) at (0,0) {$a$};
			\node[loc, accepting] (t) at (2,0) {$t$};
			\node[loc] (u) at (4,0) {$b$};
			
			\path[->] ($(s)+(-.6,0)$) edge (s);
			
			\draw[exp,rounded corners] (s.60) |- 
				node[action,below,pos=0.75] {$0.01$} 
				node[action,below,pos=1] {$\approx 0.02 \text{\euro}$}
				+(1,0.3) -| 
					node[action,below,pos=.43] {$1-p_3$}
			(u.120);
			\draw[exp,rounded corners] (s.90) |- 
				node[action,below,pos=0.78] {$0.001$} 
				node[action,below,pos=1.1] {$\approx 0.002 \text{\euro}$}
			+(1,0.9) -| 
				node[action,below,pos=.44] {$1-p_2$}
			(u.90);
			\draw[exp,rounded corners] (s.120) |- 
				node[action,below,pos=0.8] {$0.0001$} 
				node[action,below,pos=1.2] {$\approx 0.0002 \text{\euro}$}
			+(1,1.5) -| 
				node[action,below,pos=.45] {$1-p_1$}
			(u.60);
			
			\draw[exp,rounded corners] (s.120) |- 
			+(1,1.5) -| 
				node[action,right=-2pt,pos=0.58] {$p_1 \approx 0.0001$}
			(t.120);
			\draw[exp,rounded corners] (s.90) |- 
			+(1,0.9) -| 
				node[action,right=-2pt,pos=0.63] {$p_2 \approx 0.001$}
			(t.90);
			\draw[exp,rounded corners] (s.60) |- 
			+(1,0.3) -| 
				node[action,right=-2pt,pos=0.85] {$p_3 \approx 0.01$}
			(t.60);

			\foreach \x in {40,75,105,140}{
				\node[above,font=\tiny,xscale=0.4] at (s.\x) {$\dots$};
			}
			\foreach \x in {40,75,105}{
				\node[below,font=\tiny,xscale=0.4] at (u.180+\x) {$\dots$};
			}
			
			\draw[exp,rounded corners] (u.240) |- 
				node[action,above,pos=0.65] {$0.01$} 
				node[action,above,pos=1] {$\approx 0.01 \text{\euro}$}
			+(-1,-0.3) -| 
				node[action,above,pos=.43] {$1-p_3$}
			(s.300);
			
			\draw[exp,rounded corners] (u.270) |- 
				node[action,above,pos=0.7] {$0.001$} 
				node[action,above,pos=1.1] {$\approx 0.001 \text{\euro}$}
			+(-1,-0.9) -| 
				node[action,above,pos=.44] {$1-p_2$}
			(s.270);
			
			\draw[exp,rounded corners] (u.270) |- 
			+(-1,-0.9) -| 
				node[action,right=-2pt,pos=0.58] {$p_2$}
			(t.270);
			
			\draw[exp,rounded corners] (u.240) |- 
			+(-1,-0.3) -| 
				node[action,right=-2pt,pos=0.63] {$p_3$}
			(t.300);
			
		\end{scope}
		\end{tikzpicture}
	\end{center}
	First, we point out that $\Value{\fdC} = 1$ as one can make sure that nearly all time before reaching $t$ is spent in the state $b$ that has a lower cost rate $1$.
	Indeed, this is achieved by setting a very short delay in $a$ and a long delay in $b$.

We claim that for any $\delta > 0$ there is a near-optimal delay function $\timeouts$ such that rounding its components to the nearest integer multiples of $\delta$ yields a large error independent of $\delta$. Indeed, it suffices to take a function $\timeouts$ with $\timeouts(b)=\delta$ and $\timeouts(a)$ an arbitrary number significantly smaller than $\timeouts(b)$, say $\timeouts(a)=0.01\cdot \timeouts(b)$. The error produced by the rounding can then be close to 0.5. For instance, given $\delta=0.01$ we take a function with $\timeouts(a)=0.0001$ and $\timeouts(b)=0.01$, whose rounding to the closest delay function on the finite mesh yields a constant function $\matchtimeouts=(0.01,0.01)$. Then $\expred_{\fdC(\timeouts)} \approx 1.01$ and $\expred_{\fdC(\matchtimeouts)} \approx 1.5$, even though the rounding does not change any transition probability or cost by more than $0.02$! 

	The reason why the delay function $\timeouts$ is so sensitive to small perturbations is that it makes a very large number of steps before reaching $t$ (around $200$ on average) and thus the small one-step errors caused by a perturbation accumulate into a large global error. 
	The number of steps of an $\eps$-optimal delay functions is not bounded, in general. By multiplying both $\timeouts(a)$ and $\timeouts(b)$ by the same infinitesimally small factors we obtain an $\eps$-optimal delay functions that make an arbitrarily high expected number of steps before reaching $t$. \qed
\end{example}

A crucial observation is that we do not have to show that the ``na\"{\i}ve'' rounding works for \emph{every} near-optimal delay function. To prove that $\paramspace(\delta,\dmax)$ contains an $\eps$-optimal function, it suffices to show that there is \emph{some} $\eps/2$-optimal function whose rounding yields error at most~$\eps/2$. Proving the existence of such well-behaved functions forms the technical core of our discretization process which is formalized below.

We start by formalizing the concept of ``small perturbations''.
We say that a delay function $\matchtimeouts$ is \emph{$\alpha$-bounded} by a delay function $\timeouts$
if for all states $s,t \in \brambory$ we have:
\vspace{-0.5\baselineskip}
\begin{enumerate} 
	\item  $|\mtran(s,\matchtimeouts(s))(t)-\mtran(s,\timeouts(s))(t)|\leq \alpha$ and
	\item $\mcost(s,\matchtimeouts(s))-\mcost(s,\timeouts(s))\leq \alpha$;
\end{enumerate}
and furthermore, $\mtran(s,\timeouts(s))(t)=0$ iff $\mtran(s,\matchtimeouts(s))(t)=0$, i.e. the qualitative transition structure is preserved. (Note that $\matchtimeouts$ may incur much smaller one-step costs than $\timeouts$, but not significantly higher).

Using standard techniques of numerical analysis, we express the increase in accumulated cost caused by a bounded perturbation as a function of the worst-case (among all possible initial states) expected cost and expected number of steps before reaching the target. The number of steps is essential as discussed in Example~\ref{ex:non-stability} and can be easily expressed by a cost function $\#$ that assigns $1$ to every action in every state. To express the worst-case expectation of some cost function $\$$, we denote $\CostBound{\$,\timeouts} := \max_{s\in\brambory} \expred_{\contMdp[s](\timeouts)}^{\$}$.

\begin{lemma}
	\label{lem:perturbation-error}
	Let $\alpha\in[0,1]$ and let, $\timeouts'$ be a delay function that is $\alpha$-bounded by another delay function $\timeouts$. If $\alpha \leq \frac{1}{2\cdot\CostBound{\mcost,\timeouts}\cdot|\brambory|}$, then
$$\expred_{\contMdp(\timeouts')}   
\; \leq \; 
\expred_{\contMdp(\timeouts)} + 2\cdot \alpha\cdot \CostBound{\msteps,\timeouts} \cdot (1 + \CostBound{\mcost,\timeouts}\cdot |\brambory|).$$ 
\end{lemma}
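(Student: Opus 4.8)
The plan is to compare the two strategies $\timeouts$ and $\timeouts'$ step by step through their fixpoint characterizations of the expected cost. Let me write $f(s) := \expred_{\contMdp[s](\timeouts)}$ and $f'(s) := \expred_{\contMdp[s](\timeouts')}$ for the expected total cost before reaching $\goalStates$, started from $s$, under the two delay functions. Each of these satisfies a Bellman-type fixpoint equation: $f(s) = \mcost(s,\timeouts(s)) + \sum_{t\in\brambory} \mtran(s,\timeouts(s))(t)\cdot f(t)$ (with $f(s)=0$ for $s\in\goalStates$), and similarly for $f'$ with $\timeouts'$. Subtracting the two equations at each state $s\notin\goalStates$ gives
$$
f'(s)-f(s) \;=\; \underbrace{\bigl(\mcost(s,\timeouts'(s))-\mcost(s,\timeouts(s))\bigr)}_{\le\ \alpha\ \text{by (2)}} \;+\; \sum_{t}\bigl(\mtran(s,\timeouts'(s))(t)-\mtran(s,\timeouts(s))(t)\bigr)f'(t) \;+\; \sum_{t}\mtran(s,\timeouts(s))(t)\bigl(f'(t)-f(t)\bigr).
$$
The first summand is bounded by $\alpha$; the second is bounded in absolute value by $\alpha\cdot|\brambory|\cdot\max_t f'(t)$ using (1); and the third is a sub-stochastic averaging of the error vector $f'-f$ itself.

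First I would bound $\max_t f'(t) = \CostBound{\mcost,\timeouts'}$ in terms of $\CostBound{\mcost,\timeouts}$. The natural route: iterate the inequality above. Writing $M := \CostBound{\mcost,\timeouts}$, $M' := \CostBound{\mcost,\timeouts'}$, $n := \CostBound{\msteps,\timeouts}$, and $e := \max_s|f'(s)-f(s)|$, the displayed recursion taken over all $s$ and maximized gives $e \le \alpha + \alpha|\brambory|M' + e$ — which is useless directly because the self-term has coefficient $1$ (the chain is only sub-stochastic in the sense that it eventually leaves $\brambory$, not that each step contracts). So instead I would unfold the recursion along runs: iterating $f'-f = (\text{one-step error}) + \mtran(\timeouts)\cdot(f'-f)$ and using that under $\timeouts$ the target $\goalStates$ is reached (this is where $\Value{\fdC}<\infty$, hence finiteness of all relevant quantities via Lemma~\ref{lem:finite}, is used), one gets that $f'(s)-f(s)$ equals the $\probm_{\contMdp[s](\timeouts)}$-expected sum, over all steps before hitting $\goalStates$, of the per-step quantity $\bigl(\mcost(s_i,\timeouts'(s_i))-\mcost(s_i,\timeouts(s_i))\bigr) + \sum_t\bigl(\mtran(s_i,\timeouts'(s_i))(t)-\mtran(s_i,\timeouts(s_i))(t)\bigr)f'(t)$. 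Each per-step quantity is bounded by $\alpha(1+|\brambory|M')$, and the expected number of steps is at most $n$, giving $e \le \alpha\cdot n\cdot(1+|\brambory|M')$.

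Next I would close the loop to bound $M'$ itself. Apply the same identity with the roles reversed (the per-step error is symmetric up to sign in the transition part, and (2) in the reverse direction only says $\mcost(s,\timeouts(s))-\mcost(s,\timeouts'(s))$ could be large — but that's the wrong sign for bounding $M'$, so I actually bound $M' = \max_s f'(s)$ directly from the recursion for $f'$). A cleaner way: $M' \le M + e$, and from the bound just derived $e \le \alpha n(1+|\brambory|M')$. If I already knew $M' \le 2M$ say, I'd be done; to avoid circularity I use the hypothesis $\alpha \le \frac{1}{2M|\brambory|}$. Plug $M' \le M + \alpha n(1+|\brambory|M')$ and solve: $M'(1-\alpha n|\brambory|)\le M+\alpha n$. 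One needs $\alpha n|\brambory|\le 1/2$, which does not obviously follow from the stated hypothesis alone — so the genuinely careful step is to note that $n = \CostBound{\msteps,\timeouts} \le \CostBound{\mcost,\timeouts} = M$ because every state has positive cost rate $\rateRew(\sta)>0$ (indeed the $\brambory$-cost of any step is at least the minimal expected time in that step times the minimal rate, but more simply: since $\rateRew>0$ everywhere, the number-of-steps cost is dominated by a constant multiple of the real cost — this is exactly the role of the positivity assumption flagged in the theorem). Hmm, that only gives $n = O(M)$ up to a constant depending on $\lambda$ and $\min\rateRew$, not $n\le M$ verbatim; I would instead keep $\CostBound{\msteps,\timeouts}$ as its own symbol throughout (the lemma statement does exactly this), derive $M'\le 2M$ from $\alpha\cdot\CostBound{\msteps,\timeouts}\cdot|\brambory| \le \alpha\cdot\CostBound{\mcost,\timeouts}\cdot|\brambory|\le \tfrac12$ — which holds since $\msteps \le \mcost$ pointwise as a cost function is not true in general, so this is the subtle spot.

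The main obstacle, then, is exactly this bootstrapping: getting from the self-referential inequality $e \le \alpha\,\CostBound{\msteps,\timeouts}(1+|\brambory|M')$ and $M'\le M+e$ to a clean bound, without a spurious blow-up, using only $\alpha \le \frac{1}{2\,\CostBound{\mcost,\timeouts}|\brambory|}$. I expect the resolution is that one should measure the perturbation error with the $\msteps$-weighted expectation from the start (so that the averaging operator genuinely contracts towards $\goalStates$), derive $e \le 2\alpha\,\CostBound{\msteps,\timeouts}(1+|\brambory|M)$ with $M$ not $M'$ on the right by absorbing the $M'$-versus-$M$ gap into the factor $2$ precisely when $\alpha|\brambory|M\le\tfrac12$, and then the final bound $\expred_{\contMdp(\timeouts')}\le\expred_{\contMdp(\timeouts)} + 2\alpha\,\CostBound{\msteps,\timeouts}(1+\CostBound{\mcost,\timeouts}|\brambory|)$ follows by taking the max over initial states on the left and noting $\expred_{\contMdp(\timeouts)} \le \CostBound{\mcost,\timeouts}$ is already baked into the right side. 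The remaining steps — writing out the fixpoint equations, justifying the run-unfolding (dominated convergence, using finiteness from Lemma~\ref{lem:finite}), and the final arithmetic — are routine.
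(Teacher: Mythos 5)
Your plan is essentially the paper's own proof, just phrased probabilistically instead of linear-algebraically. The paper writes the two value vectors as solutions of the perturbed linear systems $(I-Q_{\timeouts})\mathbf{x}=f_{\timeouts}$ and $(I-Q_{\timeouts}-\Delta Q)\mathbf{y}=f_{\timeouts}+\Delta f$, first zeroing out the negative entries of $\Delta f$ (this, plus non-negativity of the fundamental matrix, is how the one-sided cost condition is handled -- your unfolding only needs the upper bound on $f'-f$, so this is the same point), and then invokes a standard perturbation bound in the sup-norm: the error is at most $\frac{\CostBound{\msteps,\timeouts}}{1-\CostBound{\msteps,\timeouts}\cdot\alpha\cdot|\brambory|}\cdot(\alpha+\CostBound{\mcost,\timeouts}\cdot\alpha\cdot|\brambory|)$, using that the row sums of $(I-Q_{\timeouts})^{-1}$ are expected numbers of visits. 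In that formulation only the \emph{unperturbed} value $\CostBound{\mcost,\timeouts}$ appears (the perturbed inverse is absorbed into the denominator via a Neumann-series bound), so the paper never needs your bootstrap through $\CostBound{\mcost,\timeouts'}$; but the price is the same: the denominator must be kept $\geq 1/2$, i.e.\ one needs $\alpha\cdot\CostBound{\msteps,\timeouts}\cdot|\brambory|\leq 1/2$.

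The spot you flagged as subtle is therefore a genuine issue, and it sits in the paper, not only in your plan: at the final inequality the paper writes that it follows ``from our assumption that $\alpha\leq 1/(2\cdot\CostBound{\msteps,\timeouts}\cdot|\brambory|)$'', whereas the lemma statement (and your attempt) has $\CostBound{\mcost,\timeouts}$ in that role. You are right that $\CostBound{\msteps,\timeouts}\leq\CostBound{\mcost,\timeouts}$ fails in general -- Example~\ref{ex:non-stability} makes the ratio arbitrarily large -- so the printed hypothesis does not imply the one the proof actually uses, and there is no hidden trick recovering it; your refusal to assert $\msteps\leq\mcost$ pointwise was correct, and your proposed ``absorb the gap into the factor $2$ when $\alpha|\brambory|\CostBound{\mcost,\timeouts}\leq\tfrac12$'' does not work for the same reason. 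Read with the $\msteps$-form of the hypothesis (which is how the lemma is used downstream, where Lemma~\ref{lem:bounded-step-existence} supplies delay functions whose step count is proportional to their cost), your argument closes exactly as you sketched: the identity $f'-f=(I-Q_{\timeouts})^{-1}(\Delta f+\Delta Q\, f')$, the bounds $\CostBound{\mcost,\timeouts'}\leq\CostBound{\mcost,\timeouts}+e$ and $e\cdot(1-\alpha\,\CostBound{\msteps,\timeouts}\,|\brambory|)\leq\alpha\,\CostBound{\msteps,\timeouts}\,(1+|\brambory|\,\CostBound{\mcost,\timeouts})$ give the stated inequality with the factor $2$. So: same route as the paper, with the only unclosed step being precisely the statement/proof mismatch in the paper itself.
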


 The next lemma shows how to set the parameters $\delta$ and $\dmax$ to make the finite mesh  $\paramspace(\delta,\dmax)$ ``dense'' enough, i.e. to ensure that for any $\timeouts$, $\paramspace(\delta,\dmax)$ contains a delay function that is $\alpha$-bounded by $\timeouts$.

\begin{lemma}
\label{lem:mesh-error}
There are positive numbers $\discconst,\cutconst\in \exp(\size{\fdC}^{\mathcal{O}(1)})$ computable in time polynomial in $\size{\fdC}$ such that the following holds for any $\alpha\in[0,1]$ and any delay function $\timeouts$: If we put $$\delta \; := \; \alpha/\discconst \quad \text{and} \quad \dmax \; := \; |\log(\alpha)|\cdot \cutconst \cdot\CostBound{\mcost,\timeouts},$$ then $\paramspace(\delta,\dmax)$ contains a delay function which is $\alpha$-bounded by $\timeouts$.
\end{lemma}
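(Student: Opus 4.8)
The plan is to build the required $\matchtimeouts\in\paramspace(\delta,\dmax)$ from $\timeouts$ one state at a time: in a state $s\in\states\setminus\allact$ the unique available delay is $\infty$, so nothing happens there, while in a state $s\in S^{reset}$ I will move $\timeouts(s)$ onto the mesh $\{k\delta : k\ge 1\}\cap(0,\dmax]$, truncating it at $\dmax$ if it is larger. The starting point is a closed form for $\mtran(s,\cdot)$ and $\mcost(s,\cdot)$ in terms of the delay $d$ in the initial state. Started in a configuration $(s,d)$ with $s\in S^{reset}$, the fdCTMC stays inside the ``keep region'' $K:=\{s\}\cup(S^{keep}\setminus\goalStates)$ -- its alarm counting down from $d$ and never being reset -- until it either exits to $\brambory$ via an exp-transition (at a random time $<d$) or the alarm rings at time exactly $d$ in some keep-state and a fixed-delay transition into $S^{reset}\subseteq\brambory$ occurs. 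Letting $Q$ be the sub-generator of the CTMC on $K$ killed upon exiting to $\brambory$ (entries bounded in absolute value by $\lambda$, nonpositive row sums), I obtain $\mtran(s,d)(t)=\sum_{u\in K}\trans(u,t)\,(e^{dQ})_{s,u}+\lambda\int_0^d\sum_{u\in K}\prob(u,t)\,(e^{rQ})_{s,u}\,dr$ and, for $\mcost(s,d)$, a sum of a rate-cost term $\sum_{u\in K}\rateRew(u)\int_0^d(e^{rQ})_{s,u}\,dr$, an analogous exp-impulse term, and a fixed-impulse term $\sum_{u\in K}\big(\sum_{v}\trans(u,v)\impRewFix(u,v)\big)(e^{dQ})_{s,u}$. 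Since for every positive deadline each qualitative path (any number of keep-region exp-steps followed by an exit, or the direct fixed transition at time $d$) carries positive probability, the support of $\mtran(s,d)$ is the same for all $d\in(0,\infty)$, so the ``qualitative structure preserved'' clause in the definition of $\alpha$-boundedness holds automatically for any positive choice of $\matchtimeouts(s)$.

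Differentiating these expressions in $d$ and using $(e^{rQ})_{s,u}\ge 0$, $\sum_u(e^{rQ})_{s,u}\le 1$, $\|Q\|_\infty\le 2\lambda$ together with the bounds on $\max_u\rateRew(u)$, $\max_{u,v}\impRewExp(u,v)$, $\max_{u,v}\impRewFix(u,v)$ read off from $\size{\fdC}$, I get that each map $d\mapsto\mtran(s,d)(t)$ and $d\mapsto\mcost(s,d)$ is Lipschitz on $(0,\infty)$ with a constant $L$ that is bounded -- uniformly in $d$ and in $\timeouts$ -- by $\exp(\size{\fdC}^{\mathcal{O}(1)})$ and computable in polynomial time. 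Setting $\discconst$ to an integer upper bound on $L$, with $\delta=\alpha/\discconst$, then guarantees that shifting the delay in a single state by at most $\delta$ inside the window $(0,\dmax]$ changes every $\mtran(s,\cdot)(t)$ and every $\mcost(s,\cdot)$ by at most $\alpha$; so componentwise rounding onto the mesh is harmless as long as $\timeouts(s)\le\dmax$.

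It remains to choose $\cutconst$ so that, whenever some $\timeouts(s)>\dmax$, truncating it down to $\dmax$ is also harmless, and this is where the two roles of $\CostBound{\mcost,\timeouts}$ enter. From $\mcost(s,d)\ge\rateRew(s)\cdot\frac{1-e^{-\lambda d}}{\lambda}$ (the cost of merely waiting in $s$ once) and $\mcost(s,\timeouts(s))\le\expred_{\contMdp[s](\timeouts)}^{\mcost}\le\CostBound{\mcost,\timeouts}$ (one macro-step is a lower bound on the total cost), one sees that when $\CostBound{\mcost,\timeouts}$ lies below a $\size{\fdC}$-dependent threshold $\theta=\Theta(\min_u\rateRew(u)/\lambda)$ then $\timeouts(s)\le 2\CostBound{\mcost,\timeouts}/\rateRew(s)\le\dmax$ for every $s$, so no truncation is needed at all and the previous paragraph already produces an $\alpha$-bounded $\matchtimeouts$. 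When $\CostBound{\mcost,\timeouts}>\theta$ (including $\CostBound{\mcost,\timeouts}=\infty$, where $\dmax=\infty$ and again no truncation occurs) the value $\dmax=|\log\alpha|\cdot\cutconst\cdot\CostBound{\mcost,\timeouts}$ is already at least $\exp(\size{\fdC}^{\mathcal{O}(1)})\cdot(|\log\alpha|+1)$, which I claim is enough ``saturation time'': for all $d\ge\dmax-\delta$ the quantities $\mtran(s,d)(t)$ and $\mcost(s,d)$ either lie within $\alpha$ of their values at $\dmax$ (when no bottom SCC of $K$ without an exp-exit -- a ``trap'' -- is reachable from $s$, so $e^{rQ}\to 0$), or $\mtran(s,d)(t)$ lies within $\alpha$ of its limit while $\mcost(s,d)$ is strictly increasing (when a trap is reachable, so the rate- and exp-impulse integrals keep growing at a rate bounded below by $\exp(-\size{\fdC}^{\mathcal{O}(1)})$ past a fixed, $\alpha$-independent time, while the fixed-impulse term has already saturated). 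In both cases truncating $\timeouts(s)$ down to the largest mesh point $\le\dmax$ perturbs $\mtran(s,\cdot)$ by at most $\alpha$ and changes $\mcost(s,\cdot)$ by at most $\alpha$ from above, so together with the mesh-rounding this yields $\matchtimeouts\in\paramspace(\delta,\dmax)$ that is $\alpha$-bounded by $\timeouts$.

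The technical heart -- and the step I expect to be the main obstacle -- is the ``saturation'' claim: an explicit, polynomial-time-computable estimate $\|e^{rQ}-\Pi\|\le C_0\,e^{-\rho r}$ with $\Pi=\lim_{r\to\infty}e^{rQ}$ and $\rho\ge\exp(-\size{\fdC}^{\mathcal{O}(1)})$, where $\rho$ has to dominate simultaneously the escape rate out of the transient part of $K$ and the spectral gaps of the (possibly several) irreducible trap sub-chains, plus the bookkeeping needed to certify that, once a trap is reachable from $s$, the derivative of the rate-cost term stays bounded below by $\exp(-\size{\fdC}^{\mathcal{O}(1)})$ past a fixed time, so that ``$\mcost(s,\cdot)$ eventually increasing'' really holds. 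Both facts are instances of standard effective bounds on eigenvalues and matrix exponentials of rational matrices of bounded bit-length; carving out the constants $\discconst,\cutconst$ explicitly and correctly folding in the possibility that $\goalStates$ intersects $S^{keep}$ is the part that takes real work, while everything else is routine once the $Q$-representation is in place.
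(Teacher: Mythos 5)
Your phase-one argument (a Lipschitz bound of order $\lambda$ and $(\lambda+1)\maxRew$ on $d\mapsto\mtran(s,d)(t)$ and $d\mapsto\mcost(s,d)$, giving $\discconst$ and making mesh-rounding harmless) is essentially the paper's own first step, and your observation that the support of $\mtran(s,d)$ is independent of $d\in(0,\infty)$ correctly disposes of the qualitative clause. The threshold split on $\CostBound{\mcost,\timeouts}$ versus $\Theta(\minRew/\lambda)$ is also sound. The problem is the truncation phase. The entire weight of your argument in the case where some $\timeouts(s)>\dmax$ and a ``trap'' (a bottom SCC of the keep region with no exp-exit) is reachable rests on the saturation estimate $\lVert e^{rQ}-\Pi\rVert\le C_0e^{-\rho r}$ with $\rho\ge\exp(-\size{\fdC}^{\mathcal{O}(1)})$ and a polynomial-time certifiable prefactor $C_0$, valid for arbitrary (non-reversible, possibly ill-conditioned) trap sub-chains. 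You defer this as ``standard effective bounds on eigenvalues and matrix exponentials,'' but it is not a citation-level fact in the form you need: root-separation bounds give the gap $\rho$, yet controlling $C_0$ (the eigenvector conditioning / Jordan structure of a rational non-normal generator) in a polynomial-time computable way is exactly the delicate part, and you also need the companion claim that $\mcost(s,\cdot)$ is eventually nondecreasing past an $\alpha$-independent time. As it stands this is the technical heart of the lemma and it is missing, so the proof is incomplete. (If you do want to go this route, a Doeblin-type minorization over a window of length $|\states|/\lambda$ inside each trap gives the required convergence with prefactor $O(1)$ and rate $(\minPst/e)^{|\states|}$ per window, with no spectral machinery at all.)

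The paper avoids this case entirely by a much simpler observation that your sketch misses: if a trap is reachable from $(s,\timeouts(s))$, then with probability at least $\minPst^{|\allact|}$ the run enters it and, since $\rateRew>0$ everywhere, accumulates rate cost at least $\minRew\cdot\timeouts(s)$ before the alarm fires, so
\[
\timeouts(s)\;\le\;\frac{\CostBound{\mcost,\timeouts}}{\minPst^{|\allact|}\cdot\minRew},
\]
i.e.\ the original delay is already below a $\dmax$ of the stated form and no truncation (hence no within-trap mixing analysis) is ever needed for such states. Truncation is then only required in the purely absorbing case, where the elementary window argument (absorption probability at least $(\minPst/e)^{|\allact|}$ per window of length $|\allact|/\lambda$) yields the $|\log\alpha|$ dependence of $\dmax$ directly, with $\cutconst$ explicit. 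So your overall architecture could be completed, but in its current form it replaces the paper's one-line cost-growth argument by an unproved quantitative mixing bound, which is precisely the gap.
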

\begin{proof}[Sketch]
Computing the value of $\delta$ is easy as the derivatives of the probabilities and costs are bounded from above by the rate $\lambda$ and the maximal cost rate, respectively.
For $\dmax$ we need additional technical observations,
 see~\cite{BKKNR:new-arxiv} for further details.
\end{proof}

Unfortunately, as shown in Example~\ref{ex:non-stability}, the value $\CostBound{\msteps,\timeouts}$ can be arbitrarily high, even for near-optimal functions $\timeouts$. 
Hence, we cannot use Lemma~\ref{lem:perturbation-error} right away to show that a delay function in $\paramspace(\delta,\dmax)$ that is $\alpha$-bounded by some near-optimal $\timeouts$ is also near-optimal.
The crucial insight is that for any $\eps' > 0$ there are (globally) $\eps'$-optimal delay functions that use number of steps that is \emph{proportional} to their expected cost. 

\begin{lemma}
	\label{lem:bounded-step-existence}
There is a positive number $\constFactor\in \exp(\size{\fdC}^{\mathcal{O}(1)})$ computable in time polynomial in $\size{\fdC}$ such that the following holds:
	for any $\eps'>0$, there is a globally $\eps'/2$-optimal delay function $\timeouts'$ with 
	\begin{equation}
	\label{eq:step-bound}
	\CostBound{\msteps,\timeouts'} \; \leq \; \frac{\CostBound{\mcost,\timeouts'}}{\eps'}\cdot \constFactor.
	\end{equation}
\end{lemma}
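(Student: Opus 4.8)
The plan is to start from an arbitrary $\eps'/2$-optimal delay function $\timeouts_0$ (which exists since $\Value{\contMdp}<\infty$) and transform it into a delay function $\timeouts'$ that is still globally near-optimal but additionally satisfies the step bound \eqref{eq:step-bound}. The key intuition, suggested by Example~\ref{ex:non-stability}, is that a delay function makes a large expected number of steps before reaching $\goalStates$ \emph{only if} it spends a long expected time in states that are traversed ``almost for free'', i.e., contribute negligibly to the accumulated cost. Since $\rateRew(s)>0$ in every state, every step through a non-goal state of $\brambory$ contributes a strictly positive amount of cost \emph{in expectation}, but this amount can be arbitrarily small (it is the expected sojourn time between two visits to $\brambory$, which shrinks as the delay in a $S^{reset}$-state shrinks). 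The fix is to \emph{bound the delays from below}: if we never allow a delay below some threshold $\dmin>0$, then each step through $\brambory$ costs at least some $c(\dmin)>0$ in expectation, which forces $\CostBound{\msteps,\timeouts}\le \CostBound{\mcost,\timeouts}/c(\dmin)$.

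First I would fix a threshold $\dmin := \eps'/\constFactor'$ for a suitable $\constFactor'\in\exp(\size{\fdC}^{\mathcal O(1)})$, and define $\timeouts'$ from $\timeouts_0$ by replacing every value $\timeouts_0(s)<\dmin$ with $\dmin$ (and leaving larger values, and the $\infty$-values on $\states\setminus\allact$, untouched). I then need two facts. (i) \textbf{$\timeouts'$ is still globally near-optimal.} Raising a delay to $\dmin$ is itself a ``small perturbation'': since the derivatives of $\mtran(s,\cdot)(t)$ and $\mcost(s,\cdot)$ w.r.t.\ the delay are bounded by $\lambda$ and $\maxRew$ respectively (as used in Lemma~\ref{lem:mesh-error}), moving a delay by at most $\dmin$ changes each one-step probability and cost by at most $\mathcal O(\dmin)$; so $\timeouts'$ is $\alpha$-bounded by $\timeouts_0$ for $\alpha = \mathcal O(\dmin)$ — here one must check the qualitative-structure clause, which holds because the set of reachable states from $(s,d)$ is the same for all finite $d>0$. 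Applying Lemma~\ref{lem:perturbation-error} (together with a global version at every initial state $s\in\brambory$, and standard a priori bounds $\CostBound{\mcost,\timeouts_0}\le\valueBound$, $\CostBound{\msteps,\timeouts_0}\le\dots$ from \appref{subs:mesh-error}) shows the extra error incurred is $\le\eps'/2$, so $\timeouts'$ is globally $\eps'$-optimal; choosing $\constFactor'$ large enough makes this precise. (Alternatively one reformulates so that $\timeouts'$ is globally $\eps'/2$-optimal and the statement's ``$\eps'/2$'' becomes ``$\eps'$'' — either bookkeeping works; the lemma as stated just needs $\eps'/2$-optimality, so I would halve $\eps'$ from the outset.) (ii) \textbf{$\timeouts'$ satisfies the step bound.} Because every delay used by $\timeouts'$ is $\ge\dmin$, starting from any configuration $(s,d)$ with $d\ge\dmin$ and $s\in S^{reset}$, the expected time before $\brambory$ is next reached is at least the probability that the very first exp-transition does not leave $\{s\}\cup S^{keep}$ times the expected time of that step; more crudely, with probability at least $e^{-\lambda\dmin}\cdot(\text{something})$ the process waits at least $\Omega(\min(\dmin,1/\lambda))$ time in the current state before any $\brambory$-state is hit, giving $\mcost(s,d)\ge \rateRew_{\min}\cdot c'(\dmin,\lambda) =: c(\dmin)>0$ where $\rateRew_{\min}=\min_s\rateRew(s)>0$. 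Summing over the run, $\expred^{\msteps}_{\contMdp[s](\timeouts')}\le \expred^{\mcost}_{\contMdp[s](\timeouts')}/c(\dmin)$ for every $s$, hence $\CostBound{\msteps,\timeouts'}\le\CostBound{\mcost,\timeouts'}/c(\dmin)$; since $c(\dmin)$ is of order $\dmin = \eps'/\constFactor'$ up to factors depending polynomially/exponentially on $\size{\fdC}$, this is exactly the claimed bound $\CostBound{\mcost,\timeouts'}\cdot\constFactor/\eps'$ after absorbing constants into $\constFactor$.

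\textbf{Main obstacle.} The delicate point is item (i): showing that clamping small delays to $\dmin$ does not blow up the cost. This is circular-looking because Lemma~\ref{lem:perturbation-error}'s error term itself contains $\CostBound{\msteps,\timeouts_0}$, which — as Example~\ref{ex:non-stability} warns — is \emph{not} a priori bounded for near-optimal $\timeouts_0$. The resolution is that we do \emph{not} perturb an arbitrary near-optimal function; we may first pick $\timeouts_0$ to be near-optimal \emph{and} to have all delays already bounded below by some (tiny but fixed) value, or — cleaner — we bound $\CostBound{\msteps,\timeouts_0}$ \emph{after} clamping, not before: define $\timeouts'$ by clamping, observe directly (via the $c(\dmin)$ argument of (ii)) that $\CostBound{\msteps,\timeouts'}\le\CostBound{\mcost,\timeouts'}/c(\dmin)$ and $\CostBound{\mcost,\timeouts'}\le\CostBound{\mcost,\timeouts_0}+1\le\valueBound+1$ is a priori bounded, and then run Lemma~\ref{lem:perturbation-error} in the \emph{other direction} — bounding $\expred_{\contMdp(\timeouts_0)}$ in terms of the clamped $\timeouts'$, whose step-count is controlled. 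Concretely, $\timeouts_0$ is $\alpha$-bounded by $\timeouts'$ for $\alpha=\mathcal O(\dmin)$ as well (the relation is essentially symmetric up to the asymmetry in clause 2, which we handle by also checking $\mcost(s,\timeouts_0(s))-\mcost(s,\timeouts'(s))\le\alpha$, true since raising the delay changes the cost by $\le\maxRew\cdot\dmin$), so Lemma~\ref{lem:perturbation-error} applied with the roles swapped gives $\expred_{\contMdp[s](\timeouts_0)}\ge\expred_{\contMdp[s](\timeouts')}-\mathcal O(\dmin)\cdot\CostBound{\msteps,\timeouts'}\cdot(1+\CostBound{\mcost,\timeouts'}|\brambory|)$; rearranged, $\expred_{\contMdp[s](\timeouts')}\le\expred_{\contMdp[s](\timeouts_0)}+\mathcal O(\dmin)\cdot(\valueBound+1)/c(\dmin)\cdot(1+(\valueBound+1)|\brambory|)$, and since $c(\dmin)=\Theta(\dmin)$ up to $\size{\fdC}$-dependent factors, the $\dmin/c(\dmin)$ ratio is a \emph{constant} in $\dmin$, so the whole extra term is $\le\eps'/2$ once $\constFactor'$ (hence the smallness of $\dmin$) is chosen large enough — and it is this choice that determines the exponential-size constant $\constFactor$. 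The remaining work — verifying the side condition $\alpha\le\frac{1}{2\CostBound{\mcost,\cdot}|\brambory|}$ for the invocation of Lemma~\ref{lem:perturbation-error}, and collecting the a priori bounds on $\valueBound$ and on one-step quantities from \appref{subs:mesh-error} — is routine.
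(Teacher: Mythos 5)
Your step (ii) is fine and matches part of the paper's reasoning (once every delay is at least $\dmin$, each visit to a state of $\brambory$ incurs expected cost at least $\minRew\cdot(1-e^{-\lambda \dmin})/\lambda$, which yields the proportionality \eqref{eq:step-bound}). The genuine gap is in step (i): the construction itself — clamping \emph{every} delay below a uniform threshold $\dmin$ up to $\dmin$ — is unsound, and it fails on the paper's own Example~\ref{ex:non-stability}. There, for any $\dmin>0$ one can pick a globally $\eps'/2$-optimal $\timeouts_0$ with \emph{both} $\timeouts_0(a),\timeouts_0(b)<\dmin$ (only the ratio of the two delays matters for near-optimality, so both can be scaled down below any threshold); your clamped function is then the constant function $(\dmin,\dmin)$, whose expected cost is $\approx 1.5$ while $\Value{\contMdp}=1$. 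The error $\approx 0.5$ does not shrink as $\dmin\to 0$, so no choice of $\constFactor'$ makes $\timeouts'$ even $\eps'$-optimal for small $\eps'$. Your own error estimate actually exposes this: you observe that the extra term is of the form $\mathcal{O}(\dmin)\cdot\CostBound{\msteps,\timeouts'}\cdot(1+\CostBound{\mcost,\timeouts'}|\brambory|)$ with $\CostBound{\msteps,\timeouts'}=\Theta(1/\dmin)$, i.e.\ the product is \emph{constant in $\dmin$} (and independent of $\eps'$), yet you conclude it can be made $\le\eps'/2$ by taking $\constFactor'$ large — these two statements contradict each other. (A secondary, more repairable issue: applying Lemma~\ref{lem:perturbation-error} ``with the roles swapped'' gives $\expred_{\contMdp(\timeouts_0)}\le\expred_{\contMdp(\timeouts')}+E$, i.e.\ only a \emph{lower} bound on $\expred_{\contMdp(\timeouts')}$; the mixed bound you want, with the step count of $\timeouts'$ and the value of $\timeouts_0$, would have to be re-derived from the linear-system perturbation argument rather than quoted.)

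This is exactly why the paper does \emph{not} impose a lower bound on all delays. Instead it identifies the only obstruction to the step bound — \emph{bad sinks}, i.e.\ bottom SCCs of the fixed-delay skeleton in which all fixed-delay impulse costs are zero and all delays are below a threshold $\lowerDelayBound$ — and removes them one at a time by inflating the delay of a \emph{single, carefully chosen} state per sink (the one minimizing the one-exponential-step value $\sinkVal{\sta}{\contMdp(\timeouts)}$), leaving all other small delays untouched; in Example~\ref{ex:non-stability} this raises $\timeouts(b)$ but keeps $\timeouts(a)$ tiny, which is the correct modification. The cost increase of that single inflation is controlled by a bespoke analysis (Lemmas~\ref{lem:sink-value-bounds}--\ref{lem:possitive-leaving-probability}), precisely because Lemma~\ref{lem:perturbation-error} cannot be invoked when $\CostBound{\msteps,\timeouts_0}$ is unknown, and then Lemma~\ref{lem:good-is-small-step} shows that once no bad sinks remain, states with tiny delays (which may legitimately persist outside sinks) cannot be traversed too often. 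If you want to salvage your plan you would have to prove that clamping can be restricted to such sink states and that the inflated state can be chosen so the cost increase is $\mathcal{O}(\eps')$ — which is essentially the paper's argument, not a shortcut around it.
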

\begin{proof}[Sketch]
After proving the existence of globally near-optimal strategies, we suitably define the number $\constFactor$
and take an arbitrary globally $\eps''$-optimal delay function $\timeouts''$, where $\eps''<<\eps'$. If this function \emph{does not} satisfy \eqref{eq:step-bound}, we conclude that it must induce the following pathological behaviour in $\fdC$: the system stays for a long time in a component of its state space such that a) fixed-delay transitions are active in each state of the component, each such transition within the component having zero impulse cost; and b) function $\timeouts''$ assigns very small (in a well-defined sense) delays to all states of the component. We call such a component a \emph{bad sink}. Intuitively, inside a bad sink the system rapidly performs one fixed-delay transition after another, incurring only a tiny cost between two successive transitions. This allows the delay function to perform many steps while staying  $\eps''$-optimal. (In Example~\ref{ex:non-stability}, $\{a,b\}$ would be a bad sink for $\timeouts$, as with high probability the cycle on these two states is completed every $0.0101$ units of time, with cost $0.0102$ incurred per cycle.)

To obtain a globally $\eps'$-optimal delay function satisfying~\eqref{eq:step-bound}, we carefully modify $\timeouts''$ so as to remove all bad sinks. This is done by selecting a suitable state in each bad sink and ``inflating'' its delay to a sufficiently high threshold. Choosing the right state and threshold is a rather delicate process, since an improper choice might significantly increase the incurred cost. Also note that Lemma~\ref{lem:perturbation-error} cannot be used to bound the increase in cost caused by the modification, as we do not know the value of $\CostBound{\msteps,\timeouts''}$. Instead, we utilize non-trivial insights into the structure of $\fdC$ and $\mdp$. \qed
\end{proof}

By using these proportional delay functions, we reduce the perturbation error of Lemma~\ref{lem:perturbation-error} only to a function of $\CostBound{\mcost,\timeouts}$. 
Combining this with Lemma~\ref{lem:mesh-error}, we obtain that the delay functions in $\paramspace(\delta,\dmax)$ approximate all the proportional delay functions $\timeouts$ of Lemma~\ref{lem:bounded-step-existence}, and thus $\Value{\contMdp,\paramspace(\delta,\dmax)}$ approximates $\Value{\contMdp}$.
The parameters $\delta,\dmax$ depend on $\eps$ and $\CostBound{\mcost,\timeouts}$ of any such $\timeouts$ from Lemma~\ref{lem:bounded-step-existence}. As these delay functions are \emph{globally} $\eps$-optimal, all such $\CostBound{\mcost,\timeouts}$ can be $\eps$-approximated by $\maxValue{\contMdp} := \max_{s \in S'}\Value{\contMdp[s]}$.

\begin{proposition}
	\label{prop:mesh-delays-ok}

For $\constFactor$ from Lemma~\ref{lem:bounded-step-existence}, $\discconst$
and $\cutconst$ from Lemma~\ref{lem:mesh-error}, it holds that
$$\left\lvert \; \Value{\contMdp} - \Value{\contMdp,\paramspace(\delta,\dmax)} \; \right\rvert 
\;\; \leq \;\; \frac{\eps}{2}$$
\[ \textrm{where }~~
\delta \; :=\; \frac{\alpha}{\discconst}, ~~ \dmax\; := \; |\log(\alpha)|\cdot \cutconst\cdot (\maxValue{\contMdp}+\eps), ~~ \alpha \; := \; \frac{\eps^2}{64\constFactor\cdot|\states'|\cdot (1+\maxValue{\contMdp})^2}.
\]
\end{proposition}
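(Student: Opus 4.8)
The plan is to reduce the statement entirely to the three lemmas of this subsection, so the proof becomes a matter of choosing the constants so that the relevant products telescope. Since $\paramspace(\delta,\dmax)$ is a set of strategies, the inequality $\Value{\contMdp}\le\Value{\contMdp,\paramspace(\delta,\dmax)}$ is immediate, and it suffices to produce a single delay function $\matchtimeouts\in\paramspace(\delta,\dmax)$ with $\expred_{\contMdp(\matchtimeouts)}\le\Value{\contMdp}+\tfrac{\eps}{2}$. I would first record two harmless normalisations: $\eps\le 1$ (otherwise run the argument with $\eps$ replaced by $1$, which only yields a better delay function) and $\maxValue{\contMdp}<\infty$ (states from which $\goalStates$ is unreachable are avoided by every delay function witnessing $\Value{\contMdp}<\infty$, so may be discarded in preprocessing); I also use $\constFactor\ge 1$. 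The overall chain is then: pick a ``proportional'' near-optimal $\timeouts'$ via Lemma~\ref{lem:bounded-step-existence}, round it onto the mesh to a function $\matchtimeouts$ that is $\alpha$-bounded by $\timeouts'$ via Lemma~\ref{lem:mesh-error}, and bound the resulting cost increase via Lemma~\ref{lem:perturbation-error}.

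Concretely, I would apply Lemma~\ref{lem:bounded-step-existence} with $\eps':=\eps/2$ to obtain a globally $\eps/4$-optimal $\timeouts'$ with $\CostBound{\msteps,\timeouts'}\le\tfrac{2\constFactor}{\eps}\cdot\CostBound{\mcost,\timeouts'}$. Global $\eps/4$-optimality is exactly what makes the otherwise-unknown quantity $\CostBound{\mcost,\timeouts'}=\max_{s\in\brambory}\expred_{\contMdp[s](\timeouts')}^{\mcost}$ controllable: it is at most $\maxValue{\contMdp}+\tfrac{\eps}{4}\le\maxValue{\contMdp}+\eps$ (in particular finite), and at the initial state it gives $\expred_{\contMdp(\timeouts')}\le\Value{\contMdp}+\tfrac{\eps}{4}$. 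Next I would feed $\timeouts'$ and the stated $\alpha$ into Lemma~\ref{lem:mesh-error}; its side conditions $\alpha\le 1$ and $\alpha\le\tfrac{1}{2\cdot\CostBound{\mcost,\timeouts'}\cdot|\brambory|}$ follow by inspection from $\eps\le1$, $\constFactor\ge1$ and $\CostBound{\mcost,\timeouts'}\le(1+\maxValue{\contMdp})^2$. The lemma returns $\matchtimeouts$ that is $\alpha$-bounded by $\timeouts'$ and lies in $\paramspace\big(\delta,\,|\log(\alpha)|\cdot\cutconst\cdot\CostBound{\mcost,\timeouts'}\big)$; since $\CostBound{\mcost,\timeouts'}\le\maxValue{\contMdp}+\eps$ and $\paramspace(\delta,\cdot)$ is monotone in its second argument, $\matchtimeouts\in\paramspace(\delta,\dmax)$ for the $\dmax$ in the statement.

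Finally, Lemma~\ref{lem:perturbation-error} (whose hypothesis is the side condition just checked) gives $\expred_{\contMdp(\matchtimeouts)}\le\expred_{\contMdp(\timeouts')}+2\alpha\cdot\CostBound{\msteps,\timeouts'}\cdot(1+\CostBound{\mcost,\timeouts'}\cdot|\brambory|)$. Substituting $\CostBound{\msteps,\timeouts'}\le\tfrac{2\constFactor}{\eps}\CostBound{\mcost,\timeouts'}$ and $\CostBound{\mcost,\timeouts'}\le 1+\maxValue{\contMdp}$ (so that $1+\CostBound{\mcost,\timeouts'}|\brambory|\le 2|\brambory|(1+\maxValue{\contMdp})$), the perturbation term is at most $\tfrac{8\,\alpha\,\constFactor\,|\brambory|\,(1+\maxValue{\contMdp})^2}{\eps}$, and plugging in $\alpha=\tfrac{\eps^2}{64\constFactor\cdot|\states'|\cdot(1+\maxValue{\contMdp})^2}$ makes this exactly $\tfrac{\eps}{8}$ --- the constant $64$ is tuned precisely to absorb the three factors of $2$ coming from $2\alpha$, from the step bound, and from $1+\CostBound{\mcost,\timeouts'}|\brambory|$. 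Combined with $\expred_{\contMdp(\timeouts')}\le\Value{\contMdp}+\tfrac{\eps}{4}$ this yields $\expred_{\contMdp(\matchtimeouts)}\le\Value{\contMdp}+\tfrac{3\eps}{8}<\Value{\contMdp}+\tfrac{\eps}{2}$, which is the claim.

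The real difficulty is not in this proposition but in the three lemmas it stitches together --- above all Lemma~\ref{lem:bounded-step-existence}, which supplies the near-optimal delay functions whose expected step count is proportional to their expected cost; without it Lemma~\ref{lem:perturbation-error} would be useless, as Example~\ref{ex:non-stability} shows. Within the assembly itself, the two mildly delicate points I would be careful about are (i) replacing the run-dependent bound $\CostBound{\mcost,\timeouts'}$ by the effectively computable $\maxValue{\contMdp}+\eps$ in the definition of $\dmax$, which relies on global near-optimality together with monotonicity of $\paramspace(\delta,\cdot)$, and (ii) checking the side conditions on $\alpha$ demanded by Lemmas~\ref{lem:mesh-error} and~\ref{lem:perturbation-error}; the rest is routine constant-chasing.
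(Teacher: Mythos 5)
Your proposal is correct and follows essentially the same route as the paper's own proof: instantiate Lemma~\ref{lem:bounded-step-existence} with $\eps'=\eps/2$ to get a globally $\eps/4$-optimal proportional delay function, round it onto the mesh via Lemma~\ref{lem:mesh-error} with the stated $\alpha$ (using global near-optimality to replace $\CostBound{\mcost,\timeouts'}$ by $\maxValue{\contMdp}+\eps$ in $\dmax$), and bound the resulting increase via Lemma~\ref{lem:perturbation-error}. Your constant-chasing (perturbation error $\eps/8$, total $3\eps/8$) is slightly tighter than the paper's ($\eps/4$, total $\eps/2$), and like the paper you implicitly rely on a wlog $\eps\le 1$; these are cosmetic differences only.
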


\subsubsection{Bounding $\maxValue{\contMdp}$}

In Proposition~\ref{prop:mesh-delays-ok}, the allowed perturbation $\alpha$ and hence the fineness of the mesh $\delta$ needed to obtain the required precision depend on the bound 
$\maxValue{\contMdp}$.
We first provide the following theoretical worst-case bound.

\begin{lemma}
	\label{lem:bound-value}
	There is a number $\maxvalconst\in \exp(\size{\fdC}^{\mathcal{O}(1)})$ computable in time polynomial in $\size{\fdC}$ such that $\maxValue{\contMdp} \leq \maxvalconst$. 
\end{lemma}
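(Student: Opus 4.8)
The plan is to bound $\maxValue{\contMdp} = \max_{s \in S'} \Value{\contMdp[s]}$ by exhibiting one concrete delay function $\timeouts$ and showing that $\expred_{\contMdp[s](\timeouts)}$ is bounded by a single-exponential expression in $\size{\fdC}$ for every initial state $s \in S'$. Since $\Value{\contMdp[s]} = \inf_{\timeouts'} \expred_{\contMdp[s](\timeouts')} \le \expred_{\contMdp[s](\timeouts)}$, this suffices. The natural candidate is a \emph{constant} delay function, e.g.\ $\timeouts(s) = 1$ (or $\timeouts(s) = 1/\lambda$) for all $s\in\allact$; by Lemma~\ref{lem:finite} and the assumption $\Value{\fdC}<\infty$, every delay function yields finite expected cost, so this particular $\timeouts$ is finite-valued, and it remains to make the bound explicit.

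First I would bound the relevant \emph{per-step} quantities in the DTMDP $\contMdp$. For a fixed $\timeouts$ and $s\in S'$, the value $\mcost(s,\timeouts(s)) = \expred_{\fdC[s](\timeouts(s))}^{\costFdC[S']}$ is the expected cost accumulated by the underlying fdCTMC from configuration $(s,\timeouts(s))$ until $S'$ is re-entered. Between two consecutive visits to $S'$, the alarm clock rings at most once, so there is at most one fixed-delay transition (contributing at most $\maxRew$), and before that ring the process makes a geometric (with respect to exp-transitions) number of steps with rate $\lambda$; hence the expected time elapsed is at most $\timeouts(s) + 1/\lambda \le \dmax + 1/\lambda$ and the expected number of exp-transitions is at most $\lambda\timeouts(s)+1$, giving $\mcost(s,\timeouts(s)) \le \maxRew\cdot\big((\lambda\,\dmax + 1) + (\dmax + 1/\lambda)\big) + \maxRew$, which is polynomial in $\size{\fdC}$ for constant $\timeouts$. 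Simultaneously I would lower-bound the probability $\mtran(s,\timeouts(s))(s')$ of reaching any \emph{particular} reachable target state $s'\in S'$ before returning to $S'$: this is at least some $\minPst$ that is exponentially small in $\size{\fdC}$ (concretely of the form $(\minPst/e)^{\mathrm{poly}(|S|)}$ as in $\valueBoundExpression$), because one can describe a finite path of length $O(|S|)$ reaching $s'$, each exp-transition of which has probability at least $\minPst$ and occurs within the alarm window with probability bounded below by a constant depending on $\lambda,\dmax$, and each fixed-delay transition of which has probability at least $\minPst$.

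Next I would combine these into a bound on the expected number of steps of the DTMDP before hitting $G$. Using that from \emph{every} state of $S'$ the goal $G$ (which is reachable from every state since $\Value{\fdC}<\infty$, hence $\mcost$ is finite everywhere) is hit within $|S'|$ DTMDP-steps with probability at least $\minPst^{|S'|}$, a standard geometric-trials argument gives $\CostBound{\msteps,\timeouts} = \max_{s}\expred_{\contMdp[s](\timeouts)}^{\msteps} \le |S'|/\minPst^{|S'|}$. Multiplying the expected number of steps by the maximal per-step cost yields $\expred_{\contMdp[s](\timeouts)} \le \CostBound{\msteps,\timeouts}\cdot \max_{s}\mcost(s,\timeouts(s))$, which is of the form $\exp(\size{\fdC}^{\mathcal{O}(1)})$; setting $\maxvalconst$ to this expression and noting that all the ingredients ($\lambda$, $\dmax$, $|S|$, $\maxRew$, $\minPst$) are read off or computed from $\fdC$ in polynomial time finishes the proof.

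The main obstacle is the lower bound $\minPst$ on the one-step reaching probabilities in $\contMdp$: one must argue carefully that, with the constant delay $\timeouts$, \emph{some} short path inside the fdCTMC reaches the desired next state of $S'$ before the alarm interferes unfavourably, controlling both the discrete transition probabilities and the continuous timing events (i.e.\ the probability that $t_{exp}$ falls in the right interval), and that these combine to the claimed $(\minPst/e)^{\mathrm{poly}(|S|)}$-type bound uniformly over all $s,s'$. The cost and step-count bounds are then routine geometric estimates.
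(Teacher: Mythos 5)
Your proposal matches the paper's proof in its essentials: the paper likewise fixes one concrete delay function, bounds the one-step DTMDP costs $\mcost(s,\timeouts(s))$ polynomially in $\size{\fdC}$, lower-bounds the one-step transition probabilities via uniformization by a quantity of the form $(\minPst/e)^{\mathrm{poly}(|\states|)}$, and concludes with the same geometric-trials argument (expected number of steps times maximal one-step cost). The only cosmetic difference is that the paper chooses $\timeouts \equiv |\allact|/\lambda$, so the Poisson weight of up to $|\allact|$ jumps is simply $e^{-|\allact|}$, whereas with your constant delay $1/\lambda$ you must additionally absorb the $1/i!$ factor (bounded below by $|\states|^{-|\states|}$) rather than a per-transition ``constant,'' which still yields a singly-exponential bound.
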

In practice, one can obtain better bounds by computing $\max_{s\in S} \expred_{\fdC[s](\timeouts)}$ for an arbitrary $\timeouts$ as $\max_{s\in S} \expred_{\fdC[s](\timeouts)} \geq \max_{s\in S} \inf_{\timeouts'} \expred_{\fdC[s](\timeouts')} = \maxValue{\contMdp}$. One can set $\timeouts$ by some heuristics (e.g. to the constant function $1/\lambda$) or randomly. One can even use the minimum from a series of such computations.
We believe that in most cases, this yields a significant improvement. For instance, for the $3$-state model from Section~\ref{sec-intro}, we get a bound $\max_{s\in S} \expred_{\fdC[s](1/\lambda)} \approx 4.3$ instead of the theoretical bound $\maxValue{\contMdp} \approx 55000$.

\myspacesmall
\subsubsection{Representing the Finite Mesh}
Since one-step costs and probabilities produced by delay functions in $\paramspace(\delta,\dmax)$ may be irrational, we need to approximate them by rational numbers.
So let us fix $\delta$ and $\dmax$ from Proposition~\ref{prop:mesh-delays-ok}. 
For any $\kappa>0$ we define DTMDP $\mdp_{\kappa}=(\states',\macts_{\kappa},\mtran_{\kappa}, \initstate,G)$ with a cost function $\mcost_{\kappa}$ where 
\begin{itemize}
	\item the strategies are exactly delay functions from $D(\delta,\dmax)$, i.e. $\macts_{\kappa} =\{k\delta\mid k\in\Nset, \delta \leq k\delta\leq \dmax\}\cup \{\infty\}$ where again $\infty$ is enabled in $\sta \in \states'\setminus\allact$ and the rest is enabled in $\sta \in \allact$; and 
	\item for all $(s,\timeouts)\in S'\times \macts_{\kappa}$ the transition probabilities in $\mtran_{\kappa}(s,\timeouts)$ and costs in $\mcost_{\kappa}(s,\timeouts)$ are obtained by rounding the corresponding numbers in $\mtran(s,\timeouts)$ and $\mcost(s,\timeouts)$ up to the closest multiple of $\kappa$.\footnote{More precisely, all but the largest probability in $\mtran(s,\timeouts)$ are rounded up, the largest probability is suitably rounded down so that the resulting vector adds up to 1. 
}
	For more details and pseudo-codes see Appendix~\ref{app:algs}.
\end{itemize}

\begin{proposition}
\label{prop:rounding-error}
Let $\eps>0$ and fix $\kappa=(\eps\cdot\delta\cdot \minRew)/({2\cdot |\states'|\cdot(1+\maxValue{\contMdp})^2}),$ where $\minRew$ is a minimal cost rate in $\fdC$.
	\label{prop:mdp-formulation-2}
	Then it holds
	$$ \left\lvert \; \Value{\contMdp,\paramspace(\delta,\dmax)} - \Value{\mdp_{\kappa}} \; \right\rvert 
	\;\; \leq \;\; \frac{\eps}{2}.$$
\end{proposition}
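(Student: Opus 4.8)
The plan is to compare, delay function by delay function, the two Markov chains $\contMdp(\timeouts)$ and $\mdp_\kappa(\timeouts)$ and then take the infimum over $\timeouts$ --- note that $\mdp_\kappa$ and the restriction of $\contMdp$ to $\paramspace(\delta,\dmax)$ have the same strategy set, namely the finite collection $\paramspace(\delta,\dmax)$. First I would record the elementary fact that, for $\timeouts\in\paramspace(\delta,\dmax)$, the chain $\mdp_\kappa(\timeouts)$ is a perturbation of $\contMdp(\timeouts)$ in which every one-step probability and every one-step cost is changed by at most $\kappa$, costs are only increased, and the qualitative transition structure (which entries are zero) is preserved; this is immediate from the definition of $\mdp_\kappa$. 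The goal is a conditional per-strategy estimate --- $|\expred_{\mdp_\kappa(\timeouts)}-\expred_{\contMdp(\timeouts)}|\le\eps/2$ whenever $\CostBound{\mcost,\timeouts}$ is not much larger than $\maxValue{\contMdp}$ --- together with the observation that the delay functions realizing the two infima satisfy this side condition.

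For the per-strategy estimate I would rerun the numerical-analysis argument behind Lemma~\ref{lem:perturbation-error}, now with a single strategy in two chains rather than two strategies in one chain. Writing $P,\mathbf{c}$ for the substochastic matrix on $\states'\setminus G$ and the one-step cost vector of $\contMdp(\timeouts)$, so that the expected-cost vector is $(I-P)^{-1}\mathbf{c}$, and $P',\mathbf{c}'$ for the analogous data of $\mdp_\kappa(\timeouts)$, the identity $(I-P')^{-1}-(I-P)^{-1}=(I-P')^{-1}(P'-P)(I-P)^{-1}$ yields, once $\kappa$ is small enough, $|\expred_{\mdp_\kappa(\timeouts)}-\expred_{\contMdp(\timeouts)}|\le 2\kappa\cdot\CostBound{\msteps,\timeouts}\cdot(1+|\states'|\cdot\CostBound{\mcost,\timeouts})$. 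The crucial move --- the one that defuses the pathology of Example~\ref{ex:non-stability}, where $\CostBound{\msteps,\cdot}$ was unbounded even for near-optimal strategies --- is to exploit that on the mesh every delay is at least $\delta$: in any single step of $\contMdp$ the expected time spent in that step's entry state alone is at least $(1-e^{-\lambda\delta})/\lambda$, so each one-step cost is bounded below by $c_0:=\minRew\cdot(1-e^{-\lambda\delta})/\lambda>0$, whence $\CostBound{\msteps,\timeouts}\le\CostBound{\mcost,\timeouts}/c_0$ for every $\timeouts\in\paramspace(\delta,\dmax)$. Substituting this and the stated value of $\kappa$, a routine computation shows $|\expred_{\mdp_\kappa(\timeouts)}-\expred_{\contMdp(\timeouts)}|\le\eps/2$ as soon as $\CostBound{\mcost,\timeouts}\le 1+\maxValue{\contMdp}$; this lower bound $c_0$ on one-step costs is precisely why $\delta\cdot\minRew$ sits in the numerator of $\kappa$.

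To pass to the infima I would treat the two inequalities separately. For $\Value{\mdp_\kappa}\le\Value{\contMdp,\paramspace(\delta,\dmax)}+\eps/2$ I would apply the per-strategy estimate to one of the globally $\eps$-optimal mesh delay functions produced in the proof of Proposition~\ref{prop:mesh-delays-ok}; such a strategy has $\CostBound{\mcost,\timeouts}\le\maxValue{\contMdp}+\eps\le 1+\maxValue{\contMdp}$ (we may assume $\eps\le1$), so the estimate applies. For the reverse inequality I would take a \emph{globally} optimal delay function $\timeouts$ of the finite MDP $\mdp_\kappa$ (it exists, as all one-step costs are positive and every state can reach $G$); then $\CostBound{\mcost_\kappa,\timeouts}=\max_{s\in\states'}\Value{\mdp_\kappa[s]}$, which is at most $\maxValue{\contMdp}+O(\eps)$ by comparing $\Value{\mdp_\kappa[s]}$ with the cost of the strategy from the first direction, and a short Neumann-series bootstrap --- using again $c_0>0$ and the smallness of $\kappa$ --- transfers this bound to $\CostBound{\mcost,\timeouts}$ and to the expected step count of $\timeouts$ in $\contMdp$, so the per-strategy estimate applies to this $\timeouts$ as well. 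In each direction, combining the per-strategy estimate with $|\inf f-\inf g|\le\sup|f-g|$ over the strategies involved gives the claim.

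The step I expect to be the main obstacle is this bootstrap in the reverse direction: one must first extract enough control over the $\mdp_\kappa$-optimal strategy (via its global optimality) before the perturbation estimate can be invoked on it. By contrast, the difficulty that dominated the previous subsection --- near-optimal delay functions performing unboundedly many steps, handled there by the delicate bad-sink argument of Lemma~\ref{lem:bounded-step-existence} --- does not recur here: pinning every delay to be at least $\delta$ already forces the step count to be at most the accumulated cost divided by $c_0$. Full details are in~\cite{BKKNR:new-arxiv}.
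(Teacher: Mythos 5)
Your proposal is correct and follows essentially the same route as the paper: the perturbation bound of Lemma~\ref{lem:perturbation-error} applied per strategy to the pair $\contMdp(\timeouts)$, $\mdp_\kappa(\timeouts)$, combined with the observation that on the mesh every one-step cost is at least $\minRew\cdot(1-e^{-\lambda\delta})/\lambda$ (so the expected step count is at most the expected cost divided by this quantity — this is exactly why $\delta\cdot\minRew$ appears in $\kappa$), instantiated at a globally near-optimal strategy on each side. The one divergence is the reverse direction, where the paper states the perturbation bound symmetrically in the $\mdp_\kappa$-quantities $\CostBound{\msteps,\timeouts_\kappa}$ and $\CostBound{\mcost_\kappa,\timeouts_\kappa}$, which are bounded directly because rounding only increases one-step costs, thereby avoiding the bootstrap you flag as the main obstacle (your route works too, but is more labor than needed).
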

\begin{proof}[Sketch]
We use similar technique as in Lemma~\ref{lem:perturbation-error}, taking advantage of the fact that  probabilities and costs of each action are changed by at most $\kappa$ by the rounding. \qed  
\end{proof}

\subsubsection{The Algorithm for Theorem~\ref{thm:unconstrained}}
First the discretization step $\delta$, maximal delay $\dmax$, and rounding error $\kappa$ are computed. Then the discretized DTMDP $\mdp_{\kappa}$  is constructed according to the above-mentioned finite mesh representation. Finally the globally optimal delay function from $\mdp_{\kappa}$ is chosen using standard polynomial algorithms for finite MDPs~\cite{Puterman:book,EWY:RSG-Positive-Rewards}. From Propositions~\ref{prop:mesh-delays-ok} and~\ref{prop:mdp-formulation-2} it follows that this delay function is $\eps$-optimal in $\mdp$, and thus also in $\fdC$ (Proposition~\ref{prop:mdp-formulation}). 

The size of $\mdp_{\kappa}$ (and its construction time) can be stated in terms of a polynomial in $\size{\fdC}$, $\maxValue{\contMdp}$, $1/\delta$, $\dmax$, and $1/\kappa$. Examining the definitions of these parameters in Propositions~\ref{prop:mesh-delays-ok} and~\ref{prop:mdp-formulation-2}, as well as the bound on $\maxValue{\contMdp}$ from Lemma~\ref{lem:bound-value}, we conclude that the size of $\mdp_{\kappa}$ and the overall running time of our algorithm are exponential in $\size{\fdC}$ and polynomial in $1/\eps$. The pseudo-code of the whole algorithm is given in \appref{app:algs}.

\vspace{-0.5\baselineskip}
\section{Bounded Optimization Under Partial Observation}
\label{sec:results-multi}

In this section, we address the cost optimization problem for delay functions chosen under partial observation. For an 
equivalence relation $\equiv$ on $\allact$ specifying observations, and $\dmin,\dmax > 0$, we define $\paramspace(\dmin,\dmax,\equiv) = \{\timeouts \mid \forall s,s': \dmin \leq \timeouts(s) \leq \dmax, s\equiv s' \Rightarrow \timeouts(s)=\timeouts(s')\}$.

\begin{theorem}\label{thm:multi-entry-approximation}
There is an algorithm that for a fdCTMC structure $\fdC$, a cost structure $\costFdC$ with $\rateRew(\sta) > 0$ for all $\sta\in\states$, 
an equivalence relation $\equiv$ on $\allact$, $\dmin,\dmax > 0$, and $\eps>0$ computes in time exponential in $\size{\fdC}$, $\size{\dmin}$, and $\dmax$ a delay function $\timeouts$ such that
$$
\left\lvert 
\inf_{\timeouts' \in \paramspacePO} \expred_{\fdC(\timeouts')}
- \expred_{\fdC(\timeouts)}
\; \right\rvert
\;\; < \;\; \eps.$$
\end{theorem}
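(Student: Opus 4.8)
The plan is to mirror the discretization pipeline of Section~\ref{sec:results-single}, with one decisive simplification: because $\dmin>0$, the ``bad sink'' phenomenon of Example~\ref{ex:non-stability} cannot arise, so the expected number of steps before reaching $\goalStates$ is bounded \emph{uniformly} over all $\timeouts\in\paramspacePO$ and Lemma~\ref{lem:perturbation-error} can be applied directly --- the whole apparatus of Lemma~\ref{lem:bounded-step-existence} (globally near-optimal functions with proportional step count, delay inflation) is not needed. Concretely, I first perform the WLOG transformation of Section~\ref{sec:results-single} so that $\allact=S^{reset}\uplus S^{keep}$, extending $\equiv$ so that the two copies of each duplicated state remain equivalent, and then build the uncountable DTMDP $\contMdp$ exactly as in Section~\ref{subsec:reduction-DTMDP}. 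As the delay functions in $\paramspacePO$ form a subset of the strategies of $\contMdp$, Proposition~\ref{prop:mdp-formulation} gives $\inf_{\timeouts'\in\paramspacePO}\expred_{\fdC(\timeouts')}=\Value{\contMdp,\paramspacePO}$, so it suffices to find $\timeouts\in\paramspacePO$ with $\expred_{\contMdp(\timeouts)}\le\Value{\contMdp,\paramspacePO}+\eps$. The case $\Value{\contMdp,\paramspacePO}=\infty$ means, by Lemma~\ref{lem:finite}, that every delay function is bad; it is detected by evaluating one arbitrary $\timeouts\in\paramspacePO$ via Proposition~\ref{prop:compute-cost}, so from now on assume the value is finite.

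The quantitative heart of the argument is a uniform bound $\beta$, of order $\multiValueBoundExpression$ (the counterpart of Lemma~\ref{lem:bound-value} for the constrained problem), such that $\CostBound{\mcost,\timeouts}\le\beta$ and $\CostBound{\msteps,\timeouts}\le\beta$ for every $\timeouts\in\paramspacePO$. I would obtain it by lower-bounding the smallest positive one-step probability of $\contMdp$ under an arbitrary such $\timeouts$: from any state of $\brambory$ with delay $d\in[\dmin,\dmax]$ a fixed-delay transition fires before the chain leaves $\allact$ with probability at least $e^{-\lambda\dmax}$, and the probability of following any fixed simple path through $S^{keep}$ is at least of order $(\lambda\dmin\cdot\minPst/|\states|)^{|\states|}e^{-\lambda\dmax}$; hence from every state $\goalStates$ is reached within $|\brambory|$ steps with probability bounded away from $0$ by an explicit quantity, giving geometric-tail bounds on the number of steps and --- using $\rateRew>0$ together with the fact that every macro-step of $\contMdp$ lasts at most $\dmax$ time units --- on the accumulated cost.

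Given $\beta$, I set the perturbation parameter $\alpha:=\eps/(8\beta(1+\beta|\brambory|))$ (assuming WLOG that $\eps$ is small enough that also $\alpha\le 1/(2\beta|\brambory|)$, so the hypothesis of Lemma~\ref{lem:perturbation-error} holds for every $\timeouts\in\paramspacePO$) and the mesh step $\delta:=\alpha/\discconst$, with $\discconst$ the derivative bound from the proof of Lemma~\ref{lem:mesh-error} (so that changing a delay by at most $\delta$ perturbs every one-step probability and cost by at most $\alpha$; the qualitative transition structure is preserved automatically since all delays stay positive and finite). Now take any $\good\in\paramspacePO$ that is $(\eps/4)$-optimal within $\paramspacePO$ and round each of its values to the nearest point of the $\delta$-grid of $[\dmin,\dmax]$ anchored at $\dmin$, so that all grid points stay in $[\dmin,\dmax]$. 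The resulting $\matchtimeouts$ is again in $\paramspacePO$ --- equal values round identically, preserving $\equiv$, and the values stay in $[\dmin,\dmax]$ --- and $\matchtimeouts$ is $\alpha$-bounded by $\good$, so Lemma~\ref{lem:perturbation-error} yields
\[ \expred_{\fdC(\matchtimeouts)}=\expred_{\contMdp(\matchtimeouts)}\le\expred_{\contMdp(\good)}+2\alpha\,\beta\,(1+\beta|\brambory|)\le\Value{\contMdp,\paramspacePO}+\tfrac{\eps}{4}+\tfrac{\eps}{4}. \]

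Finally the algorithm enumerates all delay functions in $\paramspacePO$ whose values lie on this $\delta$-grid --- at most $(2+\dmax/\delta)^{c}$ of them, $c\le|\allact|$ being the number of $\equiv$-classes --- evaluates each up to error $\eps/4$ via Proposition~\ref{prop:compute-cost} (note that, unlike in Section~\ref{sec:results-single}, no explicit finite MDP with rationalized entries need be built, since only the \emph{value} of each of the finitely many candidates is required), and returns the candidate with the smallest estimate. Since $\matchtimeouts$ is a candidate, the returned $\timeouts$ obeys $\expred_{\fdC(\timeouts)}\le\expred_{\fdC(\matchtimeouts)}+\eps/2\le\Value{\contMdp,\paramspacePO}+\eps$, and $\expred_{\fdC(\timeouts)}\ge\Value{\contMdp,\paramspacePO}$ because $\timeouts\in\paramspacePO$; tightening the constants gives the strict inequality. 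For the running time, $1/\delta$ is polynomial in $\size{\fdC}$, $1/\eps$ and $\beta^{O(1)}$, and $\beta\approx\multiValueBoundExpression$ is exponential in $\size{\fdC}$, in $\size{\dmin}$ and in the value $\dmax$ (through the $e^{\lambda\dmax}$ and $1/\dmin$ factors raised to the $|\states|$-th power); raising to the power $c\le|\states|$ and adding the polynomial per-candidate evaluation cost keeps the whole bound exponential in $\size{\fdC}$, $\size{\dmin}$ and $\dmax$, as claimed. I expect the main obstacle to be establishing $\beta$: one must show that, however an adversary assigns delays in $[\dmin,\dmax]$ subject to $\equiv$, the induced chain of $\contMdp$ still reaches $\goalStates$ fast enough, which needs a careful lower bound on the minimal transition probabilities of $\contMdp$ in terms of $\dmin,\dmax,\lambda$ and $\minPst$ together with the observation that $\rateRew>0$ precludes cost-free absorbing behaviour; everything else is routine reuse of Lemmas~\ref{lem:perturbation-error} and~\ref{lem:mesh-error}.
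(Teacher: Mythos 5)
Your proposal is correct and follows essentially the same route as the paper: reduce to the DTMDP $\contMdp$ via Proposition~\ref{prop:mdp-formulation}, exploit the bounds $\dmin,\dmax$ to get uniform bounds on $\CostBound{\mcost,\timeouts}$ and $\CostBound{\msteps,\timeouts}$ for all $\timeouts\in\paramspacePO$ (the paper's Lemmas~\ref{lem:bounded-prob}, \ref{lem:value-bound}, \ref{lem:cost-bound}, which is exactly why Lemma~\ref{lem:bounded-step-existence} is not needed here), then apply Lemmas~\ref{lem:perturbation-error} and~\ref{lem:mesh-error} to justify the $\delta$-grid and finish by exhaustive enumeration of grid delay functions respecting $\equiv$, matching Proposition~\ref{prop:multi-entry-aprox} and the paper's 2-exponential algorithm. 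The only deviation is cosmetic: instead of building the $\kappa$-rounded finite MDP $\matchmdp_D$ and optimizing over its strategies, you evaluate each of the finitely many grid candidates directly with Proposition~\ref{prop:compute-cost} up to $\eps/4$, which replaces the paper's rounding-error analysis (Proposition~\ref{prop:rounding-error}/Lemma~\ref{lem:rounding}) by a per-candidate numerical evaluation and yields the same complexity bound.
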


\noindent
Also, one cannot hope for polynomial complexity as the corresponding threshold problem is NP-hard, even if we restrict to instances where $\dmax$ is of magnitude polynomial in $\size{\fdC}$.

\begin{theorem}\label{thm:np-complete}
For a fdCTMC structure $\fdC$, a cost structure $\costFdC$ with $\rateRew(\sta) > 0$ for all $\sta\in\states$, 
an equivalence relation $\equiv$ on $\allact$, $\dmin,\dmax > 0$, $\eps>0$, and $\computedValue \in\Rsetpo$, it is NP-hard to decide
$$ \text{whether} \qquad 
\inf_{\timeouts \in \paramspacePO} \expred_{\fdC(\timeouts)}
 \;>\; \computedValue + \eps
\qquad \text{or} \qquad
\inf_{\timeouts \in \paramspacePO} 
\expred_{\fdC(\timeouts)}
 \;<\; \computedValue - \eps
$$
(if the optimal cost lies in the interval $[\computedValue-\eps,\computedValue+\eps]$, an arbitrary answer may be given). The problem remains NP-hard even if $\dmax$ is given in unary encoding.
\end{theorem}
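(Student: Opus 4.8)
I will prove Theorem~\ref{thm:np-complete} by a polynomial-time reduction from $3$-SAT. Given a $3$-CNF formula $\varphi$ over variables $x_1,\dots,x_n$ with clauses $C_1,\dots,C_m$ (we first delete tautological clauses, which does not change satisfiability), I will build an fdCTMC structure $\fdC$ with rate $\lambda=1$, a cost structure $\costFdC$, an equivalence $\equiv$ on $\allact$, bounds $\dmin:=\tfrac{1}{32m}$ and $\dmax:=\lceil\ln(16m)+1\rceil$, a threshold $\computedValue:=\tfrac{1}{10m}$ and an error $\eps:=\tfrac{1}{50m}$, so that $\varphi$ is satisfiable iff $\inf_{\timeouts\in\paramspacePO}\expred_{\fdC(\timeouts)}<\computedValue-\eps$, and otherwise this infimum is strictly above $\computedValue+\eps$. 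The encoding idea: the truth value of $x_i$ is represented by the delay $d_i:=\timeouts(s)$ that is \emph{forced to be common} (via $\equiv$) on an entire class of ``literal-gadget'' states belonging to $x_i$, exploiting that in an $\allact$-state with delay $d$ the fixed-delay transition wins the race against the rate-$\lambda$ exponential clock with probability exactly $p_i:=e^{-\lambda d_i}$. Since $\dmin,\dmax$ are chosen so that $e^{-\lambda\dmin}\ge 1-\tfrac{1}{16m}$ and $e^{-\lambda\dmax}\le\tfrac{1}{16m}$, the delay $\dmin$ behaves as \emph{true} and $\dmax$ as \emph{false}; note $\dmax=O(\log m)$, so its unary encoding is polynomial in $m$, which will give the last sentence of the theorem.

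The structure is as follows. From $\initstate$ (not in $\allact$, so the alarm stays off) an exp-transition moves with probability $1/m$ to the entry of a clause gadget $j$. The gadget for $C_j=\ell_{j,1}\vee\ell_{j,2}\vee\ell_{j,3}$ chains three \emph{literal gadgets}: a literal over $x_i$ is a state $g\in\allact$, and all literal-gadget states of $x_i$ form a single $\equiv$-class (hence all get delay $d_i$). For a positive occurrence we put $\trans(g,\cdot)$ to move deterministically to a shared ``satisfied'' state $\top\notin\allact$ (the fixed-delay transition means the literal passes) and $\prob(g,\cdot)$ to move to the next link of the chain (the literal fails); for a negative occurrence these two roles are swapped. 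To make each link start with a \emph{fresh} delay, the ``continue'' branch is routed through an auxiliary state outside $\allact$ (which resets the alarm). The ``continue'' branch of the third link leads to a state $s_{pen}\notin\allact$ whose exp-transition carries impulse cost $M:=1$ and then reaches $\goalStates$; from $\top$ a single exp-transition reaches $\goalStates$. All rate costs are one tiny positive rational $r$ chosen so small that the expected total rate cost along a run is below $\tfrac1{1000m}$ (every run takes a bounded number of transitions, each of bounded expected duration), which incidentally makes $\rateRew>0$ everywhere as the theorem requires, and all other impulse costs are $0$. Writing $q(\ell):=p_i$ for a positive literal over $x_i$ and $q(\ell):=1-p_i$ for a negative one, the probability of hitting $s_{pen}$ from the entry of clause $j$ is exactly $\prod_{k=1}^{3}\bigl(1-q(\ell_{j,k})\bigr)$, so
\[
\expred_{\fdC(\timeouts)}\;=\;\frac{M}{m}\sum_{j=1}^{m}\prod_{k=1}^{3}\bigl(1-q(\ell_{j,k})\bigr)\;+\;\theta,
\qquad \theta\in\Bigl[0,\tfrac{1}{1000m}\Bigr].
\]

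For completeness, from a satisfying assignment $a$ set $d_i:=\dmin$ if $a(x_i)$ is true and $d_i:=\dmax$ otherwise; in each clause some literal is satisfied, contributing a factor $1-q\le\tfrac1{16m}$, hence every product is $\le\tfrac1{16m}$ and $\expred_{\fdC(\timeouts)}\le\tfrac1{16m}+\tfrac1{1000m}<\computedValue-\eps$. The soundness direction is the crux, and the main obstacle: delays are \emph{continuous}, so one must rule out that intermediate values let an adversary ``partially satisfy'' conflicting clauses. I handle this by rounding: for an arbitrary $\timeouts\in\paramspacePO$ define $\widehat a(x_i):=$ true iff $p_i\ge\tfrac12$. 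Since $\varphi$ is unsatisfiable, $\widehat a$ falsifies some clause $C_{j_0}$; for each literal $\ell$ of $C_{j_0}$ over $x_i$, being false under $\widehat a$ forces $1-q(\ell)\ge\tfrac12$ \emph{regardless of the exact value} of $p_i$ (positive: $p_i<\tfrac12$, so $1-q=1-p_i>\tfrac12$; negative: $p_i\ge\tfrac12$, so $1-q=p_i\ge\tfrac12$), whence that clause's product is $\ge\tfrac18$ and $\expred_{\fdC(\timeouts)}\ge\tfrac1{8m}>\computedValue+\eps$ for \emph{every} $\timeouts\in\paramspacePO$. Thus a decision procedure for the threshold problem decides $3$-SAT, and the reduction still works with $\dmax$ in unary. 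The remaining work is routine: checking the semantics of the alarm-resetting auxiliary states, verifying the identity for $\expred_{\fdC(\timeouts)}$ above, fixing $r$ so that $\theta$ stays negligible, and confirming the numeric separation between the constants $\tfrac1{16m}$ and $\tfrac18$; the only genuinely load-bearing idea is the rounding argument combined with the multiplicative ``all literals must fail'' structure of the clause gadget.
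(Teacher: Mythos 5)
Your reduction is correct, but it follows a genuinely different route from the paper's. The paper also reduces from SAT and, like you, uses the observation that $\equiv$ ties together the delays of all occurrences of a variable and that no single delay can make the fixed-delay transition both ``win'' and ``lose'' its race against the exponential clock; but the gadgets and the cost mechanism differ. In the paper, all impulse costs are $0$ and the rate cost is $1$, so cost $=$ elapsed time; each clause is a \emph{cycle} of literal components, a failed literal sends the run around the cycle again, and the penalty for an unsatisfied clause comes from the expected number of traversals ($\approx 1/p$ for tiny escape probability $p$). A true positive literal there needs a \emph{large} delay (so that at least $8k$ exponential steps occur before the alarm), a true negative literal a \emph{small} one, and the separation argument rounds delays at the threshold $k$ and uses Poisson tail estimates; the parameters are $\dmin=0.01$, $\dmax=16k$ (linear, hence unary), threshold $17k^2+\tfrac12$, $\eps=\tfrac12$. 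Your construction instead makes the penalty a single impulse cost on an acyclic ``all three literals failed'' branch, encodes the truth value directly through the one-shot race probability $e^{-\lambda d}$ (with the opposite convention: true $=$ small delay), and gets the soundness direction from the trivially clean fact that $p$ and $1-p$ cannot both exceed $\tfrac12$, so rounding at $\tfrac12$ needs no tail bounds; the price is that you must add tiny positive rate costs to meet the hypothesis $\rateRew>0$ and argue they are negligible, which you do. Your parameters are also smaller ($\dmax=O(\log m)$, threshold and $\eps$ of order $1/m$), so the unary claim follows a fortiori. The details you defer (stochastic exponential rows everywhere, independence of the three races given fresh delays via the non-$\allact$ auxiliary states, the size of the rate cost $r$) are indeed routine under the paper's semantics, and your numeric separations ($\tfrac{1}{16m}+\tfrac{1}{1000m}<\tfrac{2}{25m}$ and $\tfrac{1}{8m}>\tfrac{6}{50m}$) check out.
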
 
For $\dmax$ given in unary we get a matching upper bound.

\begin{theorem}
\label{thm:PO-unary-in-NP}
The approximate threshold problem of Theorem~\ref{thm:np-complete} is in NP, provided that $\dmax$ is given in unary.
\end{theorem}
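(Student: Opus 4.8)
The plan is to give a guess-and-check $\mathsf{NP}$ procedure: nondeterministically guess a delay function $\timeouts\in\paramspacePO$ of polynomially bounded encoding size, verify that it indeed belongs to $\paramspacePO$, then use the polynomial-time algorithm of Proposition~\ref{prop:compute-cost} to approximate $\expred_{\fdC(\timeouts)}$, and accept exactly when this approximation falls sufficiently below the threshold $\computedValue$. Every part of this is routine except one: the one non-trivial ingredient is that a \emph{short} near-optimal delay function always exists, and this is precisely where the unary encoding of $\dmax$ is used --- with $\dmax$ given in binary, the witnesses coming out of our discretization would have exponential bit-length.

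The first step is to pin down the witness, essentially by rerunning the discretization behind Theorem~\ref{thm:multi-entry-approximation}. In contrast to the unconstrained case, the assumption $\dmin>0$ rules out the ``bad sink'' pathology of Example~\ref{ex:non-stability}: consecutive fixed-delay transitions are at least $\dmin$ apart and $\rateRew>0$, so the expected number of steps of any globally near-optimal $\timeouts\in\paramspacePO$ is at most a multiple (depending only on $\lambda$, $1/\dmin$ and the minimal cost rate) of its worst-case expected cost, which in turn is bounded by (a quantity close to) $\multiValueBound$. Hence the perturbation argument of Lemma~\ref{lem:perturbation-error} applies directly --- the only non-trivial part of Lemma~\ref{lem:mesh-error}, namely bounding the cut-off $\dmax$, is unnecessary here since $\dmax$ is given --- and it yields a finite mesh $\paramspace(\delta,\dmax,\equiv)$ (those functions of $\paramspacePO$ whose components are multiples of a suitable rational $\delta$) with
\[
\big|\;\inf_{\timeouts'\in\paramspacePO}\expred_{\fdC(\timeouts')}\;-\;\Value{\contMdp,\paramspace(\delta,\dmax,\equiv)}\;\big|\;\le\;\tfrac{\eps}{8},
\]
where rounding a $\paramspacePO$-function onto the mesh keeps all components in $[\dmin,\dmax]$ and maps $\equiv$-equivalent components to equal values, so that $\paramspace(\delta,\dmax,\equiv)\subseteq\paramspacePO$. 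The decisive quantitative fact is that $\delta$ can be chosen so that $\log(1/\delta)$ is polynomial in $\size{\fdC}$, $\size{\dmin}$, $\dmax$ and $\log(1/\eps)$: the required fineness scales like $\multiValueBound^{-3}$ (up to factors polynomial in $\lambda$, $1/\eps$ and $1/(1-e^{-\lambda\dmin})$), while $\multiValueBound$ is at most exponential in $|\states|$ with a base of order $e^{\lambda\dmax}/(\lambda\dmin\cdot\minPst)$ --- exponential in $\dmax$ itself, but only polynomial in its \emph{unary} length. Consequently, although the mesh has exponentially many points, every single point $k\delta$ (with $k\in\Nseto$, $k\le\dmax/\delta$), and hence every delay function supported on this mesh, has encoding size bounded by a fixed polynomial $B$ in the input.

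The $\mathsf{NP}$ algorithm is now immediate. Guess a rational delay function $\timeouts$ of bit-length at most $B$; reject unless $\dmin\le\timeouts(s)\le\dmax$ for all $s$ and $\timeouts(s)=\timeouts(s')$ whenever $s\equiv s'$ (both checks are trivial), so that $\timeouts\in\paramspacePO$. Then run the algorithm of Proposition~\ref{prop:compute-cost} on $\fdC(\timeouts)$ with error parameter $\eps/8$; since $\size{\fdC(\timeouts)}$ is polynomial this runs in polynomial time and returns $\computedOutcome$ with $|\expred_{\fdC(\timeouts)}-\computedOutcome|<\eps/8$. Accept iff $\computedOutcome<\computedValue-\eps/2$. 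For soundness, if $\inf_{\timeouts'\in\paramspacePO}\expred_{\fdC(\timeouts')}>\computedValue+\eps$ then every guessed $\timeouts\in\paramspacePO$ has $\expred_{\fdC(\timeouts)}>\computedValue+\eps$, hence $\computedOutcome>\computedValue+\tfrac{7\eps}{8}>\computedValue-\tfrac{\eps}{2}$ and the machine rejects on every branch. For completeness, if $\inf_{\timeouts'\in\paramspacePO}\expred_{\fdC(\timeouts')}<\computedValue-\eps$ then by the displayed inequality the mesh contains a $\timeouts$ with $\expred_{\fdC(\timeouts)}<\Value{\contMdp,\paramspace(\delta,\dmax,\equiv)}+\tfrac{\eps}{8}<\computedValue-\tfrac{3\eps}{4}$; this $\timeouts$ has bit-length at most $B$, so on the branch guessing it we get $\computedOutcome<\computedValue-\tfrac{3\eps}{4}+\tfrac{\eps}{8}=\computedValue-\tfrac{5\eps}{8}<\computedValue-\tfrac{\eps}{2}$, and the machine accepts.

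I expect the genuinely delicate part to be the second step: transferring the Section~\ref{sec:results-single} discretization to the bounded, partially-observed setting and, above all, verifying that the mesh fineness $\delta$ has bit-length polynomial in $\dmax$ rather than in $\size{\dmax}$ (and that this short mesh already contains a near-optimal function respecting the constraints of $\paramspacePO$). Most of this is already what the proof of Theorem~\ref{thm:multi-entry-approximation} delivers; the only additional observation needed here is that, once $\dmax$ is presented in unary, this polynomial dependence makes an individual mesh point --- and hence the witness --- short enough to be checked by an $\mathsf{NP}$ verifier, even though the number of mesh points remains exponential.
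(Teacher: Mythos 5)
Your proof is correct and follows essentially the same route as the paper: the witness is a delay function supported on the discretization mesh behind Theorem~\ref{thm:multi-entry-approximation}, and the decisive point---that the mesh width $\delta$ has bit-length polynomial in $\size{\fdC}$, $\size{\dmin}$ and the \emph{unary} $\dmax$, so a single mesh function is a short certificate even though the mesh itself is exponentially large---is exactly the paper's argument. The only divergence is in the verification step: the paper evaluates the guessed $\timeouts$ exactly in the polynomial-size fragment $\guessMdp$ of the rounded MDP $\matchmdp_\paramspacePOShort$ and accepts iff the computed value is below $\computedValue$, delegating correctness to Proposition~\ref{prop:multi-entry-aprox}, whereas you evaluate $\expred_{\fdC(\timeouts)}$ approximately via Proposition~\ref{prop:compute-cost} and compare against the shifted threshold $\computedValue-\eps/2$; your variant needs only the mesh half of the discretization (no $\kappa$-rounding of probabilities and costs) at the price of explicit error bookkeeping, and both checks are equally valid.
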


We leave the task of settling the exact complexity of the general problem (where $\dmax$ is given in binary) to future work. 

For the rest of this section we fix a fdCTMC structure $\fdC = (\states, \lambda, \prob, \allact, \trans, \initstate)$, a cost structure $\costFdC = (\goalStates, \rateRew, \impRewExp, \impRewFix)$, $\eps > 0$, and an equivalence relation $\equiv$ on $\allact$, $\dmin,\dmax > 0$.
We simply write $\paramspacePOShort$ instead of $\paramspacePO$ and again assume that $\Value{\fdC,\paramspacePOShort} < \infty$.

\myspacesmall
\subsection{Approximation Algorithm}

In this Section, we address Theorem~\ref{thm:multi-entry-approximation}. First observe, that the MDP $\contMdp$ introduced in Section~\ref{sec:results-single} can be due to Proposition~\ref{prop:mdp-formulation} also applied in the bounded partial-observation setting. Indeed,
$\expred_{\fdC(\timeouts)} =  \expred_{\contMdp(\timeouts)}$ for each $\timeouts \in\paramspacePOShort$
and thus, $\Value{\fdC,\paramspacePOShort} = \Value{\contMdp,\paramspacePOShort}$ (where analogously $\Value{\fdC,\paramspacePOShort}$ denotes $\inf_{\timeouts\in \paramspacePOShort} \expred_{\fdC(\timeouts)}$).
Furthermore, by fixing a mesh $\delta$ and a round-off error $\kappa$, we define a finite DTMDP $\matchmdp_D$ where
\vspace{-0.25\baselineskip} 
\begin{itemize}
	\item actions are restricted to a finite mesh of multiples of $\delta$ within the bounds $\dmin$ and $\dmax$; and  
	\item probabilities and costs are rounded to multiples of $\kappa$ as in Section~\ref{sec:results-single}.
\end{itemize}
To show that $\matchmdp_D$ suitably approximates $\contMdp$ we use similar techniques as in Section~\ref{sec:results-single}. However, thanks to the constraints $\dmin$ and $\dmax$ we can show that for \emph{every} delay function $\timeouts\in \paramspacePOShort$ the values $\CostBound{\msteps,\timeouts}$ and $\CostBound{\mcost,\timeouts}$, which feature in Lemma~\ref{lem:perturbation-error}, are bounded by a function of $\size{\fdC}$, $\dmin$ and $\dmax$ (in particular, the bound is independent of $\timeouts$). This substantially simplifies the analysis. We state just the final result.

\begin{proposition}\label{prop:multi-entry-aprox}
There is a number $\POconst \in \exp((\size{\fdC}\cdot\size{\dmin}\cdot \dmax)^{\mathcal{O}(1)})$ such that for $\delta= \eps/\POconst$ and $\kappa=(\eps\cdot \delta)/\POconst$ it holds	$\left\lvert\Value{\contMdp,\paramspacePOShort}-\Value{\matchmdp_D} \right\rvert < \varepsilon.$
\end{proposition}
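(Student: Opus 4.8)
The plan is to bound $\bigl\lvert\Value{\contMdp,\paramspacePOShort}-\Value{\matchmdp_D}\bigr\rvert$ by splitting it into a \emph{mesh} error -- passing from $\contMdp$ with delays ranging over $[\dmin,\dmax]$ to $\contMdp$ with delays restricted to the finite net $N_\delta:=\{\dmin+k\delta\mid k\in\Nseto\}\cap[\dmin,\dmax]$ -- and a \emph{round-off} error -- rounding all transition probabilities and costs of the resulting finite MDP to multiples of $\kappa$ -- and to bound each by strictly less than $\eps/2$, reusing the perturbation machinery of Section~\ref{sec:results-single} (Lemmas~\ref{lem:perturbation-error} and~\ref{lem:mesh-error} and the technique behind Proposition~\ref{prop:rounding-error}); throughout we may assume $\eps\le 1$, and we first delete from $\contMdp$ the states unreachable from $\initstate$, which changes neither value. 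The reason the bounded partial-observation case is much easier than the unconstrained one is that the ``bad sink'' removal of Lemma~\ref{lem:bounded-step-existence} is \emph{not} needed: I claim $\dmin$ and $\dmax$ by themselves already yield a single constant $M\in\exp\bigl((\size{\fdC}\cdot\size{\dmin}\cdot\dmax)^{\mathcal O(1)}\bigr)$, computable from $\fdC,\dmin,\dmax$, with $\CostBound{\mcost,\timeouts}\le M$ and $\CostBound{\msteps,\timeouts}\le M$ for \emph{every} $\timeouts\in\paramspacePOShort$.

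To obtain this uniform bound I would use that the \emph{qualitative} one-macro-step structure of $\contMdp$ is the same for every delay function (for any $d\in\Rsetp$, a fixed-delay transition fires before any exp-transition with positive probability $e^{-\lambda d}\cdot(\cdots)$, and every exp-transition also fires with positive probability), so that $\Value{\fdC,\paramspacePOShort}<\infty$ forces $\goalStates$ to be reachable from every remaining state under every $\timeouts\in\paramspacePOShort$. Then I would establish, with constants depending only on $\fdC,\dmin,\dmax$: \textbf{(i)} a $\timeouts$-independent lower bound $p^{\ast}>0$, of the form $\bigl(e^{-\lambda\dmax}\,(1-e^{-\lambda\dmin})\,\minPst^{|S|}\bigr)^{\mathcal O(1)}$, on every positive one-macro-step transition probability of $\contMdp$ -- using that a delay $d\le\dmax$ rings before any exp-transition with probability at least $e^{-\lambda\dmax}$ times a positive $\trans$-entry, whereas a macro-step \emph{not} ending by a ring consists of at most $|\allact|$ exp-micro-steps inside $S^{keep}$ that win a race against the still-elapsing delay, an event of probability at least $(1-e^{-\lambda\dmin})\,\minPst^{|S|}$ -- whence $\goalStates$ is reached from any state within $|\brambory|$ macro-steps with probability $\ge(p^{\ast})^{|\brambory|}$, so $\CostBound{\msteps,\timeouts}\le|\brambory|/(p^{\ast})^{|\brambory|}$; and \textbf{(ii)} $\mcost(s,d)\le\maxRew\,(\lambda\dmax+\dmax+1)$ for all $s$ and $d\in[\dmin,\dmax]$, since a macro-step lasts at most $\dmax$ time units (being cut off by $d$), or is a single exp-step from $\states\setminus\allact$ of expected length $1/\lambda$, and contains at most $\lambda\dmax$ exp-micro-steps in expectation (optional stopping against the bounded stopping time $d$). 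Taking $M$ to be the product of the bounds of~(i) and~(ii) works: inverting $p^{\ast}$ contributes $e^{\lambda\dmax}$ (exponential in $\dmax$), a factor $2^{\mathcal O(\size{\dmin})}$ via $1/(1-e^{-\lambda\dmin})\le\mathcal O(1/(\lambda\dmin))$, and $\exp(\mathrm{poly}(\size{\fdC}))$ via $\minPst^{-|S|}$.

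With $M$ at hand, the rest is standard. Fix $\timeouts\in\paramspacePOShort$ with $\expred_{\contMdp(\timeouts)}\le\Value{\contMdp,\paramspacePOShort}+\eps/5$; it is constant on every $\equiv$-class. Put $\alpha:=\eps/(20\,M^{2}\,|\brambory|)$, so that $\alpha\in[0,1]$ and $\alpha\le\bigl(2\,\CostBound{\mcost,\timeouts}\,|\brambory|\bigr)^{-1}$ and Lemma~\ref{lem:perturbation-error} applies, and $\delta:=\alpha/\discconst$ with $\discconst$ from Lemma~\ref{lem:mesh-error}. Rounding the common value of each $\equiv$-class of $\timeouts$ down to the nearest point of $N_\delta$ (which is non-empty and always within distance $\delta$, since $\dmin\in N_\delta$) yields $\timeouts'\in\paramspacePOShort$ with values in $N_\delta$ and $\lvert\timeouts(s)-\timeouts'(s)\rvert<\delta$ for all $s$; as the derivatives in $d$ of the one-step probabilities and costs of $\contMdp$ are bounded by $\lambda$ and $\maxRew$ respectively (cf.\ the proof of Lemma~\ref{lem:mesh-error}) and the qualitative structure is $d$-independent, $\timeouts'$ is $\alpha$-bounded by $\timeouts$. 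Lemma~\ref{lem:perturbation-error} then gives $\expred_{\contMdp(\timeouts')}\le\expred_{\contMdp(\timeouts)}+2\alpha M(1+M|\brambory|)\le\Value{\contMdp,\paramspacePOShort}+2\eps/5$, and since $\timeouts'$ lies in the $N_\delta$-mesh while every mesh delay function lies in $\paramspacePOShort$, the mesh error is $<\eps/2$. For the round-off error I would rerun the argument behind Proposition~\ref{prop:rounding-error} in the $N_\delta$-mesh MDP: perturbing every probability and cost by at most $\kappa$ changes the value by at most $c\,\kappa\,M^{2}\,|\brambory|^{2}$ for an absolute constant $c$ (again using $M$), so any sufficiently small $\kappa$ works. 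All the resulting requirements on $\POconst$ have the form $\POconst\ge\mathrm{poly}(M,\discconst,|\brambory|,c)$ (the $\eps$-dependence drops because $\eps\le 1$), so a suitable such polynomial -- which lies in $\exp\bigl((\size{\fdC}\cdot\size{\dmin}\cdot\dmax)^{\mathcal O(1)}\bigr)$ -- is the desired $\POconst$, and then $\delta=\eps/\POconst$, $\kappa=(\eps\cdot\delta)/\POconst$ give $\bigl\lvert\Value{\contMdp,\paramspacePOShort}-\Value{\matchmdp_D}\bigr\rvert<\eps$.

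The main obstacle is part~(i) of the uniform bound: a single macro-step of $\contMdp$ may involve an unbounded number of exp-micro-steps inside $S^{keep}$ racing against the still-elapsing fixed delay, so the relevant escape probability is not a bare $\minPst$ but has to be argued through a geometric-type accumulation together with the $(1-e^{-\lambda\dmin})$ lower bound on ``enough time being available''. The optional-stopping step in~(ii) relating the expected number of exp-micro-steps to the bounded macro-step duration is routine but needs to be stated with care.
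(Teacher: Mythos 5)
Your overall plan coincides with the paper's own proof: use $\dmin$ and $\dmax$ to obtain bounds on $\CostBound{\mcost,\timeouts}$ and $\CostBound{\msteps,\timeouts}$ that are \emph{uniform} over all $\timeouts\in\paramspacePOShort$ (this is done in Lemmas~\ref{lem:value-bound}, \ref{lem:cost-bound} and~\ref{lem:bounded-prob}), then split the error into a mesh part handled via Lemmas~\ref{lem:perturbation-error} and~\ref{lem:mesh-error} (Lemma~\ref{lem:mesh-delays-ok}) and a round-off part handled as in Proposition~\ref{prop:rounding-error} (Lemma~\ref{lem:rounding}); your choices of $\alpha$, $\delta$, $\kappa$ and your per-macro-step cost bound in item~(ii) match the paper's up to constants, and your observation that the bad-sink surgery of Lemma~\ref{lem:bounded-step-existence} is not needed is exactly the point of this section.

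However, the quantitative core of your argument, item~(i), fails as stated. A macro-step of $\contMdp$ that does not end with a ring may consist of $j\le|\allact|$ exp-micro-steps, and \emph{all} $j$ of them must occur within the one remaining delay $d\in[\dmin,\dmax]$; the probability of that timing event is of order $e^{-\lambda d}(\lambda d)^{j}/j!$, not $1-e^{-\lambda\dmin}$. Your bound $(1-e^{-\lambda\dmin})\minPst^{|S|}$ pays for only one race: with, say, $\lambda\dmin=0.1$ and a two-micro-step path the timing probability is about $(\lambda\dmin)^2/2\approx 0.005$, already below $1-e^{-\lambda\dmin}\approx 0.095$, and the factor $\minPst^{|S|}$ cannot repair this since it is needed for the branching choices anyway. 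Hence $p^{\ast}$, defined as an $\mathcal{O}(1)$ power of $e^{-\lambda\dmax}(1-e^{-\lambda\dmin})\minPst^{|S|}$, is not a valid lower bound on the positive entries of $\mtran(s,d)$. The paper's Lemma~\ref{lem:bounded-prob} proves instead, via uniformization, $\mtran(s,d)(s')\ge(\minPst/|\allact|)^{|\allact|}\cdot e^{-\lambda\dmax}\cdot\min\{1,(\lambda\dmin)^{|\allact|}\}$, i.e.\ the exponent must be allowed to grow with $|S|$ and the small-delay factor must appear to the power $|\allact|$. Substituting this corrected bound repairs your step~(i) without affecting anything else: $(\lambda\dmin)^{-|\allact|\cdot|\brambory|}$ is still in $\exp((\size{\fdC}\cdot\size{\dmin}\cdot\dmax)^{\mathcal{O}(1)})$, so your constant $M$, your choice of $\POconst$, and the mesh and rounding steps then go through as in the paper.
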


\noindent
The proof of Theorem~\ref{thm:multi-entry-approximation} is finished by the following algorithm.
\begin{itemize}
\item For $\delta$ and $\kappa$ from Proposition~\ref{prop:multi-entry-aprox}, the algorithm first constructs (in the same fashion as in Section~\ref{sec:results-single}) in 2-exponential time the MDP $\matchmdp_D$. 
\item Then it finds an optimal strategy $\timeouts$ (which also satisfies $\lvert \expred_{\fdC(\timeouts)} - \inf_{\timeouts'}\expred_{\fdC(\timeouts')}\rvert < \eps$) by computing $\expred_{\discMdp'(\timeouts)}$ for every (MD) strategy $\timeouts$ of $\discMdp'$ in the set $\paramspacePOShort$. 
\end{itemize}
\noindent
The algorithm runs in 2-EXPTIME because there are $\leq |\mactsDisc|^{|\states|}$ strategies which is exponential in $\size{\fdC}$, $\size{\dmin}$, and $\dmax$ as $|\mactsDisc|$ is exponential in these parameters. The correctness follows from Propositions~\ref{prop:mdp-formulation},~\ref{prop:multi-entry-aprox}, proving Theorem~\ref{thm:multi-entry-approximation}.

\vspace{-\baselineskip}
\subsubsection{Challenges of Unbounded Optimization}
The proof of Proposition~\ref{prop:multi-entry-aprox} is simpler than the techniques from Section~\ref{sec:results-single} because we work with the compact space bounded by $\dmin$ and $\dmax$.
This restriction is not easy to lift; the techniques from Section~\ref{sec:results-single} cannot be easily adapted to \emph{unbounded} optimization under partial observation.

\setlength{\intextsep}{3pt}
\begin{wrapfigure}{r}{0\textwidth}
	\begin{tikzpicture}[outer sep=0.1em, xscale=0.9, yscale=0.57]
	
	\tikzstyle{fixed}=[dashed,->];
	\tikzstyle{fixed label}=[font=\small];
	\tikzstyle{exp}=[->,>=stealth];
	\tikzstyle{exp rate}=[font=\small];
	\tikzstyle{loc}=[draw,circle, minimum size=1.5em,inner sep=0.1em];
	\tikzstyle{accepting}+=[outer sep=0.1em];
	\tikzstyle{loc cost}=[draw,rectangle,inner sep=0.07em,above=6, minimum width=0.8em,minimum height=0.8em,fill=white,font=\footnotesize];
	
	\node[loc] (s) at (0,0) {$a$};
	\node[loc, accepting] (g) at (1.2,1.3) {$t$};
	\node[loc] (u) at (2.4,0) {$b$};
	\node[loc] (t) at (1.2,0) {$0$};
	\node[loc] (t1) at (1.2,-1.3) {$1$};
	\node[loc] (t2) at (1.2,-2.6) {$2$};

	\path[->,>=stealth] ($(s)+(-.6,0)$) edge (s);
	
	\draw[fixed,rounded corners] (s) |- +(1, 2) -| (u);
	\draw[fixed,rounded corners] (u) |- +(-1,-3.4) -| (s);
	
	\path[fixed] (t) edge (g);
	\path[exp] (t) edge (t1);
	\path[fixed] (t1) edge (t2);
	\path[fixed,bend left=15] (t2) edge (s);
	
	\path[exp,loop right] (t1) edge (t1);
	\path[exp,bend right=35] (t2) edge (t1);
	
	\path[exp] (s) edge (t);
	\path[exp] (u) edge (t);
	\end{tikzpicture}
	\vspace{-10pt}
\end{wrapfigure}
The reason is that local adaptation of the delay function (heavily applied in the proof of Lemma~\ref{lem:bounded-step-existence}) is not possible as the delays are not independent. 
Consider on the right a variant of Example~\ref{ex:non-stability} with components $a$ and $b$ being switched by fixed-delay transitions. All states have cost rate $1$ and all transitions have cost $0$; furthermore, all states are in one class of equivalence of $\equiv$.
If in state $a$ or $b$ more than one exp-delay transition is taken before a fixed-delay transition, a long detour via state $1$ is taken. In order to avoid it and to optimize the cost, one needs to set the one common delay as close as possible to $0$. 
Contrarily, in order to decrease the expected number of visits from $a$ to $b$ from $a$ before reaching $t$ which is crucial for the error bound, one needs to increase the delay.

\subsection{Complexity of the Threshold Problem} 

We now turn our attention to Theorem~\ref{thm:np-complete}. 
We show the hardness by reduction from SAT. Let $\varphi = \varphi_1 \land \cdots \land \varphi_n$ be a propositional formula in conjunctive normal form (CNF) with $\varphi_i = (l_{i,1} \lor \cdots \lor l_{i,k_i})$ for each $1 \leq i \leq n$ and with the total number of literals $k = \sum_{i=1}^n k_i$. As depicted in the following figure, the fdCTMC structure $\fdC_\varphi$ is composed of $n$ components (one per clause), depicted by rectangles. The component of each clause is formed by a cycle of sub-components (one per literal) connected by fixed-delay transitions. Positive literals are modelled differently from negative literals. 

\vspace{-\baselineskip}
\begin{center}
\begin{tikzpicture}[outer sep=0.1em, xscale=0.57, yscale=0.65]

\tikzstyle{fixed}=[dashed,->];
\tikzstyle{fixed label}=[font=\small];
\tikzstyle{exp}=[->,>=stealth];
\tikzstyle{exp rate}=[font=\small];
\tikzstyle{title}=[];

\tikzstyle{loc}=[draw,circle, minimum size=2em,inner sep=0.1em,font=\small];
\tikzstyle{gadget}=[draw,rectangle, rounded corners=1.5, minimum size=1.5em,inner sep=0.3em];
\tikzstyle{biggadget}=[rounded corners=3];

\tikzstyle{accepting}+=[outer sep=0.1em];
\tikzstyle{loc cost}=[draw,rectangle,inner sep=0.07em,above=6, minimum width=0.8em,minimum height=0.8em,fill=white,font=\footnotesize];
\tikzstyle{trans cost}=[draw,rectangle,minimum width=0.8em,minimum height=0.8em,solid,inner sep=0.07em,fill=white,font=\footnotesize];
\tikzstyle{prob}=[inner sep=0.03em,pos=0.25, auto,font=\footnotesize];

\begin{scope}[xshift=2cm]
\node[title] at (2,1.9) {fdCTMC};
\node[title] at (2,1.3) {struct. for $\varphi:$};

\node[loc] (phi_init) at (2,0.5) {$\initstate$};
\path[->,>=stealth] ($(phi_init)+(-1,0)$) edge (phi_init);

\node[gadget] (phi1) at (1,-1) {$\varphi_1$};
\node at (2,-1) {$\cdots$};
\node[gadget] (phin) at (3,-1) {$\varphi_n$};

\path[fixed] (phi_init) edge  node[prob,swap,pos=0.9] {$\frac{1}{n}$} (phi1);
\path[fixed] (phi_init) edge  node[prob,pos=0.9] {$\frac{1}{n}$} (phin);
\end{scope}

\begin{scope}[xshift=4.8cm]
\node[title] at (3.5,1.5) {component for $\varphi_i:$};

\node[gadget] (phii_1) at (2.15,0) {$l_{i,1}$};
\path[fixed] ($(phii_1)+(0,+1.13)$) edge (phii_1);

\node[inner sep=0] (phii_3) at (3.47,0) {$\cdots$};
\node[gadget] (phii_k) at (4.85,0) {$l_{i,k_i}$};

\path[fixed] (phii_1) edge  (phii_3);
\path[fixed] (phii_3) edge  (phii_k);
\draw[fixed,rounded corners] (phii_k) |- (4,-1) -| (phii_1);
\draw[biggadget] (1.3,1) rectangle (5.7,-1.8);
\end{scope}

\begin{scope}[xshift=10.03cm]
\node[title] at (5.275,1.9) {component for $l_{i,j}$};
\node[title] at (5.275,1.3) {of the form $x:$};

\node[loc] (s_0) at (2.15,0.3) {$s_{i,j}^0$};
\path[fixed] ($(s_0)+(-1,0)$) edge (s_0);

\node[inner sep=0] (s_2) at (3.55,0.3) {$\cdots$};
\node[loc] (s_3) at (5,0.3) {$s_{i,j}^{8k\text{-}1}$};
\node[loc] (s_k) at (6.7,0.3) {$s_{i,j}^{\geq 8k}$};

\node[loc,accepting] (s_acc) at (8.4,0.3) {$g_{i,j}$};
\node (exit) at (9.7,-1.1) {};

\path[exp] (s_0) edge  (s_2);
\path[exp] (s_2) edge  (s_3);
\path[exp] (s_3) edge  (s_k);
\path[exp,loop below,looseness=2.5] (s_k) edge  (s_k);
\path[fixed] (s_k) edge  (s_acc);

\draw[fixed,rounded corners] (s_0) |- +(0.4,-1.4) -- (exit);
\node[inner sep=0] (s_2) at (3.55,-0.65) {$\cdots$};
\draw[fixed,rounded corners,-] (s_3) |- +(0.4,-1.4);

\draw[biggadget] (1.3,1) rectangle (9.25,-1.8);
\end{scope}

\begin{scope}[xshift=19.2cm]
\node[title] at (2.75,1.9) {component for $l_{i,j}$};
\node[title] at (2.75,1.3) {of the form $\neg x:$};

\node[loc] (s_0) at (2,0.3) {$s_{i,j}^0$};
\path[fixed] ($(s_0)+(-1.2,0)$) edge (s_0);

\node[loc] (s_1) at (2,-1.1) {$s_{i,j}^{\geq1}$};

\node[loc,accepting] (s_acc) at (3.7,0.3) {$g_{i,j}$};
\node (exit) at (5,-1.1) {};

\path[exp] (s_0) edge  (s_1);
\path[exp,loop left,looseness=3] (s_1) edge  (s_1);
\path[fixed] (s_0) edge  (s_acc);
\path[fixed] (s_1) edge  (exit);

\draw[biggadget] (0.95,1) rectangle (4.55,-1.8);
\end{scope}
\end{tikzpicture}
\end{center}

The cost structure $\costFdC_\varphi$ assigns rate cost $1$ to every state, and impulse cost $0$ to every transition; the goal states are depicted by double circles and exp-delay transitions are depicted with heavier heads  to distinguish from the dashed fixed-delay transitions. 
We require $s_{i,j}^0 \equiv s_{i',j'}^0$ iff the literals $l_{i,j}$ and $l_{i',j'}$ have the same variable. Furthermore, let $\paramspacePOShort$ denote $\paramspace(0.01,16k,\equiv)$. Note that $\dmax=16k$ is linear in $\size{\varphi}$ and thus it can be encoded in unary. We obtain the following:

\begin{proposition}\label{prop:np-hardness}

For a formula $\varphi$ in CNF with $k$ literals, $\fdC_\varphi$ and $\costFdC_\varphi$ are constructed in time polynomial in $k$ and, furthermore,
$$\Value{\fdC_\varphi,\paramspacePOShort} < 17 k^2 \;\; \text{if $\varphi$ is satisfiable \;\;\; and} \;\;\;\; \Value{\fdC_\varphi,\paramspacePOShort} > 17k^2 + 1 \text{, otherwise.}$$
\end{proposition}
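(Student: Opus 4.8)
First, since \textsc{Sat} is \textup{NP}-complete and $\fdC_\varphi,\costFdC_\varphi$ are built in time polynomial in $k$ (the structure has $O(k^2)$ states), proving Proposition~\ref{prop:np-hardness} suffices, and it also settles Theorem~\ref{thm:np-complete} by taking $\computedValue:=17k^2+\tfrac12$, $\eps:=\tfrac12$ (with $\dmax=16k$ encodable in unary). Since $\costFdC_\varphi$ puts rate cost $1$ on every state and zero impulse cost, $\expred_{\fdC_\varphi(\timeouts)}$ is the expected \emph{time} until a goal state is reached, so the task is to bound this time. The first step is to read off the gadget dynamics. A literal component $l_{i,j}$ is entered only through a fixed-delay transition that arms the alarm to $\timeouts(s_{i,j}^0)$, and the process then stays inside $l_{i,j}$ for exactly $\timeouts(s_{i,j}^0)$ time before leaving; hence the delays on the inner states $s_{i,j}^m$, $m\geq 1$, are immaterial, and by $\equiv$ the value $\timeouts(s_{i,j}^0)$ depends only on the variable $x$ of $l_{i,j}$ --- write $\timeouts_x$ for it. A positive-literal component reaches its goal iff at least $8k$ exponential events fire before the alarm rings, i.e. with probability $\probm[\mathrm{Pois}(\timeouts_x)\geq 8k]$, which is $1-e^{-\Omega(k)}$ for $\timeouts_x=\dmax=16k$ and at most $e^{-\Omega(k)}$ once $\timeouts_x\leq 2k$; a negative-literal component reaches its goal iff no exponential event fires first, i.e. with probability $e^{-\timeouts_x}$, which equals $e^{-\dmin}$ (close to $1$) for $\timeouts_x=\dmin=0.01$ and is at most $e^{-2k}$ for $\timeouts_x\geq 2k$. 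The structural point is that the two polarities have \emph{disjoint} useful ranges of $\timeouts_x$.

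For the satisfiable implication I would take a satisfying assignment $\nu$ and the Boolean delay function $\timeouts_\nu$ with $\timeouts_\nu(\initstate)=\dmin$, $\timeouts_\nu(s_{i,j}^0)=\dmax$ if $\nu$ sets the variable of $l_{i,j}$ to true and $\dmin$ otherwise. A run enters a single clause component $\varphi_i$ (each with probability $1/n$) and stays inside it, cycling through its literal components; since $\varphi_i$ is satisfied it contains a literal whose component reaches its goal on every pass with probability at least a fixed $p_0$ close to $1$. A renewal argument bounds the expected time spent in $\varphi_i$ by the expected number of passes times the duration $k_i\dmax$ of one full cycle, hence by $\tfrac{k_i\dmax}{p_0}$; summing over clauses with weights $1/n$ gives $\expred_{\fdC_\varphi(\timeouts_\nu)}\leq\dmin+\tfrac{\dmax}{p_0 n}\sum_i k_i=\dmin+\tfrac{16k^2}{p_0 n}<17k^2$, so $\Value{\fdC_\varphi,\paramspacePOShort}<17k^2$.

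For the unsatisfiable implication I would fix an \emph{arbitrary} $\timeouts\in\paramspacePOShort$ and define a truth assignment $\nu_\timeouts$ by setting $x$ to true iff $\timeouts_x\geq 2k$. Since $\varphi$ is unsatisfiable, $\nu_\timeouts$ falsifies some clause $\varphi_{i^*}$: every positive literal $x$ of $\varphi_{i^*}$ has $\timeouts_x<2k$, so its component reaches a goal with probability $\leq e^{-\Omega(k)}$, and every negative literal $\neg x$ of $\varphi_{i^*}$ has $\timeouts_x\geq 2k$, so its component reaches a goal with probability $\leq e^{-2k}$. Hence the probability $Q_{i^*}$ of reaching \emph{any} goal during one full pass through $\varphi_{i^*}$ is at most $k\cdot e^{-\Omega(k)}$, so the expected number of completed passes is $(1-Q_{i^*})/Q_{i^*}=e^{\Omega(k)}$; since each completed pass takes time at least $k_{i^*}\dmin\geq\dmin$ deterministically and $\varphi_{i^*}$ is entered with probability $1/n\geq 1/k$, we get $\expred_{\fdC_\varphi(\timeouts)}\geq\tfrac{\dmin}{k}\cdot e^{\Omega(k)}>17k^2+1$ for all sufficiently large $k$. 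As this holds for every $\timeouts$, $\Value{\fdC_\varphi,\paramspacePOShort}>17k^2+1$; the finitely many small instances are decided directly (or removed by padding $\varphi$).

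I expect the unsatisfiable direction to be the main obstacle: one cannot restrict the optimizer to Boolean delay functions, so one must rule out that intermediate, non-Boolean delays help. The reason it works is the disjointness of the two gadgets' useful ranges --- a positive literal needs $\timeouts_x$ near $16k$ (to survive $8k$ exponential events), a negative literal needs $\timeouts_x$ near $0.01$ --- so that a \emph{single} threshold ($2k$ above) simultaneously certifies that all positive and all negative literals of a falsified clause are ``dead'', making that clause trap the run for an exponential expected time. The remaining work --- matching the Poisson/Chernoff tail estimates quantitatively to the gadget parameters $8k$, $16k$, $0.01$, and verifying that the satisfiable-case bound stays strictly below $17k^2$ (hence below $17k^2+1$) for all relevant $k$ --- is careful but routine book-keeping.
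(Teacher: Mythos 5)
Your proposal is correct and follows essentially the same route as the paper's proof: the same Boolean delay function ($\dmax$ on true variables, $\dmin$ on false ones) with the same Poisson/no-event estimates for the satisfiable direction, and for the unsatisfiable direction the same key idea of thresholding the observation-respecting delays to extract a truth assignment, so that a falsified clause becomes a trap with exponentially small per-pass escape probability, entered with probability $1/n \geq 1/k$. The only deviations are cosmetic — you use threshold $2k$ instead of $k$ and a cruder $\dmin$-per-pass time bound, hence need ``$k$ sufficiently large'' with small instances handled separately, where the paper fixes $k\geq 7$ — which is the same level of rigor as the paper's own sketch.
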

\begin{proof}[Sketch]
If $\varphi$ is satisfiable, let $\nu$ denote the satisfying truth assignment. We set $\timeouts(\initstate) := \dmin$ and $\timeouts(s_{i,j}^0) := \dmax$ if $\nu(X) = 1$ and $\timeouts(s_{i,j}^0) := \dmin$ if $\nu(X) = 1$, where $X$ is the variable of the literal $l_{i,j}$ and arbitrarily in other states.
In $\fdC_\varphi(\timeouts)$ in the component for any TRUE literal $l_{i,j}$, i.e. with $\nu(l_{i,1}) = 1$, the goal is reached from $s_{i,j}^0$ with probability $>0.99$ before leaving the component. Indeed, the probability to take no exponential transition within time $0.01$ is $>0.99$ and the probability to take at least $8k$ exponential transitions within time $16k$ is $>0.99$ for any $k\in\Nset$.
As each clause $\varphi_i$ has at most $k$ literals and at least one TRUE literal, the expected cost incurred in the component for $\varphi_i$ is at most $(16k\cdot k)/0.99 < 17k^2$.

The other implication is based on the observation that there is no delay length guaranteeing high probability of going directly into the target from both components for a positive and a negative literal. \qed
\end{proof}

The reduction proves NP-hardness as it remains to set $\computedValue := 2k^2 + \frac{1}{2}$ and $\eps := \frac{1}{2}$.

\myspacesmall
\vspace{-0.75\baselineskip}
\subsubsection{NP Membership for Unary $\dmax$} To prove Theorem~\ref{thm:PO-unary-in-NP} we give an algorithm which, for a given
approximate threshold $\computedValue>0$, consists of
\begin{itemize}
\item \emph{first} guessing the delay function $\timeouts$ of $\matchmdp_\paramspacePOShort$ that is in the set $\paramspacePOShort$ such that $\expred_{\matchmdp_\paramspacePOShort(\timeouts)} < \computedValue$,
\item \emph{then} constructing just the fragment $\guessMdp$ of $\matchmdp_\paramspacePOShort$ used by the guessed function $\timeouts$. Here $\guessMdp = (\states', \{\infty\},\mtranGuess, G, \mcostGuess)$ where the transition probabilities and costs coincide with $\contMdp$ for states in $\states'\setminus\allact$ and in any state $s\in\allact$ are defined by $\mtranGuess(s,\infty) = \mtran^\star(s,\timeouts(s))$ and $\mcostGuess(s,\infty) = \mcost^\star(s,\timeouts(s))$ (here $\mtran^\star(s,\timeouts(s))$ and $\mcost^\star(s,\timeouts(s))$ are as in~$\matchmdp_\paramspacePOShort$).
\item Last, for $\sigma: s \mapsto \infty$, the algorithm computes $y = \expred_{\guessMdp(\sigma)}$ by standard methods and \emph{accepts} iff $y < \computedValue$.
\end{itemize}

Note that when $\dmax$ is encoded in unary, both $\timeouts$ and $\guessMdp$ are of bit size that is polynomial in the size of the input. Hence, $\timeouts$ and $\guessMdp$ can be constructed in non-deterministic polynomial time (although the whole $\matchmdp_\paramspacePOShort$ is of exponential size in this unary case). The expected total cost $\computedOutcome$ in $\guessMdp(\sigma)$ that has polynomial size can be also computed in polynomial time. The correctness of the algorithm  easily follows from Proposition~\ref{prop:multi-entry-aprox}; for an explicit proof see~\appref{subsec:correctness}.

\vspace{-0.5\baselineskip}
\section{Conclusions
}\label{sec:concl}
\vspace{-0.5\baselineskip}

In this paper, we introduced the problem of synthesizing timeouts for fixed-delay CTMC. We study two variants of this problem, show that they are effectively solvable, and obtain provable worst-case complexity bounds.
First, for \emph{unconstrained optimization}, we present an approximation algorithm based on a reduction to a discrete-time Markov decision process and a standard optimization algorithm for this model. 
Second, we approximate the case of \emph{bounded optimization under partial observation} also by a MDP.
However, a restriction of the class of strategies twists it basically into a partial-observation MDP (where only memoryless deterministic strategies are considered). We give a 2-exponential approximation algorithm (which becomes exponential if one of the constraints is given in unary) and show that the corresponding decision problem is NP-hard. 

The correctness of our algorithms stems from non-trivial insights into the behaviour of fdCTMC that we deem to be interesting in their own right. Hence, we believe that techniques presented in this paper lay the ground for further development of performance optimization via timeout synthesis.

\vspace{-0.6\baselineskip}

\bibliographystyle{plain}
\bibliography{str-short,concur}
\newpage
\appendix
\section{Algorithms}\label{app:algs}

In this section, we present pseudo-codes for algorithms.

\algabove
\begin{procedure}[h]
\SetAlgoLined
\DontPrintSemicolon
\SetKwInOut{Input}{input}\SetKwInOut{Output}{output}
\SetKwData{n}{n}\SetKwData{f}{f}\SetKwData{g}{g}
\SetKwData{Low}{l}\SetKwData{x}{x}
\Input{fdCTMC structure $\fdC$, cost structure $\costFdC$, and $\delta, \kappa$, $\ell$, $u > 0$.}
\Output{A DTMDP $\mdp=(\states',\macts,\mtran, \initstate,G)$ with a cost function $\mcost$.}
\BlankLine
Construct $\states'$, $\initstate$, and $G$ as specified in Subsection~\ref{subsec:reduction-DTMDP}\;
$\macts := \{k\delta \mid k \in \Nset \text{ and } \ell < k\delta \leq u\} \cup \{\infty\}$\;
\BlankLine
\For{$s \in \states' \cap \allact$ and $\delays \in \macts \setminus \{\infty\}$}{
Compute distribution $\mtran(s,\delays)$ and cost in $\mcost(s,\delays)$ using $\fdC[s](\delays)$ and $\kappa$ \;
}
\BlankLine
\lFor{$s\in\states'\setminus\allact$}{
	$\mtran(s,\infty):= \prob(s,\cdot)$;
	$\mcost(s,\infty) := \rateRew(\sta) + \sum_{\sta' \in\states'} \prob(\sta,\sta')\cdot \impRewExp (\sta,\sta')$\;
}
\BlankLine
set $\mtran$ to $\bot$ and $\mcost(s,a)$ to $0$ for all remaining $s\in\states'$ and $a\in\macts$\;
\caption{Discretize($\fdC$, $\costFdC$, $\delta$, $\kappa$, $\ell$, $u$)}
\label{alg:discretize}
\end{procedure}
\algbelow

\algabove
\begin{algorithm}[h]
\SetAlgoLined
\DontPrintSemicolon
\SetKwInOut{Input}{input}\SetKwInOut{Output}{output}
\SetKwData{n}{n}\SetKwData{f}{f}\SetKwData{g}{g}
\SetKwData{Low}{l}\SetKwData{x}{x}

\Input{fdCTMC structure $\fdC$, cost structure $\costFdC$ with $\rateRew(\sta) > 0$ for all $\sta\in\states$, and $\eps>0$.}
\Output{The delay function $\timeouts$ that is $\eps$-optimal in $\fdC$ and $\costFdC$.}
\BlankLine
Compute $\maxValue{\contMdp}$, $\delta$, $\kappa$, and $\dmax$\;
Construct $\mdp_{\kappa}$ and $\mcost_{\kappa}$ by procedure Discretize($\fdC$, $\costFdC$, $\delta$, $\kappa$, $\delta$, $\dmax$) \;
$\timeouts := $ optimal strategy  in $\mdp_{\kappa}$ and $\mcost_{\kappa}$ \tcp*{e.g. alg.\cite{Puterman:book} in PTIME in $|\mdp_{\kappa}|$ and $|\mcost_{\kappa}|$}
\caption{Unconstrained optimization}
\label{alg:single-entry}
\end{algorithm}
\algbelow

\algabove
\begin{algorithm}[h]
\SetAlgoLined
\DontPrintSemicolon
\SetKwInOut{Input}{input}\SetKwInOut{Output}{output}
\SetKwData{n}{n}\SetKwData{f}{f}\SetKwData{g}{g}
\SetKwData{Low}{l}\SetKwData{x}{x}
\Input{fdCTMC structure $\fdC$, cost structure $\costFdC$ with $\rateRew(\sta) > 0$ for all $\sta\in\states$,  equivalence relation $\equiv$ on $\allact$, $\dmin,\dmax > 0$, and $\eps>0$.}
\Output{The delay function $\timeouts$ that is $\eps$-optimal in $\fdC$, $\costFdC$, $\dmin$,$\dmax$, and $\equiv$.}
\BlankLine
Compute $\maxValue{\contMdp}$, $\delta$, and $\kappa$\;
Construct $\matchmdp_D$ and $\mcost_D$ by procedure Discretize($\fdC$, $\costFdC$, $\delta$, $\kappa$, $\dmin$, $\dmax$) \;
$\timeouts := $ optimal strategy  in $\matchmdp_D$, $\mcost_D$, and $\equiv$ \tcp*{standard algorithms for POMDP}
\caption{Bounded optimization under partial observation}
\label{alg:multiple-entry}
\end{algorithm}
\algbelow

Let us discuss in closer detail, how the transition probabilities and costs are approximated.

\subsubsection{Approximating Transition Probabilities and Costs}
Note that for $\sta \in \states' \setminus\allact$ and $\delays = \infty$, the approximation is easy as these states behave as in CTMC and so the probabilities in $\mtran(\sta,\delays)$ and cost  $\mcost(\sta,\delays)$ are already expressed as rational numbers.

Let us fix a state $\sta \in \states' \cap \allact$,  $\delays \in \macts \setminus \{\infty\}$, and error bound $\kappa \in \Rsetp$.  
First, we compute all probabilities in $\mtran(\sta,\delays)$ and cost $\mcost(\sta,\delays)$ up to an absolute error of $\kappa/2$ employing transient analysis of $\fdC[\sta](\delays)$.  
Then we round up (to the closest multiple of $\kappa$) cost and  all probabilities except the highest one, that is accordingly rounded down so that the sum of probabilities is $1$.

Before the transient analysis we need to modify $\fdC[\sta](\delays)$ so that all states $\states' \setminus \{\sta\}$ are (1)~absorbing w.r.t. both exponential and fixed-delay transitions and (2)~incur neither rate nor impulse costs. If there is a fixed-delay transition leading to $\sta$, then we create an absorbing copy of $\sta$ where we redirect all such transitions. 
After this transformation we obtain a (subordinated) continuous-time Markov chain $\overline{\fdC}$, and $\mtran(\sta,\delays)$ is a transient distribution in time $\delays$ and $\mcost(\sta,\delays)$ is expected accumulated cost up to time $\delays$ in this $\overline{\fdC}$. 
We denote the modified transition matrices by $\overline{\prob}$ and $\overline{\trans}$ and the modified cost structure by $(\overline{\rateRew}, \overline{\impRewExp}, \overline{\impRewFix})$.
Similarly to the method of uniformization~\cite{Jensen:uniformization} or methods in \cite{DBLP:journals/pe/Lindemann93}, it is easy to show that 
\begin{lemma}\label{lem:trans_cost_form}
For any state $s \in\states' \cap \allact$, and $\delays \in \Rsetp$, 
\begin{align*}
\mtran(s,\delays) & = \sum_{i=0}^\infty \psi_{\lambda\delays}(i) \cdot \left( \vec{1}_s \cdot \overline{\prob}^i \right) \cdot \overline{\trans} \\
\mcost(s,\delays) & =  \sum_{i=0}^\infty \psi_{\lambda\delays}(i) \left(
\sum_{j=0}^{i-1} \left(\vec{1}_s \cdot \overline{\prob}^j\right) \cdot \left(\frac{\delays\cdot\overline{\rateRew}}{i+1} + \overline{\expectedRewExp} \right) \;+\;
\left(\vec{1}_s \cdot \overline{\prob}^i\right) \cdot \left(\frac{\delays\cdot\overline{\rateRew}}{i+1} + \overline{\expectedRewFix}\right)
 \right)
\end{align*}
where $\vec{1}_s$ denotes the unit vector of state $\sta$, $\psi_{\lambda\delays}$ denotes the probability mass function of the Poisson distribution with parameter $\lambda\delays$, and $\overline{\expectedRewExp},\overline{\expectedRewFix}: \states \to \Rsetpo$ assign to each state the expected impulse reward of the next exponential or fixed-delay transition, i.e. $\overline{\expectedRewExp}(s) = \sum_{s'}\overline{\prob}(s,s')\cdot\overline{\impRewExp}(s,s')$ and 
$\overline{\expectedRewFix}(s) = \sum_{s'}\overline{\trans}(s,s')\cdot\overline{\impRewFix}(s,s')$. 
\end{lemma}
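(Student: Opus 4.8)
The plan is to take the statement from the paragraph immediately preceding the lemma as the starting point: $\mtran(\sta,\delays)$ is the distribution of the state of $\overline{\fdC}$ at time $\delays$ (i.e.\ right after the alarm fires) when started from $\sta$, and $\mcost(\sta,\delays)$ is the expected cost accumulated over $[0,\delays]$ in $\overline{\fdC}$, where the accumulated cost consists of the rate cost $\overline{\rateRew}$ integrated over time, the impulse cost $\overline{\impRewExp}$ of each exponential transition, and the single impulse cost $\overline{\impRewFix}$ of the fixed-delay transition happening at time exactly $\delays$. It then suffices to expand both quantities by \emph{uniformization}. Concretely, I would let $N$ be the number of exponential jumps of $\overline{\fdC}$ occurring in the open interval $[0,\delays)$. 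Since $\overline{\fdC}$ has the common exit rate $\lambda$ in every state and embedded jump kernel $\overline{\prob}$, the variable $N$ is Poisson-distributed with parameter $\lambda\delays$, i.e.\ $\probm(N=i)=\psi_{\lambda\delays}(i)$; and conditionally on $N=i$, (i)~the sequence of visited states $Y_0=\sta,Y_1,\dots,Y_i$ is the embedded discrete-time Markov chain with kernel $\overline{\prob}$, so $Y_j$ is distributed according to $\vec{1}_\sta\cdot\overline{\prob}^j$, and (ii)~the $i$ jump times are distributed as the order statistics of $i$ i.i.d.\ $\mathrm{Uniform}[0,\delays]$ variables and are \emph{independent} of $(Y_0,\dots,Y_i)$; consequently the $i+1$ sojourn durations $\tau_0,\dots,\tau_i$ (the $(i+1)$-st one ending at the alarm, not at a jump) satisfy $\expected[\tau_j\mid N=i]=\delays/(i+1)$ for every $j$ and are independent of the $Y_j$'s.

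For the distribution formula, the state at time $\delays^-$ is $Y_N$ and the alarm sends it to a state distributed as $\vec{1}_{Y_N}\cdot\overline{\trans}$; conditioning on $N$ and interchanging the (nonnegative) sum with the expectation gives $\mtran(\sta,\delays)=\sum_{i=0}^\infty\psi_{\lambda\delays}(i)\cdot(\vec{1}_\sta\cdot\overline{\prob}^i)\cdot\overline{\trans}$. For the cost formula I would split the cost accumulated along a run with $N=i$ jumps into (a)~rate cost $\sum_{j=0}^i\tau_j\cdot\overline{\rateRew}(Y_j)$, (b)~the impulse costs $\sum_{j=1}^i\overline{\impRewExp}(Y_{j-1},Y_j)$ of the $i$ exponential transitions, and (c)~the impulse cost $\overline{\impRewFix}(Y_i,Y')$ of the fixed-delay transition at time $\delays$, where $Y'$ denotes the post-alarm state. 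Taking conditional expectations: by independence (ii), $\expected[\tau_j\cdot\overline{\rateRew}(Y_j)\mid N=i]=\frac{\delays}{i+1}\cdot(\vec{1}_\sta\cdot\overline{\prob}^j)\cdot\overline{\rateRew}$; by the Markov property together with the definitions $\overline{\expectedRewExp}(s)=\sum_{s'}\overline{\prob}(s,s')\overline{\impRewExp}(s,s')$ and $\overline{\expectedRewFix}(s)=\sum_{s'}\overline{\trans}(s,s')\overline{\impRewFix}(s,s')$, we get $\expected[\overline{\impRewExp}(Y_{j-1},Y_j)\mid N=i]=(\vec{1}_\sta\cdot\overline{\prob}^{j-1})\cdot\overline{\expectedRewExp}$ and $\expected[\overline{\impRewFix}(Y_i,Y')\mid N=i]=(\vec{1}_\sta\cdot\overline{\prob}^i)\cdot\overline{\expectedRewFix}$. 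Re-indexing the exponential-impulse sum so that it runs over $j=0,\dots,i-1$, pairing its $j$-th term with the rate-cost term of the $j$-th sojourn, and keeping the remaining rate-cost term $\frac{\delays\cdot\overline{\rateRew}}{i+1}$ of the last sojourn together with the single fixed-delay impulse, yields exactly the bracketed expression of the lemma; summing over $i$ against the weights $\psi_{\lambda\delays}(i)$ (again all summands nonnegative, so the interchange is free) finishes the computation.

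I expect the only real obstacle to be the bookkeeping forced by the deterministic alarm rather than any subtle probabilistic fact: one must keep straight that a run with $i$ exponential jumps has $i+1$ sojourn intervals, each of conditional expected length $\delays/(i+1)$ by the uniform order-statistics spacing, while it carries $i$ exponential impulse payments but exactly one fixed-delay impulse payment --- this mismatch is precisely why, in the statement, the rate term $\frac{\delays\cdot\overline{\rateRew}}{i+1}$ occurs once inside the $\sum_{j=0}^{i-1}$ summand (paired with $\overline{\expectedRewExp}$) and once again in the isolated $j=i$ summand (paired with $\overline{\expectedRewFix}$). A secondary point is to justify fact (ii), the conditional independence of the embedded chain from the jump times given $N$: this is the standard uniformization lemma for CTMCs with a constant uniformization rate, and I would invoke it via \cite{Jensen:uniformization,DBLP:journals/pe/Lindemann93} rather than reprove it. Finally, one should note that nothing special needs to be done about the absorbing, cost-free states of $\states'\setminus\{\sta\}$ (or the absorbing copy of $\sta$): these modifications are already encoded in $\overline{\prob}$, $\overline{\trans}$ and $\overline{\rateRew}$, so the series automatically stops accruing probability mass and cost once $\brambory$ is re-entered, which is what makes it equal to the quantities $\mtran(\sta,\delays)$ and $\mcost(\sta,\delays)$ as defined in Subsection~\ref{subsec:reduction-DTMDP}.
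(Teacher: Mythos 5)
Your proof is correct and follows essentially the same route as the paper, which justifies the lemma in a single sentence by exactly this uniformization argument: the Poisson count $\psi_{\lambda\delays}(i)$ of exponential jumps, the embedded-chain distributions $\vec{1}_s\cdot\overline{\prob}^j$, and the order-statistics fact that each of the $i+1$ sojourns has conditional expected length $\delays/(i+1)$, citing \cite{DBLP:journals/pe/Lindemann93}. Your explicit splitting of the cost into rate, exponential-impulse and fixed-delay-impulse parts and the re-indexing that produces the bracketed expression are just the worked-out details of that same argument.
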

Indeed, the distribution over states after $j$ exponential transitions is $\vec{1}_s \cdot \overline{\prob}^j$; the probability that exactly $i$ exponential transitions occur before time $\delays$ is $\psi_{\lambda\delays}(i)$; and under the condition that $i$ transitions occur, the expected time spent in each of the $i+1$ visited states is $\delays/(i+1)$~\cite{DBLP:journals/pe/Lindemann93}.

Finally, note that for any fixed error $\otherError > 0$, we can easily compute a bound $I$ such that $\sum_{i=I}^\infty \psi_{\lambda\delays}(i) < \otherError/2$. Hence the $\mtran(\sta,\delays)$ can be approximated up to $\otherError/2$ by truncating the infinite sums at $I$. Similarly there is a bound $J$ for approximation of $\mcost(\sta,\delays)$ since $\psi_{\lambda\delays}$ decreases exponentially while the costs increase linearly for increasing $i$.

\section{Proofs for Section~\ref{sec-prelims}} \label{sec:app-prelims}

\subsection{Definition of probability space for fdCTMC}\label{subsec:prob-spaceFdCTMC}

For a run $(\sta_0,\delays_0)t_0 (\sta_1,\delays_1) t_1 \cdots$ and for each $i\in\Nseto$ we define function $\nxt(\sta_i,\delays_i;t_i)$ as follows:
\begin{itemize}
\item If $t_i < \delays_i$ (an exp-delay transition occurs), $$\nxt(\sta_i,\delays_i;t_i) = \begin{cases}
\delays_i - t_i & \text{if $\sta_{i+1} \in \allact$ and $\sta_i\in \allact$,} \\
\timeouts(\sta_{i+1}) & \text{if $\sta_{i+1} \in \allact$ and $\sta_i \not\in\allact$,} \\
\nodel & \text{if $\sta_{i+1} \not\in\allact$,}
\end{cases}
$$
\item If $t_i = \delays_i$ (a fixed-delay transition occurs),  
\[
\nxt(\sta_i,\delays_i;t_i) = \begin{cases}
\timeouts(\sta_{i+1}) & \text{if } \sta_{i+1} \in \allact
\text{,} \\
\nodel & \text{if }\sta_{i+1} \not\in\allact
\text{.}
\end{cases}
\]
\end{itemize}

We define the probability measure $\probm_{\fdC(\timeouts)}$ over the measurable space
$(\allruns, \sigmafield)$ where $\allruns$ is the set of all runs initiated in $\initstate$ and $\sigmafield$ is a $\sigma$-field over $\allruns$ generated by the set all of \emph{cylinder sets} of the form $\cylinderSet=\{(\sta'_0,\delays_0) t_0 \cdots \in \allruns \mid
   \forall i \leq n: \sta'_i = \sta_i, \forall i < n: t_i \in I_i \}$, here $\sta_0,\ldots,\sta_n\in S$  and $I_0,\ldots,I_{n-1}\subseteq \Rsetpo$ are intervals. 
Given such a cylinder set $\cylinderSet$, we define the~probability that a run of $\fdC(\timeouts)$ belongs to $\cylinderSet$ by
\[
   \probm_{\fdC(\timeouts)}^{\initstate}\left[\cylinderSet\right] :=  \mathbf{1}_{\sta_0=\initstate}
   \cdot \int_{t_0 \in I_0} \cdots \int_{t_{n-1} \in I_{n-1}}
   \prtr{\sta_0}{\delays_0}{\sta_1}{\de{t_0}} \cdot \cdots \prtr{\sta_{n-1}}{\delays_{n-1}}{\sta_n}{\de{t_{n-1}}}
\]
where $d_0=\timeouts(\sta_0)$ if $\sta_0 \in \allact$, and $\delays_0 =
\nodel$ otherwise; $\delays_{i+1} = \nxt(\sta_i,\delays_i,t_i)$ for each
$i\in\Nseto$; and the probability measure
$\prtr{\sta}{\delays}{\sta'}{\cdot}$ stands for the probability of moving to
$s'$ in a given time interval when staying in $(\sta,\delays)$,
i.e. for each $s,s' \in \states$, $\delays \in \Rsetp \cup \{\nodel\}$, and
an interval $[a,b]\subseteq \Rsetpo$
 $$
   \prtr{\sta}{\delays}{\sta'}{[a,b]} \; := \;
     \underbrace{\mathbf{1}_{a<\delays} \cdot \prob(\sta,\sta') \cdot (1 - e^{-\lambda(\min\{b,d\}-a)})}_{\text{exp-delay transition in the interval $[a,\min\{b,d\}]$}} \ + \ 
     \underbrace{\mathbf{1}_{\delays \in [a,b]} \cdot \trans(\sta,\sta')
       \cdot  e^{- \lambda d}.}_{\text{fixed-delay transition after time
         $d\in[a,b]$}}
   $$ 
The probability measure $\probm_{\fdC(\timeouts)}^{\initstate}$ then extends uniquely to all sets of $\sigmafield$.

\subsection{Proof of Proposition~\ref{prop:compute-cost}}\label{subsec:prob-spaceDTMC}

\begin{refproposition}{prop:compute-cost}
There is a polynomial-time algorithm that for a given fdCTMC
$\fdC(\timeouts)$, cost structure $\costFdC$ with goal states $\goalStates$,
and an approximation error $\eps > 0$ computes $\computedOutcome \in
\Rsetp \cup \{\infty\}$ and $p_s \in \Rsetp$, for each $s\in \goalStates$, such that 
$$\left\lvert \; \expred_{\fdC(\timeouts)} - \computedOutcome \;
\right\rvert  < \eps \quad \text{ and } \qquad
\left\lvert \probm_{\fdC(\timeouts)}(\first{G}{s}) - p_s \right\rvert < \eps$$
where $\first{G}{s}$ is the set of runs that reach $s$ as the first state of $\goalStates$ (after at least one transition).
\end{refproposition}
\begin{proof}
This proposition follows directly from results presented in \cite{DBLP:journals/pe/Lindemann93}. For details how the transition probabilities and costs of embedded DTMC are computed, please refer to Appendix~\ref{app:algs}. 
\end{proof}

\subsection{Definition of probability space for DTMDP}\label{subsec:prob-spaceDTMDP}
Every strategy $\sigma$ uniquely determines a probability space $(\Runs_\mdp,\sigmafield_\mdp,\probm_{\mdp(\sigma)})$, where $\Runs_\mdp$ is the set of all \emph{runs}, i.e. infinite alternating sequences of vertices and actions $\vtx_0 a_1 \vtx_1 a_2 \vtx_2\cdots \in (\mlocs\cdot \macts)^{\omega}$; $\sigmafield_\mdp$ is the sigma-field generated by all sets of runs of the form $\{\run \mid \text{$\run$ has prefix $\hist$}\}$ for all \emph{finite paths} $\hist$, i.e. prefixes of runs ending with a vertex; and $\probm_{\mdp(\sigma)}$ is the unique probability measure such that for every finite path $\hist = \vtx_0a_1\vtx_1 \dots a_n \vtx_n$ it holds $$\probm^{\initvertex}_{\mdp(\sigma)}(\{\run \mid \text{$\run$ has prefix $\hist$}\})= \begin{cases} 0 & \text{if } \sigma(\vtx_{i-1})\neq a_i \text{ for some } 1 \leq i \leq n,\\ \prod_{i=1}^{n} \mtran(\vtx_{i-1},a_i)(\vtx_i) & \text{otherwise}. \end{cases}$$
\newcommand{\fdtext}{\textrm{ft}}
\newcommand{\Sink}{K}
\newcommand{\yetUnknownStepsBound}{XXX}
\newcommand{\timeoutsStable}{\timeouts^*}
\newcommand{\timeoutsInfl}{\timeouts'}
\newcommand{\sinkVal}[2]{\overline{V}^{#1}_{#2}}
\newcommand{\unorm}[1]{||#1||_{\infty}}
\newcommand{\A}{\mathcal{A}}
\newcommand{\transientni}{\brambory'}

\newcommand{\delayBound}{d}
\newcommand{\lowerDelayBound}{\underline{\delayBound}}
\newcommand{\pmf}{f}
\newcommand{\poiss}{Po}
\newcommand{\statesM}{S'}
\newcommand{\e}{e}
\newcommand{\expect}{E}
\newcommand{\tempCost}{c}
\newcommand{\jumpChain}{X}
\newcommand{\timeChain}{Y}
\newcommand{\hit}{H}
\newcommand{\costFdCA}{\costFdC^{-}}
\newcommand{\costFdCB}{\costFdC^{--}}
\newcommand{\costFdCC}{\overline{\costFdC}}
\newcommand{\revisitVar}{\#}
\newcommand{\revisitSink}{\#^{\statesM\setminus \Sink}}
\newcommand{\goodStatesTemp}{L}
\newcommand{\upperSinkValueBound}{\lambda \delayBound \cdot \pmf_{\poiss(\lambda \delayBound)}(\geq 1) \cdot \Big (\Value{\contMdp} + \eps + \maxRew \cdot \Big (2 +\frac{2}{\lambda} \Big) \Big)}
\newcommand{\impulseCosts}{\alpha}

\newcommand{\tempSet}{U}

\section{Proofs for Section~\ref{sec:results-single}}

\subsection{Necessary definitions}
We denote $\Nseto$ the natural numbers with zero, i.e. $\Nset \cup \{0\}$.

\begin{definition}
Let $\fdC = (\states, \lambda, \prob, \allact, \trans, \initstate)$ be a fdCTMC structure and $\costFdC = (\goalStates, \rateRew, \impRewExp, \impRewFix)$ be the cost structure over $\fdC$. We define the \emph{minimal cost, maximal cost} and \emph{minimal branching probability} as
\begin{itemize}
	\item $\maxRew = \max \{ \rateRew(\sta) \mid \sta \in \states \} \cup \big \{ \impRew_i(\sta,\sta') >0 \mid \sta,\sta' \in \states \text{ and } i \in \{\prob, \trans \} \big \} $
	\item $\minRew = \min \{ \rateRew(\sta) \mid \sta \in \states \} \cup \big \{ \impRew_i(\sta,\sta') >0 \mid \sta,\sta' \in \states \text{ and } i \in \{\prob, \trans \} \big \} $, and
	\item $\minPst = \min \{ \prob(\sta,\sta') >0 \mid \sta,\sta' \in \states \} \cup \{ \trans(\sta,\sta') >0 \mid \sta,\sta' \in \states \} $, respectively.
\end{itemize}
\end{definition}
We will use these constants also for DTMDPs in this appendix. All DTMDPs in appendix were created of some given fdCTMC $\fdC$. Thus $\minRew, \maxRew$ and $\minPst$ are defined for all DTMDPs according the source fdCTMC.  

Let $(\Omega , \mathcal{F}, \mathcal{P})$ be arbitrary probability space. We define conditional probability $\mathcal{P} (A,B)$ for some events $A,B \in \mathcal{F}$ as 
$\mathcal{P} (A,B) = \frac{\mathcal{P}(A \cap B)}{\mathcal{P}(B)}$. Moreover let $G$ be random variable. The \emph{expected value} for discrete random variable $G$ denoted as $\expected[G]$ is $\expected[G] = \sum_{g \in Im(G)} g \cdot \mathcal{P}(G=g)$, where $Im$ is function that returns and image of a function. Finally the \emph{conditional expected value} of $G$ conditioned by an event $A \in \mathcal{F}$ as $\expected[G\mid A] = \sum_{g \in Im(G)} g \cdot \mathcal{P}(G=g \mid A)$. Standard definition is also used to define $\expected$ for continuous random variables. 

\begin{definition}
For a DTMDP $\contMdp$ and set $\tempSet \subseteq \statesM$ we define the series of random variables $(\hit^{\tempSet })_n$, where for each $n \in \Nset$ the $\hit^{\tempSet}_n:(\statesM\cdot \macts)^{\omega} \to \Nseto$ gives the jump of the $n$th visit of $\tempSet$ before reaching $\goalStates$ (stopping time), i.e.
\begin{align*}
\hit^{\tempSet}_n(\sta_0 a_1 \sta_1 a_2 \sta_2\cdots)  
&=\begin{cases}
i_n &\text{if }i_n < l,\\
l &\text{if }i_n \geq l,\\
\infty & \text{if $i_n = l = \infty$,}
\end{cases}
\end{align*}
where $l\geq0$ is minimal such that it holds $\sta_l \in \goalStates$ and $i_n\geq0$ is minimal such that there are $ i_1, \ldots i_{n-1}$ such that for each $j,k \leq n$ it holds that $\sta_{i_j} \in \tempSet$ and if $j\neq k $ then $ i_j \neq i_k $. If no such $i_n$ or $l$ exist we define them to be equal to $\infty$. 
\end{definition}

\begin{definition}
Let $\mdp$ be a DTMDP. We define a \emph{jump chain} for $\mdp$ as a series of random variables $(\jumpChain^{\mdp})_n$ where for all $n \in \Nseto$, $\jumpChain^{\mdp}_n:(\statesM\cdot \macts)^{\omega} \to \statesM$ and 
$$\jumpChain^{\mdp}_n(\sta_0 a_1 \sta_1 a_2 \sta_2\cdots)= \sta_n.$$
We define variable $\jumpChain^{\mdp}_{\infty}$ as $\jumpChain^{\mdp}_{\infty}: (\statesM\cdot \macts)^{\omega} \to \{\perp \}$.
\end{definition}

\begin{definition}
Let $\mdp$ be a DTMDP. For sets $U,\overline{U} \subseteq \statesM$ we define random variable $\revisitVar^{\overline{U}}_U:(\statesM\cdot \macts)^{\omega} \to \Nseto$ giving each run \emph{number of visits of states from $U$ before reaching any state from $\overline{U}$}, i.e. 
$$\revisitVar^{\overline{U}}_{U} (\omega)= \sum_{i=0}^{\hit_1^{\overline{U}}} 1_{\jumpChain_{i}^{\mdp} \in U},$$
where $1$ is indicator function.
\end{definition}

Let $\mdp$ be a DTMDP and $\timeouts$ be a delay function. For each $i \in \Nseto$ and $\sta \in \statesM$ (abusing notation) we define "delay function" $\timeouts[s/\delayBound/i]$ that assigns to every state $\sta' \in \statesM \setminus \{\sta\}$ delay $\timeouts(\sta')$, but it assigns to state $\sta$ delay $\delayBound$ until it is reached $i$ times and delay $\timeouts(\sta)$ afterwards. Moreover we define $\timeouts[s/\delayBound]$ as $\timeouts[s/\delayBound/1]$.

\subsection{Proof of Lemma~\ref{lem:finite}}
\begin{reflemma}{lem:finite}
	For any delay functions $\timeouts, \timeouts'$ we have $\expred_{\fdC(\timeouts)} = \infty$ if and only if $\expred_{\fdC(\timeouts')} = \infty$.
\end{reflemma}
\begin{proof}
Note that the only way how $\expred_{\fdC(\timeouts)} = \infty$ for delay function $\timeouts$ is to reach bottom strongly connected component (BSCC) in $\fdC(\timeouts)$ with positive probability. Runs that reach such BSCC stay there forever and incur infinite cost. Observe that for each $\sta \in \states$ and any delay function $\timeouts$ it holds that $0< \timeouts(\sta) < \infty$. Thus there is positive probability of reaching BSCC in $\fdC(\timeouts)$ for some delay function $\timeouts$ if only if there is positive probability to reach exactly same BSCC in $\fdC(\timeouts')$ also using any other delay function $\timeouts'$.
\end{proof}

\subsection{Proof of Proposition~\ref{prop:mdp-formulation}}

\begin{refproposition}{prop:mdp-formulation}
	For any delay function $\timeouts$ it holds 
	$\expred_{\fdC(\timeouts)}=\expred_{\contMdp(\timeouts)}$. Hence, 
	$$\Value{\fdC} = \Value{\contMdp}.$$
\end{refproposition}

\begin{proof}
	
	Let us fix $\timeouts$ and let $R$ and $R'$ denote the set of runs in $\contMdp(\timeouts)$ and $\fdC(\timeouts)$, respectively.
	Note that for each $n\in\Nset$, both $R$ and $R'$ can be partitioned into countable collections of sets $\mathcal{R}_n = \{ R_{s_0 \cdots s_n} \mid s_0,\ldots,s_n \in\states \}$ and $\mathcal{R}'_n = \{ R'_{s_0 \cdots s_n} \mid s_0,\ldots,s_n \in\states \}$ where
	\begin{align*}
		R_{s_0 \cdots s_n} &:=
		\{s'_0 s'_1 \cdots \mid \forall i \leq n : s'_i = s_i\} \\
		R'_{s_0 \cdots s_n} &:=
		\{(s'_0,d_0)t_0 \cdots \mid s'_0 = s_0, \exists i_1 < \cdots < i_n: \\
		& \qquad\qquad
		\forall i>0 \text{ such that } i = i_k \text{ for some $k$}: s'_{i} = s_k, s'_{i} \not\in\allact \lor d_{i-1} = t_{i-1} \lor s'_{i} \in \goalStates, \\ 
		& \qquad\qquad
		\forall i>0 \text{ such that } i \neq i_k \text{ for any $k$}: s_i \in \allact \setminus \goalStates, d_{i-1} \neq t_{i-1} \}
	\end{align*}
	
	Intuitively, the indices $s_0,\cdots,s_n$ in the latter definition are the states where fixed-delay transitions are not active or become newly active, i.e. states where the ``big steps'' that are simulated by single steps in $\contMdp(\timeouts)$ start.
	For each $n\in\Nset$, let $W_n$ and $W'_n$ denote the cost accumulated in first $n$ steps of $\contMdp(\timeouts)$ and first $n$ ``big steps'' of $\fdC(\timeouts)$, respectively, such that cost stops being accumulated when $\goalStates$ is reached. 
	
	By a straightforward induction on $n$, it is easy to show that
	\begin{align*}
		\probm_{\contMdp(\timeouts)}
		\left[ R_{s_0\cdots s_n} \right]
		\;&= \;
		\probm_{\fdC(\timeouts)}
		\left[ R_{s_0\cdots s_n} \right]
		\qquad\qquad \text{for all sequences of states $s_0\cdots s_n$, and}
		\\
		\expected^{\initstate}_{\contMdp(\timeouts)}
		\left[ W_n \right]
		\;&= \;
		\expected^{\initstate}_{\fdC(\timeouts)}
		\left[ W'_n \right]
		.
	\end{align*}
	
	\noindent
	This implies that, denoting by $N$ and $N'$ the set of all runs of $\fdC$ and $\contMdp$ that do not reach the target, 
	$$\probm_{\fdC(\timeouts)}[N] > 0 \qquad \Longleftrightarrow \qquad \probm_{\contMdp(\timeouts)}[N'] > 0.$$
	Furthermore, we get the result as 
	\begin{align*}
		\expred_{\contMdp(\timeouts)}
		&\;=\;
		\lim_{n\to\infty}
		\expected^{\initstate}_{\contMdp(\timeouts)}
		\left[ W_n \right] \\
		&\;=\;
		\lim_{n\to\infty}
		\expected^{\initstate}_{\fdC(\timeouts)}
		\left[ W'_n \right] \\
		&\;=\;
		\expred_{\fdC(\timeouts)}. 
	\end{align*}
	\qed
\end{proof}

\subsection{Proof of Lemma~\ref{lem:perturbation-error}}

In the following we denote by $\unorm{\mathbf{x}}$ a \emph{uniform norm} of a vector $\mathbf{x}$, i.e. the maximal absolute value of a component of $\mathbf{x}$. This norm induces a natural matrix norm $\unorm{\cdot}$, where for a real matrix $X$ the number $\unorm{X}$ represents the maximal absolute row sum of $X$.

\begin{reflemma}{lem:perturbation-error}
	Let $\alpha\in[0,1]$ and let, $\timeouts'$ be a delay function that is $\alpha$-bounded by another delay function $\timeouts$. If $\alpha \leq \frac{1}{2\cdot\CostBound{\mcost,\timeouts}\cdot|\brambory|}$, then
$$\expred_{\contMdp(\timeouts')}   
\; \leq \; 
\expred_{\contMdp(\timeouts)} + 2\cdot \alpha\cdot \CostBound{\msteps,\timeouts} \cdot (1 + \CostBound{\mcost,\timeouts}\cdot |\brambory|).$$ 
\end{reflemma}
\begin{proof}
 Let $\timeouts$, $\timeouts'$ be delay functions satisfying the assumptions of the lemma. 
 If both functions incur an infinite expected cost, the lemma clearly holds. Otherwise, due to Lemma~\ref{lem:finite} both functions incur finite cost.
 We start the proof by expressing the values $\expred_{\mdp(\timeouts')}^{\mcost}$ and $\expred_{\mdp(\timeouts)}^{\mcost}$ as solutions of certain systems of linear equations. To achieve this, we consider discrete time absorbing Markov chains $\A_{\timeouts}$ and $\A_{\timeouts'}$ obtained by fixing strategies $\timeouts$ and $\timeouts'$ in $\mdp$, respectively, and replacing all transitions  outgoing from states in $G$ with self-loops on these states. Note that due to both functions incurring finite cost the set $G$ is exactly the set of absorbing states of both $\A_{\timeouts}$ and $\A_{\timeouts'}$.

Let $\transientni=\brambory\setminus G$ be the set of transient states of $\A_{\timeouts}$ (and hence also of $\A_{\timeouts'}$) and let $Q_{\timeouts},Q_{\timeouts'}\colon \transientni\times \transientni \rightarrow [0,1]$ be the substochastic matrices encoding transitions in transient parts of $\A_{\timeouts}$ and $\A_{\timeouts'}$, respectively. That is, $Q_{\timeouts}$ is an $|\transientni|\times|\transientni|$ substochastic matrix with $Q_{\timeouts}(s,t)=\mtran(s,\timeouts(s))(t)$, for all $s,t\in \transientni$, and similarly for $Q_{\timeouts'}$. Next, we denote by $f_{\timeouts},f_{\timeouts'}\colon \transientni \rightarrow \Rset_{>0}$ the $|\transientni|$-dimensional vectors of one-step costs incurred with strategies $\timeouts$ and $\timeouts'$ in $\mdp$ and $\matchmdp$, respectively. That is, $f_{\timeouts}(s)=\mcost(s,\timeouts(s))$ for all $s\in \transientni$, and similarly for $f_{\timeouts'}$. By our assumptions, we have $Q_{\timeouts'}=Q_{\timeouts}+\Delta Q$, where $\Delta Q$ is a matrix whose each entry has absolute value bounded by $\alpha$, and $f_{\timeouts'} = f_{\timeouts} + \Delta f$, where $\Delta f$ is a vector whose each entry is $\leq \alpha$.

From standard results on Markov models with total accumulated reward criterion (e.g.~\cite[Theorem 7.1.3]{Puterman:book}) we have that the $|S''|$-dimensional vector $(\expred_{\mdp[s](\timeouts)}^{\mcost})_{s\in S''}$ is equal to the unique solution $\mathbf{x}$ of the following system of linear equations:
\[
(I-Q_{\timeouts})\cdot\mathbf{x} = f_{\timeouts}.
\]

\noindent
(The uniqueness of the solution comes from $\A_{\timeouts}$ being absorbing, which means that $(I-Q_{\timeouts})$ is invertible~\cite{KS:book}).
Similarly, the vector $(\expred_{\mdp[s](\timeouts')}^{\mcost})_{s\in S''}$ is the unique solution $\mathbf{y}$ of the following system of equations:
\[
(I-Q_{\timeouts} - \Delta Q)\cdot\mathbf{y} = f_{\timeouts} + \Delta f.
\]

\noindent
Hence, it suffices to obtain a bound on the maximal positive component of $\mathbf{x}-\mathbf{y}$. 
We first replace all negative components of $\Delta f$ with zeroes and thus obtain a new vector $\Delta f'$. We claim that the unique solution $\mathbf{y'}$ to the system
\[
(I-Q_{\timeouts} - \Delta Q)\cdot\mathbf{y'} = f_{\timeouts} + \Delta f'.
\]
satisfies $\mathbf{y'}\geq \mathbf{y}$. Indeed, this is because $\mathbf{y'} = (I-Q_{\timeouts} - \Delta Q)^{-1}\cdot (f_{\timeouts} + \Delta f')\geq  (I-Q_{\timeouts} - \Delta Q)^{-1}\cdot (f_{\timeouts} + \Delta f) = \mathbf{y}$, the middle inequality following that the matrix $(I-Q_{\timeouts} - \Delta Q)^{-1}$, so called fundamental matrix of a Markov chain $\A_{\timeouts'}$, is always non-negative. So to prove the lemma it suffices to give a bound on $\unorm{\mathbf{x}-\mathbf{y'}}$.
Note that $\unorm{\Delta f'}\leq \alpha$.

We use standard results on stability of perturbed systems of linear equations. From, e.g. proof of Theorem~3 in~\cite{IK:book} it follows that 
\begin{equation}
\label{eq:perturbation-1}
\unorm{\mathbf{x}-\mathbf{y'}} \leq \frac{\unorm{(I-Q_{\timeouts})^{-1}}}{1-\unorm{(I-Q_{\timeouts})^{-1}}\cdot \unorm{\Delta Q}}\cdot \left(\unorm{\Delta f'} + \unorm{\mathbf{x}}\cdot \unorm{\Delta Q} \right).
\end{equation}

\noindent
We know that $\unorm{\Delta Q} \leq \alpha\cdot |\brambory|$ and $\unorm{\Delta f'} \leq \alpha$, and from the definition of $\mathbf{x}$ we have that $\unorm{\mathbf{x}}= \CostBound{\mcost,\timeouts}$. Moreover, as noted above, $(I-Q_{\timeouts})^{-1}$ is the {fundamental matrix} of the absorbing Markov chain $\A_{\timeouts}$, i.e. a matrix such that the entry $(I-Q_{\timeouts})^{-1}(s,t)$ is equal to the expected number of visits to a transient state $t$ when starting in a transient state $s$ of $\A_{\timeouts}$. In particular, sum of each row of $(I-Q_{\timeouts})^{-1}$ is bounded by $\CostBound{\msteps,\timeouts}$, and thus $\unorm{(I-Q_{\timeouts})^{-1}}\leq \CostBound{\msteps,\timeouts}$. Plugging these (in)equalities into~\eqref{eq:perturbation-1} we get

\begin{align*}
\unorm{\mathbf{x}-\mathbf{y}} &\leq \frac{\CostBound{\msteps,\timeouts}}{1-\CostBound{\msteps,\timeouts}\cdot \alpha\cdot |\brambory|}\cdot  \left(\alpha + \CostBound{\mcost,\timeouts}\cdot \alpha \cdot |\states'| \right) 
\\
&\leq 2\cdot \CostBound{\msteps,\timeouts} \cdot (\alpha +\CostBound{\mcost,\timeouts}\cdot \alpha\cdot |\brambory|),
\end{align*}

\noindent
the last inequality following from our assumption that $\alpha\leq 1/(2\cdot \CostBound{\msteps,\timeouts}\cdot |\brambory|)$. The lemma easily follows. \qed
\end{proof}

\newcommand{\matrixExponential}{P''}
\newcommand{\upperDelayBoundOne}{\frac{\CostBound{\mcost,\timeouts}}{\minPst^{|\allact|} \cdot \minRew}}
\newcommand{\upperDelayBoundTwo}{zatialNic}
\newcommand{\errorBoundCuttingUp}{nic}

\subsection{Proof of Lemma~\ref{lem:mesh-error}} \label{subs:mesh-error}

\begin{reflemma}{lem:mesh-error}
There are positive numbers $\discconst,\cutconst\in \exp(\size{\fdC}^{\mathcal{O}(1)})$ computable in time polynomial in $\size{\fdC}$ such that the following holds for any $\alpha\in[0,1]$ and any delay function $\timeouts$: If we put $$\delta \; := \; \alpha/\discconst \quad \text{and} \quad \dmax \; := \; |\log(\alpha)|\cdot \cutconst \cdot\CostBound{\mcost,\timeouts},$$ then $\paramspace(\delta,\dmax)$ contains a delay function which is $\alpha$-bounded by $\timeouts$.
\end{reflemma}

The proof of the lemma proceeds in 2 phases. First we show how to set $\delta$ in such a way that for every delay function $\timeouts$ there is a function using only multiples of $\delta$ which is $\alpha/2$-bounded by $\timeouts$. In the second phase we show how to choose $\dmax$ such that for each delay function $\timeouts$ (and in particular each delay function that uses only multiples of $\delta$) there is a function $\alpha/2$-bounded by $\timeouts$ that does not use action greater than $\dmax$.

We will need the following definition.

\begin{definition}
\label{def:subordinated}
For a fdCTMC structure $(\states, \lambda, \prob, \allact, \trans, \initstate)$, state $\sta \in \statesM$, set of states $\statesM \subseteq \states$ and delay $\delayBound > 0$ we define a fdCTMC $\fdC'[\sta](d)$, where $\fdC' = (\states \cup \{\overline{\sta} \}, \lambda, \prob', \allact, \trans', \overline{\sta})$, $\overline{\sta}$ is new initial state, that was before not in $\states$, and for all $\sta, \sta ' \in \states$ holds
$$\prob'(\sta', \sta'') = 
  \begin{cases}
	\prob(\sta, \sta'')& \text{if $\sta' = \overline{\sta}$,} \\  
	\prob(\sta', \sta'')& \text{if $\sta' \not\in \statesM$,} \\
	1& \text{if $\sta' = \sta'' \in \statesM$,} \\
	0& \text{otherwise,}
  \end{cases}$$
and
$$\trans'(\sta', \sta'') = 
  \begin{cases}
	\trans(\sta, \sta'')& \text{if $\sta' = \overline{\sta}$,} \\
	\trans(\sta', \sta'')& \text{if $\sta' \not\in \statesM$,} \\
	1& \text{if $\sta' = \sta'' \in \statesM$,} \\
	0& \text{otherwise.}
  \end{cases}$$
\end{definition}

\begin{lemma}
There is a number $D_1 \in \exp(\size{\fdC}^{\mathcal{O}(1)})$ computable in time polynomial in $\size{\fdC}$ such that for every $\alpha$ and every delay function $\timeouts$ the following holds: If we put $\delta=\alpha/D_1$, then $\{\timeouts \mid \forall s\in\brambory \; \exists k\in \Nset: \timeouts(s) = k\delta \}$ contains a delay function which is $(\alpha/2)$-bounded by $\timeouts$.
\end{lemma}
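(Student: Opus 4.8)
The plan is to take $\matchtimeouts$ to be the \emph{upward} rounding of $\timeouts$ to the mesh: $\matchtimeouts(s):=\lceil\timeouts(s)/\delta\rceil\cdot\delta$ for every $s\in\allact$ (states outside $\allact$ carry the fixed action $\infty$ and are untouched). Since $\timeouts(s)>0$ this $\matchtimeouts$ uses only positive multiples of $\delta$, so it lies in the claimed set, and $0\le\matchtimeouts(s)-\timeouts(s)<\delta$ for all $s$; hence it remains to choose $\delta$ (equivalently $D_1$) so that $\matchtimeouts$ is $(\alpha/2)$-bounded by $\timeouts$. The qualitative half of boundedness (same zero pattern of $\mtran$) needs no constraint on $\delta$: for $s\in S^{reset}$ the sign of $\mtran(s,d)(t)$ is independent of $d>0$, since $\mtran(s,d)(t)>0$ iff $t$ can be reached from $s$ by finitely many exp-delay steps through $S^{keep}\setminus G$ followed by one exp- or fixed-delay step landing in $\brambory$, and any such witness with $k$ exp-delay steps has positive probability for every $d>0$ (the event ``exactly $k$ exp-delay transitions occur before the alarm rings'' has probability $e^{-\lambda d}(\lambda d)^k/k!>0$, and conditioned on it the final step into $t$ has positive probability). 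Thus $\mtran(s,\timeouts(s))(t)=0\iff\mtran(s,\matchtimeouts(s))(t)=0$, and for $s\notin\allact$ nothing changes.

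For the quantitative half I would show that, uniformly in $s,t\in\brambory$, the maps $d\mapsto\mtran(s,d)(t)$ and $d\mapsto\mcost(s,d)$ are Lipschitz on $(0,\infty)$ with a common constant $L\in\exp(\size{\fdC}^{\mathcal{O}(1)})$ that is computable in polynomial time, and then set $D_1:=2L$. The bounds come from differentiating the series of Lemma~\ref{lem:trans_cost_form} term by term (valid by uniform convergence on compact subintervals of $(0,\infty)$) together with $\frac{\partial}{\partial d}\psi_{\lambda d}(i)=\lambda(\psi_{\lambda d}(i-1)-\psi_{\lambda d}(i))$. For $\mtran(s,d)(t)$ a reindexing turns the derivative into $\lambda$ times a Poisson-weighted average of differences of two probabilities in $[0,1]$, so $|\frac{\partial}{\partial d}\mtran(s,d)(t)|\le\lambda$. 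For $\mcost(s,d)$ the same manipulation, after telescoping the inner partial sums $\sum_{j<i}$ appearing in Lemma~\ref{lem:trans_cost_form}, splits the derivative into a rate-cost part bounded by $\maxRew$ times Poisson expectations of the form $\mathbb{E}[1/(N+1)]\le1$, and an impulse-cost part bounded by $\lambda$ times one-step differences of expected impulse costs, each $\le\maxRew$; hence $|\frac{\partial}{\partial d}\mcost(s,d)|\le c\cdot\maxRew\,(1+\lambda)$ for a small absolute constant $c$. Thus $L:=\max\{\lambda,\,c\cdot\maxRew\,(1+\lambda)\}$ works, is computable in polynomial time from the input, and lies in $\exp(\size{\fdC}^{\mathcal{O}(1)})$.

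With $D_1:=2L$ and $\delta:=\alpha/D_1$, the bound $|\matchtimeouts(s)-\timeouts(s)|<\delta$ and the Lipschitz estimates give $|\mtran(s,\matchtimeouts(s))(t)-\mtran(s,\timeouts(s))(t)|\le L\delta=\alpha/2$ and $|\mcost(s,\matchtimeouts(s))-\mcost(s,\timeouts(s))|\le L\delta=\alpha/2$ for all $s,t\in\brambory$; the latter in particular yields the one-sided inequality required by $\alpha$-boundedness. Combined with the qualitative half, $\matchtimeouts$ is $(\alpha/2)$-bounded by $\timeouts$, proving the lemma. The one genuinely delicate step is the Lipschitz estimate for $\mcost(s,\cdot)$: the series of Lemma~\ref{lem:trans_cost_form} has coefficients that grow linearly in the summation index, so one must differentiate and re-sum carefully (telescoping and bounding the Poisson moments $\mathbb{E}[1/(N+1)]$) instead of estimating the series naively; the probability estimate behind the support argument and the rounding bookkeeping are routine.
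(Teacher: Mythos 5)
Your proposal is correct and follows essentially the same route as the paper: both arguments reduce the lemma to uniform Lipschitz bounds in the delay parameter of roughly $\lambda$ for the transition probabilities and $O\big((1+\lambda)\cdot\maxRew\big)$ for the one-step costs, and then round $\timeouts$ onto the $\delta$-mesh with $D_1$ taken proportional to that Lipschitz constant. The only differences are cosmetic: the paper derives the probability bound from the derivative of the matrix exponential of the subordinated chain and bounds the cost increment by a direct probabilistic estimate (extra dwell time $\leq\delta$ and extra expected transitions $\leq 2\lambda\delta$) instead of differentiating the uniformization series, and it leaves the preservation of the support of $\mtran(s,\cdot)$ implicit, which you verify explicitly.
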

\begin{proof}
Let us fix arbitrary strategy $\timeouts$ of $\mdp$. We show that for each state $\sta \in \statesM$ there is strategy $\matchtimeouts \in \{\timeouts \mid \forall s\in\brambory \; \exists k\in \Nset: \timeouts(s) = k\delta \}$ such that for all $\sta, \sta' \in \statesM$ it holds that
$$| \mtran(s,\timeouts(\sta))(s') - \mtran(s,\matchtimeouts(\sta))(s') | \leq \lambda \cdot \delta.$$
Please observe that to compute $\mtran(s,d)(s')$ we can perform the transient analysis of $\fdC'[\sta](d)$ in time $d$ and then perform in zero time fixed delay transitions. The $\fdC'[\sta](d)$ performs as continuous time Markov chain (CTMC) until time $d$, thus we can use methods to compute the transient analysis of CTMC in time $d$. One of them is to use matrix exponential $\matrixExponential(t)= e^{tQ}$ \cite{Norris:book} for a $Q$-matrix for $\fdC'[\sta](\timeouts(\sta))$. It is known that $\matrixExponential(t)_{s,s'}$ is the probability of moving from state $s$ to $s'$ exactly in time $t$. Moreover for $t \geq 0$ it holds that
$$\frac{d \; \matrixExponential(t)}{d t} = \matrixExponential(t) \cdot Q.$$
Since the sum of positive entries in each line in $Q$ is at most the uniformization rate $\lambda$ and that $\matrixExponential(t)$ for every $t \geq 0$ is correct transition kernel it holds that for all $\sta, \sta' \in \statesM$ and $t \geq 0$
$$(\matrixExponential(t) \cdot Q)_{\sta,\sta'} \leq \lambda.$$
Thus the transient probability at time $t > 0$ in $\fdC'[\sta](\timeouts(\sta))$ can change at most $\lambda \cdot \delta$ if we move time $\timeouts(\sta)$ by $\delta$. Since $\mtran$ is also transition kernel it does only convex combination of transient probability at time $\timeouts(\sta) \pm \delta $ thus it cannot increase the error. Altogether we have that for all $\sta, \sta' \in \statesM$
\begin{equation}
\label{eq:disc-err-1}
| \mtran(s,\timeouts(\sta))(s') - \mtran(s,\matchtimeouts(\sta))(s') | \leq \lambda \cdot \delta.
\end{equation}

Hence, the strategy $\timeouts^\star$ we seek can be obtained by simply rounding the components $\timeouts$ to the nearest positive integer multiple of $\delta$.

Now we show that for each state $\sta \in \statesM$ it holds
$$| \mcost(\sta,\timeouts(\sta)) - \mcost(s,\matchtimeouts(\sta))| \leq 2 \cdot \lambda \cdot \delta \cdot \maxRew + \delta \cdot \maxRew \leq 2 \cdot (\lambda +1) \cdot \delta \cdot \maxRew.$$
Recall that $\mcost(s,\delays)=\expred_{\fdC[s](d)}^{\costFdC[\brambory]}$, i.e. $\mcost(s,\delays)$ is the expected cost accumulated until hitting state from $\statesM$. This cost can be divided into expected cost accumulated by rate costs and by expected cost accumulated by impulse costs. 
First we bound the change on expected rate cost until hitting $\statesM$. Let $O$ be random variable assigning run the time until hitting state from $\statesM$. Assume that the delay $\timeouts(\sta)$ was increased by $\delta$ (if it was decreased the argument is symmetric- we only start with $\matchtimeouts(\sta)$). Then the time until hitting $\statesM$ by each run is increased at most by $\delta$. Thus the maximal change in expected time until hitting $\statesM$ is $\delta$. The maximum rate cost in every state is bounded by $\maxRew$. Thus the maximal change in expected rate cost accumulated until time $\timeouts(\sta)$ if we increase it by $\delta$ is $\delta \cdot \maxRew$.

Similarly, if the delay of $\timeouts(\sta)$ is increased by $\delta$ (if it was decreased the argument is symmetric- we just start with $\matchtimeouts(\sta)$) we give every run a chance to do some more steps, what can increase the impulse cost. The increase in expected impulse cost until hitting state from $\statesM$ is bounded by the change in expected number of exponential steps from $\sta''$ in time $\timeouts(\sta)$ plus change in expected number of fixed-delay steps in time $\timeouts(\sta)$ multiplied by the maximum impulse cost. The expected number of exponential steps with rate $\lambda$ in time $\timeouts(\sta)$ is mean of Poisson distribution with parameter $\lambda \cdot  \timeouts(\sta)$, what equals $\lambda \cdot \timeouts(\sta)$. Thus if the $\timeouts(\sta)$ was increased by $\delta$ the increase in expected number exponential steps is $\lambda \cdot \delta$. Please observe, that only one fixed-delay step can happen until state from $\statesM$ is reached, because all fixed-delay transitions go to states in $\statesM$. The fixed-delay transition fires in time $\timeouts(\sta)$ if the run is still in state from $\allact$. If the time is increased by $\delta$ the probability of run to be in states $\allact$ in time $\timeouts(\sta)$ can change at most by $\lambda \cdot \delta$ (from bounded change in transient probability proved in first paragraph of this proof). Thus the maximal increase in expected impulse cost until reaching $\statesM$ from $\sta$ is $\maxRew \cdot (\lambda \cdot \delta + \lambda \cdot \delta)$.
Altogether we have that 
\begin{equation}
\label{eq:disc-err-2}
| \mcost(\sta,\timeouts(\sta)) - \mcost(s,\matchtimeouts(\sta))| \leq 2 \cdot \lambda \cdot \delta \cdot \maxRew + \delta \cdot \maxRew \leq 2 \cdot (\lambda +1) \cdot \delta \cdot \maxRew.
\end{equation}

From~\eqref{eq:disc-err-1} and~\eqref{eq:disc-err-2} it follows that to get the lemma it suffices to put $D_1=\max\{\lambda,2(\lambda+1)\cdot\maxRew\}$.

\end{proof}

\begin{lemma}
There is a number $D_2 \in \exp(\size{\fdC}^{\mathcal{O}(1)})$ computable in time polynomial in $\size{\fdC}$ such that for every $\alpha$ and every delay function $\timeouts$ the following holds: If we put $\dmax := |\log(\alpha)|\cdot \cutconst \cdot\CostBound{\mcost,\timeouts}$, then $\{\timeouts \mid \forall s\in\brambory \; \timeouts(s)  \leq \dmax \}$ contains a delay function which is $(\alpha/2)$-bounded by $\timeouts$.
\end{lemma}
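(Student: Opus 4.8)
The plan is to take the \emph{capped} delay function $\matchtimeouts$ given by $\matchtimeouts(s):=\min\{\timeouts(s),\dmax\}$ and to show it is $(\alpha/2)$-bounded by $\timeouts$ provided $\dmax$ has the stated form. First I would dispose of the qualitative-structure requirement for free: for any state $s$ and any \emph{finite} $d>0$, the support of $\mtran(s,d)$ is the set of $\brambory$-states that can be hit first from $s$ via an exp-path through $\states\setminus\brambory$, possibly closed off by a single fixed-delay transition at time $d$, and this set does not depend on the concrete value $d\in(0,\infty)$; hence $\mtran(s,\matchtimeouts(s))(t)=0\iff\mtran(s,\timeouts(s))(t)=0$. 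Everything then reduces to bounding, for the states $s$ with $\timeouts(s)>\dmax$, how much $\mtran(s,\cdot)$ and $\mcost(s,\cdot)$ change when the delay drops to $\dmax$. Two standing inequalities drive the argument. Since $\mcost(s,d)$ is the cost of a single big step until $\brambory$ is reached and $\goalStates\subseteq\brambory$, the first big step contributes at most the whole cost, so $\mcost(s,\timeouts(s))\le\expred_{\contMdp[s](\timeouts)}^{\mcost}\le\CostBound{\mcost,\timeouts}$; and, since $\rateRew(\cdot)\ge\minRew>0$ and at least the time $\min\{\mathrm{Exp}(\lambda),d\}$ before the first event is spent in non-goal states, $\mcost(s,d)\ge\minRew\cdot(1-e^{-\lambda d})/\lambda$.

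Next I would use these bounds to settle a degenerate case: if $\CostBound{\mcost,\timeouts}<\minRew/(2\lambda)$, they force $\timeouts(s)\le 2\CostBound{\mcost,\timeouts}/\minRew$ for \emph{every} $s$, so no delay exceeds $\dmax$ and $\matchtimeouts=\timeouts$ already works. Hence I may assume $\CostBound{\mcost,\timeouts}\ge\minRew/(2\lambda)$, which lets me later absorb any additive $\exp(\size{\fdC}^{\mathcal O(1)})$ term into $\cutconst\cdot\CostBound{\mcost,\timeouts}$. Now fix $s$ with $\timeouts(s)>\dmax$ and look at the subordinated CTMC $\fdC'[s](\cdot)$ of Definition~\ref{def:subordinated} (with $\brambory$ made absorbing). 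I would split according to whether, starting from $s$, one can reach via exp-transitions (without first hitting $\brambory$) a nonempty set $K\subseteq\allact\setminus\brambory$ closed under exp-transitions -- a ``sink''. In the sink case, a path to $K$ of length $\le|\states|$ with each branch taken with probability $\ge\minPst$ is available, and all of its exp-transitions occur before time $d$ with probability $\ge\tfrac12$ (Markov's inequality on the time of $|\states|$ successive exp-transitions, which is $\mathrm{Erlang}(\lambda,|\states|)$-distributed with mean $|\states|/\lambda$, assuming $d\ge 2|\states|/\lambda$); so with probability $\ge\tfrac12\minPst^{|\states|}$ the system enters $K$ within expected time $\le|\states|/\lambda$ and then remains there, paying rate cost $\ge\minRew$ per time unit, until the fixed-delay transition at time $d$. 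This yields $\mcost(s,\timeouts(s))\ge\tfrac12\minPst^{|\states|}\minRew\,(\timeouts(s)-|\states|/\lambda)$, which for $\dmax$ of the stated magnitude contradicts $\mcost(s,\timeouts(s))\le\CostBound{\mcost,\timeouts}$; so no state falls into the sink case.

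In the remaining (sink-free) case, every big-step-reachable non-goal state has an exp-path of length $\le|\states|$ out of $\allact\setminus\brambory$; splitting $[0,t]$ into blocks of length $2|\states|/\lambda$ and reusing the escape estimate inside each block, the big step has not ended by time $t$ with probability at most $e^{-t/T}$, where $T=4|\states|/(\lambda\minPst^{|\states|})\in\exp(\size{\fdC}^{\mathcal O(1)})$. Consequently, for every $t\ge\dmax$ the vector $\mtran(s,t)$ is within $e^{-t/T}$ of its limit $\pi_s:=\lim_{u\to\infty}\mtran(s,u)$ in each coordinate, and by Lemma~\ref{lem:trans_cost_form} the not-yet-ended mass moves $\mcost(s,t)$ by at most $e^{-t/T}\cdot(\maxRew\,t+(1+1/\lambda)\maxRew)$ away from its limit $c_s:=\lim_{u\to\infty}\mcost(s,u)$. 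Both quantities are $\le\alpha/4$ once $\dmax\ge c\cdot T\cdot|\log\alpha|$ for a suitable absolute constant $c$ (the $t\,e^{-t/T}$ tail forces a slightly larger multiple of $|\log\alpha|$ for $\mcost$ than for $\mtran$). Since $\timeouts(s)>\dmax$, both $\timeouts(s)$ and $\dmax$ lie in $[\dmax,\infty)$, so the values of $\mtran$ and $\mcost$ at $\timeouts(s)$ and at $\dmax$ are each within $\alpha/4$ of the corresponding limit, hence within $\alpha/2$ of each other: this gives the first condition of $(\alpha/2)$-boundedness, and the second follows from $\mcost(s,\dmax)\le c_s+\alpha/4\le\mcost(s,\timeouts(s))+\alpha/2$.

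Finally I would set $\cutconst$ to the maximum of the three $\exp(\size{\fdC}^{\mathcal O(1)})$ quantities that appeared -- $2/\minRew$ (degenerate case), $2/(\minPst^{|\states|}\minRew)+|\states|/\lambda$ (sink case) and $cT$ (decay case), all polynomial-time computable from $\fdC$ -- so that $\dmax=|\log(\alpha)|\cdot\cutconst\cdot\CostBound{\mcost,\timeouts}$ meets every requirement. Applying this capping to $\timeouts$ and then rounding the result down to multiples of $\delta=\alpha/\discconst$ via the preceding lemma (whose output is $(\alpha/2)$-bounded by its input, and by rounding down stays in $\paramspace(\delta,\dmax)$) produces a delay function in $\paramspace(\delta,\dmax)$ that is $\alpha$-bounded by $\timeouts$ -- which is exactly Lemma~\ref{lem:mesh-error}. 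The main obstacle I expect is the sink case: making precise that no finite-cost delay function can remain for an arbitrarily long, positively-costly time trapped inside a component closed under exp-transitions without violating $\CostBound{\mcost,\timeouts}$, and carrying out the ``enter-then-stay'' geometry together with fully explicit Erlang and chunking estimates, is where the real work lies; the sink-free decay estimate and its translation into a cost bound via Lemma~\ref{lem:trans_cost_form} is routine but still needs care.
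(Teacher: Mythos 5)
Your proposal is correct and follows essentially the same route as the paper's proof: you split on whether the big step from $s$ can get trapped in a set of non-$\brambory$ states closed under exp-transitions (where the bound $\mcost(s,\timeouts(s))\le\CostBound{\mcost,\timeouts}$ together with the rate-cost lower bound forces $\timeouts(s)$ to be small, so such states never need capping), and otherwise you use the chunking argument (per-block escape probability of order $\minPst^{|\states|}$) to show the unabsorbed mass decays like $e^{-t/T}$, so capping at $\dmax=|\log\alpha|\cdot\cutconst\cdot\CostBound{\mcost,\timeouts}$ perturbs each transition probability and the one-step cost by at most $\alpha/2$ -- exactly the paper's two cases and estimates. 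The only real divergence is on the cost condition, where the paper simply asserts that shortening the delay cannot increase the one-step cost while you bound $|\mcost(s,t)-c_s|$ by an explicit $\mathrm{poly}(t)\cdot e^{-t/T}$ tail via Lemma~\ref{lem:trans_cost_form}; your variant (and the auxiliary degenerate case ensuring $\CostBound{\mcost,\timeouts}\ge\minRew/(2\lambda)$) is a harmless, in fact somewhat more careful, refinement of the same argument.
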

\begin{proof}
Let us fix arbitrary strategy $\timeouts$ of $\mdp$ and $\sta \in \statesM$. We use the $\fdC'[\sta](\timeouts(\sta))$ from Definition \ref{def:subordinated} (that can be used to compute $\mtran(s,\timeouts(\sta))$ using transient analysis of continuous time Markov chains in time $\timeouts(\sta)$) to generate a directed graph $Gr(\states \cup \{\overline{\sta} \},tra)$, where $(\sta,\sta') \in tra $ iff $\prob'(\sta,\sta') > 0$. Let $st \subseteq \states \cup \{\overline{\sta} \} $ be set of states that are reachable in $Gr$ from $\overline{\sta}$. We define directed graph $gr=(st,tra')$ to be induced graph by $st$ from  $Gr$. We distinguish two cases:
\begin{itemize}
	\item $\fdC'[\sta](\timeouts(\sta))$ contains bottom strongly connected component (BSCC) $O \subseteq \states \cup \{\overline{\sta} \}$ in $Gr$, such that $O \cap \statesM = \emptyset$, and
	\item all BSCCs in $Gr$ are self-loops and contain only states from $\statesM$.
\end{itemize}
Observe that other cases cannot happen from definition of $\fdC'[\sta](\timeouts(\sta))$. First we take care of first item. We show that it  must hold 
\begin{equation}\timeouts(\sta) \leq \upperDelayBoundOne,\label{eq:odrezani-1}\end{equation} which gives us a simple bound on $\dmax$. The minimal probability to reach $O$ is the $\minPst^{|\allact|}$ because the minimal path to $O$ in $Gr$ can have length at most $|\allact|$ and the minimal positive branching probability is $\minPst$. Observe that if $O$ is reached by a run it will stay there until the fixed-delay transition is fired. The expected rate cost accumulated by paths that reach $O$ is at least $\timeouts(\sta) \cdot \minRew $. Hence we gradually get
\begin{align*}
\minPst^{|\allact|} \cdot \minRew \cdot \timeouts(\sta) &\leq \CostBound{\mcost,\timeouts} \\
\timeouts(\sta) &\leq \upperDelayBoundOne,
\end{align*}
as required.

From now we assume that $gr$ of $\fdC'[\sta](\timeouts(\sta))$ for state $\sta$ does contain only BSCCs reachable from $\sta$ that are self-loops and all contain only states in $\statesM$. We need to obtain a suitable bound on $\dmax$ for this case. Obviously by restricting strategies that do not have large delays we can only decrease the expected one step cost. So it suffices to construct $\dmax$ such that for each $\timeouts$ there is a strategy $\matchtimeouts \in\{\timeouts \mid \forall s\in\brambory \; \timeouts(s)  \leq \dmax \}$ such that for all $\sta' \in \statesM$ it holds that
$$| \mtran(s,\timeouts(\sta))(s') - \mtran(s,\matchtimeouts(\sta))(s') | \leq \alpha/2.$$

Obviously it holds that $| \mtran(s,\timeouts(\sta))(s') - \mtran(s,\matchtimeouts(\sta))(s') | \leq | 1 - \sum_{\sta' \in \statesM} \mtran(s,\matchtimeouts(\sta))(s') | $. Observe that $| 1 - \sum_{\sta' \in \statesM} \mtran(s,\matchtimeouts(\sta))(s') |$ denotes the probability that a state from $\statesM$ was not reached yet, i.e. that run in $\fdC'[\sta](\timeouts(\sta))$ is still in transient part of $\fdC'[\sta](\timeouts(\sta))$. Obviously as $\matchtimeouts(\sta)$ approaches infinity $| 1 - \sum_{\sta' \in \statesM} \mtran(s,\matchtimeouts(\sta))(s') |$ goes to zero. We will overestimate $| 1 - \sum_{\sta' \in \statesM} \mtran(s,\matchtimeouts(\sta))(s') |$ by overestimating the probability that any BSCC has not yet been reached until time $t >0$ in $\fdC'[\sta](\timeouts(\sta))$. We divide $t$ into steps of length $|\allact|/\lambda$. Using uniformization method we get:
\begin{align*}
\Big | 1 - \sum_{\sta' \in \statesM} \mtran(s,\frac{|\allact|}{\lambda})(s') \Big | &= \sum_{\sta' \in \statesM} \e^{-\lambda \frac{|\allact|}{\lambda}} \sum_{i=0}^{\infty} \prob'^{i}(\sta,\sta') \cdot \frac{(\lambda \frac{|\allact|}{\lambda})^{i}}{i!} \\
&= \sum_{\sta' \in \statesM} \e^{-|\allact|} \sum_{i=0}^{\infty} \prob'^{i}(\sta,\sta') \cdot \frac{|\allact|^{i}}{i!} \\
&\geq \sum_{\sta' \in \statesM} \e^{-|\allact|} \prob'^{|\allact|}(\sta,\sta') \cdot \frac{|\allact|^{|\allact|}}{|\allact|!} \\
&\geq \sum_{\sta' \in \statesM} \e^{-|\allact|} \prob'^{|\allact|}(\sta,\sta') \cdot \frac{|\allact|^{|\allact|}}{|\allact|^{|\allact|}} \\
&\geq \e^{-|\allact|} \minPst^{|\allact|} = \left(\frac{\minPst}{\e}\right)^{|\allact|}.
\end{align*}
The probability that in time $\matchtimeouts(\sta) = l \cdot |\allact|/\lambda$ that state from $\statesM$ has still not been reached is
\begin{align*}
\Big | 1 - \sum_{\sta' \in \statesM} \mtran(s,\matchtimeouts(\sta))(s') \Big | &\leq \left( 1- \left(\frac{\minPst}{\e}\right)^{|\allact|} \right) ^{l} \\
&= \left(1- \left(\frac{\minPst}{\e}\right)^{|\allact|} \right) ^{\frac{\matchtimeouts(\sta) \cdot \lambda}{|\allact|}} 
\end{align*}

To ensure that the right-hand side of the above inequality is $\leq \alpha/2$, it suffices to ensure that
\begin{equation}\label{eq:odrezani-2}\matchtimeouts(\sta) \geq \frac{|\log(\alpha/2)|\cdot |\states|\cdot e^{|\states|}}{\lambda\cdot \minPst^{|\states|}}.\end{equation}
Combining~\eqref{eq:odrezani-1} and~\eqref{eq:odrezani-2} we get that it suffices to put $D_2 = (|\states|\cdot e^{|\states|})/(\minPst^{|\states|}\cdot \min\{1,\lambda\}\cdot \min\{1,\minRew\})$.
\end{proof}

\newcommand{\revisitGoodStatesTemp}{\#^{\goodStatesTemp \cup \goalStates}}
\newcommand{\visits}{W}
\newcommand{\maxStepsGoodStates}{\overline{\visits}}
\newcommand{\stepsBound}{\frac{|\allact|}{(\minPst \cdot \e^{-\lambda \lowerDelayBound})^{|\allact|}} \cdot \frac{\maxValue{\contMdp}+\eps}{\minPst \cdot \minRew \cdot (1-\e^{-\lambda \lowerDelayBound})/\lambda}}
\newcommand{\epsSmall}{\eps}
\newcommand{\expLowerDelayBoundExpression}{1 + \frac{\epsSmall \cdot \minRew}{4 \cdot \lambda \cdot (\maxValue{\contMdp} + \epsSmall + \maxRew \cdot (2 + 2/\lambda))^2}}
\newcommand{\tempA}{A(\delayBound, \timeouts, \sta)}
\newcommand{\tempB}{B(\delayBound, \epsSmall)}
\newcommand{\tempBLower}{B(\lowerDelayBound, \epsSmall)}

\subsection{Proof of Lemma~\ref{lem:bounded-step-existence}}
\label{sec:bounded-step-existence}
\begin{reflemma}{lem:bounded-step-existence}
There is a positive number $\constFactor\in \exp(\size{\fdC}^{\mathcal{O}(1)})$ computable in time polynomial in $\size{\fdC}$ such that the following holds:
	for any $\eps'>0$, there is a globally $\eps'/2$-optimal delay function $\timeouts'$ with 
	\begin{equation}
	\label{eq:step-bound-app}
	\CostBound{\msteps,\timeouts'} \; \leq \; \frac{\CostBound{\mcost,\timeouts'}}{\eps'}\cdot \constFactor.
	\end{equation}
\end{reflemma}

We start the proof with a lemma which shows that for every $\eps>0$ there is a \emph{globally} $\eps$-optimal delay function in $\contMdp$.

\begin{lemma}
\label{lem:globally-e-optimal-existence}
For every $\eps>0$ there is a \emph{globally} $\eps$-optimal delay function in $\contMdp$.
\end{lemma}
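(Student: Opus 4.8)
The plan is to reduce the uncountable‑action MDP $\contMdp$ to a \emph{finite}‑action MDP that approximates it uniformly over all initial states, and then to appeal to the classical fact that a finite MDP with the ``expected total cost before reaching $\goalStates$'' objective admits a single memoryless deterministic strategy that is optimal from \emph{every} state. The only place where some care is needed is in dealing with the states from which $\goalStates$ cannot be reached at finite cost; these turn out to be harmless.

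First I would exploit that $\brambory$ is finite. For each $s\in\brambory$ with $\Value{\contMdp[s]}<\infty$ pick, by definition of the infimum, a delay function $\timeouts_s$ with $\expred_{\contMdp[s](\timeouts_s)}\le\Value{\contMdp[s]}+\eps$; for the remaining states let $\timeouts_s$ be arbitrary. The crucial observation is that the set $\macts_0:=\bigcup_{s\in\brambory}\{\timeouts_s(t)\mid t\in\brambory\}$ of all actions used by these finitely many strategies is a \emph{finite} set of actions. Let $\contMdp_0$ be the DTMDP obtained from $\contMdp$ by keeping in each state $t$ only the (non‑empty, since it contains $\timeouts_t(t)$) set of actions of $\macts_0$ enabled in $t$, with transition probabilities and costs inherited from $\contMdp$; then $\expred_{\contMdp_0[s](\sigma)}=\expred_{\contMdp[s](\sigma)}$ for every strategy $\sigma$ of $\contMdp_0$ and every $s$. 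Each $\timeouts_s$ is a legal strategy of $\contMdp_0$, so $\Value{\contMdp_0[s]}\le\expred_{\contMdp[s](\timeouts_s)}\le\Value{\contMdp[s]}+\eps$, and since restricting actions can only raise the value we get the sandwich $\Value{\contMdp[s]}\le\Value{\contMdp_0[s]}\le\Value{\contMdp[s]}+\eps$ for every $s\in\brambory$ (both sides being simultaneously $\infty$ exactly on the states of infinite optimal value).

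Then I would apply a standard result on finite MDPs with the expected‑total‑cost / ``positive rewards'' reachability objective (e.g.~\cite{Puterman:book,EWY:RSG-Positive-Rewards}): there is a memoryless deterministic strategy $\timeouts$ that is optimal in $\contMdp_0[s]$ simultaneously for all $s\in\brambory$. Combined with the sandwich, this gives $\expred_{\contMdp[s](\timeouts)}=\expred_{\contMdp_0[s](\timeouts)}=\Value{\contMdp_0[s]}\le\Value{\contMdp[s]}+\eps$ for every $s$, so $\timeouts$ is globally $\eps$‑optimal in $\contMdp$, as desired.

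The states with $\Value{\contMdp[s]}=\infty$ need no attention (every strategy is $\eps$‑optimal from them); for the applicability of the finite‑MDP theorem one should note that every one‑step cost of $\contMdp$ is strictly positive — because $\rateRew>0$ and a positive amount of time elapses during each ``big step'' summarised by $\contMdp$ — which forces every cost‑finite strategy to reach $\goalStates$ almost surely and to avoid the infinite‑value states, making $\contMdp_0$ restricted to the finite‑value states a genuine stochastic‑shortest‑path instance. I expect this bookkeeping, together with verifying that the optimal value function of $\contMdp_0$ satisfies the optimality equations and is realised by a conserving strategy from all finite‑value states, to be the only mildly technical part. (If one prefers to avoid the black box, $\timeouts$ can instead be built directly inside $\contMdp_0$ by choosing in each finite‑value state an action greedy w.r.t.\ the optimal value function up to an additive slack smaller than the minimal \emph{positive} one‑step cost of $\contMdp_0$ — well‑defined and positive precisely because $\macts_0$ is finite; such slack keeps the greedy strategy from getting trapped in a cheap‑looking bottom component, and the accumulated slack is then bounded exactly as in the proof of Lemma~\ref{lem:perturbation-error}.)
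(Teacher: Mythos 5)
Your proof is correct, but it takes a genuinely different route from the paper. The paper also reduces to a finite-action MDP and then invokes the standard fact that a finite DTMDP with the total-cost objective admits a single stationary strategy optimal from every vertex; however, it obtains that finite MDP by discretization: it fixes per-state $(\eps/2)$-optimal delay functions $\timeouts_s$, uses Lemma~\ref{lem:mesh-error} to find, for a sufficiently small common $\delta$ and large common $\dmax$, mesh functions in $\paramspace(\delta,\dmax)$ that are $\alpha$-bounded by the $\timeouts_s$, and uses Lemma~\ref{lem:perturbation-error} to keep the resulting loss below $\eps/2$, so that the mesh-restricted MDP approximates $\contMdp$ uniformly. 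You instead simply collect the finitely many actions actually used by the per-state $\eps$-optimal functions and restrict $\contMdp$ to those; this incurs no approximation error at all, needs neither of the quantitative Lemmas~\ref{lem:perturbation-error} and~\ref{lem:mesh-error} (so no circularity worries either, since the lemma is only used later via Lemma~\ref{lem:bounded-step-existence}), and your handling of the infinite-value states and of strict positivity of one-step costs (from $\rateRew>0$ and positive expected sojourn time per ``big step'') is exactly the bookkeeping needed to apply the uniform-optimality result for finite negative/SSP models --- the same black box the paper cites from~\cite{Puterman:book}. What the paper's longer route buys is the extra conclusion that a globally $\eps$-optimal function can be taken on the finite mesh $\paramspace(\delta,\dmax)$, which fits its overall discretization agenda, but that strengthening is not part of the statement nor needed where the lemma is used; for the lemma as stated, your argument is simpler and fully adequate.
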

\begin{proof}
Fix $\eps>0$. For every state $s$ there is trivially a delay function $\timeouts_{s}$ which is $(\eps/2)$-optimal \emph{in $s$}. Fix such an $(\eps/2)$-optimal function for each state $s$. Since there are only finitely many $s$, there is a number $\alpha>0$ such that \begin{equation}\label{eq:globaleps} 2\cdot \alpha\cdot \CostBound{\msteps,\timeouts_s} \cdot (1 + \CostBound{\mcost,\timeouts_s}\cdot |\brambory|) \leq \eps/2\end{equation} for all $s$. Moreover, from Lemma~\ref{lem:mesh-error} it follows that for each $s$ there is sufficiently small $\delta_s$ and sufficiently large $\dmax_s$ such that $\paramspace(\delta_s,\dmax_s)$ contains a function that is $\alpha$-bounded by $\timeouts_s$. Let $\delta=\min_{s\in \allact}\delta_s$ and $\dmax=\max_{s\in \allact}\dmax_s$. Then for each $s$, the set $\paramspace(\delta,\dmax)$ contains a delay function that is $\alpha$-bounded by $\timeouts_{s}$. From~\eqref{eq:globaleps} and Lemma~\ref{lem:perturbation-error} it follows that in the \emph{finite} DTMDP obtained from $\contMdp$ by restricting the set of actions to $\{k\delta\mid k\in \Nset,k>0,k\delta\leq \dmax\}\cup \{\infty\}$ the infimum expected cost achievable from a given state is at most $\eps/2$ away from the infimum cost achievable from this state in $\contMdp$. Moreover, in a finite DTMDP with expected total accumulated cost objective there is always a memoryless delay function\footnote{I.e. a delay function whose decision is based only on the current vertex, as defined in Section~\ref{sec-prelims}.} that is optimal in every vertex~\cite{Puterman:book}. Hence, $\paramspace(\delta,\dmax)$ contains a delay function that is globally $\eps$-optimal in $\mdp$.
\qed
\end{proof}

Recall that $\statesM=$ is the set of states of DTMDP $\contMdp$, i.e. the set of those states of fdCTMC structure $\fdC$ in which the fixed-delay is newly set or switched off, together with all goal states from $G$. An \emph{fd-skeleton} of $\contMdp$ is the directed graph $(\statesM,E)$ such that $(s,t)\in E$ if and only if $s\not \in G$ and $\trans(s,t)>0$. A \emph{sink} of the DTMDP $\contMdp$ is a bottom strongly connected component of its fd-skeleton, i.e. a set of states $\Sink\subseteq \statesM$ such that a) the subgraph of $(\statesM,E)$ induced by $\Sink$ is strongly connected, and b) whenever $(s,t)\in E$ and $s\in \Sink$, then also $t\in \Sink$.

Let $\timeouts$ be any delay function in $\contMdp$. We say that a sink $\Sink$ is \emph{bad} for $\timeouts$ if the following two conditions hold:
\begin{itemize}
\item $\impRewFix(s,t)=0$ for all $s,t \in \Sink$, and
\item $\timeouts(s) < \lowerDelayBound$ for all $s\in \Sink$, where $\lowerDelayBound$ is a suitable number defined below.
\end{itemize}
\noindent
A sink is \emph{good} for $\timeouts$ if it is not bad for $\timeouts$. 

Let $\lowerDelayBound$ be a number such that it holds
\begin{equation}\label{eq:dmin-def}\e^{\lambda \lowerDelayBound} = \expLowerDelayBoundExpression.\end{equation}
Note that using Taylor series for $\ln(1+z)= z -z^/2 +z^3/3- \cdots$ we get
$$\lowerDelayBound \leq \frac{\epsSmall \cdot \minRew}{4 \cdot \lambda^2 \cdot (\maxValue{\contMdp} + \epsSmall + \maxRew \cdot (2 + 2/\lambda))^2}.$$

\begin{lemma}
\label{lem:good-is-small-step}
Let $\eps>0$ be arbitrary, and let $\timeouts$ be any globally $\eps$-optimal delay function in $\contMdp$ such that all sinks of $\contMdp$ are good for $\timeouts$. Then for all pairs of states $s,t$ of $\contMdp$ it holds \begin{equation}\expected_{\contMdp[s](\timeouts)}[\revisit{t}]\leq \stepsBound \label{eq:good-step-bound}.\end{equation} 
\end{lemma}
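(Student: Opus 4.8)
\medskip
\noindent\textbf{Proof plan.} The plan is to bound $V_t := \expected_{\contMdp[t](\timeouts)}[\revisit{t}]$ for every non-goal state $t$; this suffices because $\expected_{\contMdp[s](\timeouts)}[\revisit{t}]\le V_t$ for all $s$ by the strong Markov property, and $V_t\le 1$ when $t\in\goalStates$. Writing $q_t$ for the probability of returning to $t$ before hitting $\goalStates$ in $\contMdp[t](\timeouts)$, so that $V_t=1/(1-q_t)$, I would use two handles on $V_t$. \emph{(I)} A charging argument: the cost of each step of a run is charged to the (at most $|\allact|$) visits to $t$ occurring within the preceding $|\allact|$ steps; combined with the strong Markov property and global $\eps$-optimality this yields $V_t\cdot\hat c_t\le|\allact|\cdot\expred^{\mcost}_{\contMdp[t](\timeouts)}\le|\allact|\,(\maxValue{\contMdp}+\eps)$, where $\hat c_t$ is the expected $\mcost$-cost of the first $|\allact|$ steps of $\contMdp[t](\timeouts)$ (or until $\goalStates$). \emph{(II)} Any lower bound on the probability of reaching $\goalStates$ from $t$ \emph{before revisiting} $t$ is an upper bound on $1-q_t$, hence on $V_t$.

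The next step is to set the ``good step'' threshold $c:=\minPst\cdot\minRew\cdot(1-\e^{-\lambda\lowerDelayBound})/\lambda$ and call a non-goal state $w$ \emph{expensive} if $\mcost(w,\timeouts(w))\ge c$. Two facts drive the argument. First, every $w\notin\allact$ satisfies $\mcost(w,\infty)\ge\minRew/\lambda\ge c$, and every $w\in S^{reset}$ with $\timeouts(w)\ge\lowerDelayBound$ is expensive, since the corresponding big step spends in $w$'s region a time stochastically dominating $\min(\mathrm{Exp}(\lambda),\timeouts(w))$ — of expectation $\ge(1-\e^{-\lambda\lowerDelayBound})/\lambda$ — at cost $\ge\minRew$ per unit. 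Second, if $w\in S^{reset}$ with $\timeouts(w)<\lowerDelayBound$ and $\trans(w,v)>0$ with $\impRewFix(w,v)>0$, then already $\mcost(w,\timeouts(w))\ge\e^{-\lambda\timeouts(w)}\trans(w,v)\impRewFix(w,v)\ge\e^{-\lambda\lowerDelayBound}\minPst\minRew\ge c$, the last inequality being a direct consequence of the defining equation~\eqref{eq:dmin-def} of $\lowerDelayBound$; so a non-expensive $w\in S^{reset}$ with delay below $\lowerDelayBound$ reaches every fixed-delay successor with zero impulse cost.

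The core of the proof is to exhibit, for any non-goal $t$, a short high-probability \emph{witness path}. I would let $\mathcal R$ be the set of states reachable from $t$ along fixed-delay transitions through non-expensive, non-goal states of $S^{reset}$ with delay below $\lowerDelayBound$. By the second fact every fixed-delay transition internal to $\mathcal R$ carries zero impulse cost, so any sink of $\contMdp$ contained in $\mathcal R$ would be a \emph{bad} sink — excluded by the hypothesis that every sink is good. Hence $\mathcal R$ contains no sink, so following fixed-delay transitions from $t$ one must leave $\mathcal R$ within $|\mathcal R|\le|S^{reset}|<|\allact|$ steps; fix a shortest such (necessarily simple) path $t=t_0\to t_1\to\cdots\to t_m$ with $t_0,\dots,t_{m-1}\in\mathcal R$, $t_m\notin\mathcal R$ and $m<|\allact|$. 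Since fixed-delay transitions land in $S^{reset}$, the exit $t_m$ is either a goal state or (by the first fact) expensive. As $\timeouts(t_i)<\lowerDelayBound$ for $i<m$, each step $t_i\to t_{i+1}$ is realised — by the delay firing directly in $t_i$ into the right branch — with probability $\ge\e^{-\lambda\lowerDelayBound}\minPst$, so the run follows $t_0\cdots t_m$ with probability $\ge(\minPst\e^{-\lambda\lowerDelayBound})^{|\allact|}$ without revisiting $t$.

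Finally I would combine everything. If $t$ is itself expensive (in particular if $t\notin\allact$ or $\timeouts(t)\ge\lowerDelayBound$) then $\hat c_t\ge\mcost(t,\timeouts(t))\ge c$, so handle (I) gives $V_t\le|\allact|(\maxValue{\contMdp}+\eps)/c\le\stepsBound$. Otherwise $t\in\mathcal R$: if the exit $t_m$ of the witness path is expensive, then conditioning on following the path (which stays among non-goal states) the $(m{+}1)$-st step costs at least $c$ in expectation, so $\hat c_t\ge(\minPst\e^{-\lambda\lowerDelayBound})^{|\allact|}c$ and handle (I) again gives $V_t\le\stepsBound$; if instead $t_m\in\goalStates$, then $\goalStates$ is hit before $t$ is revisited with probability $\ge(\minPst\e^{-\lambda\lowerDelayBound})^{|\allact|}$, so handle (II) gives $V_t\le(\minPst\e^{-\lambda\lowerDelayBound})^{-|\allact|}\le\stepsBound$. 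In every case $\expected_{\contMdp[s](\timeouts)}[\revisit{t}]\le V_t\le\stepsBound$. I expect the main obstacle to be this core step — turning the combinatorial ``all sinks are good'' hypothesis into a constant-probability witness path reaching $\goalStates$ or an expensive state, in particular ruling out that $\mathcal R$ is closed under fixed-delay transitions and dealing with the $t_m\in\goalStates$ branch via handle (II) — together with the routine but delicate verification that the concrete value of $\lowerDelayBound$ from~\eqref{eq:dmin-def} makes the constants line up ($\e^{-\lambda\lowerDelayBound}\minPst\minRew\ge c$ and $(\minPst\e^{-\lambda\lowerDelayBound})^{-|\allact|}\le\stepsBound$).
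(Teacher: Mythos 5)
Your proposal is sound in substance and rests on the same two quantitative pillars as the paper's proof: (i) the one-step cost lower bound $c=\minPst\cdot\minRew\cdot(1-e^{-\lambda\lowerDelayBound})/\lambda$ at states that are outside $\allact$, have delay at least $\lowerDelayBound$, or have a positive-probability fixed-delay transition with positive impulse cost (your ``expensive'' states coincide with the paper's ``good'' states, with the same three case estimates), and (ii) a fixed-delay witness path of length at most $|\allact|$, each edge realised with probability at least $\minPst\cdot e^{-\lambda\lowerDelayBound}$, leading to a good or goal state, which is exactly where the all-sinks-good hypothesis enters. You assemble them differently, though: the paper bounds the expected \emph{total} number of steps before $\goalStates$ by decomposing the run into excursions between successive visits to good states, bounding each excursion length by $|\allact|/(\minPst\, e^{-\lambda\lowerDelayBound})^{|\allact|}$ and the number of good-state visits by $(\maxValue{\contMdp}+\eps)/c$ via global $\eps$-optimality, and multiplying; visits to a single $t$ are then dominated by total steps. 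You instead bound the visits to each fixed $t$ directly, through the return probability (your handle (II)) or a window-charging argument (handle (I)). A genuine merit of your write-up is that the use of the hypothesis is explicit: the paper disposes of the witness path in one sentence, whereas your set $\mathcal{R}$ and the exit-path argument spell out why a region of non-expensive, small-delay states closed under fixed-delay transitions would contain a bad sink.

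Two caveats. First, your claim $m<|\allact|$ via $|S^{reset}|<|\allact|$ is unjustified, since $S^{keep}$ may be empty. When the exit $t_m$ lies in $S^{reset}$ you still get $m+1\le|S^{reset}|\le|\allact|$ because $t_m$ is a further distinct state of $S^{reset}$; but when $t_m\in S\setminus\allact$ and $m=|\allact|$ the expensive step falls just outside your window of the first $|\allact|$ steps. This is an off-by-one, repaired inside your own framework by charging each visit to $t$ the $|\allact|$ steps \emph{strictly after} it (in case B you never need the step from $t$ itself, and in case A the step from $t$ alone already gives $V_t\cdot c\le\maxValue{\contMdp}+\eps$). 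Second, the constant check you defer, $e^{-\lambda\lowerDelayBound}\minPst\minRew\ge c$, i.e. $e^{\lambda\lowerDelayBound}-1\le\lambda$, is not an automatic consequence of the defining equation of $\lowerDelayBound$ for all parameter ranges (for small $\lambda$ and moderate $\eps$ it can fail); however, the paper's own proof makes the identical assertion when it declares the minimum of its three one-step bounds to be $c$, so this point does not distinguish your argument from the paper's.
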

\begin{proof}
First we define \emph{good states} and show that there is lower bound on cost paid in one step from a good state. Because we have upper bound on cost from any state $\maxValue{\contMdp}+\eps$ we cannot visit too many times good states before reaching goal state. We show that we have small expected number of steps from any state to reach a good state. Finally we put all the previous results together to show the final result.

We define \emph{good states} $\goodStatesTemp \subseteq \statesM \setminus \goalStates$ for $\contMdp$ and $\timeouts$ as $\goodStatesTemp = (\statesM \setminus (\allact \cup \goalStates)) \cup \{ \sta \in \statesM \setminus \goalStates \mid \timeouts(\sta) \geq \lowerDelayBound \text{ or exists $t \in \statesM$ such that } \impRewFix(s,t)> 0 \text{ and } \trans(s,t)> 0 \}.$ Now we show that there is lower bound on one step cost from good states, i.e. for all $\sta \in \goodStatesTemp$ holds $\mcost(\sta,\timeouts(\sta)) \geq a$ for some $a>0$. In states $\statesM \setminus \allact$ the minimal expected cost accumulated in one step is $\minRew /\lambda$, since the expected time to execute exponential transition is $1/\lambda$ and minimal rate cost is $\minRew$. The expected cost accumulated in one step from state $ \sta \in \{ \sta \in \statesM \setminus \goalStates \mid \timeouts(\sta) \geq \lowerDelayBound \}$ can be bounded as follows: 
\begin{align}
\mcost(\sta,\timeouts(\sta)) &\geq \minRew \cdot \text{(expected time spent in $\sta$ until the first transition is taken)} \nonumber\\
&= \minRew \cdot \int_{0}^{\timeouts(\sta)} t \cdot \lambda \cdot \e^{-\lambda t} dt + \timeouts(\sta) \cdot \e^{-\lambda \timeouts(\sta)} \nonumber\\
&= \minRew \cdot \Big( -\timeouts(\sta) \cdot \e^{-\lambda \timeouts(\sta)}  + \frac{1-\e^{-\lambda \timeouts(\sta)}}{\lambda} + \timeouts(\sta') \cdot \e^{-\lambda \timeouts(\sta)} \Big ) \nonumber\\
&= \minRew \cdot \frac{1-\e^{-\lambda \timeouts(\sta)}}{\lambda} \nonumber \\
&\geq \minRew \cdot \frac{1-\e^{-\lambda \lowerDelayBound}}{\lambda} \label{eq:delayBound}\\
&\geq \minRew \cdot \minPst \cdot \frac{1-\e^{-\lambda \lowerDelayBound}}{\lambda}, \nonumber
\end{align}
where the inequality \eqref{eq:delayBound} follows from the fact that $\sta \in \{ \sta \in \statesM \setminus \goalStates \mid \timeouts(\sta) \geq \lowerDelayBound \}$. Finally if $\sta \in \{ \sta \in \statesM \setminus \goalStates \mid \timeouts(\sta) < \lowerDelayBound \text{ and exists $t \in \statesM$ such that } \impRewFix(s,t)> 0 \text{ and } \trans(s,t)> 0 \}$ then it holds that 
\begin{align*}
\mcost(\sta,\timeouts(\sta)) &\geq \minRew \cdot \minPst \cdot \e^{-\lambda \timeouts(\sta)} \\
&\geq \minRew \cdot \minPst \cdot \e^{-\lambda \lowerDelayBound},
\end{align*}
where the first inequality follows from fact that probability of taking the transition with nonzero impulse cost (thus at least $\minRew$) is at least minimal branching probability $\minPst$ times the probability that fixed delay transition is taken what is $\e^{-\lambda \timeouts(\sta)}$. The last inequality follows from the fact that $\sta \in \{ \sta \in \statesM \setminus \goalStates \mid \timeouts(\sta) < \lowerDelayBound \text{ and exists $t \in \statesM$ such that } \impRewFix(s,t)> 0 \text{ and } \trans(s,t)> 0 \}$. Obviously for every $\sta \in \goodStatesTemp$ holds that
\begin{align*}
\mcost(\sta,\timeouts(\sta)) &\geq min \left\lbrace \minRew \cdot \minPst \cdot  \e^{-\lambda \lowerDelayBound}, 
\minRew / \lambda, \minRew \cdot \minPst \cdot \frac{1-\e^{-\lambda \lowerDelayBound}}{\lambda} \right\rbrace \\
&=\minRew \cdot \minPst \cdot \frac{1-\e^{-\lambda \lowerDelayBound}}{\lambda} .
\end{align*}
Now provide an upper bound $\maxStepsGoodStates$ on the expected number of steps to reach a good or goal state (i.e. $\goodStatesTemp \cup \goalStates$) from any $\sta \in \statesM $:
\begin{align}
&\max_{\sta \in \statesM}\expected_{\contMdp[\sta](\timeouts)}[\revisitGoodStatesTemp_{\statesM}] \leq \nonumber \\
&= \sum_{i=1}^{\infty} i \cdot |\allact| \cdot (1- \min_{ \sta \in \statesM}\probm_{\contMdp(\timeouts)}(\text{reach $\goodStatesTemp \cup \goalStates$ from $\sta$ in $|\allact|$ steps}))^{i-1} \\
&\quad\cdot \min_{\sta \in \statesM}\probm_{\contMdp(\timeouts)}(\text{reach $\goodStatesTemp \cup \goalStates$ from $\sta$ in $|\allact|$ steps}) \nonumber \\
&\leq \sum_{i=1}^{\infty} i \cdot |\allact| \cdot (1- (\minPst \cdot \e^{-\lambda \lowerDelayBound})^{|\allact|}))^{i-1} \cdot (\minPst \cdot \e^{-\lambda \lowerDelayBound})^{|\allact|} \label{eq:probBound}  \\
&= \frac{|\allact|}{(\minPst \cdot \e^{-\lambda \lowerDelayBound})^{|\allact|}}  = \maxStepsGoodStates \nonumber,
\end{align}
where \eqref{eq:probBound} follows from the fact that we can do at most $|\allact|$ steps avoiding to reach some of the $\goodStatesTemp \cup \goalStates$ with positive probability, by moving through states $\allact$ (states not belonging to $\allact $ are good states from definition of good states). This probability is at least the minimal branching probability $\minPst$ times probability $\e^{-\lambda \timeouts(\sta)}$ that fixed delay transition was fired for $\sta \in \allact \setminus \goodStatesTemp$, what is at least $\e^{-\lambda \lowerDelayBound}$, because $\timeouts(\sta) < \lowerDelayBound$ from definition of good states. 

To get the final result we have to define few random variables. For a DTMDP $\contMdp$ and set $\goodStatesTemp \subseteq \statesM$ we define the series of random variables $(\visits^{\tempSet})_n$, where for each $n \in \Nset$ the $\visits^{\goodStatesTemp}_n:(\statesM\cdot \macts)^{\omega} \to \Nseto$ gives the number of steps between $n$th and $(n+1)$st visit of $\goodStatesTemp$ before reaching $\goalStates$, i.e.
\begin{align*}
\visits^{\goodStatesTemp}_n(\sta_0 a_1 \sta_1 a_2 \sta_2\cdots)  
&=\begin{cases}
\hit^{\goodStatesTemp}_{n} &\text{if }n = 1,\\
\hit^{\goodStatesTemp}_{n}-\hit^{\goodStatesTemp}_{n-1} &\text{if }n > 1.
\end{cases}
\end{align*} 
Obviously it holds that
$$\revisit{\statesM} = \sum_{n=1}^{\infty} \visits^{\goodStatesTemp}_n$$ 
and for each $n \in \Nset$ and $\sta \in \statesM$ it holds that
\begin{align*}
\expected_{\contMdp[\sta](\timeouts)}[\visits^{\goodStatesTemp}_{n+1}] = \sum_{\sta' \in \goodStatesTemp} \probm_{\contMdp(\timeouts)}(\jumpChain^{\mdp}_{\hit^{\{ \goodStatesTemp \} }_n} = \sta') \cdot \expected_{\contMdp[\sta](\timeouts)}[\visits^{\goodStatesTemp}_{n+1} \mid \jumpChain^{\mdp}_{\hit^{\{ \goodStatesTemp \} }_n} = \sta'] \leq \maxStepsGoodStates \cdot \probm_{\contMdp(\timeouts)}(\jumpChain^{\mdp}_{\hit^{\{ \goodStatesTemp \} }_n} \in \goodStatesTemp), 
\end{align*}
where $\probm_{\contMdp(\timeouts)}(\jumpChain^{\mdp}_{\hit^{\{ \goodStatesTemp \} }_i} \in \goodStatesTemp)$ is probability of at least $n$ times reaching a state from $\goodStatesTemp$ before reaching any of the goal states $\goalStates$. We define a random variable $V$ denoting number of visits of $\goodStatesTemp$ before hitting $\goalStates$, i.e.
$$V = \sum_{i=0}^{\hit_{1}^{\emptyset}}1_{\jumpChain_{i} \in \goodStatesTemp},$$
where $1$ is indicator function. Obviously it holds that
\begin{align*}
\expected_{\contMdp[\sta](\timeouts)}[\visits^{\goodStatesTemp}_{n+1}] \leq \maxStepsGoodStates \cdot \probm_{\contMdp(\timeouts)}(\jumpChain^{\mdp}_{\hit^{\{ \goodStatesTemp \} }_n} \in \goodStatesTemp) = \maxStepsGoodStates \cdot \probm^{\sta}_{\contMdp(\timeouts)}(V \geq n). 
\end{align*}
Also it holds that 
$$\expected_{\contMdp[\sta](\timeouts)}[\visits^{\goodStatesTemp}_1] = \expected_{\contMdp[\sta](\timeouts)}[\revisitGoodStatesTemp_{\statesM}] \leq \maxStepsGoodStates = \maxStepsGoodStates \cdot \probm^{\sta}_{\contMdp(\timeouts)}(V \geq 0).$$
Altogether we have
\begin{align}
\expected_{\contMdp[\sta](\timeouts)}[\revisit{\statesM}] &= \sum_{n=1}^{\infty} \expected_{\contMdp[\sta](\timeouts)}[\visits^{\goodStatesTemp}_n]  \nonumber\\
&\leq \maxStepsGoodStates \cdot \probm^{\sta}_{\contMdp(\timeouts)}(V \geq n) \nonumber \\
&= \maxStepsGoodStates \cdot \expected_{\contMdp[\sta](\timeouts)}[V] \nonumber \\
&\leq \maxStepsGoodStates \cdot \frac{\maxValue{\contMdp}+\eps}{\minPst \cdot \minRew \cdot (1-\e^{-\lambda \lowerDelayBound})/\lambda} \label{eq:boundByValue} \\
&\leq \stepsBound, \nonumber
\end{align}
where \eqref{eq:boundByValue} follows from the fact that states $\goodStatesTemp$ cannot be visited too often because in each visit at least $\minPst \cdot \minRew \cdot (1-\e^{-\lambda \lowerDelayBound})/\lambda$ of cost is accumulated and the overall bound on cost is $\maxValue{\contMdp}+\eps$. Finally trivially for all $\sta,t \in \statesM$ holds $\expected_{\contMdp[\sta](\timeouts)}[\revisit{t}] \leq \expected_{\contMdp[\sta](\timeouts)}[\revisit{\statesM}]$.
\end{proof}
The previous lemma shows that to prove Lemma~\ref{lem:bounded-step-existence} it suffices to show that there is an $\eps$-optimal delay function for which all sinks of $\contMdp$ are good. Indeed, plugging the definition of $\dmin$ into~\eqref{eq:good-step-bound} we get that such a delay function $\timeouts$ satisfies 
\begin{align}
\CostBound{\msteps,\timeouts}&\leq \frac{\CostBound{\mcost,\timeouts}}{\eps} \cdot\frac{(\maxValue{\contMdp}+ \maxRew\cdot(2+2/\lambda)\cdot \lambda)^2 \cdot 4\lambda^2\cdot e^{2\lambda\dmin}}{ \minRew^2\cdot \minPst} \nonumber\\
&\leq \frac{\CostBound{\mcost,\timeouts}}{\eps}\cdot\frac{(\maxValue{\contMdp} +\maxRew\cdot(2+2/\lambda)\cdot \lambda)^2 \cdot 4\lambda^2\cdot 4}{\minRew^2\cdot \minPst},\label{eq:good-bound}
\end{align}
the last inequality following from the fact that $e^{\lambda\dmin}\leq 2$ (the numerator in~\eqref{eq:dmin-def} is at most $\minRew$ while the denominator is at least $\maxRew\cdot (4/\lambda)\cdot 4\lambda \geq \maxRew \geq \minRew$). The number $N$ from Lemma~\ref{lem:bounded-step-existence} can be easily obtained from the right-hand side of~\eqref{eq:good-bound} and from Lemma~\ref{lem:bound-value}.

The crucial idea, which is summarized in the following lemma, is that any $\epsSmall$-optimal delay vector can be modified (by ``inflating'' one of its components to $\lowerDelayBound$) in such a way that the total number of bad sinks decreases without significantly increasing the expected total cost.

\begin{lemma}
\label{lem:delay-increase}
Let $\epsSmall>0$ be arbitrary, and let $\timeouts$ be any globally $\epsSmall$-optimal delay vector in $\contMdp$ with $k>0$ bad sinks. Then there is an globally $(2\epsSmall)$-optimal delay vector $\timeoutsInfl$ in $\contMdp$ with $k-1$ bad sinks.
\end{lemma}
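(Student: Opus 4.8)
The plan is to pick a bad sink $\Sink$ for $\timeouts$, choose a suitable state $s^*\in\Sink$ together with a threshold $d^*\geq\lowerDelayBound$, and let $\timeoutsInfl$ be the delay vector that coincides with $\timeouts$ everywhere except at $s^*$, where $\timeoutsInfl(s^*)=d^*$. (If $\Sink$ is a single goal state the claim is trivial, so assume $\Sink\cap\goalStates=\emptyset$.) Since the collection of sinks depends only on the fd-skeleton and distinct sinks are disjoint, changing the delay at $s^*\in\Sink$ leaves every other sink and its goodness status untouched, while $\Sink$ becomes good because $\timeoutsInfl(s^*)=d^*\geq\lowerDelayBound$; hence $\timeoutsInfl$ has exactly $k-1$ bad sinks, and the whole content of the lemma is to show that $\timeoutsInfl$ is globally $(2\epsSmall)$-optimal (its cost being finite by the argument of Lemma~\ref{lem:finite}). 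Because $\timeouts$ and $\timeoutsInfl$ differ only in the action played at $s^*$, for every state $s$ the expected cost $g(s)$ accumulated until the first visit of $\{s^*\}\cup\goalStates$ and the probability $p(s)$ of reaching $s^*$ before $\goalStates$ are the same under both vectors, and unfolding that first visit gives $\expred_{\contMdp[s](\timeoutsInfl)}-\expred_{\contMdp[s](\timeouts)}=p(s)\cdot D$ where $D:=\expred_{\contMdp[s^*](\timeoutsInfl)}-\expred_{\contMdp[s^*](\timeouts)}$. As $\timeouts$ is globally $\epsSmall$-optimal and $p(s)\leq 1$, it suffices to prove $D\leq\epsSmall$.

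Substituting the identity $\expred_{\contMdp[t](\timeoutsInfl)}=\expred_{\contMdp[t](\timeouts)}+p(t)\cdot D$ into the one-step expansion of $\expred_{\contMdp[s^*](\timeoutsInfl)}$ at the changed action $d^*$, and subtracting the one-step expansion of $\expred_{\contMdp[s^*](\timeouts)}$ at $\timeouts(s^*)$, yields a renewal equation solved by $D=(\Delta\mcost+\sum_{t}\Delta\mtran(t)\cdot\expred_{\contMdp[t](\timeouts)})/(1-\bar p)$, where $\Delta\mcost=\mcost(s^*,d^*)-\mcost(s^*,\timeouts(s^*))$, $\Delta\mtran(t)=\mtran(s^*,d^*)(t)-\mtran(s^*,\timeouts(s^*))(t)$, and $1-\bar p$ is the probability of not returning to $s^*$ before $\goalStates$ when starting in $s^*$ and playing $\timeoutsInfl$ (so $1/(1-\bar p)$ is the expected number of visits to $s^*$). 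Since $\sum_t\Delta\mtran(t)=0$, one may centre the value vector and replace $\expred_{\contMdp[t](\timeouts)}$ by $\expred_{\contMdp[t](\timeouts)}-c$ for an arbitrary constant $c$.

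Each term of the numerator is easy in isolation: by the one-step stability estimates established in the proof of Lemma~\ref{lem:mesh-error}, changing the delay of a single state by at most $d^*$ changes every macro-step probability $\mtran(s^*,\cdot)(t)$ by $O(\lambda d^*)$ and the macro-step cost by $O(\lambda d^*\maxRew)$, and the conditional distribution of the macro-step's exit state (its outcome outside $\Sink$) is, to first order, independent of the chosen delay. The obstacle — precisely the pathology of Example~\ref{ex:non-stability} — is that $1/(1-\bar p)$, the expected number of returns to $s^*$, can be arbitrarily large, so a careless choice of $s^*$ makes $D$ of order $\maxValue{\contMdp}$ rather than $\epsSmall$. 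I would overcome this using two structural features of a bad sink. First, because $\Sink$ is a bottom strongly connected component of the fd-skeleton, all fixed-delay transitions inside $\Sink$ carry zero impulse cost, so one can route between any two states of $\Sink$ at a cost that tends to $0$ as the in-$\Sink$ delays tend to $0$; hence $\Value{\contMdp[s]}$ equals one common value $m$ for all $s\in\Sink$. Second, global $\epsSmall$-optimality of $\timeouts$ together with $\sum_{t\in\Sink}\mtran(s,\timeouts(s))(t)\expred_{\contMdp[t](\timeouts)}\geq(1-O(\lambda\lowerDelayBound))\,m$ forces the weighted combination of $\expred_{\contMdp[\cdot](\timeouts)}$ over the exit states reachable from $s$ to lie only $O(\epsSmall)$ above $m$, so one can choose $s^*\in\Sink$ whose macro-step, when it leaves $\Sink$, lands in exit states of $\expred_{\contMdp[\cdot](\timeouts)}$-value at most $m+O(\epsSmall)$. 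Taking $c=m$ and using that the diverted probability mass $\sum_t|\Delta\mtran(t)|$ is concentrated (to first order) on this low-value exit distribution and on $\Sink$ itself (where the centred values are nonnegative and hence only help), the numerator becomes $O(\epsSmall)\cdot(1-\bar p)$ for an appropriate threshold $d^*\geq\lowerDelayBound$; the particular value of $\lowerDelayBound$ fixed in~\eqref{eq:dmin-def} is exactly the one calibrated so that, once $s^*$ carries a delay $\geq\lowerDelayBound$, the guaranteed per-visit cost $\minRew(1-e^{-\lambda\lowerDelayBound})/\lambda$, weighed against the $\epsSmall$-suboptimality budget, controls $1/(1-\bar p)$, just as in the proof of Lemma~\ref{lem:good-is-small-step}.

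Assembling, one plugs the stability bounds, the crude bound $\expred_{\contMdp[t](\timeouts)}\leq\maxValue{\contMdp}+\epsSmall$, and the sharp bound $m+O(\epsSmall)$ on the relevant exit states into $D=(\Delta\mcost+\sum_t\Delta\mtran(t)(\expred_{\contMdp[t](\timeouts)}-m))/(1-\bar p)$ to obtain $D\leq\epsSmall$; hence $\expred_{\contMdp[s](\timeoutsInfl)}\leq\expred_{\contMdp[s](\timeouts)}+\epsSmall\leq\Value{\contMdp[s]}+2\epsSmall$ for every $s$, and $\timeoutsInfl$ has $k-1$ bad sinks by the first paragraph. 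I expect the simultaneous calibration of $s^*$, $d^*$ and $\lowerDelayBound$ — showing that the probability mass diverted at $s^*$ is both a controlled fraction of $1-\bar p$ and routed into a region of near-minimal value — to be the main obstacle.
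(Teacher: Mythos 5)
Your skeleton matches the paper's: fix a bad sink, inflate the delay of one carefully chosen state $s^*$ to a threshold at least $\underline{d}$ (this indeed leaves all other sinks and their goodness untouched), and reduce everything to showing that the resulting increase $D$ of the value at $s^*$ is small, via a renewal/geometric-series argument in which the per-visit increase must be beaten by the per-visit probability of reaching $\goalStates$ before returning to $s^*$ --- exactly the route of Lemmas~\ref{lem:sink-value-bounds}--\ref{lem:possitive-leaving-probability}. The gap lies in the two quantitative claims that carry the whole weight. First, your selection criterion --- that global $\epsSmall$-optimality yields some $s^*$ in the sink whose exponential-exit distribution has value at most $m+O(\epsSmall)$, with $m$ your (itself unproven, though plausible) common optimal value of the sink --- does not follow from the argument you sketch: in the inequality $\expred_{\contMdp[s](\timeouts)}\le m+\epsSmall$ the exponential branch carries probability only of order $\lambda\timeouts(s)$, so the slack you can extract on its conditional value is of order $\epsSmall/(\lambda\timeouts(s))$, and in a bad sink $\timeouts(s)$ may be arbitrarily small, so this is no bound at all. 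The paper circumvents precisely this by proving a \emph{relative} inequality (Lemma~\ref{lem:value-optimality}): every sink state has value at least $\frac{xe^{-x}}{1-e^{-x}}\cdot\overline{V}^{s^*}_{\contMdp(\timeouts)}$ with $x=\lambda\underline{d}$, where $\overline{V}^{s^*}_{\contMdp(\timeouts)}$ is the expected cost conditioned on exactly one exponential event and $s^*$ is its minimizer; there the small branch probability cancels on both sides, so no $\epsSmall/(\lambda\timeouts(s))$ blow-up occurs.

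Second, and more fundamentally, even if ``exit value $\le m+O(\epsSmall)$'' held, it is too weak to make your numerator $O(\epsSmall)\cdot(1-\bar p)$. In $\Delta\mcost+\sum_t\Delta\mtran(t)\bigl(\expred_{\contMdp[t](\timeouts)}-m\bigr)$ the term $\Delta\mcost$ contains the extra waiting (rate) cost, roughly $\rateRew(s^*)\cdot d^*$ to first order, which is independent of $\epsSmall$; your compensating terms are at most of order $\lambda d^*\cdot\epsSmall$, since probability mass of order $\lambda d^*$ is shifted between states whose values centred at $m$ lie in $[0,\epsSmall]$ on the sink side and are $O(\epsSmall)$ on the exit side (removing mass from nonnegative centred values ``helps'' only by $\epsSmall$ times the moved mass, not by a first-order amount). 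Dividing by $1-\bar p=O(\lambda d^*)$ then leaves a bound of order $\rateRew(s^*)/\lambda$ --- in fact of order $\maxValue{\contMdp}$ once the conditional escape probability is inserted --- not $\epsSmall$. The first-order waiting cost cancels only because the exit continuation is cheaper than continuing to cycle in the sink by exactly that waiting cost; the paper encodes this by bundling the rate term $\rateRew(s^*)/\lambda$ into $\overline{V}^{s^*}_{\contMdp(\timeouts)}$, comparing it to the values of the fixed-delay successors via Lemma~\ref{lem:value-optimality}, and exploiting the monotonicity of $x\mapsto xe^{-x}/(1-e^{-x})$ in Lemma~\ref{lem:one-step-error}, after which the leftover is genuinely second order and is beaten by the escape-probability bound of Lemma~\ref{lem:possitive-leaving-probability} (which also requires its own proof; it does not follow from Lemma~\ref{lem:good-is-small-step}, as that lemma presupposes all sinks are already good). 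You correctly name the calibration of $s^*$, $d^*$ and $\underline{d}$ as the main obstacle, but the proposal does not supply these ingredients, so the central estimate $D\le\epsSmall$ remains unproven.
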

Before we prove the previous lemma, let us note that it already implies that for each $\eps >0$ there is globally $\eps$-optimal delay function such that all sinks are good. 
 Indeed, a stable globally $\epsSmall$-optimal delay function $\timeoutsStable$ can be constructed by taking any globally $\epsSmall/2^{|\allact|}$-optimal delay function and iteratively using Lemma~\ref{lem:delay-increase} to remove all of its at most $|\statesM|$ bad sinks.
 
We devote the rest of this section to the proof of Lemma~\ref{lem:delay-increase}, which is perhaps the most technically involved part in the proof of Theorem~\ref{thm:unconstrained}. Fix an arbitrary globally $\epsSmall$-optimal delay function $\timeouts$ in $\contMdp$ and let $\Sink$ be any sink bad for $\timeouts$. For any state $s\in \Sink$ we define the value 

\begin{align*}
\sinkVal{\sta}{\contMdp(\timeouts)} =& 1/\lambda \cdot \rateRew(\sta) +  \sum_{\sta' \in \statesM} \prob(\sta,\sta') \cdot (\expred_{\contMdp[\sta'](\timeouts)} + \impRewExp(\sta,\sta'))\\
&+ \sum_{\sta'' \in \allact} \sum_{\sta' \in \statesM} \prob(\sta,\sta'') \trans(\sta'',\sta') \cdot (\expred_{\contMdp[\sta'](\timeouts)} + \impRewExp(\sta,\sta'') + \impRewFix(\sta'',\sta)).
\end{align*}
Intuitively the value is expected cost paid from $\sta$ if exactly one exponential transition occurs in $\sta$ before reaching $\statesM$. We denote the probability mass function of Poisson distribution with parameter $\lambda > 0$ as $\pmf_{\poiss(\lambda)}:\Nseto \to [0,1]$, where $\pmf_{\poiss(\lambda)}(i) = \e^{-\lambda} \cdot \frac{\lambda^i}{i!}$. Furthermore we denote $\pmf_{\poiss(\lambda)}(\geq i) = \sum_{j=i}^{\infty} \pmf_{\poiss(\lambda)}(j)$. 
\begin{definition}
Let $\fdC(\timeouts)$ be a fdCTMC. We define a \emph{jump chain} for $\fdC(\timeouts)$ as a series of random variables $(\jumpChain^{\fdC(\timeouts)})_n$ where for all $n \in \Nseto$, $\jumpChain^{\fdC(\timeouts)}_n:\Omega \to \states$ and for a run $\ctrun = (\sta_0,\delays_0) t_0 \cdots$
$$\jumpChain^{\fdC(\timeouts)}_n(\ctrun)= \sta_n.$$
Similarly we define define a \emph{time chain} for $\fdC(\timeouts)$ as a series of random variables $(\timeChain^{\fdC(\timeouts)})_n$ where for all $n \in \Nseto$, $\timeChain^{\fdC(\timeouts)}_n:\Omega \to \Rsetpo$ and for a run $\ctrun = (\sta_0,\delays_0) t_0 \cdots$
$$\timeChain^{\fdC(\timeouts)}_n(\ctrun)= t_n.$$
\end{definition}

The next technical lemma gives tight upper and lower bounds on $\expred_{\contMdp[\sta](\timeouts)}$ of $\sta$ in bad sink using its $\sinkVal{\sta}{\contMdp(\timeouts)}$.

\begin{lemma}
\label{lem:sink-value-bounds}
Let $\epsSmall>0$ be arbitrary, and let $\timeouts$ be any globally $\epsSmall$-optimal delay function in $\contMdp$. For each sink $\Sink$ bad for $\timeouts$, state $\sta \in \Sink$ and for every $\delayBound \in \{\lowerDelayBound,\timeouts(\sta) \}$ it holds that
$$\tempA \leq \expred_{\contMdp[\sta](\timeouts[\sta/\delayBound])} \leq \tempA + \tempB,$$
where 

\begin{align*}
\tempA = \pmf_{\poiss(\lambda \delayBound)}(0) \cdot \sum_{\sta' \in \Sink} \trans(\sta,\sta') \cdot \expred_{\contMdp[\sta'](\timeouts)} + \pmf_{\poiss(\lambda \delayBound)}(1) \cdot \sinkVal{\sta}{\contMdp(\timeouts)}\\
\end{align*}
and 
$$\tempB = \upperSinkValueBound.$$
\end{lemma}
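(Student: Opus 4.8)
The plan is to track the first ``big step'' of $\contMdp$ out of the configuration $(\sta,\delayBound)$ and add the continuation cost. Since the non-stationary strategy $\timeouts[\sta/\delayBound]$ uses the delay $\delayBound$ in $\sta$ only on the first visit, we have
\[
\expred_{\contMdp[\sta](\timeouts[\sta/\delayBound])} \;=\; \mcost(\sta,\delayBound) \;+\; \sum_{\sta'\in\statesM}\mtran(\sta,\delayBound)(\sta')\cdot\expred_{\contMdp[\sta'](\timeouts)}.
\]
By uniformization of the CTMC subordinated to the fixed delay (Section~\ref{subsec:reduction-DTMDP}), the number $\ell$ of exponential jumps arriving in $[0,\delayBound)$ is distributed as $\poiss(\lambda\delayBound)$, and I would condition on $\ell$; the big step ends at the first jump landing in $\statesM$, or at time $\delayBound$ via the fixed-delay transition. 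When $\ell=0$ the fixed-delay transition fires; since $\sta$ lies in a bad sink $\Sink$, it lands inside $\Sink\subseteq\statesM$ with zero impulse cost, so the big step ends and the total cost from $(\sta,\delayBound)$ equals \emph{exactly} $\pmf_{\poiss(\lambda\delayBound)}(0)\cdot\big(\rateRew(\sta)\delayBound + \sum_{\sta'\in\Sink}\trans(\sta,\sta')\expred_{\contMdp[\sta'](\timeouts)}\big)$. The key bookkeeping identity is $\pmf_{\poiss(\lambda\delayBound)}(1)/\lambda = \delayBound\cdot\pmf_{\poiss(\lambda\delayBound)}(0)$, which aligns the $\rateRew(\sta)/\lambda$ summand of $\sinkVal{\sta}{\contMdp(\timeouts)}$ with the $\ell=0$ rate cost and shows that $\tempA$ is precisely the $\ell=0$ contribution above plus $\pmf_{\poiss(\lambda\delayBound)}(1)\cdot Z$, where $Z := \sinkVal{\sta}{\contMdp(\timeouts)} - \rateRew(\sta)/\lambda$ collects exactly the impulse-cost and continuation-value summands of $\sinkVal{\sta}{\contMdp(\timeouts)}$.

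For the lower bound it then suffices to show the $\{\ell=1\}$ contribution is $\geq\pmf_{\poiss(\lambda\delayBound)}(1)\cdot Z$, the $\{\ell\geq2\}$ contribution being nonnegative. Conditioned on $\ell=1$ the single exponential jump leaves $\sta$ according to $\prob(\sta,\cdot)$; by the structural assumption $\allact = S^{reset}\uplus S^{keep}$ and $\sta\in\Sink\subseteq S^{reset}$, it either lands directly in $\statesM$ (ending the step) or in $S^{keep}$, after which — still with $\ell=1$ — the fixed delay fires and lands in $S^{reset}\subseteq\statesM$. In both subcases the conditional expected cost, after discarding the nonnegative rate costs accumulated while the clock ran, is exactly the corresponding sum in $Z$; here it matters that an exponential transition out of an $S^{reset}$ state can only reach $\statesM$ or $S^{keep}$, never back into $S^{reset}$, so no term of $Z$ is missed or double-counted.

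For the upper bound I would split $\expred_{\contMdp[\sta](\timeouts[\sta/\delayBound])}$ into the $\ell=0$, $\ell=1$ and $\ell\geq2$ contributions, absorb the $\ell=0$ part and the ``$Z$-part'' of the $\ell=1$ part into $\tempA$ as above, and bound the remainder against $\tempB$. The leftover of the $\ell=1$ part is just the rate cost accumulated in $\sta$ and in the at most one intermediate $S^{keep}$ state, which is $\leq\pmf_{\poiss(\lambda\delayBound)}(1)\cdot\maxRew\cdot\delayBound$. For the $\ell\geq2$ part I use crude bounds: the big step lasts at most $\delayBound$, so its rate cost is $\leq\maxRew\delayBound$; it contains at most $\ell+1$ impulse costs, each $\leq\maxRew$, and $\expected[\ell\cdot\mathbf{1}_{\ell\geq2}] = \lambda\delayBound - \pmf_{\poiss(\lambda\delayBound)}(1) = \lambda\delayBound\cdot\pmf_{\poiss(\lambda\delayBound)}(\geq1)$ in closed form; and the continuation cost from the state where the big step ends is $\leq\Value{\contMdp}+\eps$ by global $\eps$-optimality of $\timeouts$ (Lemma~\ref{lem:globally-e-optimal-existence}). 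Since both admissible values $\delayBound\in\{\lowerDelayBound,\timeouts(\sta)\}$ satisfy $\lambda\delayBound\leq\lambda\lowerDelayBound\leq\ln 2$, the elementary estimates $\pmf_{\poiss(\lambda\delayBound)}(1)\leq\pmf_{\poiss(\lambda\delayBound)}(\geq1)$, $\pmf_{\poiss(\lambda\delayBound)}(\geq2)\leq\lambda\delayBound\cdot\pmf_{\poiss(\lambda\delayBound)}(\geq1)$ and $\delayBound\leq1/\lambda$ all apply, and the leftover terms collapse into $\lambda\delayBound\cdot\pmf_{\poiss(\lambda\delayBound)}(\geq1)\cdot\big(\Value{\contMdp}+\eps+\maxRew(2+2/\lambda)\big)=\tempB$; the constant $2+2/\lambda$ is exactly calibrated, with $2/\lambda$ covering the rate costs and $2$ the impulse costs.

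The step I expect to be the main obstacle is the exact conditional-expectation bookkeeping needed to keep $\tempB$ tight rather than merely finite: spotting the identity $\pmf_{\poiss(\lambda\delayBound)}(1)/\lambda = \delayBound\,\pmf_{\poiss(\lambda\delayBound)}(0)$ so the $\rateRew(\sta)/\lambda$ term of $\sinkVal{\sta}{\contMdp(\timeouts)}$ is matched, not just dominated; evaluating $\expected[\ell\cdot\mathbf{1}_{\ell\geq2}]$ in closed form (the naive bound $\leq\lambda\delayBound$ is too weak to fit under $\tempB$ for small $\delayBound$); and carefully tracking, in the $\ell=1$ analysis, which of $S^{reset}$, $S^{keep}$ and $\states\setminus\allact$ an exponential transition can reach, so the match with $Z$ is exact. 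The probabilistic groundwork — that $\ell\sim\poiss(\lambda\delayBound)$, that $\mcost(\sta,\cdot)$ and $\mtran(\sta,\cdot)$ depend only on the delay at $\sta$, and that the continuation values are bounded — is routine given the reduction of Section~\ref{subsec:reduction-DTMDP} and Lemma~\ref{lem:globally-e-optimal-existence}.
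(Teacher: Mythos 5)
Your proposal is correct and takes essentially the same route as the paper's proof: isolate the first big step, uniformize so that the number of exponential jumps within the delay window is $\poiss(\lambda\delayBound)$, match the $0$- and $1$-jump contributions exactly against $\tempA$, and absorb the $\geq 2$-jump remainder together with the leftover rate and impulse costs into $\tempB$ using $\lambda\delayBound-\pmf_{\poiss(\lambda\delayBound)}(1)=\lambda\delayBound\cdot\pmf_{\poiss(\lambda\delayBound)}(\geq 1)$ and the global $\epsSmall$-optimality bound on continuation values. The only deviation is cosmetic bookkeeping: you align the $\rateRew(\sta)/\lambda$ term of $\sinkVal{\sta}{\contMdp(\timeouts)}$ with the zero-jump rate cost via $\pmf_{\poiss(\lambda\delayBound)}(1)/\lambda=\delayBound\cdot\pmf_{\poiss(\lambda\delayBound)}(0)$, while the paper lower-bounds the expected sojourn time in $\sta$ by $\pmf_{\poiss(\lambda\delayBound)}(1)/\lambda$; both yield the same $\tempA$ and $\tempB$.
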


\begin{proof}
We fix a bad sink $\Sink$. For each state $\sta \in \Sink$ holds 

$$\expred_{\contMdp[\sta](\timeouts[\sta/\delayBound])} = \sum_{\sta' \in \statesM} \mtran(\sta,\delayBound)(\sta') \cdot \expred_{\contMdp[\sta'](\timeouts)} + \mcost(\sta,\delayBound).$$
We start evaluating the right hand side. The most efficient computation of the right hand side can be done by employing uniformization approach. We build a fdCTMC $\fdC'[\sta](\delayBound)$ according to Definition~\ref{def:subordinated}.
By computing the transient analysis of $\fdC'[\sta](\delayBound)$ at time $\delayBound$ we can efficiently compute the $\mtran(\sta,\delayBound)(\sta')$ for each state $\sta' \in \statesM$:

\begin{align}
&\sum_{\sta' \in \statesM} \mtran(\sta,\delayBound)(\sta') \cdot \expred_{\contMdp[\sta'](\timeouts)} = \label{eq:sink-value-boundsA} \\
&= \sum_{\sta' \in \statesM} \sum_{i=0}^{\infty} \probm_{\fdC'[\sta](\delayBound)}(\text{exactly $i$ exponential steps with rate $\lambda$ happen in time $\delayBound$}) \nonumber \\
 &\cdot \probm_{\fdC'[\sta](\delayBound)}(\text{reach $\sta'$ from $\sta \mid$ $i$ exp. steps happened}) \cdot \expred_{\contMdp[\sta'](\timeouts)} \nonumber \\
&= \sum_{\sta' \in \statesM} \sum_{i=0}^{\infty} \pmf_{\poiss(\lambda \delayBound)}(i) \cdot \probm_{\fdC'[\sta](\delayBound)}(\text{reach $\sta'$ from $\sta \mid$ $i$ exp. steps happened}) \cdot \expred_{\contMdp[\sta'](\timeouts)} \nonumber \\
&= \sum_{i=0}^{\infty} \pmf_{\poiss(\lambda \delayBound)}(i) \cdot \sum_{\sta' \in \statesM} \probm_{\fdC'[\sta](\delayBound)}(\text{reach $\sta'$ from $\sta \mid$ $i$ exp. steps happened}) \cdot \expred_{\contMdp[\sta'](\timeouts)} \nonumber \\
&= \sum_{i=0}^{1} \pmf_{\poiss(\lambda \delayBound)}(i) \cdot \sum_{\sta' \in \statesM} \probm_{\fdC'[\sta](\delayBound)}(\text{reach $\sta'$ from $\sta \mid$ $i$ exp. steps happened}) \cdot \expred_{\contMdp[\sta'](\timeouts)} \nonumber \\
 &\quad+ \sum_{i=2}^{\infty} \pmf_{\poiss(\lambda \delayBound)}(i) \cdot \sum_{\sta' \in \statesM}  \probm_{\fdC'[\sta](\delayBound)}(\text{reach $\sta'$ from $\sta \mid$ $i$ exp. steps happened}) \cdot \expred_{\contMdp[\sta'](\timeouts)} \nonumber \\
&= \pmf_{\poiss(\lambda \delayBound)}(0) \cdot \sum_{\sta' \in \states} \trans(\sta,\sta') \expred_{\contMdp[\sta'](\timeouts)} \label{eq:sink-value-boundsB}\\
 &\quad+ \pmf_{\poiss(\lambda \delayBound)}(1) \cdot \Big ( \sum_{\sta' \in \statesM} \prob(\sta,\sta') \cdot \expred_{\contMdp[\sta'](\timeouts)} + \sum_{\sta' \in \allact} \sum_{\sta'' \in \statesM} \prob(\sta,\sta'') \cdot \trans(\sta'',\sta') \cdot \expred_{\contMdp[\sta'](\timeouts)} \Big )\label{eq:sink-value-boundsC} \\
 &\quad+ \sum_{i=2}^{\infty} \pmf_{\poiss(\lambda \delayBound)}(i) \cdot \sum_{\sta' \in \statesM}  \probm_{\fdC'[\sta](\delayBound)}(\text{reach $\sta'$ from $\sta \mid$ $i$ exp. steps happened}) \cdot \expred_{\contMdp[\sta'](\timeouts)}.\label{eq:sink-value-boundsD}
\end{align}
We can now simply bound the $\sum_{\sta' \in \statesM} \mtran(\sta,\delayBound)(\sta') \cdot \expred_{\contMdp[\sta'](\timeouts)}$ by approximating the expression \eqref{eq:sink-value-boundsD}: 

\begin{align}
0 &\leq \sum_{i=2}^{\infty} \pmf_{\poiss(\lambda \delayBound)}(i) \cdot \sum_{\sta' \in \statesM}  \probm_{\fdC'[\sta](\delayBound)}(\text{reach $\sta'$ from $\sta \mid$ $i$ exp. steps happened}) \cdot \expred_{\contMdp[\sta'](\timeouts)} \leq \label{eq:sink-value-boundsE}\\
 &\leq \sum_{i=2}^{\infty} \pmf_{\poiss(\lambda \delayBound)}(i) \cdot \sum_{\sta' \in \statesM}  \probm_{\fdC'[\sta](\delayBound)}(\text{reach $\sta'$ from $\sta \mid$ $i$ exp. steps happened}) \cdot (\maxValue{\contMdp} + \epsSmall) \nonumber \\
 &= \sum_{i=2}^{\infty} \pmf_{\poiss(\lambda \delayBound)}(i) \cdot (\maxValue{\contMdp} + \epsSmall) \nonumber \\
 &= \pmf_{\poiss(\lambda \delayBound)}(\geq 2) \cdot (\maxValue{\contMdp} + \epsSmall), \label{eq:sink-value-boundsF}
\end{align}
where the last inequality follows from global $\epsSmall$-optimality of $\timeouts$.

Let us now evaluate the $\mcost(s,\delayBound)$. We first define random variables $\costFdC R[\statesM]$ and $\costFdC I[\statesM]$ that assign to each run $\ctrun = (\sta_0,\delays_0) t_0 \cdots$ the \emph{total rate and impulse cost before reaching $\statesM$} (in at least one transition), respectively, given by
$$
\costFdC R[\statesM](\ctrun)
= \begin{cases}
\sum_{i=0}^{n-1} t_i\cdot\rateRew(s_i) 
 & \text{for minimal $n>0$ such that $(s_n,d_n)\in\statesM$,} \\
\infty & \text{if there is no such $n$,}
\end{cases}
$$
and
$$
\costFdC I[\statesM](\ctrun)
= \begin{cases}
\sum_{i=0}^{n-1} \impRew_i(\ctrun)
 & \text{for minimal $n>0$ such that $(s_n,d_n)\in\statesM$,} \\
\infty & \text{if there is no such $n$,}
\end{cases}
$$
where $\impRew_i(\ctrun)$ equals $\impRewExp(s_i,s_{i+1})$ for an exp-delay transition, i.e. when $t_i < \delays_i$, and
equals $\impRewFix(s_i,s_{i+1})$
for a fixed-delay transition, i.e. when $t_i = \delays_i$. We can divide $\mcost(s,\delayBound)$ in two parts: 

\begin{align}
\mcost(\sta,\delayBound) = \expred_{\fdC[s,d]}^{\costFdC[S']} = \expred_{\fdC'[\sta](\delayBound)}^{\costFdC I[S']} + \expred_{\fdC'[\sta](\delayBound)}^{\costFdC R[S']} \label{eq:sink-value-boundsG}
\end{align}
We will similarly as above use uniformization to underestimate the expected impulse cost: 

\begin{align}
\expred_{\fdC'[\sta](\delayBound)}^{\costFdC I[S']} &= \sum_{i=0}^{\infty} \pmf_{\poiss(\lambda \delayBound)}(i) \cdot \sum_{\sta' \in \statesM} \probm_{\fdC'[\sta](\delayBound)}(\text{reach $\sta'$ from $\sta \mid$ $i$ exp. steps happened}) \nonumber\\
 &\quad\quad\cdot \expected_{\fdC'[\sta](\delayBound)}[\costFdC I[S'] \mid \text{reach $\sta'$ from $\sta$, $i$ exp. steps happened}] \nonumber\\
\expred_{\fdC'[\sta](\delayBound)}^{\costFdC I[S']} &= \sum_{i=0}^{1} \pmf_{\poiss(\lambda \delayBound)}(i) \cdot \sum_{\sta' \in \statesM} \probm_{\fdC'[\sta](\delayBound)}(\text{reach $\sta'$ from $\sta \mid$ $i$ exp. steps happened}) \label{eq:sink-value-boundsJ}\\
 &\quad\quad\cdot \expected_{\fdC'[\sta](\delayBound)}[\costFdC I[S'] \mid \text{reach $\sta'$ from $\sta$, $i$ exp. steps happened}] \label{eq:sink-value-boundsK}\\
  &\quad+ \sum_{i=2}^{\infty} \pmf_{\poiss(\lambda \delayBound)}(i) \cdot \sum_{\sta' \in \statesM} \probm_{\fdC'[\sta](\delayBound)}(\text{reach $\sta'$ from $\sta \mid$ $i$ exp. steps happened}) \label{eq:sink-value-boundsL}\\
 &\quad\quad\cdot \expected_{\fdC'[\sta](\delayBound)}[\costFdC I[S'] \mid \text{reach $\sta'$ from $\sta$, $i$ exp. steps happened}]. \label{eq:sink-value-boundsM}
\end{align}
Now we evaluate \eqref{eq:sink-value-boundsJ} and \eqref{eq:sink-value-boundsK}:

\begin{align}
&\sum_{i=0}^{1} \pmf_{\poiss(\lambda \delayBound)}(i) \cdot \sum_{\sta' \in \statesM} \probm_{\fdC'[\sta](\delayBound)}(\text{reach $\sta'$ from $\sta \mid$ $i$ exp. steps happened}) \label{eq:sink-value-boundsN} \\
 &\quad\cdot \expected_{\fdC'[\sta](\delayBound)}[\costFdC I[S'] \mid \text{reach $\sta'$ from $\sta$, $i$ exp. steps happened}] = \label{eq:sink-value-boundsO}\\
&= \impulseCosts + \pmf_{\poiss(\lambda \delayBound)}(0) \cdot \sum_{\sta'' \in \statesM} \trans(\sta,\sta') \cdot \impRewFix(\sta,\sta') \nonumber \\
&= \impulseCosts, \label{eq:sink-value-boundsP}
\end{align}
where 
\begin{align}
\impulseCosts = \pmf_{\poiss(\lambda \delayBound)}(1) \cdot \Big ( &\sum_{\sta' \in \statesM} \prob(\sta,\sta') \cdot \impRewExp(\sta,\sta') \nonumber\\ 
&+ \sum_{\sta' \in \allact} \sum_{\sta'' \in \statesM} \prob(\sta,\sta'') \trans(\sta'',\sta') \cdot (\impRewExp(\sta,\sta'') + \impRewFix(\sta'',\sta')) \Big ).\label{eq:sink-value-boundsQ}
\end{align}
The last equation follows from the fact that all fixed-delay transitions from $\sta$ go to states in $\Sink$ and from zero impulse cost of such transitions because $\sink$ is a bad sink. 
 
We now over and underestimate terms \eqref{eq:sink-value-boundsL} and \eqref{eq:sink-value-boundsM}.
\begin{align}
0 &\leq \sum_{i=2}^{\infty} \pmf_{\poiss(\lambda \delayBound)}(i) \cdot \sum_{\sta' \in \statesM} \probm_{\fdC'[\sta](\delayBound)}(\text{reach $\sta'$ from $\sta \mid$ $i$ exp. steps happened}) \label{eq:sink-value-boundsR}\\
 &\quad\quad\cdot \expected_{\fdC'[\sta](\delayBound)}[\costFdC I[S'] \mid \text{reach $\sta'$ from $\sta$, $i$ exp. steps happened}] \label{eq:sink-value-boundsS}\\
&\leq \sum_{i=2}^{\infty} \pmf_{\poiss(\lambda \delayBound)}(i) \cdot \sum_{\sta' \in \statesM} \probm_{\fdC'[\sta](\delayBound)}(\text{reach $\sta'$ from $\sta \mid$ $i$ exp. steps happened}) \cdot 2 \cdot i \cdot \maxRew \nonumber \\
&\leq 2 \cdot  \maxRew  \cdot \sum_{i=2}^{\infty} \pmf_{\poiss(\lambda \delayBound)}(i) \cdot i \nonumber \\
&= 2 \cdot \maxRew \cdot \Big ( \sum_{i=0}^{\infty} \pmf_{\poiss(\lambda \delayBound)}(i) \cdot i - \sum_{i=0}^{1} \pmf_{\poiss(\lambda \delayBound)}(i) \cdot i \Big ) \nonumber \\
&= 2 \cdot \maxRew \cdot ( \lambda \delayBound -  \pmf_{\poiss(\lambda \delayBound)}(1)).\label{eq:sink-value-boundsT}   
\end{align}
Using equations \eqref{eq:sink-value-boundsJ}, \eqref{eq:sink-value-boundsK}, \eqref{eq:sink-value-boundsL}, \eqref{eq:sink-value-boundsM}, \eqref{eq:sink-value-boundsN}, \eqref{eq:sink-value-boundsO}, \eqref{eq:sink-value-boundsP}, \eqref{eq:sink-value-boundsQ}, \eqref{eq:sink-value-boundsR}, \eqref{eq:sink-value-boundsS}, \eqref{eq:sink-value-boundsT}  we can under and overestimate the expected impulse cost accumulated on paths from $\sta$ to $\statesM$ by:

\begin{align}
\impulseCosts \leq \expred_{\fdC'[\sta](\delayBound)}^{\costFdC I[S']} \leq \impulseCosts + 2 \cdot \maxRew \cdot ( \lambda \delayBound -  \pmf_{\poiss(\lambda \delayBound)}(1)).\label{eq:sink-value-boundsU}
\end{align}
 
We evaluate the rate cost part of $\mcost(s,\delayBound)$.  
 
\begin{align*}
\expred_{\fdC'[\sta](\delayBound)}^{\costFdC R[S']} &= \sum_{\sta' \in \states} \text{(expected time spent in $\sta'$ until hitting \statesM)} \cdot \rateRew(\sta') \\
&= \text{(expected time spent in $\sta$ until the first transition is taken)} \cdot \rateRew(\sta) \\
 &\quad+ \sum_{\sta' \in \states} \text{(expected time spent in $\sta'$ until hitting $\statesM$ except the 1st trans.)} \cdot \rateRew(\sta'),\\
\end{align*}
where 
\begin{align*}
&\text{(expected time spent in $\sta$ until the first transition is taken)} = \\
&= \int_{0}^{\delayBound} t \cdot \lambda \cdot \e^{-\lambda t} dt + \delayBound \cdot \e^{-\lambda \delayBound} \\
&= -\delayBound \cdot \e^{-\lambda \delayBound}  + \frac{1-\e^{-\lambda \delayBound}}{\lambda} + \delayBound \cdot \e^{-\lambda \delayBound} \\
&= \frac{1-\e^{-\lambda \delayBound}}{\lambda} \\
&= \frac{1-\pmf_{\poiss(\lambda \delayBound)}(0)}{\lambda} \\
&= \frac{\pmf_{\poiss(\lambda \delayBound)}(\geq 1)}{\lambda} \\
&= \frac{\pmf_{\poiss(\lambda \delayBound)}(1)}{\lambda} + \frac{\pmf_{\poiss(\lambda \delayBound)}(\geq 2)}{\lambda} \\
\end{align*}
Thus 
\begin{align*}
\expred_{\fdC'[\sta](\delayBound)}^{\costFdC R[S']} &= \text{(expected time spent in $\sta$ until the first transition is taken)} \cdot \rateRew(\sta) \\
 &\quad+ \sum_{\sta' \in \states} \text{(expected time spent in $\sta'$ until hitting $\statesM$ except the 1st trans.)} \cdot \rateRew(\sta'),\\
&= \Big ( \frac{\pmf_{\poiss(\lambda \delayBound)}(1)}{\lambda} + \frac{\pmf_{\poiss(\lambda \delayBound)}(\geq 2)}{\lambda} \Big ) \cdot \rateRew(\sta)\\
 &\quad+ \sum_{\sta' \in \states} \text{(expected time spent in $\sta'$ until hitting $\statesM$ except the 1st trans.)} \cdot \rateRew(\sta'),\\
&\leq \Big ( \frac{\pmf_{\poiss(\lambda \delayBound)}(1)}{\lambda} + \frac{\pmf_{\poiss(\lambda \delayBound)}(\geq 2)}{\lambda} \Big ) \cdot \rateRew(\sta)\\
 &\quad+ \sum_{\sta' \in \states} \text{(expected time spent in $\sta'$ until hitting $\statesM$ except the 1st trans.)} \cdot \maxRew,\\
&\leq \Big ( \frac{\pmf_{\poiss(\lambda \delayBound)}(1)}{\lambda} + \frac{\pmf_{\poiss(\lambda \delayBound)}(\geq 2)}{\lambda} \Big ) \cdot \rateRew(\sta) + \Big ( \delayBound - \frac{\pmf_{\poiss(\lambda \delayBound)}(\geq 2)}{\lambda} \Big ) \cdot \maxRew,\\
\end{align*}
Using above results we can easily bound from above and below the (expected rate cost accumulated on paths from $\sta$ to \statesM):

\begin{align}
&\frac{\pmf_{\poiss(\lambda \delayBound)}(1)}{\lambda} \cdot \rateRew(\sta) \leq \expred_{\fdC'[\sta](\delayBound)}^{\costFdC R[S']} \leq \label{eq:sink-value-boundsV}\\
&\leq \frac{\pmf_{\poiss(\lambda \delayBound)}(1)}{\lambda}\cdot \rateRew(\sta) + \frac{\pmf_{\poiss(\lambda \delayBound)}(\geq 2)}{\lambda} \cdot \rateRew(\sta) + \Big ( \delayBound - \frac{\pmf_{\poiss(\lambda \delayBound)}(\geq 1)}{\lambda} \Big) \cdot \maxRew. \label{eq:sink-value-boundsW}
\end{align}

Now we have all the results needed to prove the lemma. To get the lower bound we combine and evaluate \eqref{eq:sink-value-boundsA}, \eqref{eq:sink-value-boundsB}, \eqref{eq:sink-value-boundsC}, \eqref{eq:sink-value-boundsD}, \eqref{eq:sink-value-boundsE}, \eqref{eq:sink-value-boundsG}, \eqref{eq:sink-value-boundsQ}, \eqref{eq:sink-value-boundsU}, \eqref{eq:sink-value-boundsV}:

\begin{align*}
&\expred_{\contMdp[\sta](\timeouts[\sta/\delayBound])} =\\
=& \sum_{\sta' \in \statesM} \mtran(\sta,\delayBound)(\sta') \cdot \expred_{\contMdp[\sta'](\timeouts)} + \mcost(\sta,\delayBound)\\
\geq& \pmf_{\poiss(\lambda \delayBound)}(0) \cdot \sum_{\sta' \in \states} \trans(\sta,\sta') \expred_{\contMdp[\sta'](\timeouts)} \\
 &+ \pmf_{\poiss(\lambda \delayBound)}(1) \cdot \Big ( \sum_{\sta' \in \statesM} \prob(\sta,\sta') \cdot \expred_{\contMdp[\sta'](\timeouts)} + \sum_{\sta' \in \allact} \sum_{\sta'' \in \statesM} \prob(\sta,\sta'') \cdot \trans(\sta'',\sta') \cdot \expred_{\contMdp[\sta'](\timeouts)} \Big ) \\
 &+ \pmf_{\poiss(\lambda \delayBound)}(1) \\
 &\quad\quad\cdot \Big ( \sum_{\sta' \in \statesM} \prob(\sta,\sta') \cdot \impRewExp(\sta,\sta') + \sum_{\sta' \in \allact} \sum_{\sta'' \in \statesM} \prob(\sta,\sta'') \trans(\sta'',\sta') \cdot (\impRewExp(\sta,\sta'') + \impRewFix(\sta'',\sta')) \Big ) \\
 &+ \frac{\pmf_{\poiss(\lambda \delayBound)}(1)}{\lambda} \cdot \rateRew(\sta) \\
=& \pmf_{\poiss(\lambda \delayBound)}(0) \cdot \sum_{\sta' \in \states} \trans(\sta,\sta') \expred_{\contMdp[\sta'](\timeouts)} + \pmf_{\poiss(\lambda \delayBound)}(1) \cdot \sinkVal{\sta}{\contMdp(\timeouts)} \\
=& \pmf_{\poiss(\lambda \delayBound)}(0) \cdot \sum_{\sta' \in \Sink} \trans(\sta,\sta') \expred_{\contMdp[\sta'](\timeouts)} + \pmf_{\poiss(\lambda \delayBound)}(1) \cdot \sinkVal{\sta}{\contMdp(\timeouts)} = \tempA\\
\end{align*}
The second equality from bottom follows from definition of bad sink, i.e. all fixed-delay transitions from $\sta$ go to states in $\Sink$. For the upper bound we get the same reductions, but we need to add additional summands, see \eqref{eq:sink-value-boundsA}, \eqref{eq:sink-value-boundsB}, \eqref{eq:sink-value-boundsC}, \eqref{eq:sink-value-boundsD}, \eqref{eq:sink-value-boundsE}, \eqref{eq:sink-value-boundsF}, \eqref{eq:sink-value-boundsG}, \eqref{eq:sink-value-boundsQ}, \eqref{eq:sink-value-boundsU}, \eqref{eq:sink-value-boundsV}, \eqref{eq:sink-value-boundsW}. We get the final result:

\begin{align*}
\expred_{\contMdp[\sta](\timeouts[\sta/\delayBound])} &\leq \pmf_{\poiss(\lambda \delayBound)}(0) \cdot \sum_{\sta' \in \Sink} \trans(\sta,\sta') \expred_{\contMdp[\sta'](\timeouts)}+ \pmf_{\poiss(\lambda \delayBound)}(1) \cdot \sinkVal{\sta}{\contMdp(\timeouts)} \\
&\quad+ \pmf_{\poiss(\lambda \delayBound)}(\geq 2) \cdot (\maxValue{\contMdp} + \epsSmall) + 2 \cdot \maxRew \cdot ( \lambda \delayBound -  \pmf_{\poiss(\lambda \delayBound)}(1)) \\ 
&\quad+ \frac{\pmf_{\poiss(\lambda \delayBound)}(\geq 2)}{\lambda} \cdot \rateRew(\sta) + \Big ( \delayBound - \frac{\pmf_{\poiss(\lambda \delayBound)}(\geq 1)}{\lambda} \Big) \cdot \maxRew\\
&\leq \tempA + \pmf_{\poiss(\lambda \delayBound)}(\geq 2) \cdot \Big( \maxValue{\contMdp} + \epsSmall + \frac{\maxRew}{\lambda} \Big )\\
&\quad+ ( \lambda \delayBound -  \pmf_{\poiss(\lambda \delayBound)}(1)) \cdot \maxRew \cdot \Big (2 +\frac{1}{\lambda} \Big ) \\
&\leq \tempA + \lambda \delayBound \cdot \pmf_{\poiss(\lambda \delayBound)}(\geq 1) \cdot \Big (\maxValue{\contMdp} + \epsSmall + \maxRew \cdot \Big (2 +\frac{2}{\lambda} \Big) \Big) \\
&= \tempA + \tempB,
\end{align*}
where the last inequality follows from
$$\lambda \delayBound -  \pmf_{\poiss(\lambda \delayBound)}(1) = \lambda \delayBound (1- \e^{-\lambda \delayBound}) = \lambda \delayBound \cdot \pmf_{\poiss(\lambda \delayBound)}(\geq 1) = \e^{-\lambda \delayBound} \cdot \sum_{i=2}^{\infty} \frac{(\lambda \delayBound)^i}{(i-1)!} \geq \pmf_{\poiss(\lambda \delayBound)}(\geq2).$$
\end{proof}

\begin{lemma}
\label{lem:value-optimality}
Let $\timeouts$ be globally $\epsSmall$-optimal delay function in $\contMdp$. Let $\Sink$ be a bad sink for $\timeouts$ and let $\sta \in \Sink$ be such that $\sinkVal{\sta}{\contMdp(\timeouts)} = \min_{\sta'' \in \Sink} \sinkVal{\sta''}{\contMdp(\timeouts)}$. Then for each $\sta' \in \Sink$ it holds that
$$\expred_{\contMdp[\sta'](\timeouts)} \geq \frac{\pmf_{\poiss(\lambda \lowerDelayBound)}(1)}{(1-\pmf_{\poiss(\lambda \lowerDelayBound)}(0))} \cdot \sinkVal{\sta}{\contMdp(\timeouts)}, $$
and
$$\expred_{\contMdp[\sta'](\timeouts)} \geq  \sinkValue_{\contMdp(\timeouts)}^{\sta} - \frac{\pmf_{\poiss(\lambda \lowerDelayBound)}(\geq 2) \cdot (\maxValue{\contMdp}+\epsSmall)}{\pmf_{\poiss(\lambda \lowerDelayBound)}(1)}.$$
\end{lemma}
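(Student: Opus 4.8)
The plan is to derive both inequalities from the bounds on $\expred_{\contMdp[\sta](\timeouts[\sta/\lowerDelayBound])}$ provided by Lemma~\ref{lem:sink-value-bounds}, together with global $\epsSmall$-optimality of $\timeouts$, which guarantees in particular that $\timeouts$ is $\epsSmall$-optimal in $\contMdp[\sta]$, i.e.\ $\expred_{\contMdp[\sta](\timeouts)}\leq \expred_{\contMdp[\sta](\timeouts[\sta/\lowerDelayBound])}+\epsSmall$ is \emph{not} quite what we want — rather we use that $\timeouts$ achieves essentially the minimum, so $\expred_{\contMdp[\sta](\timeouts)}\leq \expred_{\contMdp[\sta](\timeouts')}+\epsSmall$ for \emph{any} competitor $\timeouts'$, in particular $\timeouts'=\timeouts[\sta/\lowerDelayBound]$. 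Actually the cleaner route: apply Lemma~\ref{lem:sink-value-bounds} with $\delayBound=\lowerDelayBound$ to get a \emph{lower} bound $\tempA\leq \expred_{\contMdp[\sta](\timeouts[\sta/\lowerDelayBound])}$, and combine with the fact that $\timeouts$ is (near-)optimal to relate $\expred_{\contMdp[\sta](\timeouts)}$ to the left-hand side $\tempA$, which is an explicit affine function of $\sinkVal{\sta}{\contMdp(\timeouts)}$ and of the values $\expred_{\contMdp[\sta'](\timeouts)}$ for $\sta'\in\Sink$.

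First I would write out $\tempA=\pmf_{\poiss(\lambda\lowerDelayBound)}(0)\cdot\sum_{\sta'\in\Sink}\trans(\sta,\sta')\,\expred_{\contMdp[\sta'](\timeouts)}+\pmf_{\poiss(\lambda\lowerDelayBound)}(1)\cdot\sinkVal{\sta}{\contMdp(\timeouts)}$. Since every summand $\expred_{\contMdp[\sta'](\timeouts)}$ appears with the convex weights $\trans(\sta,\cdot)$ and the target set is unreachable from $\Sink$ via fd-transitions, each such value is bounded below by $M_{\Sink}:=\min_{\sta'\in\Sink}\expred_{\contMdp[\sta'](\timeouts)}$. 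Picking $\sta$ to be the minimizer of $\sinkVal{\cdot}{\contMdp(\timeouts)}$ over $\Sink$ and, crucially, observing that $\expred_{\contMdp[\sta](\timeouts)}\le\expred_{\contMdp[\sta](\timeouts[\sta/\lowerDelayBound])}$ cannot hold in general but the \emph{optimality of the original delay} gives $\expred_{\contMdp[\sta](\timeouts)}\le\expred_{\contMdp[\sta](\timeouts[\sta/\lowerDelayBound])}$ up to the slack — here I would instead argue directly: by Lemma~\ref{lem:sink-value-bounds} applied with $\delayBound=\timeouts(\sta)$ we have $\expred_{\contMdp[\sta](\timeouts)}\ge \tempA(\timeouts(\sta),\timeouts,\sta)\ge \pmf_{\poiss(\lambda\timeouts(\sta))}(1)\cdot\sinkVal{\sta}{\contMdp(\timeouts)}$ (dropping the nonnegative first term), and since $\timeouts(\sta)<\lowerDelayBound$ one checks $\pmf_{\poiss(\lambda\timeouts(\sta))}(1)=\lambda\timeouts(\sta)e^{-\lambda\timeouts(\sta)}$; this is monotone-ish and one compares with the claimed factor $\pmf_{\poiss(\lambda\lowerDelayBound)}(1)/(1-\pmf_{\poiss(\lambda\lowerDelayBound)}(0))$. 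The first inequality then follows because for any $\sta'\in\Sink$, unfolding one step of $\expred_{\contMdp[\sta'](\timeouts)}$ and using that with probability $\pmf_{\poiss(\lambda\timeouts(\sta'))}(0)$ the fd-transition fires landing again in $\Sink$ with zero impulse cost, one obtains $\expred_{\contMdp[\sta'](\timeouts)}\ge \pmf_{\poiss(\lambda\timeouts(\sta'))}(1)\sinkVal{\sta'}{\contMdp(\timeouts)}+\pmf_{\poiss(\lambda\timeouts(\sta'))}(0)\cdot(\text{convex combination of }\expred_{\contMdp[\sta''](\timeouts)},\,\sta''\in\Sink)$; summing/iterating around the sink and using $\sinkVal{\sta'}{\cdot}\ge\sinkVal{\sta}{\cdot}$ yields $M_{\Sink}\ge \frac{\pmf_{\poiss(\lambda\lowerDelayBound)}(1)}{1-\pmf_{\poiss(\lambda\lowerDelayBound)}(0)}\sinkVal{\sta}{\contMdp(\timeouts)}$, where the worst case over $\timeouts(\sta')<\lowerDelayBound$ is attained at $\lowerDelayBound$ after checking monotonicity of $t\mapsto \lambda t e^{-\lambda t}/(1-e^{-\lambda t})$ on the relevant range.

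For the second inequality I would use the \emph{upper} bound of Lemma~\ref{lem:sink-value-bounds} in reverse: optimality of $\timeouts$ gives $\expred_{\contMdp[\sta](\timeouts)}\le\expred_{\contMdp[\sta](\timeouts[\sta/\lowerDelayBound])}\le \tempA(\lowerDelayBound,\timeouts,\sta)+\tempBLower$; but also $\expred_{\contMdp[\sta](\timeouts)}$ itself, unfolded with its own delay, is $\ge \pmf_{\poiss(\lambda\timeouts(\sta))}(0)\sum_{\sta'\in\Sink}\trans(\sta,\sta')\expred_{\contMdp[\sta'](\timeouts)}+\pmf_{\poiss(\lambda\timeouts(\sta))}(1)\sinkVal{\sta}{\contMdp(\timeouts)}$. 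Rearranging, $\sum_{\sta'}\trans(\sta,\sta')\expred_{\contMdp[\sta'](\timeouts)}\ge \sinkVal{\sta}{\contMdp(\timeouts)} - \frac{\pmf_{\poiss(\lambda\lowerDelayBound)}(\geq 2)\cdot(\maxValue{\contMdp}+\epsSmall)}{\pmf_{\poiss(\lambda\lowerDelayBound)}(1)}$, where I absorb the $\tempBLower$ term (which is $\le\lambda\lowerDelayBound\,\pmf_{\poiss(\lambda\lowerDelayBound)}(\ge1)\cdot(\maxValue{\contMdp}+\epsSmall+\maxRew(2+2/\lambda))$) into the stated bound using $\pmf_{\poiss(\lambda\lowerDelayBound)}(\ge 2)$-vs-$\lambda\lowerDelayBound\,\pmf_{\poiss(\lambda\lowerDelayBound)}(\ge1)$ comparisons and the smallness of $\lowerDelayBound$ from~\eqref{eq:dmin-def}. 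Since the left side is a convex combination of the $\expred_{\contMdp[\sta'](\timeouts)}$, the minimum over $\sta'\in\Sink$ can only be smaller, giving the claimed bound for every $\sta'\in\Sink$.

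The main obstacle I expect is the bookkeeping of which delay ($\timeouts(\sta)$ vs.\ $\lowerDelayBound$) appears in each Poisson weight and establishing the monotonicity facts about $t\mapsto \lambda t e^{-\lambda t}$, $t\mapsto \pmf_{\poiss(\lambda t)}(1)/(1-\pmf_{\poiss(\lambda t)}(0))$ and $t\mapsto \pmf_{\poiss(\lambda t)}(\ge 2)/\pmf_{\poiss(\lambda t)}(1)$ on $[0,\lowerDelayBound]$ so that replacing $\timeouts(\sta')$ by $\lowerDelayBound$ goes in the correct (conservative) direction in both inequalities; the rest is a routine one-step unfolding combined with Lemma~\ref{lem:sink-value-bounds} and global optimality.
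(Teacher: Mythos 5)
Your argument for the \emph{first} inequality is, after the false start, essentially the paper's proof: you unfold one step at the state of $\Sink$ with minimal value (i.e.\ apply the lower bound of Lemma~\ref{lem:sink-value-bounds} with $\delayBound=\timeouts(\sta')$), bound the convex combination over $\Sink$ from below by that same minimal value, use $\sinkVal{\sta'}{\contMdp(\timeouts)}\geq\sinkVal{\sta}{\contMdp(\timeouts)}$, solve for the minimal value, and conclude with the monotonicity of $t\mapsto \pmf_{\poiss(\lambda t)}(1)/(1-\pmf_{\poiss(\lambda t)}(0))$ and $\timeouts(\sta')<\lowerDelayBound$. (Your initial variant, which drops the $\pmf_{\poiss(\lambda \timeouts(\sta))}(0)$-term and keeps only $\pmf_{\poiss(\lambda \timeouts(\sta))}(1)\cdot\sinkVal{\sta}{\contMdp(\timeouts)}$, cannot work on its own: in a bad sink $\timeouts(\sta)$ may be arbitrarily small, so $\pmf_{\poiss(\lambda \timeouts(\sta))}(1)$ is far below the claimed factor; the self-bounding step you add afterwards is exactly what repairs this.)

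For the \emph{second} inequality there is a genuine gap. The final step is invalid: you obtain (at best) a lower bound on the convex combination $\sum_{\sta''\in\Sink}\trans(\sta,\sta'')\,\expred_{\contMdp[\sta''](\timeouts)}$ and then conclude ``the claimed bound for every $\sta'\in\Sink$''; but a lower bound on a convex combination only forces \emph{some} term to be large, not the minimum, so nothing follows for an individual $\sta'$. Moreover, the correction term $\pmf_{\poiss(\lambda \lowerDelayBound)}(\geq 2)/\pmf_{\poiss(\lambda \lowerDelayBound)}(1)$ never actually emerges from the facts you combine: the detour through $\timeouts[\sta/\lowerDelayBound]$ and the upper bound $\tempA+\tempBLower$ is both unnecessary and unusable here, since the claimed inequality leaves no slack into which an additive $\tempBLower$ could be absorbed, and after subtracting the two Poisson-weighted expressions you would have to divide by $e^{-\lambda\timeouts(\sta)}-e^{-\lambda\lowerDelayBound}$, which can be arbitrarily small. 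The paper needs no inflated delay function at all: starting from the same self-bounding inequality you already derived at the minimizing state $\sta'$, namely $(1-\pmf_{\poiss(\lambda \timeouts(\sta'))}(0))\cdot\expred_{\contMdp[\sta'](\timeouts)}\geq \pmf_{\poiss(\lambda \timeouts(\sta'))}(1)\cdot\sinkVal{\sta}{\contMdp(\timeouts)}$, write $1-\pmf_{\poiss(\lambda \timeouts(\sta'))}(0)=\pmf_{\poiss(\lambda \timeouts(\sta'))}(1)+\pmf_{\poiss(\lambda \timeouts(\sta'))}(\geq 2)$, divide by $\pmf_{\poiss(\lambda \timeouts(\sta'))}(1)$, bound the resulting error term via $\expred_{\contMdp[\sta'](\timeouts)}\leq\maxValue{\contMdp}+\epsSmall$ (this is where global $\epsSmall$-optimality is used), and invoke the monotonicity of $t\mapsto\pmf_{\poiss(\lambda t)}(\geq 2)/\pmf_{\poiss(\lambda t)}(1)=\sum_{i\geq 2}(\lambda t)^{i-1}/i!$ together with $\timeouts(\sta')<\lowerDelayBound$; the bound then extends to all of $\Sink$ precisely because $\sta'$ minimizes $\expred_{\contMdp[\cdot](\timeouts)}$ over $\Sink$.
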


\begin{proof}

We fix a sink $\Sink$ bad for $\timeouts$ and state $\sta' \in \Sink$ with minimal value of $\expred_{\contMdp[\sta'](\timeouts)}$. To obtain the lower bound on $\expred_{\contMdp[\sta'](\timeouts)}$ we use Lemma~\ref{lem:sink-value-bounds}.

\begin{align}
\expred_{\contMdp[\sta'](\timeouts)} &\geq \pmf_{\poiss(\lambda \timeouts(\sta'))}(0) \cdot \sum_{\sta'' \in \Sink} \trans(\sta',\sta'') \expred_{\contMdp[\sta''](\timeouts)} + \pmf_{\poiss(\lambda \timeouts(\sta'))}(1) \cdot \sinkVal{\sta'}{\contMdp(\timeouts)} \nonumber \\
&\geq \pmf_{\poiss(\lambda \timeouts(\sta'))}(0) \cdot \sum_{\sta'' \in \Sink} \trans(\sta',\sta'') \expred_{\contMdp[\sta'](\timeouts)} + \pmf_{\poiss(\lambda \timeouts(\sta'))}(1) \cdot \sinkVal{\sta'}{\contMdp(\timeouts)} \label{eq:valOptA}\\
&= \pmf_{\poiss(\lambda \timeouts(\sta'))}(0) \cdot \expred_{\contMdp[\sta'](\timeouts)} + \pmf_{\poiss(\lambda \timeouts(\sta'))}(1) \cdot \sinkVal{\sta'}{\contMdp(\timeouts)} \nonumber \\
&\geq \pmf_{\poiss(\lambda \timeouts(\sta'))}(0) \cdot \expred_{\contMdp[\sta'](\timeouts)} + \pmf_{\poiss(\lambda \timeouts(\sta'))}(1) \cdot \sinkVal{\sta}{\contMdp(\timeouts)}, \label{eq:valOptC}
\end{align}
where the inequality \eqref{eq:valOptA} follows from minimality of $\expred_{\contMdp[\sta'](\timeouts)}$ among all states of $\sta' \in \Sink$ and the inequality \eqref{eq:valOptC} follows from the minimality of $\sta \in \Sink$ with respect to $\sinkVal{\sta}{\contMdp(\timeouts)}$. We proceed with simplification of inequality \eqref{eq:valOptC}:
\begin{align*}
\expred_{\contMdp[\sta'](\timeouts)} &\geq \pmf_{\poiss(\lambda \timeouts(\sta'))}(0) \cdot \expred_{\contMdp[\sta'](\timeouts)} + \pmf_{\poiss(\lambda \timeouts(\sta'))}(1) \cdot \sinkVal{\sta}{\contMdp(\timeouts)}\\
(1-\pmf_{\poiss(\lambda \timeouts(\sta'))}(0)) \cdot \expred_{\contMdp[\sta'](\timeouts)} &\geq  \pmf_{\poiss(\lambda \timeouts(\sta'))}(1) \cdot \sinkVal{\sta}{\contMdp(\timeouts)}\\
\expred_{\contMdp[\sta'](\timeouts)} &\geq  \frac{\pmf_{\poiss(\lambda \timeouts(\sta'))}(1)}{(1-\pmf_{\poiss(\lambda \timeouts(\sta'))}(0))} \cdot \sinkVal{\sta}{\contMdp(\timeouts)} \\
&\geq \frac{\pmf_{\poiss(\lambda \lowerDelayBound)}(1)}{(1-\pmf_{\poiss(\lambda \lowerDelayBound)}(0))} \cdot \sinkVal{\sta}{\contMdp(\timeouts)},  \\
\end{align*}
where the last inequality can be justified as follows: $\sta'$ belongs to the bad sink $\Sink$, thus from definition of a bad sink it holds that $0 < \timeouts(\sta')\leq \lowerDelayBound$. Since function 
 $$\frac{\pmf_{\poiss(\lambda \timeouts(\sta'))}(1)}{(1-\pmf_{\poiss(\lambda \timeouts(\sta'))}(0))} = \frac{\lambda \timeouts(\sta') \cdot \e^{-\lambda \timeouts(\sta')}}{\e^{-\lambda \timeouts(\sta')} \cdot \sum_{i=1}^{\infty} \frac{(\lambda \timeouts(\sta'))^{i}}{i!}} = \frac{1}{\sum_{i=1}^{\infty} \frac{(\lambda \timeouts(\sta'))^{i-1}}{i!}}$$
is monotone, and from $0 < \timeouts(\sta')\leq \lowerDelayBound$ we get the result. We picked the state $\sta'$ in $\Sink$ with minimal $\expred_{\contMdp[\sta'](\timeouts)}$ thus for all $\sta'' \in \Sink$ it holds that
$$\expred_{\contMdp[\sta''](\timeouts)} \geq \expred_{\contMdp[\sta'](\timeouts)} \geq \frac{\pmf_{\poiss(\lambda \lowerDelayBound)}(1)}{(1-\pmf_{\poiss(\lambda \lowerDelayBound)}(0))} \cdot \sinkVal{\sta}{\contMdp(\timeouts)}.$$

Now we simplify again the inequality \eqref{eq:valOptC} to get the second bound:
\begin{align*}
\sum_{i=0}^{\infty} \pmf_{\poiss(\lambda \timeouts(\sta'))}(i) \cdot \expred_{\contMdp[\sta'](\timeouts)} &\geq \pmf_{\poiss(\lambda \timeouts(\sta'))}(0) \\
&\quad\cdot \expred_{\contMdp[\sta'](\timeouts)} + \pmf_{\poiss(\lambda \timeouts(\sta'))}(1) \cdot \sinkVal{\sta}{\contMdp(\timeouts)} \\
\pmf_{\poiss(\lambda \timeouts(\sta'))}(1) \cdot \expred_{\contMdp[\sta'](\timeouts)} + \pmf_{\poiss(\lambda \timeouts(\sta'))}(\geq 2) \cdot \expred_{\contMdp[\sta'](\timeouts)} &\geq  \pmf_{\poiss(\lambda \timeouts(\sta'))}(1) \cdot \sinkVal{\sta}{\contMdp(\timeouts)} \\
\expred_{\contMdp[\sta'](\timeouts)} + \frac{\pmf_{\poiss(\lambda \timeouts(\sta'))}(\geq 2)}{\pmf_{\poiss(\lambda \timeouts(\sta'))}(1)} \cdot \expred_{\contMdp[\sta'](\timeouts)} &\geq  \sinkVal{\sta}{\contMdp(\timeouts)} \\
\expred_{\contMdp[\sta'](\timeouts)} + \frac{\pmf_{\poiss(\lambda \timeouts(\sta'))}(\geq 2) \cdot (\maxValue{\contMdp}+\epsSmall)}{\pmf_{\poiss(\lambda \timeouts(\sta'))}(1)} &\geq  \sinkVal{\sta}{\contMdp(\timeouts)} \\
\expred_{\contMdp[\sta'](\timeouts)} &\geq \sinkVal{\sta}{\contMdp(\timeouts)} -\frac{\pmf_{\poiss(\lambda \lowerDelayBound)}(\geq 2) \cdot (\maxValue{\contMdp}+\epsSmall)}{\pmf_{\poiss(\lambda \lowerDelayBound)}(1)},
\end{align*}
where the last inequality follows again from $0 < \timeouts(\sta')\leq \lowerDelayBound$ and from monotonicity of the following function
 $$\frac{\pmf_{\poiss(\lambda \timeouts(\sta'))}(\geq 2)}{\pmf_{\poiss(\lambda \timeouts(\sta'))}(1)} = \frac{\e^{-\lambda \timeouts(\sta')} \cdot \sum_{i=2}^{\infty} \frac{(\lambda \timeouts(\sta'))^{i}}{i!}}{\lambda \timeouts(\sta') \cdot \e^{-\lambda \timeouts(\sta')}} = \sum_{i=2}^{\infty} \frac{(\lambda \timeouts(\sta'))^{i-1}}{i!}.$$
Since we picked the state $\sta'$ in $\Sink$ with minimal $\expred_{\contMdp[\sta'](\timeouts)}$ it holds that, for all $\sta'' \in \Sink$
$$\expred_{\contMdp[\sta''](\timeouts)} \geq \expred_{\contMdp[\sta'](\timeouts)} \geq \sinkVal{\sta}{\contMdp(\timeouts)} -\frac{\pmf_{\poiss(\lambda \lowerDelayBound)}(\geq 2) \cdot (\maxValue{\contMdp}+\epsSmall)}{\pmf_{\poiss(\lambda \lowerDelayBound)}(1)}.$$
\end{proof}

\begin{lemma}
\label{lem:one-step-error}
Let $\epsSmall>0$ be arbitrary, and let $\timeouts$ be any globally $\epsSmall$-optimal delay function in $\contMdp$. Next, let $\Sink \subseteq \statesM$ be sink bad for $\timeouts$, and let $\sta \in \Sink$ be state with minimal $\sinkVal{\sta}{\contMdp(\timeouts)}$ among all states in $\Sink$. Then for each state $\sta' \in \statesM $ it holds that 
$$\expred_{\contMdp[\sta'](\timeouts[s/\lowerDelayBound])} -\expred_{\contMdp[\sta'](\timeouts)} \leq  \tempBLower,$$
 where $\tempBLower$ is as in Lemma~\ref{lem:sink-value-bounds}, i.e.
$$\tempBLower= \lambda \lowerDelayBound \cdot \pmf_{\poiss(\lambda \lowerDelayBound)}(\geq 1) \cdot \Big (\maxValue{\contMdp} + \eps + \maxRew \cdot \Big (2 +\frac{2}{\lambda} \Big) \Big).$$
\end{lemma}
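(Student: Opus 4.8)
The plan is to first reduce the claim, which ranges over all initial states $\sta'\in\statesM$, to the single initial state $\sta$, and then to estimate the one-step effect of raising the delay at $\sta$ from $\timeouts(\sta)$ to $\lowerDelayBound$ by combining Lemmas~\ref{lem:sink-value-bounds} and~\ref{lem:value-optimality}.

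\emph{Reduction to $\sta'=\sta$.} Write $\timeoutsInfl:=\timeouts[\sta/\lowerDelayBound]$; recall this changes the delay at $\sta$ only on the \emph{first} visit to $\sta$ and agrees with $\timeouts$ everywhere else, and note $\sta\notin\goalStates$ (a sink consists of non-goal states). For $\sta'\neq\sta$ I would condition on the first moment at which $\{\sta\}\cup\goalStates$ is reached. Since $\timeouts$ and $\timeoutsInfl$ coincide on all states other than $\sta$, the expected cost accumulated before that moment and the probability $p\in[0,1]$ of reaching $\sta$ rather than $\goalStates$ are the same for both delay functions; hence, by the strong Markov property, $\expred_{\contMdp[\sta'](\timeoutsInfl)}-\expred_{\contMdp[\sta'](\timeouts)}=p\cdot(\expred_{\contMdp[\sta](\timeoutsInfl)}-\expred_{\contMdp[\sta](\timeouts)})$. (All expectations are finite because $\timeouts$ is globally $\epsSmall$-optimal and, by Lemma~\ref{lem:finite}, finiteness of the expected cost is independent of the delay function.) As $\tempBLower\geq 0$, it therefore suffices to prove $\expred_{\contMdp[\sta](\timeoutsInfl)}-\expred_{\contMdp[\sta](\timeouts)}\leq\tempBLower$.

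\emph{The state $\sta$.} Put $V:=\sinkVal{\sta}{\contMdp(\timeouts)}$, $E:=\sum_{t\in\Sink}\trans(\sta,t)\cdot\expred_{\contMdp[t](\timeouts)}$, and for $\delayBound>0$ let $A(\delayBound):=\pmf_{\poiss(\lambda\delayBound)}(0)\cdot E+\pmf_{\poiss(\lambda\delayBound)}(1)\cdot V = e^{-\lambda\delayBound}(E+\lambda\delayBound V)$ — this is the lower-bound term of Lemma~\ref{lem:sink-value-bounds} taken with $\sta$ as initial state. That lemma, applied with $\delayBound=\lowerDelayBound$, gives $\expred_{\contMdp[\sta](\timeoutsInfl)}\leq A(\lowerDelayBound)+\tempBLower$, and applied with $\delayBound=\timeouts(\sta)$ (using that $\timeouts$ and $\timeouts[\sta/\timeouts(\sta)]$ induce the same process) it gives $\expred_{\contMdp[\sta](\timeouts)}\geq A(\timeouts(\sta))$. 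Hence the lemma reduces to showing $A(\lowerDelayBound)\leq A(\timeouts(\sta))$. To this end, compute $A'(\delayBound)=\lambda e^{-\lambda\delayBound}(V(1-\lambda\delayBound)-E)$, invoke the first inequality of Lemma~\ref{lem:value-optimality} — which, for the chosen $\sta$ (the state of $\Sink$ with minimal $\sinkVal{\cdot}{\contMdp(\timeouts)}$), gives $\expred_{\contMdp[t](\timeouts)}\geq cV$ for every $t\in\Sink$, where $c:=\frac{\pmf_{\poiss(\lambda\lowerDelayBound)}(1)}{\pmf_{\poiss(\lambda\lowerDelayBound)}(\geq 1)}=\frac{\lambda\lowerDelayBound}{e^{\lambda\lowerDelayBound}-1}$, hence $E\geq cV$ — substitute this into $A'$ and integrate over $[\timeouts(\sta),\lowerDelayBound]$. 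Using that $(\lambda\delayBound+c)e^{-\lambda\delayBound}$ is an antiderivative of $\lambda e^{-\lambda\delayBound}(1-c-\lambda\delayBound)$ and the identity $(\lambda\lowerDelayBound+c)e^{-\lambda\lowerDelayBound}=c$ (immediate from the definition of $c$), this yields
$$A(\lowerDelayBound)-A(\timeouts(\sta))\;\leq\;V\left(c-(\lambda\timeouts(\sta)+c)e^{-\lambda\timeouts(\sta)}\right).$$
Since $x\mapsto x/(e^{x}-1)$ is strictly decreasing on $(0,\infty)$ and $0<\timeouts(\sta)<\lowerDelayBound$ (because $\sta$ lies in a sink \emph{bad} for $\timeouts$), we have $c=\frac{\lambda\lowerDelayBound}{e^{\lambda\lowerDelayBound}-1}\leq\frac{\lambda\timeouts(\sta)}{e^{\lambda\timeouts(\sta)}-1}$, which rearranges to $(\lambda\timeouts(\sta)+c)e^{-\lambda\timeouts(\sta)}\geq c$; as $V\geq 0$ the right-hand side above is nonpositive, so $A(\lowerDelayBound)\leq A(\timeouts(\sta))$, completing the proof.

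The step I expect to be the real obstacle is the final estimate $A(\lowerDelayBound)\leq A(\timeouts(\sta))$: the one-step-lookahead value $A(\delayBound)$ genuinely \emph{increases} with $\delayBound$ for small $\delayBound$ (whenever $\lambda\delayBound<1-c$, which is the case near $\timeouts(\sta)$), so this is not a pointwise monotonicity fact and is not visible from $A$ alone. It becomes available only after feeding in the sink lower bound $E\geq cV$ of Lemma~\ref{lem:value-optimality}, which is calibrated so that the increasing and decreasing portions of $A'$ on $[\timeouts(\sta),\lowerDelayBound]$ cancel exactly at the endpoint $\lowerDelayBound$. The remaining ingredients — the strong Markov reduction and the two applications of Lemma~\ref{lem:sink-value-bounds} — are routine.
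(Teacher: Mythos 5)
Your proposal is correct and follows essentially the same route as the paper: the same first-passage reduction of an arbitrary initial state $\sta'$ to the state $\sta$, the same two applications of Lemma~\ref{lem:sink-value-bounds} at delays $\timeouts(\sta)$ and $\lowerDelayBound$, and the same use of the first inequality of Lemma~\ref{lem:value-optimality} for the minimizing state of the bad sink. The only difference is cosmetic: you establish $A(\lowerDelayBound)\le A(\timeouts(\sta))$ by differentiating and integrating, whereas the paper compares the two values by direct algebraic manipulation, but both arguments reduce to the monotonicity of $x\mapsto x/(e^x-1)$ and yield the identical bound $V\cdot\bigl(c\,(1-e^{-\lambda\timeouts(\sta)})-\lambda\timeouts(\sta)\,e^{-\lambda\timeouts(\sta)}\bigr)\le 0$.
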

 
\begin{proof}
We first prove the lemma for state $\sta$. Observe that since we start in state $\sta$, then according to delay function $\timeouts[s/\lowerDelayBound]$ we make a change of delay only in the fist visit of $\sta$ and then we proceed with original delay function. Thus, using Lemma~\ref{lem:sink-value-bounds} we start evaluating the left hand side:

\begin{align*}
&\expred_{\contMdp[\sta](\timeouts[s/\lowerDelayBound])} -\expred_{\contMdp[\sta](\timeouts)} \leq \\
&\leq \pmf_{\poiss(\lambda \lowerDelayBound)}(0) \cdot \sum_{\sta' \in \Sink} \trans(\sta,\sta') \expred_{\contMdp[\sta'](\timeouts)} + \pmf_{\poiss(\lambda \lowerDelayBound)}(1) \cdot \sinkVal{\sta}{\contMdp(\timeouts)} + \tempBLower -\expred_{\contMdp[\sta](\timeouts)}\\
&\leq \pmf_{\poiss(\lambda \lowerDelayBound)}(0) \cdot \sum_{\sta' \in \Sink} \trans(\sta,\sta') \expred_{\contMdp[\sta'](\timeouts)} + \pmf_{\poiss(\lambda \lowerDelayBound)}(1) \cdot \sinkVal{\sta}{\contMdp(\timeouts)} + \tempBLower\\
&\quad- \pmf_{\poiss(\lambda \timeouts(\sta))}(0) \cdot \sum_{\sta' \in \Sink} \trans(\sta,\sta') \expred_{\contMdp[\sta'](\timeouts)} + \pmf_{\poiss(\lambda \timeouts(\sta))}(1) \cdot \sinkVal{\sta}{\contMdp(\timeouts)}\\
&= \big (\pmf_{\poiss(\lambda \lowerDelayBound)}(0) - \pmf_{\poiss(\lambda \timeouts(\sta))}(0) \big ) \cdot \sum_{\sta' \in \Sink} \trans(\sta,\sta') \expred_{\contMdp[\sta'](\timeouts)} \\
 &\quad+ \big (\pmf_{\poiss(\lambda \lowerDelayBound)}(1) - \pmf_{\poiss(\lambda \timeouts(\sta))}(1) \big ) \cdot \sinkVal{\sta}{\contMdp(\timeouts)} + \tempBLower.
\end{align*}
From the definition of a bad sink it holds that $\lowerDelayBound > \timeouts(\sta)$ and thus $\pmf_{\poiss(\lambda \lowerDelayBound)}(0) - \pmf_{\poiss(\lambda \timeouts(\sta))}(0) < 0.$ From Lemma~\ref{lem:value-optimality} it further follows that for all $\sta' \in \Sink$

$$\expred_{\mdp[\sta'](\timeouts)} \geq \frac{\pmf_{\poiss(\lambda \lowerDelayBound)}(1)}{(1-\pmf_{\poiss(\lambda \lowerDelayBound)}(0))} \cdot \sinkVal{\sta}{\contMdp(\timeouts)}.$$ 
Thus
\begin{align*}
&\expred_{\mdp[\sta](\timeouts[s/\lowerDelayBound])} -\expred_{\mdp[\sta](\timeouts)} \leq \\
&\leq \big (\pmf_{\poiss(\lambda \lowerDelayBound)}(0) - \pmf_{\poiss(\lambda \timeouts(\sta))}(0) \big ) \cdot \sum_{\sta' \in \Sink} \trans(\sta,\sta') \expred_{\contMdp[\sta'](\timeouts)} \\
 &\quad+ \big (\pmf_{\poiss(\lambda \lowerDelayBound)}(1) - \pmf_{\poiss(\lambda \timeouts(\sta))}(1) \big ) \cdot \sinkVal{\sta}{\contMdp(\timeouts)} + \tempBLower\\
&\leq \big (\pmf_{\poiss(\lambda \lowerDelayBound)}(0) - \pmf_{\poiss(\lambda \timeouts(\sta))}(0) \big ) \cdot \frac{\pmf_{\poiss(\lambda \lowerDelayBound)}(1)}{(1-\pmf_{\poiss(\lambda \lowerDelayBound)}(0))} \cdot \sinkVal{\sta}{\contMdp(\timeouts)} \\
 &\quad+ \big (\pmf_{\poiss(\lambda \lowerDelayBound)}(1) - \pmf_{\poiss(\lambda \timeouts(\sta))}(1) \big ) \cdot \sinkVal{\sta}{\contMdp(\timeouts)} + \tempBLower\\
&= \Big ( \big (\pmf_{\poiss(\lambda \lowerDelayBound)}(0) - \pmf_{\poiss(\lambda \timeouts(\sta))}(0) \big ) \cdot \frac{\pmf_{\poiss(\lambda \lowerDelayBound)}(1)}{(1-\pmf_{\poiss(\lambda \lowerDelayBound)}(0))} + \big (\pmf_{\poiss(\lambda \lowerDelayBound)}(1) - \pmf_{\poiss(\lambda \timeouts(\sta))}(1) \big ) \Big ) \cdot \sinkVal{\sta}{\contMdp(\timeouts)} \\
&\quad+\tempBLower.\\
\end{align*}
To get the lemma it now suffices to show that the following holds
$$\Big ( \big (\pmf_{\poiss(\lambda \lowerDelayBound)}(0) - \pmf_{\poiss(\lambda \timeouts(\sta))}(0) \big ) \cdot \frac{\pmf_{\poiss(\lambda \lowerDelayBound)}(1)}{(1-\pmf_{\poiss(\lambda \lowerDelayBound)}(0))} + \big (\pmf_{\poiss(\lambda \lowerDelayBound)}(1) - \pmf_{\poiss(\lambda \timeouts(\sta))}(1) \big ) \Big ) \leq 0.$$
For better readability, we use abbreviations $\alpha(i)= \pmf_{\poiss(\lambda \lowerDelayBound)}(i)$ and $\beta(i) = \pmf_{\poiss(\lambda \timeouts(\sta))}(1)$ for $i \in \{ 0,1 \}$. Now we are ready to show that the above inequality holds:

\begin{align*}
(\alpha(0)-\beta(0))\cdot \frac{\alpha(1)}{1-\alpha(0)} + \alpha(1) - \beta(1) &< 0 \\
\alpha(0)\alpha(1) - \alpha(1)\beta(0) + \alpha(1) -\beta(1) -\alpha(0)\alpha(1) + \alpha(0)\beta(1) &\leq 0 \\
- \alpha(1)\beta(0) + \alpha(1) -\beta(1) + \alpha(0)\beta(1) &\leq 0 \\
\alpha(1) - \alpha(1)\beta(0) &\leq \beta(1) - \alpha(0)\beta(1) \\
\alpha(1) \cdot(1- \beta(0)) &\leq \beta(1) \cdot (1 - \alpha(0)) \\
\frac{\alpha(1)}{1 - \alpha(0)} &\leq \frac{\beta(1)}{1- \beta(0)} \\
\frac{\pmf_{\poiss(\lambda \lowerDelayBound)}(1)}{1 - \pmf_{\poiss(\lambda \lowerDelayBound)}(0)} &\leq \frac{\pmf_{\poiss(\lambda \timeouts(\sta))}(1)}{1- \pmf_{\poiss(\lambda \timeouts(\sta))}(0)},\\
\frac{\lambda \lowerDelayBound \cdot \e^{-\lambda \lowerDelayBound}}{\e^{-\lambda \lowerDelayBound} \cdot \sum_{i=1}^{\infty} \frac{(\lambda \lowerDelayBound)^i}{i!}} &\leq \frac{\lambda \timeouts(\sta) \cdot \e^{-\lambda \timeouts(\sta)}}{\e^{-\lambda \timeouts(\sta)} \cdot \sum_{i=1}^{\infty} \frac{(\lambda \timeouts(\sta))^i}{i!}} \\
\frac{1}{\sum_{i=1}^{\infty} \frac{(\lambda \lowerDelayBound)^{i-1}}{i!}} &\leq \frac{1}{\sum_{i=1}^{\infty} \frac{(\lambda \timeouts(\sta))^{i-1}}{i!}} \\
\end{align*}
where the last inequality follows from definition of bad sink that $0 <\timeouts(\sta) < \lowerDelayBound$.

We finished the proof for state $\sta$. It remains to show that the increase of value for any other state $\sta' \in \statesM \setminus \{\sta\}$ is smaller or equal to $\tempBLower$. For every state $\sta' \in \statesM \setminus \{\sta\}$ it holds

\begin{align*}
\expred_{\contMdp[\sta'](\timeouts)} &= \probm_{\contMdp(\timeouts)}(\jumpChain^{\mdp}_{\hit_1^{\{\sta\}}}=\sta) \cdot \expected_{\contMdp[\sta'](\timeouts)}[\costFdC \mid \jumpChain^{\mdp}_{\hit_1^{\{\sta\}}}=\sta] \\
 &\quad+ (1-\probm_{\contMdp(\timeouts)}(\jumpChain^{\mdp}_{\hit_1^{\{\sta\}}}=\sta)) \cdot \expected_{\contMdp[\sta'](\timeouts)}[\costFdC \mid \jumpChain^{\mdp}_{\hit_1^{\{\sta\}}}\neq\sta] \\
&= \probm_{\contMdp(\timeouts)}(\jumpChain^{\mdp}_{\hit_1^{\{\sta\}}}=\sta) \\
&\quad\cdot \big ( \text{(expected cost until reaching $\sta$ in $\contMdp(\timeouts) \mid \jumpChain^{\mdp}_{\hit_1^{\{\sta\}}}=\sta$)} + \expred_{\contMdp[\sta](\timeouts)} \big ) \\
 &\quad+ (1-\probm_{\contMdp(\timeouts)}(\jumpChain^{\mdp}_{\hit_1^{\{\sta\}}}=\sta)) \cdot \expected_{\contMdp[\sta'](\timeouts)}[\costFdC \mid \jumpChain^{\mdp}_{\hit_1^{\{\sta\}}} \neq \sta] \\
\end{align*}
and similarly
\begin{align*}
&\expred_{\contMdp[\sta'](\timeouts[s/\lowerDelayBound])} = \\ &=\probm_{\contMdp(\timeouts)}(\jumpChain^{\mdp}_{\hit_1^{\{\sta\}}}=\sta) \cdot \big ( \text{(expected cost until reaching $\sta$ in $\contMdp(\timeouts)\mid \jumpChain^{\mdp}_{\hit_1^{\{\sta\}}}=\sta$)} + \expred_{\contMdp[\sta](\timeouts[s/\lowerDelayBound])} \big ) \\
 &\quad+ (1-\probm_{\contMdp(\timeouts)}(\jumpChain^{\mdp}_{\hit_1^{\{\sta\}}}=\sta)) \cdot \expected_{\contMdp[\sta'](\timeouts[s/\lowerDelayBound])}[\costFdC \mid \jumpChain^{\mdp}_{\hit_1^{\{\sta\}}}\neq\sta].\\
\end{align*}
The last equality follows from the fact that $\contMdp(\timeouts[s/\lowerDelayBound])$ behaves exactly in the same way as $\contMdp(\timeouts)$ until reaching $\sta$. Thus for each state $\sta' \in \statesM \setminus \{\sta\}$ we are getting
$$\expred_{\contMdp[\sta'](\timeouts[s/\lowerDelayBound])} -\expred_{\contMdp[\sta'](\timeouts)} \leq \probm_{\contMdp(\timeouts)}(\jumpChain^{\mdp}_{\hit_1^{\{\sta\}}}=\sta) \cdot \tempBLower \leq \tempBLower,$$ 
what completes the proof.
\end{proof} 

By Lemma~\ref{lem:one-step-error} we have that the increase in expected cost is bounded for each state if we increase delay of state $\sta$ with minimal value in first visit. We show that we have positive probability that we reach target before returning to $\sta$. First we will prove basic lemma what we will need on the way. 

\begin{lemma} \label{lem:expFuncBound}
For all $x \in \Rsetpo$ it holds that $\e^{x} - 1 \leq x \cdot \e^x$.
\end{lemma}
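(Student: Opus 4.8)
The plan is to reduce this to the elementary monotonicity of the exponential function. First I would rewrite the left-hand side as an integral, $\e^x - 1 = \int_0^x \e^t\,dt$. Since $\e^t \leq \e^x$ for every $t$ in the interval of integration $[0,x]$ (here non-negativity of $x$ is used to guarantee $t \leq x$ throughout the range), the integral is bounded above by $\int_0^x \e^x\,dt = x\cdot\e^x$, which is precisely the claimed inequality. For $x = 0$ both sides equal $0$, so the bound holds with equality at the origin.

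Two equivalent routes would also work. One is a power-series comparison: for $x \geq 0$ every term below is non-negative, and
\[
\e^x - 1 \;=\; \sum_{i=1}^{\infty}\frac{x^i}{i!} \;\leq\; \sum_{i=1}^{\infty}\frac{x^i}{(i-1)!} \;=\; x\cdot\sum_{i=1}^{\infty}\frac{x^{i-1}}{(i-1)!} \;=\; x\cdot\e^x,
\]
the middle step holding termwise because $\frac{1}{(i-1)!} \geq \frac{1}{i!}$ for all $i \geq 1$. The other is to set $g(x) = (x-1)\e^x + 1$, note $g(0) = 0$ and $g'(x) = x\,\e^x \geq 0$ on $[0,\infty)$, conclude $g$ is non-decreasing there, and hence $g(x) \geq g(0) = 0$, which rearranges to the statement.

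There is essentially no obstacle; the only point that needs care is that all three arguments rely on $x \geq 0$ (for the monotone bound on $\e^t$, for the non-negativity of the series terms, and for the sign of $g'$, respectively), and indeed the inequality genuinely fails for $x < 0$, so the hypothesis $x \in \Rsetpo$ cannot be dropped.
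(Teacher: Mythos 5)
Your argument is correct, and all three of your variants go through. The paper's own proof is a derivative-comparison argument: it observes that both $\e^x-1$ and $x\cdot\e^x$ vanish at $x=0$ and that the derivative of the right-hand side dominates the derivative of the left-hand side on $\Rsetpo$ (equivalently, $1\leq x^2-x+1$ after dividing by $\e^x$; the derivatives as printed in the paper contain typos, but this is the intended argument). Your third route, via $g(x)=(x-1)\e^x+1$ with $g(0)=0$ and $g'(x)=x\e^x\geq 0$, is essentially this same monotonicity argument packaged as a single function. Your lead argument, writing $\e^x-1=\int_0^x \e^t\,dt$ and bounding the integrand by $\e^x$, is a genuinely different and somewhat cleaner route: it avoids any discussion of which side grows faster and gives the bound in one line, at the cost of invoking the fundamental theorem of calculus rather than only differentiation. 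The termwise power-series comparison is a third, purely algebraic alternative that needs no calculus at all. Any of these suffices for the paper's purposes, and you correctly identify that $x\geq 0$ is essential in each of them.
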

\begin{proof}
The derivation of $\e^x - 1$ is $x \cdot \e^x$. The derivation of $x \cdot \e^x$ is $x^2 \cdot \e^x + \e^x$. Observe that for all $x \in \Rsetpo$ it holds that
\begin{align*}
x \cdot \e^x &\leq x^2 \cdot \e^x + \e^x \\
x &\leq x^2 + 1,
\end{align*}
what implies that $\e^x - 1$ increases slower than $x \cdot \e^x$ on positive real numbers. Since both $\e^x - 1$ and $x \cdot \e^x$ equal to zero at $x =0$, it must hold that $\e^x - 1 \leq x \cdot \e^x$ for all $x \in \Rsetpo$. 
\end{proof}

\begin{lemma}
\label{lem:possitive-leaving-probability}
Let $\epsSmall >0$ be arbitrary, and let $\timeouts$ be any globally $\epsSmall$-optimal delay function for $\contMdp$, $\Sink \subseteq \statesM$ is sink bad for $\timeouts$, and $\sta \in \Sink$ is state with minimal $\sinkVal{\sta}{\contMdp(\timeouts)}$ among states in $\Sink$. It holds that
$$p_{\sta} \geq \pmf_{\poiss(\lambda \lowerDelayBound)}(1) \cdot \frac{\minRew/(2\cdot \lambda)}{\minRew/\lambda + \maxValue{\contMdp}+\epsSmall} ,$$
where $p_{\sta}$ is the probability that goal state is reached before returning to $\sta$ if run starts in $\sta$, i.e. 
$$p_{\sta} = \probm_{\contMdp(\timeouts)}(\jumpChain^{\mdp(\timeouts)}_n \in \mlocs', \forall l. i<l<j. \jumpChain^{\mdp(\timeouts)}_l \not = \sta \mid \jumpChain^{\mdp(\timeouts)}_i = \sta).$$
\end{lemma}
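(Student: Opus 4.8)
The plan is to bound $p_\sta$ from below by a renewal argument in $\contMdp(\timeouts)$. Since $\timeouts$ is globally $\epsSmall$-optimal, $\expred_{\contMdp[\sta](\timeouts)}$ is finite and at most $\maxValue{\contMdp}+\epsSmall$. Let $c_\sta$ be the expected cost accumulated from $\sta$ in $\contMdp(\timeouts)$ until the first moment at which either a goal state is reached or $\sta$ is revisited. By the strong Markov property (the DTMDP being memoryless) we get $\expred_{\contMdp[\sta](\timeouts)}=c_\sta+(1-p_\sta)\cdot\expred_{\contMdp[\sta](\timeouts)}$, i.e. $p_\sta=c_\sta/\expred_{\contMdp[\sta](\timeouts)}$. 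So it suffices to lower-bound the expected single-excursion cost $c_\sta$; the stated denominator $\minRew/\lambda+\maxValue{\contMdp}+\epsSmall$ (rather than merely $\maxValue{\contMdp}+\epsSmall$) comes out of replacing the crude renewal step by the finer decomposition already established in Lemma~\ref{lem:sink-value-bounds}, which separates the contribution of the first exponential sojourn out of $\sta$ from the remaining cost.

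For the lower bound on the excursion cost I would isolate a ``good'' event $A$ happening during a single excursion from $\sta$. Starting in $\sta\in\Sink$, the bad sink $\Sink$ is left only via an exponential transition (fixed-delay transitions stay inside $\Sink$, as $\Sink$ is a BSCC of the fd-skeleton), and on such a transition the process enters either a state outside $\allact$ or, by Lemma~\ref{lem:value-optimality}, a state whose value is bounded away from $\sinkVal{\sta}{\contMdp(\timeouts)}$; from such a state the process performs a full exponential sojourn of expected duration $1/\lambda$, hence accumulates rate cost at least $\minRew/(2\lambda)$ before the next jump (the factor $2$ absorbing the conditioning on $A$ and using $e^{\lambda\lowerDelayBound}\le 2$, which holds by~\eqref{eq:dmin-def}). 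The probability of reaching such a state before returning to $\sta$ is at least $\pmf_{\poiss(\lambda\lowerDelayBound)}(1)$: from a state with delay at least $\lowerDelayBound$, or with the delay switched off, the probability that at least one exponential transition precedes the fixed-delay transition is $1-e^{-\lambda d}\ge 1-e^{-\lambda\lowerDelayBound}\ge\lambda\lowerDelayBound\,e^{-\lambda\lowerDelayBound}=\pmf_{\poiss(\lambda\lowerDelayBound)}(1)$, the middle inequality being precisely Lemma~\ref{lem:expFuncBound} in the form $e^{x}-1\ge x$. Combining these two estimates with the upper bound on $\expred_{\contMdp[\sta](\timeouts)}$ (and the refined decomposition mentioned above to produce the $\minRew/\lambda$ term in the denominator) yields the claimed inequality.

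The main obstacle is the excursion analysis itself: one must verify carefully that, before returning to $\sta$, the process reaches with probability at least $\pmf_{\poiss(\lambda\lowerDelayBound)}(1)$ a state from which the full exponential sojourn above is made. This is where the delicacy of bad sinks enters — exponential transitions from $\sta$ may land in $S^{keep}$ states that retain the old clock, whose own fixed-delay transitions may either re-enter $\Sink$ or leave $\brambory$ elsewhere, while the fixed-delay transitions inside $\Sink$ form an irreducible chain that tends to bring the process back to $\sta$. Pinning down the exact constant $\pmf_{\poiss(\lambda\lowerDelayBound)}(1)\cdot\minRew/(2\lambda)/(\minRew/\lambda+\maxValue{\contMdp}+\epsSmall)$ requires invoking Lemmas~\ref{lem:sink-value-bounds} and~\ref{lem:value-optimality}, the global $\epsSmall$-optimality of $\timeouts$, and the smallness of $\lowerDelayBound$; the remaining computations (the conditional-expectation rearrangements and the monotonicity estimates for $x\mapsto x e^{-x}$ and $x\mapsto \pmf_{\poiss(\lambda x)}(\ge 2)/\pmf_{\poiss(\lambda x)}(1)$ on $(0,\lowerDelayBound]$) are routine and mirror those carried out in the proofs of Lemmas~\ref{lem:value-optimality} and~\ref{lem:one-step-error}.
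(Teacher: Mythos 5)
Your renewal identity $p_\sta = c_\sta/\expred_{\contMdp[\sta](\timeouts)}$ is fine, but the step that carries all the weight --- the lower bound on the excursion cost $c_\sta$, equivalently the probability $\pmf_{\poiss(\lambda \lowerDelayBound)}(1)$ of your ``good'' event --- does not hold as argued. You justify that probability by the estimate $1-e^{-\lambda d}\ge 1-e^{-\lambda\lowerDelayBound}$ ``from a state with delay at least $\lowerDelayBound$, or with the delay switched off''. But $\sta$ and every state of the bad sink $\Sink$ have, by the very definition of a bad sink, delays \emph{strictly below} $\lowerDelayBound$, and an excursion from $\sta$ can return to $\sta$ entirely inside $\Sink$ through its zero-cost fixed-delay transitions without ever visiting a state with a large delay or with the delay off. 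The probability that even one exponential transition occurs before the first fixed-delay jump out of $\sta$ is $1-e^{-\lambda\timeouts(\sta)}$, which can be arbitrarily smaller than $\pmf_{\poiss(\lambda\lowerDelayBound)}(1)$, so this event cannot give a bound on $c_\sta$ (or on $p_\sta$) that is uniform in $\timeouts(\sta)$. Moreover, the claim that after one exponential transition the process ``performs a full exponential sojourn of expected duration $1/\lambda$'' and hence pays at least $\minRew/(2\lambda)$ before the next jump fails when that transition lands in an $S^{keep}$ state: such a state keeps the residual clock, which is below $\lowerDelayBound$, so the next (fixed-delay) jump can come almost immediately.

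What is missing is exactly the mechanism of the paper's proof, which is not a refinement of the renewal step: the hypothesis that $\sta$ minimizes $\sinkVal{\cdot}{\contMdp(\timeouts)}$ over $\Sink$ has to do real work. The paper conditions on the event that \emph{exactly one} exponential transition falls into the window at $\sta$ (this supplies the Poisson prefactor) and then compares $\sinkVal{\sta}{\contMdp(\timeouts)}$ with the cost of the complementary outcome ``return to $\Sink$ instead of escaping to $\goalStates$'': a return forces the process to pay $\tempCost\ge \minRew/\lambda-\maxRew\cdot\lowerDelayBound$ while waiting in the sink for the next exponential transition, on top of a continuation value which, by Lemma~\ref{lem:value-optimality}, is at least $\sinkVal{\sta}{\contMdp(\timeouts)}$ minus a term made small by the choice of $\lowerDelayBound$. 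Solving the resulting self-referential inequality for the conditional escape probability is what produces $\frac{\minRew/(2\lambda)}{\minRew/\lambda+\maxValue{\contMdp}+\epsSmall}$; the $\minRew/\lambda$ in the denominator is $\tempCost$, not a by-product of Lemma~\ref{lem:sink-value-bounds} as you guess, and global $\epsSmall$-optimality enters only through $\expred_{\contMdp[\cdot](\timeouts)}\le\maxValue{\contMdp}+\epsSmall$. Your sketch cites Lemmas~\ref{lem:sink-value-bounds} and~\ref{lem:value-optimality} but never sets up this comparison (you invoke Lemma~\ref{lem:value-optimality} for states reached by exponential transitions, about which it says nothing), and the ``main obstacle'' you defer is in fact the entire proof; as written, your route would at best yield a prefactor depending on $\timeouts(\sta)$ rather than on $\lowerDelayBound$.
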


\begin{proof}
First we fix sink $\Sink$ bad for $\timeouts$ and $\sta \in \sink$ that has minimal $\sinkValue_{\contMdp(\timeouts)}$ in $\Sink$. Assume we have fdCTMC $\fdC'(\timeouts)$, where $\fdC'$ is same to fdCTMC from which the $\contMdp$ was created except that the initial state is $\sta$. Obviously it holds that
$$p_{\sta} = \probm_{\fdC'(\timeouts)}(\jumpChain^{\fdC'(\timeouts)}_n \in \goalStates, \forall l. 0<l<j. \jumpChain^{\fdC'(\timeouts)}_l \not = \sta).$$
Thus in the remaining part of this proof we will use mostly the $\fdC'(\timeouts)$. First we provide necessary definitions. The next property denotes the probability of reaching $\goalStates$ before $\Sink$ conditioned by the fact that exactly one exponential step was done from state $\sta$  in $\fdC'(\timeouts)$:
$$p_{\sta}'= \probm_{\fdC'(\timeouts)}(\jumpChain^{\fdC'(\timeouts)}_n \in \goalStates, \forall l. 0<l<j. \jumpChain^{\fdC'(\timeouts)}_l \not \in \Sink \mid \timeChain^{\fdC'(\timeouts)}_1<\timeouts(\sta), \jumpChain^{\fdC'(\timeouts)}_1 + \jumpChain^{\fdC'(\timeouts)}_2 \geq \timeouts(\sta)).$$
We denote by $\costFdCA$ and $\costFdCB$ the random variables assigning to each run $\ctrun = (\sta_0,\delays_0) t_0 \cdots$ the \emph{total cost after first state from $\statesM$ and before reaching $\goalStates$} and \emph{total cost after first state from $\statesM$ and before reaching state from $\Sink$ from which an exponential transition is taken}, given by
\begin{align*}
\costFdCA(\ctrun)
= \begin{cases}
\sum_{i=j}^{n-1} \left(
t_i\cdot\rateRew(s_i) +
\impRew_i(\ctrun)
\right)
 & \text{for minimal $n>0$ such that $s_n\in\goalStates$,} \\
 &\text{$\jumpChain^{\fdC'(\timeouts)}_j \in \statesM$, and $\forall l. 0<l<j. \jumpChain^{\fdC'(\timeouts)}_l \not \in \statesM $,} \\
\infty & \text{if there is no such $n$,}
\end{cases}
\end{align*}
and
\begin{align*}
&\costFdCB(\ctrun) = \\
&= \begin{cases}
\sum_{i=j}^{n-1} \left(
t_i\cdot\rateRew(s_i) +
\impRew_i(\ctrun)
\right)
 & \text{for minimal $n>0$ such that $\sta_n\in\Sink$, $\timeChain^{\fdC'(\timeouts)}_n<\timeouts(\sta_n)$,} \\
 &\text{$\jumpChain^{\fdC'(\timeouts)}_j \in \statesM$, and $\forall l. 0<l<j. \jumpChain^{\fdC'(\timeouts)}_l \not \in \statesM $,} \\
\infty & \text{if there is no such $n$,}
\end{cases}
\end{align*}
respectively, where $\impRew_i(\ctrun)$ equals $\impRewExp(s_i,s_{i+1})$ for an exp-delay transition, i.e. when $t_i < \delays_i$, and
equals $\impRewFix(s_i,s_{i+1})$
for a fixed-delay transition, i.e. when $t_i = \delays_i$.

Now we evaluate the following basic property:

\begin{align*}
&\sinkValue_{\contMdp(\timeouts)}^{\sta} \geq \\
&\geq p_{\sta}' \cdot \expected_{\fdC'(\timeouts)}[\costFdCA \mid \text{reach $\goalStates$ before  $\Sink$ and $\timeChain^{\fdC'(\timeouts)}_1<\timeouts(\sta), \jumpChain^{\fdC'(\timeouts)}_1 + \jumpChain^{\fdC'(\timeouts)}_2 \geq \timeouts(\sta)$}]\\
 &\quad+ (1-p_{\sta}') \cdot \expected_{\fdC'(\timeouts)}[\costFdCA \mid \text{reach $\Sink$ before  $\goalStates$ and $\timeChain^{\fdC'(\timeouts)}_1<\timeouts(\sta), \jumpChain^{\fdC'(\timeouts)}_1 + \jumpChain^{\fdC'(\timeouts)}_2 \geq \timeouts(\sta)$}]\\ 
&\geq (1-p_{\sta}') \cdot \expected_{\fdC'(\timeouts)}[\costFdCA \mid \text{reach $\Sink$ before  $\goalStates$ and $\timeChain^{\fdC'(\timeouts)}_1<\timeouts(\sta), \jumpChain^{\fdC'(\timeouts)}_1 + \jumpChain^{\fdC'(\timeouts)}_2 \geq \timeouts(\sta)$}]\\ 
&= (1-p_{\sta}') \cdot \Big (\tempCost + \sum_{\sta' \in \Sink} \pi(\sta') \cdot \expred_{\contMdp[\sta'](\timeouts)} \Big) \\
&= \tempCost + \sum_{\sta' \in \Sink} \pi(\sta') \cdot \expred_{\contMdp[\sta'](\timeouts)} -p_{\sta}' \cdot \Big (\tempCost + \sum_{\sta' \in \Sink} \pi(\sta') \cdot \expred_{\contMdp[\sta'](\timeouts)} \Big ) \\
&\geq \tempCost + \sum_{\sta' \in \Sink} \pi(\sta') \cdot \Big ( \sinkValue_{\contMdp(\timeouts)}^{\sta} - \frac{\pmf_{\poiss(\lambda \lowerDelayBound)}(\geq 2) \cdot (\maxValue{\contMdp}+\epsSmall)}{\pmf_{\poiss(\lambda \lowerDelayBound)}(1)} \Big) \\
&\quad- p_{\sta}' \cdot \Big ( \tempCost + \sum_{\sta' \in \Sink} \pi(\sta') \cdot (\maxValue{\contMdp}+\epsSmall) \Big) \\
&= \tempCost + \sinkValue_{\contMdp(\timeouts)}^{\sta} - \frac{\pmf_{\poiss(\lambda \lowerDelayBound)}(\geq 2) \cdot (\maxValue{\contMdp}+\epsSmall)}{\pmf_{\poiss(\lambda \lowerDelayBound)}(1)} - p_{\sta}' \cdot ( \tempCost + \maxValue{\contMdp}+\epsSmall ) 
\end{align*}
where 
\begin{align*}
\tempCost = \expected_{\fdC'(\timeouts)}[\costFdCB \mid \text{reach $\Sink$ before  $\goalStates$ and $\timeChain^{\fdC'(\timeouts)}_1<\timeouts(\sta), \jumpChain^{\fdC'(\timeouts)}_1 + \jumpChain^{\fdC'(\timeouts)}_2 \geq \timeouts(\sta)$}],
\end{align*}
and for $\sta' \in \Sink$ 

\begin{align*}
\pi(\sta') = \probm_{\fdC'(\timeouts)} &(\text{leaving from state $\sta'$ after returning to } \Sink \\
&\quad\mid \text{reach $\Sink$ before  $\goalStates$ and $\timeChain^{\fdC'(\timeouts)}_1<\timeouts(\sta), \jumpChain^{\fdC'(\timeouts)}_1 + \jumpChain^{\fdC'(\timeouts)}_2 \geq \timeouts(\sta)$}).
\end{align*}
The last inequality follows from Lemma~\ref{lem:value-optimality}. By further reduction we are getting:
\begin{align*}
\sinkValue_{\contMdp(\timeouts)}^{\sta} &\geq \tempCost + \sinkValue_{\contMdp(\timeouts)}^{\sta} - \frac{\pmf_{\poiss(\lambda \lowerDelayBound)}(\geq 2) \cdot (\maxValue{\contMdp}+\epsSmall)}{\pmf_{\poiss(\lambda \lowerDelayBound)}(1)} - p_{\sta}' \cdot ( \tempCost + \maxValue{\contMdp}+\epsSmall ) \\
p_{\sta}' \cdot ( \tempCost + \maxValue{\contMdp}+\epsSmall ) &\geq \tempCost - \frac{\pmf_{\poiss(\lambda \lowerDelayBound)}(\geq 2) \cdot (\maxValue{\contMdp}+\epsSmall)}{\pmf_{\poiss(\lambda \lowerDelayBound)}(1)} \\ 
p_{\sta}' &\geq \frac{\tempCost - \frac{\pmf_{\poiss(\lambda \lowerDelayBound)}(\geq 2) \cdot (\maxValue{\contMdp}+\epsSmall)}{\pmf_{\poiss(\lambda \lowerDelayBound)}(1)}}{\tempCost + \maxValue{\contMdp}+\epsSmall }.
\end{align*}

Observe that $\tempCost$ is actually a cost accumulated until getting to state in $\Sink$ from which an exponential transition is taken. We know that exponential transition is memoryless, thus the expected time until exponential transition occurs is $1/\lambda$. Thus it holds that $\tempCost \geq \minRew/\lambda - \maxRew \cdot \lowerDelayBound$. From conditional probability and plugging in the bound for $\tempCost$ we get:
\begin{align}
p_{\sta} &\geq \probm_{\fdC'(\timeouts)}(\timeChain^{\fdC'(\timeouts)}_1<\timeouts(\sta), \jumpChain^{\fdC'(\timeouts)}_1 + \jumpChain^{\fdC'(\timeouts)}_2 \geq \timeouts(\sta)) \cdot p_{\sta}' \nonumber \\
&\geq \pmf_{\poiss(\lambda \lowerDelayBound)}(1) \cdot \frac{\minRew/\lambda - \maxRew \cdot \lowerDelayBound - \frac{\pmf_{\poiss(\lambda \lowerDelayBound)}(\geq 2) \cdot (\maxValue{\contMdp}+\epsSmall)}{\pmf_{\poiss(\lambda \lowerDelayBound)}(1)}}{\minRew/\lambda - \maxRew \cdot \lowerDelayBound + \maxValue{\contMdp}+\epsSmall } \label{eq:probBoundA}
\end{align}
We now simplify the equation \eqref{eq:probBoundA} by inserting the value for $\lowerDelayBound$ in it. We first evaluate following:
\begin{align}
\frac{\minRew} {4 \cdot \lambda} - \maxRew \cdot \lowerDelayBound &= \frac{\minRew} {4 \cdot \lambda} - \maxRew \cdot \frac{\epsSmall \cdot \minRew}{4 \cdot \lambda^2 \cdot (\maxValue{\contMdp} + \epsSmall + \maxRew \cdot (2 + 2/\lambda))^2} \nonumber \\
&\geq \frac{\minRew} {4 \cdot \lambda} - \maxRew \cdot \frac{\epsSmall \cdot \minRew}{4 \cdot \lambda^2 \cdot (\epsSmall +  \maxRew /\lambda)^2} \nonumber \\
&\geq \frac{\minRew} {4 \cdot \lambda} - \maxRew \cdot \frac{\epsSmall \cdot \minRew}{4 \cdot \lambda^2 \cdot (2 \cdot \epsSmall \cdot \maxRew /\lambda)} \nonumber \\
&\geq 0,\label{eq:probBoundAA}
\end{align} 
Now we bound the term $\frac{\pmf_{\poiss(\lambda \lowerDelayBound)}(\geq 2) \cdot (\maxValue{\contMdp}+\epsSmall)}{\pmf_{\poiss(\lambda \lowerDelayBound)}(1)}$ from equation \eqref{eq:probBoundA}:
\begin{align}
\frac{\pmf_{\poiss(\lambda \lowerDelayBound)}(\geq 2) \cdot (\maxValue{\contMdp}+\epsSmall)}{\pmf_{\poiss(\lambda \lowerDelayBound)}(1)} &= \frac{(1- \e^{-\lambda \lowerDelayBound} - \lambda \lowerDelayBound \cdot \e^{-\lambda \lowerDelayBound}) \cdot (\maxValue{\contMdp}+\epsSmall)}{\lambda \lowerDelayBound \cdot \e^{-\lambda \lowerDelayBound}} \nonumber \\
&= \frac{(\e^{\lambda \lowerDelayBound} - 1 - \lambda \lowerDelayBound ) \cdot (\maxValue{\contMdp}+\epsSmall)}{\lambda \lowerDelayBound } \label{eq:probBoundB} \\
&\leq \frac{(\lambda \lowerDelayBound \cdot \e^{\lambda \lowerDelayBound} - \lambda \lowerDelayBound ) \cdot (\maxValue{\contMdp}+\epsSmall)}{\lambda \lowerDelayBound } \nonumber \\
&= (\e^{\lambda \lowerDelayBound} - 1 ) \cdot (\maxValue{\contMdp}+\epsSmall) \label{eq:probBoundC}
\end{align}
We use Lemma~\ref{lem:expFuncBound} in the inequality \eqref{eq:probBoundB}. In inequality~\eqref{eq:probBoundC} we plug in the definition of $\lowerDelayBound$:
\begin{align}
\frac{\pmf_{\poiss(\lambda \lowerDelayBound)}(\geq 2) \cdot (\maxValue{\contMdp}+\epsSmall)}{\pmf_{\poiss(\lambda \lowerDelayBound)}(1)} &\leq (\e^{\lambda \lowerDelayBound} - 1 ) \cdot (\maxValue{\contMdp}+\epsSmall) \nonumber \\
&= \frac{\epsSmall \cdot \minRew}{4 \cdot \lambda \cdot (\maxValue{\contMdp} + \epsSmall + \maxRew \cdot (2 + 2/\lambda))^2} \cdot (\maxValue{\contMdp}+\epsSmall) \nonumber\\
&\leq \frac{\epsSmall \cdot \minRew}{4 \cdot \lambda \cdot (\maxValue{\contMdp} + \epsSmall)^2} \cdot (\maxValue{\contMdp}+\epsSmall) \nonumber\\
&= \frac{\epsSmall \cdot \minRew}{4 \cdot \lambda \cdot (\maxValue{\contMdp} + \epsSmall)} \nonumber\\
&\leq \frac{\minRew}{4 \cdot \lambda },\label{eq:probBoundD}
\end{align}
Finally we plug inequalities \eqref{eq:probBoundAA} and \eqref{eq:probBoundD} into inequality \eqref{eq:probBoundA} to finish the proof
\begin{align*}
p_{\sta} &\geq \pmf_{\poiss(\lambda \lowerDelayBound)}(1) \cdot \frac{\minRew/\lambda - \maxRew \cdot \lowerDelayBound - \frac{\pmf_{\poiss(\lambda \lowerDelayBound)}(\geq 2) \cdot (\maxValue{\contMdp}+\epsSmall)}{\pmf_{\poiss(\lambda \lowerDelayBound)}(1)}}{\minRew/\lambda - \maxRew \cdot \lowerDelayBound + \maxValue{\contMdp}+\epsSmall } \\
&\geq \pmf_{\poiss(\lambda \lowerDelayBound)}(1) \cdot \frac{\minRew/(2\cdot \lambda)}{\minRew/\lambda - \maxRew \cdot \lowerDelayBound + \maxValue{\contMdp}+\epsSmall } \\
&\geq \pmf_{\poiss(\lambda \lowerDelayBound)}(1) \cdot \frac{\minRew/(2\cdot \lambda)}{\minRew/\lambda + \maxValue{\contMdp}+\epsSmall}.
\end{align*} 
\end{proof}

Now we have all necessary to prove Lemma~\ref{lem:delay-increase}. 
\begin{reflemma}{lem:delay-increase}
Let $\epsSmall>0$ be arbitrary, and let $\timeouts$ be any globally $\epsSmall$-optimal delay function in $\contMdp$ with $k>0$ bad sinks. Then there is an globally  $2\epsSmall$-optimal delay function $\timeoutsInfl$ in $\contMdp$ with $k-1$ bad sinks.
\end{reflemma}
\begin{proof}
First we fix sink $\Sink$ bad for $\timeouts$ and $\sta \in \sink$ that has minimal $\sinkValue_{\contMdp(\timeouts)}$ in $\Sink$. We will get rid of $\Sink$ by inflating delay of $\sta$ to $\lowerDelayBound$. But we will not do it in one turn but by gradually changing $\timeouts$ to $\timeouts[s/\lowerDelayBound/1], \timeouts[s/\lowerDelayBound/2], \ldots$ and showing that we can bound the change in value of any state. First we will provide the necessary random variable: 
For each $n \in \Nset $ we denote by $\costFdCC_n^{\sta}$ the random variable assigning to each run $\sta_0,a_1 \sta_1 \cdots$ the \emph{total cost before $n$th reach of $\sta$}, given by
\begin{align*}
\costFdCC_n(\vtx_0 a_1 \vtx_1 a_2 \vtx_2\cdots ) = \sum_{i=0}^{\hit^{ \{\sta \} }_n} \mcost(\vtx_i,a_{i+1}).
\end{align*}

Now for each $\sta' \in \statesM$ and $i \in \Nset$ holds
\begin{align*}
\expred_{\contMdp[\sta'](\timeouts[s/\lowerDelayBound/i])} &= \probm_{\contMdp[\sta']\timeouts[s/\lowerDelayBound/i])}(\jumpChain^{\mdp}_{\hit^{\{ \sta \} }_i} = \sta) \cdot \expect_{\contMdp[\sta']\timeouts[s/\lowerDelayBound/i])}[\costFdC \mid \jumpChain^{\mdp}_{\hit^{\{ \sta \} }_i} = \sta]\\
&\quad+ \probm_{\contMdp[\sta']\timeouts[s/\lowerDelayBound/i])}(\jumpChain^{\mdp}_{\hit^{\{ \sta \} }_i} \neq \sta) \cdot \expect_{\contMdp[\sta']\timeouts[s/\lowerDelayBound/i])}[\costFdC \mid \jumpChain^{\mdp}_{\hit^{\{ \sta \} }_i} \neq \sta] \\
&= \probm_{\contMdp[\sta']\timeouts[s/\lowerDelayBound/i])}(\jumpChain^{\mdp}_{\hit^{\{ \sta \} }_i} = \sta) \cdot 
(\expect_{\contMdp[\sta']\timeouts[s/\lowerDelayBound/i])}[\costFdCC_i \mid \jumpChain^{\mdp}_{\hit^{\{ \sta \} }_i} \neq \sta]+ \expred_{\contMdp[\sta](\timeouts)})\\
&\quad+ \probm_{\contMdp[\sta']\timeouts[s/\lowerDelayBound/i])}(\jumpChain^{\mdp}_{\hit^{\{ \sta \} }_i} \neq \sta) \cdot \expect_{\contMdp[\sta']\timeouts[s/\lowerDelayBound/i])}[\costFdC \mid \jumpChain^{\mdp}_{\hit^{\{ \sta \} }_i} \neq \sta].
\end{align*}
Similarly we get
\begin{align*}
\expred_{\contMdp[\sta']\timeouts[s/\lowerDelayBound/i+1])} &= \probm_{\contMdp[\sta']\timeouts[s/\lowerDelayBound/i])}(\jumpChain^{\mdp}_{\hit^{\{ \sta \} }_i} = \sta) \cdot \expect_{\contMdp[\sta']\timeouts[s/\lowerDelayBound/i+1])}[\costFdC \mid \jumpChain^{\mdp}_{\hit^{\{ \sta \} }_i} = \sta]\\
&\quad+ \probm_{\contMdp[\sta']\timeouts[s/\lowerDelayBound/i])}(\jumpChain^{\mdp}_{\hit^{\{ \sta \} }_i} \neq \sta) \cdot \expect_{\contMdp[\sta']\timeouts[s/\lowerDelayBound/i+1])}[\costFdC \mid \jumpChain^{\mdp}_{\hit^{\{ \sta \} }_i} \neq \sta] \\
&= \probm_{\contMdp[\sta']\timeouts[s/\lowerDelayBound/i])}(\jumpChain^{\mdp}_{\hit^{\{ \sta \} }_i} = \sta) \cdot \expect_{\contMdp[\sta']\timeouts[s/\lowerDelayBound/i+1])}[\costFdC \mid \jumpChain^{\mdp}_{\hit^{\{ \sta \} }_i} = \sta]\\
&\quad+ \probm_{\contMdp[\sta']\timeouts[s/\lowerDelayBound/i])}(\jumpChain^{\mdp}_{\hit^{\{ \sta \} }_i} \neq \sta) \cdot \expect_{\contMdp[\sta']\timeouts[s/\lowerDelayBound/i])}[\costFdC \mid \jumpChain^{\mdp}_{\hit^{\{ \sta \} }_i} \neq \sta] \\
&= \probm_{\contMdp[\sta']\timeouts[s/\lowerDelayBound/i])}(\jumpChain^{\mdp}_{\hit^{\{ \sta \} }_i} = \sta) \\
&\quad\cdot (\expect_{\contMdp[\sta']\timeouts[s/\lowerDelayBound/i+1])}[\costFdCC_i \mid \jumpChain^{\mdp}_{\hit^{\{ \sta \} }_i} \neq \sta]+ \expred_{\contMdp[\sta]\timeouts[s/\lowerDelayBound/1])})\\
&\quad+ \probm_{\contMdp[\sta']\timeouts[s/\lowerDelayBound/i])}(\jumpChain^{\mdp}_{\hit^{\{ \sta \} }_i} \neq \sta) \cdot \expect_{\contMdp[\sta']\timeouts[s/\lowerDelayBound/i])}[\costFdC \mid \jumpChain^{\mdp}_{\hit^{\{ \sta \} }_i} \neq \sta]\\
&= \probm_{\contMdp[\sta']\timeouts[s/\lowerDelayBound/i])}(\jumpChain^{\mdp}_{\hit^{\{ \sta \} }_i} = \sta) \\
&\quad\cdot (\expect_{\contMdp[\sta']\timeouts[s/\lowerDelayBound/i])}[\costFdCC_i \mid \jumpChain^{\mdp}_{\hit^{\{ \sta \} }_i} \neq \sta]+ \expred_{\contMdp[\sta]\timeouts[s/\lowerDelayBound/1])})\\
&\quad+ \probm_{\contMdp[\sta']\timeouts[s/\lowerDelayBound/i])}(\jumpChain^{\mdp}_{\hit^{\{ \sta \} }_i} \neq \sta) \cdot \expect_{\contMdp[\sta']\timeouts[s/\lowerDelayBound/i])}[\costFdC \mid \jumpChain^{\mdp}_{\hit^{\{ \sta \} }_i} \neq \sta].
\end{align*}
Now subtracting to find the increase of expected cost we get
\begin{align}
\expred_{\contMdp[\sta']\timeouts[s/\lowerDelayBound/i+1])}^{\sta'} - \expred_{\contMdp[\sta']\timeouts[s/\lowerDelayBound/i])}^{\sta'} &= \probm_{\contMdp[\sta']\timeouts[s/\lowerDelayBound/i])}(\jumpChain^{\mdp}_{\hit^{\{ \sta \} }_i} = \sta) \cdot (\expred_{\contMdp[\sta](\timeouts)} - \expred_{\contMdp[\sta](\timeouts[s/\lowerDelayBound/1])})\label{eq:nonZeroPartA}\\ 
&= (1-p_{\sta})^i \cdot (\expred_{\contMdp[\sta](\timeouts)} - \expred_{\contMdp[\sta](\timeouts[s/\lowerDelayBound/1])}), \label{eq:nonZeroPartB}
\end{align}
where in last inequality we used Lemma~\ref{lem:possitive-leaving-probability}. Obviously for $i=0$ we get
\begin{align}
\expred_{\contMdp[\sta']\timeouts[s/\lowerDelayBound/i+1])} - \expred_{\contMdp[\sta']\timeouts[s/\lowerDelayBound/i])} &= \expred_{\contMdp[\sta']\timeouts[s/\lowerDelayBound/1])} - \expred_{\contMdp[\sta'](\timeouts)} \nonumber \\
&= (1-p_{\sta})^0 \cdot (\expred_{\contMdp[\sta](\timeouts)} - \expred_{\contMdp[\sta]\timeouts[s/\lowerDelayBound/1])}). \label{eq:zeroPart}
\end{align}
Finally putting together \eqref{eq:nonZeroPartA}, \eqref{eq:nonZeroPartB}, \eqref{eq:zeroPart}, Lemma~\ref{lem:possitive-leaving-probability}, Lemma~\ref{lem:one-step-error} and summing up the geometric series, we get the final result
\begin{align*}
\expred_{\contMdp[\sta'](\timeouts[s/\lowerDelayBound/\infty])} - \expred_{\contMdp[\sta'](\timeouts)} &= \sum_{i=0}^{\infty} (1-p_{\sta})^i \cdot (\expred_{\contMdp[\sta](\timeouts)} - \expred_{\contMdp[\sta](\timeouts[s/\lowerDelayBound/1])})\\
&= 1/p_{\sta} \cdot (\expred_{\contMdp[\sta](\timeouts)} - \expred_{\contMdp[\sta](\timeouts[s/\lowerDelayBound/1])})\\
&\leq \frac{(\minRew/\lambda + \maxValue{\contMdp} + \epsSmall)\cdot \lambda}{2 \cdot \minRew \cdot \pmf_{\poiss(\lambda \lowerDelayBound)}(1)} \\
&\quad\cdot \lambda \lowerDelayBound \cdot \pmf_{\poiss(\lambda \lowerDelayBound)}(\geq 1) \cdot \Big (\maxValue{\contMdp} + \eps + \maxRew \cdot \Big (2 +\frac{2}{\lambda} \Big) \Big) \\
&\leq \frac{(\maxRew/\lambda + \maxValue{\contMdp} + \epsSmall)}{2 \cdot \minRew \cdot \lambda \lowerDelayBound \cdot \e^{-\lambda \lowerDelayBound}} \\
&\quad\cdot \lambda^2 \lowerDelayBound \cdot (1-\e^{-\lambda \lowerDelayBound }) \cdot \Big (\maxValue{\contMdp} + \eps + \maxRew \cdot \Big (2 +\frac{2}{\lambda} \Big) \Big) \\
&\leq \frac{(\e^{\lambda \lowerDelayBound}-1) \cdot \lambda \cdot \Big (\maxValue{\contMdp} + \eps + \maxRew \cdot \Big (2 +\frac{2}{\lambda} \Big) \Big)^2}{2 \cdot \minRew}
\end{align*} 
Finally we plug in the definition of $\lowerDelayBound$
$$\e^{\lambda \cdot \lowerDelayBound} = 1 + \frac{\epsSmall \cdot \minRew}{4 \cdot \lambda \cdot (\maxValue{\contMdp} + \epsSmall + \maxRew \cdot (2 + 2/\lambda))^2}$$
to the last inequality and we get:
\begin{align*}
\expred_{\contMdp(\timeouts[s/\lowerDelayBound/\infty])}^{\sta'} - \expred_{\contMdp[\sta'](\timeouts)} &\leq \frac{(\e^{\lambda \lowerDelayBound}-1) \cdot \lambda \cdot \Big (\maxValue{\contMdp} + \eps + \maxRew \cdot \Big (2 +\frac{2}{\lambda} \Big) \Big)^2}{2 \cdot \minRew} \\
&\leq \frac{\epsSmall}{8}.
\end{align*}
\end{proof}
\subsection{Proof of Proposition~\ref{prop:mesh-delays-ok}}

\begin{refproposition}{prop:mesh-delays-ok}
For $\constFactor$ from Lemma~\ref{lem:bounded-step-existence}, $\discconst$
and $\cutconst$ from Lemma~\ref{lem:mesh-error}, it holds that
$$\left\lvert \; \Value{\contMdp} - \Value{\contMdp,\paramspace(\delta,\dmax)} \; \right\rvert 
\;\; \leq \;\; \frac{\eps}{2}$$
\[ \textrm{where }~~
\delta \; :=\; \frac{\alpha}{\discconst}, ~~ \dmax\; := \; |\log(\alpha)|\cdot \cutconst\cdot (\maxValue{\contMdp}+\eps), ~~ \alpha \; := \; \frac{\eps^2}{64\constFactor\cdot|\states'|\cdot (1+\maxValue{\contMdp})^2}.
\]
\end{refproposition}
\begin{proof}
Let $\timeouts'$ be the globally $\eps/4$-optimal delay function from Lemma~\ref{lem:bounded-step-existence} (obtained by putting $\eps'=\eps/2$). From the choice of parameters $\delta$ and $\dmax$ and from Lemma~\ref{lem:mesh-error} it follows that $\paramspace(\delta,\dmax)$ contains a function $\timeouts$ that is $\alpha$-bounded by $\timeouts'$. From Lemma~\ref{lem:perturbation-error} it follows that the cost incurred by $\timeouts$ (from any initial state) differs from the cost incurred by $\timeouts'$ by at most
\begin{align*}
& 2\alpha \frac{N\cdot \CostBound{\mcost,\timeouts'}}{\eps}(1+\CostBound{\mcost,\timeouts'}|\states'|) \\
&\quad\leq \frac{8\eps\cdot(1+\eps+\maxValue{\contMdp} )}{64(1+\maxValue{\contMdp})}\leq \frac{\eps}{8}\left(1+\eps\right) \leq \eps/4.
\end{align*}
Hence, $\timeouts\in \paramspace(\delta,\dmax)$ is $\eps/2$-optimal in every state $\contMdp$, from which the Proposition follows. (Note that $\Value{\contMdp,\paramspace(\delta,\dmax)}\geq \Value{\contMdp}$.)
\end{proof}
\subsection{Proof of Lemma~\ref{lem:bound-value}}

\begin{reflemma}{lem:bound-value}
There is a number $\maxvalconst\in \exp(\size{\fdC}^{\mathcal{O}(1)})$ computable in time polynomial in $\size{\fdC}$ such that $\maxValue{\contMdp} \leq \maxvalconst$.
\end{reflemma}

\begin{proof}
We simply choose a appropriate delay function and compute the overestimation of expected cost until reaching goal states from any state $\sta \in \statesM$. We pick a delay function $\timeouts$ such that for each $\sta \in \statesM$ it holds that $\timeouts(\sta) = \frac{|\allact|}{\lambda}$.

We first bound the $\mcost(\sta,\timeouts(\sta))$ for each state $\sta \in \statesM$. Obviously for $\sta \in \statesM \setminus \allact$ the expected impulse cost paid in one step is $\maxRew$ and expected rate cost paid in one step is $\maxRew/\lambda$, thus together $\mcost(\sta,\timeouts(\sta)) \leq (1+1/\lambda) \cdot \maxRew$. For state $\sta \in \statesM \cap \allact$ the expected number of exponential steps until reaching $\statesM$ in time $\timeouts(\sta) = \frac{|\allact|}{\lambda}$ is $\timeouts(\sta) \cdot \lambda = |\allact|$ (employing mean of Poisson distribution). It is possible that one extra step is taken due to fixed-delay transition, thus the expected impulse cost is bounded by $(|\allact|+1) \cdot \maxRew$. Finally the expected rate cost paid until state from $\statesM$ is reached from any $\statesM \cap \allact$ can be bounded from above by $\timeouts(\sta) \cdot \maxRew = \frac{|\allact|}{\lambda} \cdot \maxRew$, what is maximum time until $\statesM$ are reached $\timeouts(\sta)= \frac{|\allact|}{\lambda}$ (from definition of $\mcost$ after firing of fixed-delay transition a state from $\statesM$ is reached) times maximum rate cost bounded by $\maxRew$. Altogether we have:
\begin{align*}
\mcost(\sta,\timeouts(\sta)) &\leq \max \{ (1+1/\lambda) \cdot \maxRew, |\allact|/\lambda \cdot \maxRew + (|\allact|+1) \cdot \maxRew \} \\
&\leq (|\allact|/\lambda + |\allact| + 1) \cdot \maxRew.
\end{align*}

Now we bound the minimal branching probability of $\mtran(s,|\allact|/\lambda)$. We use again uniformization method to compute the transient analysis of $\fdC'[\sta](\timeouts(\sta))$ to evaluate $\mtran(s,|\allact|/\lambda)$. For all $\sta \in \statesM \cap \allact$ and $\sta' \in \statesM$ that is reachable from $\overline{\sta}$ in $\fdC'[\sta](\timeouts(\sta))$ ($\overline{\sta}$ is artificial initial state in $\fdC'[\sta](\timeouts(\sta))$ corresponding to state $\sta$) it holds that
\begin{align}
\mtran(s,|\allact|/\lambda)(s') &= \e^{-|\allact|} \sum_{i=0}^{\infty} \prob'^{i}(\overline{\sta},\sta') \cdot \frac{|\allact|^{i}}{i!} \nonumber \\
&\geq \e^{-|\allact|} \sum_{i=0}^{|\allact|} \prob'^{i}(\overline{\sta},\sta') \cdot \frac{|\allact|^{i}}{i!} \nonumber \\
&\geq \e^{-|\allact|} \min_{i \in \{0, \ldots, |\allact| \} } \minPst^{|\allact|} \cdot \frac{|\allact|^{i}}{i^i} \label{eq:minimalNumber-2} \\
&\geq \e^{-|\allact|} \minPst^{|\allact|} \nonumber \\
&=(\minPst /\e)^{|\allact|} \nonumber,
\end{align}
where \eqref{eq:minimalNumber-2} holds because the minimal positive branching probability in $\fdC'[\sta](\timeouts(\sta))$ is $\minPst$ and each state, if it is reachable in $\fdC'[\sta](\timeouts(\sta))$ from $\overline{\sta}$ it has to be reached by path of length at most $|\allact|$. Thus the minimum discrete probability to reach any reachable state is at least $\minPst^{|\allact|}$. This probability is at least once multiplied by a factor $\frac{|\allact|^{i}}{i^i}$ for appropriate $i$ thus we pick the smallest factor to get the correct lower bound on probability of moving from $\statesM \cap \allact$. The probability of moving from $\statesM \setminus \allact$ is simply $\minPst$ from definition of $\mtran(s,|\allact|/\lambda)$. Since $(\minPst /\e)^{|\allact|} \leq \minPst$ it holds that for all $\sta, \sta' \in \statesM$ such that $\mtran(s,|\allact|/\lambda)(\sta') > 0$ implies 
$$\mtran(s,|\allact|/\lambda)(\sta') \geq (\minPst /\e)^{|\allact|}.$$
Now we provide a lower bound on probability of reaching $\goalStates$ from any $\sta \in \statesM$:
\begin{align*}
\probm_{\mdp(\timeouts)}(\text{reach $\goalStates$ from $\sta$ in $|\states'|$ steps}) &\geq (\min \; \{ \mtran(\sta',|\allact|/\lambda)(\sta'') >0 \mid \sta',\sta'' \in \statesM \}\;)^{|\states'|} \\
&\geq \Big( (\minPst /\e)^{|\allact|} \Big) ^{|\states'|} \\
&\geq (\minPst /\e)^{|\states|^2}.
\end{align*}
Finally we bound the expected cost using Bernoulli trials, i.e. for all $\sta \in \statesM$ it holds that
\begin{align}
\expred_{\contMdp[s](\timeouts)}^{\mcost} &\leq \sum_{i=i}^{\infty} i \cdot |\statesM| \cdot \max_{\sta'' \in \statesM} \mcost(\sta'',\timeouts(\sta'')) \cdot (1 - pr)^{i-1} \cdot pr \nonumber \\
&= \frac{|\statesM| \cdot \max_{\sta'' \in \statesM} \mcost(\sta'',\timeouts(\sta''))} {pr}\nonumber\\
&\leq \frac{|\statesM| \cdot (|\allact|+1 + |\allact|/\lambda)\cdot \maxRew}{(\minPst/\e)^{|\states|^2}},\label{eq:valbound-final-unc}
\end{align}
where $pr = \min_{\sta'' \in \statesM}\probm_{\mdp(\timeouts)}(\text{reach $\goalStates$ from $\sta$ in $|\states \setminus \allact|$ steps})$. The number $\maxvalconst$ can be taken to be the right hand side of~\eqref{eq:valbound-final-unc}.
\end{proof}

\subsection{Proof of Proposition~\ref{prop:rounding-error}}
\begin{refproposition}{prop:rounding-error}
 Let $\eps>0$ and fix $\kappa=(\eps\cdot\delta\cdot \minRew)/({2\cdot |\states'|\cdot(1+\maxValue{\contMdp})^2}),$ where $\minRew$ is a minimal cost rate in $\fdC$.
	\label{prop:mdp-formulation-22}
	Then it holds
	$$ \left\lvert \; \Value{\contMdp,\paramspace(\delta,\dmax)} - \Value{\matchmdp} \; \right\rvert 
	\;\; \leq \;\; \frac{\eps}{2}.$$
\end{refproposition}

Let $\timeouts$ be any delay function in $\paramspace(\delta,\dmax)$. Note that $\timeouts$ can be taken as a delay function both in $\contMdp$ and $\matchmdp$. To distinguish these two contexts we denote this function by $\timeouts_\kappa$ when viewing it as a strategy in $\mdp_\kappa$. These ``two'' functions $\kappa$-bound one another, i.e. for all states $s$ and $t$ it holds
\begin{enumerate} 
	\item  $|\mtran_\kappa(s,\timeouts_\kappa(s))(t)-\mtran(s,\timeouts(s))(t)|\leq \kappa$ and
	\item $|\mcost_\kappa(s,\timeouts_\kappa(s))-\mcost(s,\timeouts(s))|\leq \kappa$,
\end{enumerate}
with the qualitative properties of transitions being preserved. Hence, we can use exactly the same arguments as in the proof of Lemma~\ref{lem:perturbation-error} to show that the difference between costs incurred by these strategies is bounded by these two expressions:
\begin{align}
E_1 &= 2\kappa \cdot\CostBound{\msteps,\timeouts}(1+\CostBound{\mcost,\timeouts}\cdot|\states'|)\label{eq:final-1}\\
E_2 &= 2\kappa \cdot \CostBound{\msteps,\timeouts_\kappa}(1+\CostBound{\mcost_\kappa,\timeouts_\kappa}\cdot|\states'|).\label{eq:final-2}
\end{align}

Now let $\timeouts\in\paramspace(\delta,\dmax)$ be any $(\eps/8)$-optimal strategy in $\mdp$. Then $\CostBound{\mcost,\timeouts}\leq \maxValue{\mdp}+\eps/8$. Moreover, since minimal delay used in $\timeouts$ is equal to $\delta$, the minimal one-step cost incurred by $\timeouts$ is equal to $((1-e^{-\lambda\delta})/\lambda)\cdot \minRew$ (this is because the average time elapsed before either firing an exponential transition or the timeout $\delta$ running out is $(1-e^{-\lambda\delta})/\lambda$). Hence, on average, $\timeouts$ performs at most $\frac{(\maxValue{\mdp}+\eps)\cdot \lambda}{(1-e^{-\lambda\delta})\minRew}\leq \frac{2(\maxValue{\mdp}+\eps) }{\delta\minRew} $ steps. Plugging this into~\eqref{eq:final-1} yields

\begin{align*}
E_1 &\leq \frac{2\eps (1+\eps/8 + \maxValue{\mdp})^2}{2(1 + \maxValue{\mdp})^2} \leq \eps/4.
\end{align*}

It follows that $  \Value{\matchmdp} - \Value{\contMdp,\paramspace(\delta,\dmax)}    \leq  \frac{\eps}{8}+\frac{\eps}{4}\leq\frac{\eps}{2}$. It now suffices to show that $   \Value{\contMdp,\paramspace(\delta,\dmax)}-\Value{\matchmdp} \leq \frac{\eps}{2}$.

To this end, let $\timeouts_\kappa$ be any $(\eps/8)$-optimal strategy in $\mdp_\kappa$. Then $\CostBound{\mcost_\kappa,\timeouts_\kappa}\leq \maxValue{\mdp_\kappa}+\eps/8$. Moreover, since $\mdp_\kappa \in \paramspace(\delta,\dmax)$, using the same arguments as above (and the fact that rounding could only increase the one-step cost incurred) we get the same lower bound on one-step cost incurred by $\timeouts_\kappa$ and thus also the fact that $\CostBound{\msteps,\timeouts_\kappa}\leq  \frac{2(\maxValue{\mdp}+\eps) }{\delta\minRew} $. Plugging this into~\eqref{eq:final-2} and using the same computation as above, we get the desired inequality. \qed
\section{Proofs for Section~\ref{sec:results-multi}}
\label{app:bound-sec-4}

\begin{refproposition}{prop:multi-entry-aprox}
There is a number $\POconst \in \exp((\size{\fdC}\cdot\size{\dmin}\cdot \dmax)^{\mathcal{O}(1)})$ such that for $\delta= \eps/\POconst$ and $\kappa=(\eps\cdot \delta)/\POconst$ it holds	$\left\lvert\Value{\contMdp,\paramspacePOShort}-\Value{\matchmdp_D} \right\rvert < \varepsilon.$
\end{refproposition}

Similarly to Section~\ref{sec:results-single} we first bound the error caused by discretization and then by rounding:

\begin{lemma}\label{lem:mesh-delays-ok}
Let 
\[
\alpha \; := \; \frac{\eps}{8 \cdot \CostBound{\mcost,\timeouts} \cdot \CostBound{\msteps,\timeouts} \cdot |\statesM|}, \qquad \delta \; :=\; \alpha/\discconst,
\]
where $\discconst$ and is from Lemma~\ref{lem:mesh-error}. Then 
$$ \left\lvert \; \Value{\contMdp} - \Value{\contMdp,\paramspace(\delta,\dmin, \dmax, \equiv)} \; \right\rvert 
\;\; \leq \;\; \frac{\eps}{2}.$$
\end{lemma}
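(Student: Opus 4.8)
The plan is to follow the two–step template of Proposition~\ref{prop:mesh-delays-ok} (first control the discretization error, then — in the second half of the proof of Proposition~\ref{prop:multi-entry-aprox} — the rounding error), but to exploit the simplification that the constraints $\dmin\le\timeouts(s)\le\dmax$ make every ingredient uniformly tame, so that none of the ``bad sink'' machinery of Lemma~\ref{lem:bounded-step-existence} is required. Writing $V:=\Value{\contMdp,\paramspacePOShort}$ for the optimal expected cost attainable by a partial–observation–feasible delay function, it suffices to exhibit a delay function on the mesh $\paramspace(\delta,\dmin,\dmax,\equiv)$ whose expected cost is at most $V+\eps/2$: since $\paramspace(\delta,\dmin,\dmax,\equiv)\subseteq\paramspacePOShort$ gives $\Value{\contMdp,\paramspace(\delta,\dmin,\dmax,\equiv)}\ge V$, this yields the stated two–sided bound.

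First I would fix $\timeouts\in\paramspacePOShort$ with $\expred_{\contMdp(\timeouts)}\le V+\eps/4$ (possible by the definition of the infimum). Because $\rateRew(\sta)>0$ in every state, each step of $\contMdp$ costs at least a fixed positive amount, so finiteness of the expected cost from every (reachable) state — guaranteed by $V<\infty$ together with Lemma~\ref{lem:finite} applied state-wise — forces both $\CostBound{\mcost,\timeouts}$ and $\CostBound{\msteps,\timeouts}$ to be finite; this is all that is needed to make $\alpha$ well-defined and to invoke Lemma~\ref{lem:perturbation-error}. I would then round $\timeouts$ to $\matchtimeouts$ by replacing, in each state $\sta$, the value $\timeouts(\sta)$ with the multiple of $\delta$ inside $[\dmin,\dmax]$ closest to it. Since $\delta=\alpha/\discconst$ is exponentially small, $[\dmin,\dmax]$ does contain a mesh point and $|\matchtimeouts(\sta)-\timeouts(\sta)|\le\delta$; and since $\timeouts$ is constant on each $\equiv$-class, so is $\matchtimeouts$, hence $\matchtimeouts\in\paramspace(\delta,\dmin,\dmax,\equiv)$. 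The quantitative estimates from the proof of Lemma~\ref{lem:mesh-error} (the entries of $\mtran(\sta,\cdot)$ have slope at most $\lambda$ in the delay, those of $\mcost(\sta,\cdot)$ slope at most $2(\lambda+1)\maxRew$) then show, with the same constant $\discconst$, that perturbing each delay by at most $\delta=\alpha/\discconst$ changes every entry of $\mtran(\sta,\cdot)$ and of $\mcost(\sta,\cdot)$ by at most $\alpha$; as a positive delay never affects which states of $\brambory$ are reachable from $\sta$, the qualitative transition structure is preserved, so $\matchtimeouts$ is $\alpha$-bounded by $\timeouts$ in the sense required by Lemma~\ref{lem:perturbation-error}.

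It then remains to feed $\matchtimeouts$ into Lemma~\ref{lem:perturbation-error}: with $\alpha=\eps/(8\cdot\CostBound{\mcost,\timeouts}\cdot\CostBound{\msteps,\timeouts}\cdot|\statesM|)$ the lemma's hypothesis $\alpha\le 1/(2\cdot\CostBound{\mcost,\timeouts}\cdot|\brambory|)$ holds (it reduces to $\eps\le 4\cdot\CostBound{\msteps,\timeouts}$, which is immediate since at least one step is always taken before $\goalStates$ is revisited), and the lemma gives $\expred_{\contMdp(\matchtimeouts)}\le\expred_{\contMdp(\timeouts)}+2\alpha\cdot\CostBound{\msteps,\timeouts}\cdot(1+\CostBound{\mcost,\timeouts}\cdot|\brambory|)$, where the universal constant in $\alpha$ is calibrated so that the second summand is at most $\eps/4$; combined with the $(\eps/4)$-optimality of $\timeouts$ this yields $\expred_{\contMdp(\matchtimeouts)}\le V+\eps/2$, as desired. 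I do not expect a conceptually hard step here: the genuinely difficult content of the unbounded case — producing near-optimal delay functions with a controlled number of steps (Lemma~\ref{lem:bounded-step-existence}) — has simply disappeared, since $\dmin$ and $\dmax$ bound $\CostBound{\msteps,\timeouts}$ and $\CostBound{\mcost,\timeouts}$ uniformly over $\paramspacePOShort$ (this uniformity, to be quantified in the rest of the proof of Proposition~\ref{prop:multi-entry-aprox} via an estimate analogous to Lemma~\ref{lem:bound-value}, is what ultimately makes $\POconst$ merely exponential). The only real care needed is the bookkeeping: checking that the rounding can be performed class-by-class while staying inside $[\dmin,\dmax]$ with per-component error $\le\delta$, and tracking the universal constants so that the two $\eps/4$ contributions — near-optimality of $\timeouts$ and the perturbation error of Lemma~\ref{lem:perturbation-error} — add up to $\eps/2$.
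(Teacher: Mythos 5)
Your overall strategy coincides with the paper's: use the bounds $\dmin,\dmax$ to obtain (uniform) finiteness of $\CostBound{\mcost,\timeouts}$ and $\CostBound{\msteps,\timeouts}$ over $\paramspacePOShort$ (the paper quantifies this in the appendix analogues of Lemma~\ref{lem:bound-value}, namely Lemmas~\ref{lem:value-bound} and~\ref{lem:cost-bound}), round a feasible delay function onto the mesh via the derivative estimates behind Lemma~\ref{lem:mesh-error} (the $\dmax$-truncation half being unnecessary here), note that rounding respects $\equiv$ and stays inside $[\dmin,\dmax]$, and finish with Lemma~\ref{lem:perturbation-error}. You also correctly identify that the bad-sink machinery of Lemma~\ref{lem:bounded-step-existence} is not needed, which is exactly the simplification the paper exploits.

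However, your final error budget of $\eps/4+\eps/4$ does not go through with the $\alpha$ fixed in the statement. Lemma~\ref{lem:perturbation-error} gives a perturbation error of $2\alpha\cdot\CostBound{\msteps,\timeouts}\cdot\left(1+\CostBound{\mcost,\timeouts}\cdot|\statesM|\right)=\frac{\eps}{4}\left(1+\frac{1}{\CostBound{\mcost,\timeouts}\cdot|\statesM|}\right)$, which is \emph{always strictly greater} than $\eps/4$; it is at most $\eps/2$ only under the (implicit, also in the paper) assumption $\CostBound{\mcost,\timeouts}\cdot|\statesM|\ge 1$. Added to the $\eps/4$ you spend on near-optimality of $\timeouts$, you only reach $3\eps/4$, not the claimed $\eps/2$. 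The paper sidesteps this because the quantities $\CostBound{\mcost,\timeouts}$, $\CostBound{\msteps,\timeouts}$ in $\alpha$ are instantiated with the uniform upper bounds of Lemmas~\ref{lem:value-bound} and~\ref{lem:cost-bound}, so the $\eps/2$ perturbation bound holds for \emph{every} $\timeouts\in\paramspacePOShort$ simultaneously (and for a single, $\timeouts$-independent mesh $\delta$); one may then take $\timeouts$ $\eta$-optimal with $\eta$ arbitrarily small, i.e.\ pass to the infimum, and no extra $\eps/4$ slack is needed. Your argument is repaired either by that route, or by tightening the constant in $\alpha$ (e.g.\ a factor $16$ instead of $8$) to leave room for the $\eps/4$ you allot to near-optimality; note also that making $\alpha$ (hence $\delta$) depend on the particular near-optimal $\timeouts$ you fixed would defeat the purpose of the lemma, which must provide one mesh for the algorithm, so the uniform-bound reading is the one you should adopt throughout rather than defer.
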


\begin{lemma}\label{lem:rounding}
 Let $\eps>0$ and fix $\kappa=(\eps/8 \cdot \CostBound{\mcost,\timeouts} \cdot \CostBound{\msteps,\timeouts} \cdot |\statesM|)^2$. Then it holds that
	$$ \left\lvert \; \Value{\contMdp,\paramspace(\delta,\dmin, \dmax, \equiv)} - \Value{\matchmdp} \; \right\rvert 
	\;\; \leq \;\; \frac{\eps}{2}.$$
\end{lemma}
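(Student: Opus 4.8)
The plan is to mirror the proof of Proposition~\ref{prop:rounding-error} (the rounding step of Section~\ref{sec:results-single}), whose only section‑specific ingredient was a bound on $\CostBound{\msteps,\cdot}$ and $\CostBound{\mcost,\cdot}$; here the decisive simplification is that, thanks to the constraints $\dmin,\dmax$, these two quantities can be bounded \emph{uniformly} over all delay functions in $\paramspacePOShort$. So the first step is to exhibit a number $B\in\exp((\size{\fdC}\cdot\size{\dmin}\cdot\dmax)^{\mathcal{O}(1)})$, computable in the corresponding time, with $\CostBound{\mcost,\timeouts}\le B$ and $\CostBound{\msteps,\timeouts}\le B$ for every $\timeouts\in\paramspacePOShort$. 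We may assume w.l.o.g.\ that $\goalStates$ is reachable in $\contMdp$ from every state (otherwise the remaining states are trimmed), which is consistent with $\Value{\fdC,\paramspacePOShort}<\infty$ because for every $0<d<\infty$ both an exp‑delay transition and the fixed‑delay transition occur with positive probability, so the transition graph of $\contMdp$ does not depend on the chosen delay function. For $d\in[\dmin,\dmax]$ a uniformization estimate as in the proof of Lemma~\ref{lem:bound-value} shows that every positive entry of $\mtran(s,d)$ is at least some $p^\star\ge e^{-\lambda\dmax}\cdot(\minPst\cdot\lambda\dmin)^{|\allact|}/|\allact|!$, the factor $e^{-\lambda\dmax}$ being precisely where $\dmax$ enters the exponent. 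Hence $\goalStates$ is hit from any state within $|\states'|$ steps with probability at least $(p^\star)^{|\states'|}$; a geometric tail bound gives $\CostBound{\msteps,\timeouts}\le|\states'|\cdot(p^\star)^{-|\states'|}$, and since one step of $\contMdp$ under a delay $\le\dmax$ incurs cost at most $\maxRew\cdot(\dmax+\lambda\dmax+1)$ (rate cost over time $\le\dmax$ plus at most $\lambda\dmax+1$ expected impulse transitions), also $\CostBound{\mcost,\timeouts}\le\maxRew(\dmax+\lambda\dmax+1)\cdot|\states'|\cdot(p^\star)^{-|\states'|}$; take $B$ to be the larger of the two.

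With $B$ fixed, let $\timeouts$ be any delay function in $\paramspace(\delta,\dmin,\dmax,\equiv)$, regarded both as a strategy $\timeouts$ in $\contMdp$ and as a strategy $\timeouts_\kappa$ in $\matchmdp$. By the definition of the $\kappa$‑rounding (each probability rounded up to the nearest multiple of $\kappa$, the largest one absorbing the deficit), $\timeouts$ and $\timeouts_\kappa$ perturb each other by at most $\kappa$ entrywise, both in the transition matrix and in the one‑step cost vector, and --- since $\kappa$ is far below $1/|\states'|$ --- the qualitative transition structure is preserved (the largest probability, $\ge1/|\states'|$, stays positive). Running the computation of the proof of Lemma~\ref{lem:perturbation-error} in both directions --- writing the cost vectors as solutions of $(I-Q_{\timeouts})\mathbf x=f_{\timeouts}$ and $(I-Q_{\timeouts_\kappa})\mathbf y=f_{\timeouts_\kappa}$ and invoking the stability bound for perturbed linear systems used there --- yields, for every state $s$, that $\bigl|\expred_{\contMdp[s](\timeouts)}-\expred_{\matchmdp[s](\timeouts_\kappa)}\bigr|\le\max(E_1,E_2)$, where $E_1=2\kappa\CostBound{\msteps,\timeouts}(1+\CostBound{\mcost,\timeouts}|\states'|)$ and $E_2=2\kappa\CostBound{\msteps,\timeouts_\kappa}(1+\CostBound{\mcost_\kappa,\timeouts_\kappa}|\states'|)$, the smallness hypothesis of Lemma~\ref{lem:perturbation-error} being discharged by the choice of $\kappa$. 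The unrounded factors are $\le B$; for the rounded ones the same uniformization estimate applies with $p^\star$ replaced by $p^\star-\kappa\ge p^\star/2$, so $\CostBound{\msteps,\timeouts_\kappa}\le2^{|\states'|}B$, and since rounding only raises one‑step costs, $\CostBound{\mcost_\kappa,\timeouts_\kappa}\le\CostBound{\mcost,\timeouts}+\kappa\CostBound{\msteps,\timeouts_\kappa}\le2B$. Hence $\max(E_1,E_2)$ is of order $\kappa\cdot B^2\cdot|\states'|$, and the prescribed value $\kappa=(\eps/(8\cdot B\cdot B\cdot|\states'|))^2$ makes it $\le\eps/4$ (the square being a safe margin that also enforces $\kappa<p^\star/2$ and the smallness hypotheses).

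Finally, to pass from this pointwise estimate to the bound on values: pick a delay function $\timeouts\in\paramspace(\delta,\dmin,\dmax,\equiv)$ that is $(\eps/4)$‑optimal for $\Value{\contMdp,\paramspace(\delta,\dmin,\dmax,\equiv)}$; as a strategy of $\matchmdp$ it still respects $\equiv$ and lies on the mesh, and it witnesses $\Value{\matchmdp}\le\Value{\contMdp,\paramspace(\delta,\dmin,\dmax,\equiv)}+\eps/2$; an $(\eps/4)$‑optimal strategy of $\matchmdp$ witnesses the reverse inequality the same way, so $\bigl|\Value{\contMdp,\paramspace(\delta,\dmin,\dmax,\equiv)}-\Value{\matchmdp}\bigr|\le\eps/2$, as claimed. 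The main obstacle is the very first step --- producing the uniform bound $B$ with the correct dependence, i.e.\ the $e^{-\lambda\dmax}$ (hence, once raised to the $|\states'|$, doubly‑exponential‑in‑$\size{\dmax}$) lower bound on the smallest positive one‑step transition probability over $d\in[\dmin,\dmax]$, and checking that it survives the $\kappa$‑rounding up to a constant factor. Once this $\timeouts$‑independent bound is in hand the rest is literally the linear‑algebra perturbation argument of Lemma~\ref{lem:perturbation-error}; this is exactly why the bounded, partially observed setting is ``substantially simpler'' than the unconstrained analysis of Section~\ref{sec:results-single}, where no such uniform bound exists (cf.\ Example~\ref{ex:non-stability}).
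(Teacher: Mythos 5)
Your proof is correct and follows essentially the paper's own route: you first establish uniform bounds on $\CostBound{\mcost,\timeouts}$ and $\CostBound{\msteps,\timeouts}$ over all $\timeouts\in\paramspacePOShort$ via the uniformization estimate with the $e^{-\lambda\dmax}$ factor (this is exactly the paper's bounded-reaching-probability/value/step lemmas), and then run the perturbation argument of Lemma~\ref{lem:perturbation-error} in both directions with the prescribed $\kappa$, just as the paper does by mimicking Proposition~\ref{prop:rounding-error}. The only differences are matters of explicitness: you spell out the $E_2$ direction (bounding the step and cost quantities of the rounded MDP, where the paper only says the argument is symmetric) and the final passage from the per-strategy estimate to the bound on the two values, which the paper leaves implicit.
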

Note that from Lemma~\ref{lem:mesh-delays-ok} and Lemma~\ref{lem:rounding} directly follows Proposition~\ref{prop:multi-entry-aprox}.

\subsection{Proof of Lemma~\ref{lem:mesh-delays-ok}}
To prove Lemma~\ref{lem:mesh-delays-ok} we similarly to proof of Proposition~\ref{prop:mesh-delays-ok} we will need bounds on expected number of steps and expected cost until reaching target. We will prove the following lemmas.

\begin{lemma}\label{lem:value-bound}
Let $\timeouts $ be a delay function such that for each $\sta \in \statesM$ it holds that $\dmin \leq \timeouts(\sta) \leq \dmax$. It holds that  $\CostBound{\mcost,\timeouts} \leq \left( \frac{|\allact|}{\minPst \cdot \min \{1, \lambda \cdot \delayBound \}} \right)^{|\states|} \cdot \e^{\lambda \dmax \cdot |\states \setminus \allact|} \cdot (\max\{ 1/\lambda, \; \dmax \cdot \lambda + \dmax \} + 1) \cdot |\statesM| \cdot \maxRew$. 
\end{lemma}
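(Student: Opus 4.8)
The plan is to follow the pattern of the proof of Lemma~\ref{lem:bound-value}, adapted to the bounded setting. Since $\CostBound{\mcost,\timeouts}=\max_{s\in\statesM}\expred_{\contMdp[s](\timeouts)}^{\mcost}$ and, for the fixed delay function $\timeouts$, each $\contMdp[s](\timeouts)$ is a finite Markov chain that reaches $\goalStates$ almost surely (because $\Value{\fdC,\paramspacePOShort}<\infty$ forces $\expred_{\fdC(\timeouts)}<\infty$ by Lemma~\ref{lem:finite}), I would bound the expected total cost from an arbitrary state as a product of two quantities: a uniform upper bound $M$ on the one-step cost $\mcost(s,\timeouts(s))$, and a uniform upper bound on the expected number of steps of $\contMdp(\timeouts)$ before hitting $\goalStates$. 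The latter I would derive, exactly as in Lemma~\ref{lem:bound-value}, from a uniform \emph{lower} bound $\pmin$ on every positive one-step probability $\mtran(s,\timeouts(s))(s')$: from every state $\goalStates$ is reached within $|\statesM|$ steps with probability at least $\pmin^{|\statesM|}$, so by a Bernoulli-trials estimate the expected number of steps is at most $|\statesM|/\pmin^{|\statesM|}$ and hence $\expred_{\contMdp[s](\timeouts)}^{\mcost}\le |\statesM|\cdot M/\pmin^{|\statesM|}$.

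For the one-step cost bound $M$: if $s\notin\allact$ the big step is a single exponential transition, so its expected rate cost is at most $\maxRew/\lambda$ and its impulse cost at most $\maxRew$, giving $\mcost(s,\infty)\le(1+1/\lambda)\maxRew$. If $s\in\allact$ the big step lasts at most $\timeouts(s)\le\dmax$ time units; the expected number of exponential transitions during it is the Poisson mean $\lambda\timeouts(s)\le\lambda\dmax$, and at most one fixed-delay transition can occur, so the expected impulse cost is at most $(\lambda\dmax+1)\maxRew$ and the expected rate cost at most $\dmax\maxRew$. Hence one may take $M=(\max\{1/\lambda,\ \lambda\dmax+\dmax\}+1)\maxRew$ for every $s\in\statesM$, which already accounts for the last three factors of the claimed bound.

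For the lower bound $\pmin$: if $s\notin\allact$ then $\mtran(s,\infty)(s')=\prob(s,s')\ge\minPst$ for every successor. If $s\in\allact$ I would apply the uniformization formula of Lemma~\ref{lem:trans_cost_form} to the subordinated CTMC $\fdC'[s](\timeouts(s))$ of Definition~\ref{def:subordinated}: its non-absorbing part has at most $|\allact|$ states, so any $s'$ reachable within one big step is reached along some path of length $k\le|\allact|$, whence $\mtran(s,\timeouts(s))(s')\ge\psi_{\lambda\timeouts(s)}(k)\cdot\minPst^{\,k}$. Using $\dmin\le\timeouts(s)\le\dmax$ I would estimate $\psi_{\lambda\timeouts(s)}(k)=\e^{-\lambda\timeouts(s)}(\lambda\timeouts(s))^{k}/k!\ge \e^{-\lambda\dmax}(\lambda\dmin)^{k}/k!$, and then, writing $c:=\minPst\cdot\min\{1,\lambda\dmin\}\le1$, $\psi_{\lambda\timeouts(s)}(k)\minPst^{\,k}\ge \e^{-\lambda\dmax}\,c^{k}/k^{k}\ge \e^{-\lambda\dmax}\,(c/|\allact|)^{|\allact|}$, since $k!\le k^{k}$ and $t\mapsto (c/t)^{t}$ is decreasing for $c\le1$ while $k\le|\allact|$. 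This yields $\pmin\ge \e^{-\lambda\dmax}\bigl(\tfrac{\minPst\min\{1,\lambda\dmin\}}{|\allact|}\bigr)^{|\allact|}$.

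Substituting $M$ and $\pmin$ into $\expred_{\contMdp[s](\timeouts)}^{\mcost}\le|\statesM|\cdot M/\pmin^{|\statesM|}$ gives a bound of the claimed shape, and taking the maximum over $s\in\statesM$ proves the lemma. The main obstacle is entirely in the bookkeeping of the exponents: the naive count above charges a factor $\e^{-\lambda\dmax}$ and a factor $(c/|\allact|)^{|\allact|}$ for every one of the $|\statesM|$ steps, whereas the stated bound charges $\e^{-\lambda\dmax}$ only $|\states\setminus\allact|$ times and the geometric factor only to the power $|\states|$. Matching this requires a sharper accounting of the witnessing path into $\goalStates$ — splitting its steps according to whether each departs from a state of $\allact$ (which carries the Poisson/fixed-delay factors) or from a state of $\states\setminus\allact$ (which carries only $\minPst$), and bounding the number of steps of each kind by the number of distinct states of the corresponding type reachable before $\goalStates$. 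The one-step cost estimate and the Bernoulli/geometric argument themselves are routine.
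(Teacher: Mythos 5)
Your decomposition is the paper's own: the identical one-step cost bound $(\max\{1/\lambda,\;\dmax\cdot\lambda+\dmax\}+1)\cdot\maxRew$ (split into $(1+1/\lambda)\maxRew$ for $s\notin\allact$ and Poisson-mean impulse plus $\dmax\cdot\maxRew$ rate cost for $s\in\allact$), the identical uniformization argument on the subordinated chain giving $\mtran(s,\timeouts(s))(s')\ge \e^{-\lambda\dmax}\bigl(\minPst\cdot\min\{1,\lambda\dmin\}/|\allact|\bigr)^{|\allact|}$ for every positive entry (this is exactly the paper's Lemma~\ref{lem:bounded-prob}), and the identical Bernoulli-trials estimate bounding $\CostBound{\mcost,\timeouts}$ by $|\statesM|$ times the maximal one-step cost divided by the minimal probability of reaching $\goalStates$ along a bounded-length block of big steps.

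The only place you and the paper part ways is the step-counting, and you flag it yourself: raising the per-step bound to the power $|\statesM|$ yields $\e^{\lambda\dmax\cdot|\statesM|}$ and a geometric factor with exponent $|\allact|\cdot|\statesM|$, which does not match the stated $\e^{\lambda\dmax\cdot|\states\setminus\allact|}$ and exponent $|\states|$, and your \emph{sharper accounting} is left as a sketch. Measured against the literal constant in the statement this is a gap; but it is precisely the point where the paper's own proof is loose: Lemma~\ref{lem:bounded-prob} simply asserts that $\goalStates$ is reached within $|\states\setminus\allact|$ big steps (without justifying that particular count), and the final rewriting of the resulting exponent $|\allact|\cdot|\states\setminus\allact|$ as $|\states|$ is presented as an equality although it does not hold in general. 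Note also that your sketched refinement (charging $\e^{-\lambda\dmax}$ and the $(\cdot)^{|\allact|}$ factor only to steps leaving $S^{reset}$, and $\minPst$ to the remaining steps of a simple path) would still give exponent $|\allact|\cdot|S^{reset}|+|\states\setminus\allact|$ rather than $|\states|$. So the honest output of your argument is a bound of the same shape, still in $\exp((\size{\fdC}\cdot\size{\dmin}\cdot\dmax)^{\mathcal{O}(1)})$, which is all that Proposition~\ref{prop:multi-entry-aprox} actually needs; reproducing the paper's exact displayed constant would require adopting its $|\states\setminus\allact|$-step claim as stated.
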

We first prove that there is a lower bound to reach a target from any state.

\begin{lemma} \label{lem:bounded-prob}
Let $\contMdp$ be DTMDP and $\dmin, \dmax >0$ be lower, upper be bound on delay function, respectively. Let $\timeouts$ be any delay function such that for each state $\sta \in \statesM$ it holds that $\dmin \leq \timeouts(\sta) \leq \dmax$. 
Then for all $\sta \in \statesM$ it holds that
$$\probm_{\mdp(\timeouts)}(\text{reach $\goalStates$ from $\sta$ in $|\states \setminus \allact|$ steps}) \geq \left(  \Big ( \frac{\minPst}{|\allact|} \Big )^{|\allact|} \cdot \e^{-\lambda \dmax} \cdot \min \{ 1, (\lambda \cdot \dmin)^{|\allact|} \} \right) ^{|\states \setminus \allact|} $$
\end{lemma}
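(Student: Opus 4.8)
The plan is to bound the reaching probability from below by exhibiting, from every state, a short path in $\contMdp$ leading to $\goalStates$ whose every edge carries probability at least the claimed per-step quantity $q := \bigl(\tfrac{\minPst}{|\allact|}\bigr)^{|\allact|}\cdot e^{-\lambda\dmax}\cdot\min\{1,(\lambda\dmin)^{|\allact|}\}$. First I would note that which edges of $\contMdp$ carry positive probability (i.e.\ for which $s'$ we have $\mtran(s,\timeouts(s))(s')>0$) does not depend on the concrete delay function, as long as all delays lie in $(0,\infty)$: a single big step of $\contMdp$ is either one exponential transition (when $s\notin\allact$), or a chain of exponential transitions through $S^{keep}$ possibly capped by one fixed-delay transition (when $s\in\allact$), and every such finite behaviour has positive probability for every $d>0$. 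Hence reachability of $\goalStates$ in $\contMdp$ is purely graph-theoretic, and together with the standing assumption $\Value{\fdC,\paramspacePOShort}<\infty$ and (the obvious generalisation of) Lemma~\ref{lem:finite}, $\goalStates$ is reachable from every (reachable) state. Taking a shortest such path $s=s_0,s_1,\dots,s_m$ with $s_m\in\goalStates$, a shortest-path argument in the underlying graph bounds $m\le|\states\setminus\allact|$ (counting as in the proof of Lemma~\ref{lem:bound-value}), and since $q<1$ it suffices to prove $\mtran(s_i,\timeouts(s_i))(s_{i+1})\ge q$ for every $i<m$; multiplying then gives the lemma.

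For the per-step bound I distinguish two cases. If $s_i\notin\allact$ then $\mtran(s_i,\infty)(s_{i+1})=\prob(s_i,s_{i+1})\ge\minPst\ge q$, because $(\minPst/|\allact|)^{|\allact|}\le\minPst$ and the remaining factors of $q$ are at most $1$. If $s_i\in\allact$ (so $s_i\in S^{reset}$) write $d:=\timeouts(s_i)\in[\dmin,\dmax]$. Since $\mtran(s_i,d)(s_{i+1})>0$, there is a witnessing behaviour of $\fdC[s_i](d)$ reaching $s_{i+1}$ as the first state of $\brambory$, of one of two shapes: (a) a simple path of $\ell$ exponential transitions inside $\{s_i\}\cup S^{keep}$, all occurring before time $d$, ending by an exponential step into $\states\setminus\allact$ (or $\goalStates$); or (b) a simple path of $\ell$ exponential transitions inside $\{s_i\}\cup S^{keep}$ before time $d$, followed by the alarm ringing and one fixed-delay transition into $S^{reset}$. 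In both shapes $\ell\le|\allact|$, and selecting the corresponding single summand of the Poisson expansion of $\mtran(s_i,d)$ from Lemma~\ref{lem:trans_cost_form} (and using independence of the embedded jump chain and the jump times in the subordinated CTMC) yields $\mtran(s_i,d)(s_{i+1})\ge\psi_{\lambda d}(\ell)\cdot\minPst^{\ell'}$, where $\ell'\le|\allact|$ is the number of branching choices taken ($\ell'=\ell$ in shape (a), $\ell'=\ell+1\le|\allact|$ in shape (b), and $\psi_{\lambda d}$ is the Poisson pmf, whose factor $e^{-\lambda d}$ also encodes ``no exponential transition until the alarm'' in shape (b)).

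It then remains to bound $\psi_{\lambda d}(\ell)\cdot\minPst^{\ell'}=e^{-\lambda d}\tfrac{(\lambda d)^{\ell}}{\ell!}\minPst^{\ell'}$ from below uniformly over $d\in[\dmin,\dmax]$ and $0\le\ell,\ell'\le|\allact|$. I would use $\minPst^{\ell'}\ge\minPst^{|\allact|}$ (as $\minPst\le1$), $e^{-\lambda d}\ge e^{-\lambda\dmax}$, $\ell!\le|\allact|^{|\allact|}$, and $(\lambda d)^{\ell}\ge\min\{1,(\lambda\dmin)^{|\allact|}\}$: if $\lambda d\ge1$ then $(\lambda d)^{\ell}\ge1$; if $\lambda d<1$ then $\lambda\dmin\le\lambda d<1$ and $(\lambda d)^{\ell}\ge(\lambda d)^{|\allact|}\ge(\lambda\dmin)^{|\allact|}$ since $\ell\le|\allact|$. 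Combining the four estimates gives $\psi_{\lambda d}(\ell)\minPst^{\ell'}\ge(\minPst/|\allact|)^{|\allact|}e^{-\lambda\dmax}\min\{1,(\lambda\dmin)^{|\allact|}\}=q$, as needed; multiplying over the at most $|\states\setminus\allact|$ steps of the chosen path finishes the proof.

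The step I expect to be the main obstacle is the per-step estimate when $s_i\in\allact$: one must correctly identify the witnessing behaviour of the subordinated CTMC (the two shapes above), keep both the Poisson index $\ell$ and the exponent $\ell'$ of $\minPst$ bounded by $|\allact|$ (which relies on the witness path being simple and confined to $\{s_i\}\cup S^{keep}\subseteq\allact$ before either exiting or triggering the alarm), and then make the Poisson mass survive the worst case over the whole range $d\in[\dmin,\dmax]$ --- this is precisely where the unusual factor $\min\{1,(\lambda\dmin)^{|\allact|}\}$ arises, through the split on whether $\lambda d\ge1$. The reachability part and the case $s_i\notin\allact$ are routine once one observes that the qualitative transition structure of $\contMdp$ is independent of the delays.
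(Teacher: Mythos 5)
Your proof follows essentially the same route as the paper's: lower-bound each one-step probability of $\contMdp$ by picking a single term of the Poisson/uniformization expansion of the subordinated chain corresponding to a witness path of length at most $|\allact|$, then use $e^{-\lambda d}\ge e^{-\lambda\dmax}$, $\minPst^{\ell'}\ge\minPst^{|\allact|}$, $\ell!\le|\allact|^{|\allact|}$ and the case split on $\lambda d\gtrless 1$ to get the factor $\min\{1,(\lambda\dmin)^{|\allact|}\}$, and finally raise the per-step bound to the power $|\states\setminus\allact|$. The only cosmetic difference is that you phrase the witness via explicit run shapes (exit by an exponential step vs.\ the alarm plus one fixed-delay transition) where the paper argues with powers $\prob'^{i}$ of the subordinated chain, and you are slightly more explicit about delay-independence of the qualitative structure and the existence of a short path to $\goalStates$, which the paper simply asserts.
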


\begin{proof}
We first bound the minimal branching probability of $\mtran(s,|\allact|/\lambda)$. We use again uniformization method to compute the transient analysis of $\fdC'[\sta](\timeouts(\sta))$ to evaluate $\mtran(s,|\allact|/\lambda)$. For all $\sta \in \statesM \cap \allact$ and $\sta' \in \statesM$ that is reachable from $\overline{\sta}$ in $\fdC'[\sta](\timeouts(\sta))$ ($\overline{\sta}$ is artificial initial state in $\fdC'[\sta](\timeouts(\sta))$ corresponding to state $\sta$) it holds that
\begin{align}
\mtran(s,\timeouts(\sta))(s') &= \e^{-\lambda \timeouts(\sta)} \sum_{i=0}^{\infty} \prob'^{i}(\overline{\sta},\sta') \cdot \frac{(\lambda \cdot \timeouts(\sta))^{i}}{i!} \nonumber \\
&\geq \e^{-\lambda \timeouts(\sta)} \sum_{i=0}^{|\allact|} \prob'^{i}(\overline{\sta},\sta') \cdot \frac{(\lambda \cdot \timeouts(\sta))^{i}}{i!} \nonumber \\
&\geq \e^{-\lambda \dmax} \sum_{i=0}^{|\allact|} \prob'^{i}(\overline{\sta},\sta') \cdot \frac{(\lambda \cdot \timeouts(\sta))^{i}}{i!} \nonumber \\
&\geq \e^{-\lambda \dmax} \min_{i \in \{0, \ldots, |\allact| \} } \minPst^{|\allact|} \cdot \frac{(\lambda \cdot \timeouts(\sta))^{i}}{i^i} \label{eq:minimalNumber} \\
&\geq \e^{-\lambda \dmax} \min_{i \in \{0, \ldots, |\allact| \} } \minPst^{|\allact|} \cdot \frac{(\lambda \cdot \timeouts(\sta))^{i}}{|\allact|^{|\allact|}} \nonumber \\
&= \Big ( \frac{\minPst}{|\allact|} \Big )^{|\allact|} \cdot \e^{-\lambda \dmax} \cdot \min_{i \in [0,|\allact|]}(\lambda \cdot \dmin)^{i} \nonumber \\
&= \Big ( \frac{\minPst}{|\allact|} \Big )^{|\allact|} \cdot \e^{-\lambda \dmax} \cdot \min \{ 1, (\lambda \cdot \dmin)^{|\allact|} \} \nonumber,
\end{align}
where \eqref{eq:minimalNumber} holds because the minimal positive branching probability in $\fdC'[\sta](\timeouts(\sta))$ is $\minPst$ and each state, if it is reachable in $\fdC'[\sta](\timeouts(\sta))$ from $\overline{\sta}$, it has to be reached by path of length at most $|\allact|$. Thus the minimum discrete probability to reach any reachable state is at least $\minPst^{|\allact|}$. This probability is at least once multiplied by a factor $\frac{(\lambda \cdot \timeouts(\sta))^{i}}{i^i}$ for appropriate~$i$, thus we pick the smallest factor to get the correct lower bound on probability of moving from $\statesM \cap \allact$. The minimal positive probability of moving from $\statesM \setminus \allact$ is simply $\minPst$ from definition of $\mtran(s,|\allact|/\lambda)$. Since $ \Big ( \frac{\minPst}{|\allact|} \Big )^{|\allact|} \cdot \e^{-\lambda \dmax} \cdot \min \{ 1, (\lambda \cdot \dmin)^{|\allact|} \} \leq \minPst$ it holds that for all $\sta, \sta' \in \statesM$ such that $\mtran(s,|\allact|/\lambda)(\sta') > 0$ implies 
$$\mtran(s,|\allact|/\lambda)(\sta') \geq  \Big ( \frac{\minPst}{|\allact|} \Big )^{|\allact|} \cdot \e^{-\lambda \dmax} \cdot \min \{ 1, (\lambda \cdot \dmin)^{|\allact|} \}.$$
Now we provide a lower bound on probability of reaching $\goalStates$ from any $\sta \in \statesM$:
\begin{align*}
\probm_{\mdp(\timeouts)}(\text{reach $\goalStates$ from $\sta$ in $|\states \setminus \allact|$ steps}) &\geq (\min \; \{ \mtran(\sta',|\allact|/\lambda)(\sta'') >0 \mid \sta',\sta'' \in \statesM \}\;)^{|\states \setminus \allact|} \\
&\geq \left(  \Big ( \frac{\minPst}{|\allact|} \Big )^{|\allact|} \cdot \e^{-\lambda \dmax} \cdot \min \{ 1, (\lambda \cdot \dmin)^{|\allact|} \} \right) ^{|\states \setminus \allact|}.
\end{align*}
\end{proof}

Now we are ready to prove Lemma~\ref{lem:bound-value}.

\begin{proof}
We choose an arbitrary delay function $\timeouts$, such that for all $\sta \in \statesM$ holds $\timeouts(\sta) =\delayBound$ for some $\dmin \leq \delayBound \leq \dmax$. We compute the overestimation of expected cost until reaching goal states from any state $\sta \in \statesM$.  During the computation we substitute $\dmin$ or $\dmax$ for $\delayBound$ to get the correct upper bound. 

We first bound the $\mcost(\sta,\timeouts(\sta))$ for each state $\sta \in \statesM$. Obviously for $\sta \in \statesM \setminus \allact$ the expected impulse cost paid in one step is $\maxRew$ and the expected rate cost paid in one step is $\maxRew/\lambda$, thus together $\mcost(\sta,\timeouts(\sta)) \leq (1+1/\lambda) \cdot \maxRew$. For state $\sta \in \statesM \cap \allact$ the expected number of exponential steps until reaching $\statesM$ in time $\timeouts(\sta) = \delayBound$ is $ \lambda \cdot \delayBound \leq \lambda \cdot \dmax$ (employing mean of Poisson distribution). It is possible that one extra step is taken due to fixed-delay transition, thus the expected impulse cost is bounded by $(\lambda \cdot \dmax +1 ) \cdot \maxRew$. Finally the expected rate cost paid until state from $\statesM$ is reached from any $\statesM \cap \allact$ can be bounded from above by $\timeouts(\sta) \cdot \maxRew \leq \dmax \cdot \maxRew$, what comprises of the maximum time $\dmax$ until $\statesM$ are reached (from definition of $\mcost$ after firing of fixed-delay transition a state from $\statesM$ is reached), and maximum rate cost bounded by $\maxRew$. Altogether we have:
\begin{align*}
\mcost(\sta,\timeouts(\sta)) &\leq \max \{ (1+1/\lambda) \cdot \maxRew, (\dmax \cdot \lambda+1) \cdot \maxRew + \dmax \cdot \maxRew\} \\
&\leq (\max\{\ 1/\lambda, \dmax \cdot \lambda + \dmax \} + 1) \cdot \maxRew.
\end{align*}

Now we bound the expected cost using Bernoulli trials, i.e. for all $\sta \in \statesM$ it holds that
\begin{align*}
&\expred_{\contMdp[s](\timeouts)}^{\mcost} \leq \\
&\leq \sum_{i=i}^{\infty} i \cdot |\statesM| \cdot \max_{\sta'' \in \statesM} \mcost(\sta'',\timeouts(\sta'')) \cdot (1 - pr)^{i-1} \cdot pr \\
&= \frac{|\statesM| \cdot \max_{\sta'' \in \statesM} \mcost(\sta'',\timeouts(\sta''))} {pr}\\
&\leq \frac{|\statesM| \cdot (\max\{\ 1/\lambda, \dmax \cdot \lambda + \dmax \} + 1) \cdot \maxRew}{\left(  \Big ( \frac{\minPst}{|\allact|} \Big )^{|\allact|} \cdot \e^{-\lambda \dmax} \cdot \min \{ 1, (\lambda \cdot \dmin)^{|\allact|} \} \right) ^{|\states \setminus \allact|}}\\
&= \left( \frac{|\allact|}{\minPst \cdot \min \{1, \lambda \cdot \dmin \}} \right)^{|\states|} \cdot \e^{\lambda \dmax \cdot |\states \setminus \allact|} \cdot (\max\{ 1/\lambda, \; \dmax \cdot \lambda + \dmax \} + 1) \cdot |\statesM| \cdot \maxRew,
\end{align*}
where using Lemma~\ref{lem:bounded-prob}
\begin{align*}
pr &= \min_{\sta'' \in \statesM}\probm_{\mdp(\timeouts)}(\text{reach $\goalStates$ from $\sta''$ in $|\states \setminus \allact|$ steps}) \\
&\geq \left(  \Big ( \frac{\minPst}{|\allact|} \Big )^{|\allact|} \cdot \e^{-\lambda \dmax} \cdot \min \{ 1, (\lambda \cdot \dmin)^{|\allact|} \} \right) ^{|\states \setminus \allact|}.
\end{align*}

\end{proof}

\begin{lemma}\label{lem:cost-bound}
Let $\timeouts $ be a delay function such that for each $\sta \in \statesM$ it holds that $\dmin \leq \timeouts(\sta) \leq \dmax$. It holds that  $\CostBound{\msteps,\timeouts} \leq \left( \frac{|\allact|}{\minPst \cdot \min \{1, \lambda \cdot \dmin \}} \right)^{|\states|} \cdot \e^{\lambda \dmax \cdot |\states \setminus \allact|} \cdot |\statesM|$.
\end{lemma}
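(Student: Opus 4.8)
The plan is to reuse, almost verbatim, the estimate by which $\CostBound{\mcost,\timeouts}$ is bounded in the proof of Lemma~\ref{lem:value-bound} just above, specialised to the step-counting cost $\msteps$ which assigns $1$ to every action in every state. Recall that $\CostBound{\msteps,\timeouts}=\max_{\sta\in\statesM}\expred^{\msteps}_{\contMdp[\sta](\timeouts)}$ is precisely the worst-case expected number of steps of $\contMdp$ before $\goalStates$ is hit under the memoryless strategy $\timeouts$. Since every step contributes exactly $1$ to $\msteps$, the block argument below is strictly simpler than the one for $\mcost$: it never needs the factor $\max_{\sta}\mcost(\sta,\timeouts(\sta))$ that occurred there.

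First I would invoke Lemma~\ref{lem:bounded-prob}. Because $\dmin\leq\timeouts(\sta)\leq\dmax$ holds for every $\sta$, from every state the set $\goalStates$ is reached within $|\states\setminus\allact|\leq|\statesM|$ steps with probability at least
\[
pr:=\Big(\big(\tfrac{\minPst}{|\allact|}\big)^{|\allact|}\cdot\e^{-\lambda\dmax}\cdot\min\{1,(\lambda\dmin)^{|\allact|}\}\Big)^{|\states\setminus\allact|},
\]
hence a fortiori within $|\statesM|$ steps with probability at least $pr$. (If $\allact=\emptyset$ the statement is trivial, as $\contMdp$ is then a finite Markov chain.)

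Next I would chop a run of $\contMdp(\timeouts)$ into consecutive blocks of $|\statesM|$ steps. By the Markov property of the DTMDP --- the distribution of the next state depends only on the current state and the fixed action $\timeouts(\sta)$ --- conditioning on $\goalStates$ not having been visited yet, each such block reaches $\goalStates$ with probability at least $pr$; hence the number of blocks needed before absorption is stochastically dominated by a geometric random variable of parameter $pr$. A Bernoulli-trials sum then gives, for every initial state $\sta$,
\[
\expred^{\msteps}_{\contMdp[\sta](\timeouts)}\ \leq\ \sum_{i\geq 1} i\cdot|\statesM|\cdot(1-pr)^{i-1}\cdot pr\ =\ \frac{|\statesM|}{pr},
\]
so that $\CostBound{\msteps,\timeouts}\leq|\statesM|/pr$.

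Finally I would unfold $1/pr$: using $\min\{1,(\lambda\dmin)^{|\allact|}\}=(\min\{1,\lambda\dmin\})^{|\allact|}$ one obtains
\[
\frac{1}{pr}=\Big(\tfrac{|\allact|}{\minPst\cdot\min\{1,\lambda\dmin\}}\Big)^{|\allact|\cdot|\states\setminus\allact|}\cdot\e^{\lambda\dmax\cdot|\states\setminus\allact|},
\]
and since $\minPst\leq 1$, $\min\{1,\lambda\dmin\}\leq 1$ and $|\allact|\geq 1$, the base is $\geq 1$, so the exponent $|\allact|\cdot|\states\setminus\allact|$ may be enlarged to $|\states|$ (the precise exponent is immaterial: even $|\states|^2$ keeps the whole quantity in $\exp(\size{\fdC}^{\mathcal{O}(1)})$, and it only feeds into the constant $\POconst$). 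Multiplying by $|\statesM|$ gives the claimed inequality. I expect no genuine obstacle here; the only point requiring a little care is this last piece of exponent bookkeeping, while the one substantive ingredient --- the uniform geometric lower bound $pr$ on the hitting probability of $\goalStates$ --- is already supplied by Lemma~\ref{lem:bounded-prob}.
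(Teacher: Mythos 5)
Your proposal is correct and is essentially the paper's own proof: the same appeal to Lemma~\ref{lem:bounded-prob} for the uniform hitting-probability lower bound $pr$, the same Bernoulli-trials/geometric-sum estimate giving $\CostBound{\msteps,\timeouts}\leq |\statesM|/pr$, and the same final rewriting of $1/pr$. Your closing remark about enlarging the exponent $|\allact|\cdot|\states\setminus\allact|$ to $|\states|$ is the same bit of loose bookkeeping the paper performs silently (it even writes it as an equality); as you note, this only affects the constant $\POconst$ and not its membership in $\exp((\size{\fdC}\cdot\size{\dmin}\cdot\dmax)^{\mathcal{O}(1)})$.
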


\begin{proof}
We use Bernoulli trials and Lemma~\ref{lem:bounded-prob} to bound the expected number of steps from any state to a goal state, i.e. for all $\sta \in \statesM$ it holds that
\begin{align*}
&\expred_{\contMdp[s](\timeouts)}^{\msteps} \leq \\
&\leq \sum_{i=i}^{\infty} i \cdot |\statesM| \cdot (1 - \min_{\sta'' \in \statesM}\probm_{\mdp(\timeouts)}(\text{reach $\goalStates$ from $\sta''$ in $|\states \setminus \allact|$ steps}))^{i-1} \\
&\quad\cdot \min_{\sta'' \in \statesM}\probm_{\mdp(\timeouts)}(\text{reach $\goalStates$ from $\sta''$ in $|\states \setminus \allact|$ steps}) \\
&= \frac{|\statesM|} {\min_{\sta'' \in \statesM}\probm_{\mdp(\timeouts)}(\text{reach $\goalStates$ from $\sta''$ in $|\states \setminus \allact|$ steps})}\\
&\leq \frac{|\statesM|}{\left(  \Big ( \frac{\minPst}{|\allact|} \Big )^{|\allact|} \cdot \e^{-\lambda \dmax} \cdot \min \{ 1, (\lambda \cdot \dmin)^{|\allact|} \} \right) ^{|\states \setminus \allact|}}\\
&= \left( \frac{|\allact|}{\minPst \cdot \min \{1, \lambda \cdot \dmin \}} \right)^{|\states|} \cdot \e^{\lambda \dmax \cdot |\states \setminus \allact|} \cdot |\statesM|.
\end{align*}
\end{proof}

Now we have everything needed to prove Lemma~\ref{lem:mesh-delays-ok}

\begin{reflemma}	{lem:mesh-delays-ok}
Let 
\[
\alpha \; := \; \frac{\eps}{8 \cdot \CostBound{\mcost,\timeouts} \cdot \CostBound{\msteps,\timeouts} \cdot |\statesM|}, \qquad \delta \; :=\; \alpha/\discconst,
\]
where $\discconst$ and is from Lemma~\ref{lem:mesh-error}. Then 
$$ \left\lvert \; \Value{\contMdp} - \Value{\contMdp,\paramspace(\delta,\dmax)} \; \right\rvert 
\;\; \leq \;\; \frac{\eps}{2}.$$
\end{reflemma}

\begin{proof}
Please observe that Lemma~\ref{lem:perturbation-error} and Lemma~\ref{lem:mesh-error} from Section~\ref{sec:results-single} hold also for bounded optimization under partial observation. Moreover Lemma~\ref{lem:mesh-error} is more general than we need here as we can use upper given bound $\dmax$. Now we just insert equations for $\alpha$, $\CostBound{\mcost,\timeouts}$, and $\CostBound{\msteps,\timeouts}$ from Lemma~\ref{lem:value-bound} and Lemma~\ref{lem:cost-bound} to the the statement of Lemma~\ref{lem:perturbation-error}:
\begin{align*}
&2\cdot \alpha\cdot \CostBound{\msteps,\timeouts} \cdot (1 + \CostBound{\mcost,\timeouts}\cdot |\brambory|) \leq \\
&\leq 2 \cdot \frac{\eps}{8 \cdot \CostBound{\mcost,\timeouts} \cdot \CostBound{\msteps,\timeouts} \cdot |\statesM|} \cdot \CostBound{\msteps,\timeouts} \cdot (1 + \CostBound{\mcost,\timeouts}\cdot |\brambory|) \\
&\leq \frac{\eps}{4 \cdot \CostBound{\mcost,\timeouts} \cdot |\statesM|} \cdot 2 \cdot \CostBound{\mcost,\timeouts}\cdot |\brambory| \\
&\leq \frac{\eps}{2}.
\end{align*}
Note that for $\POconst$ from Proposition~\ref{prop:multi-entry-aprox} holds
\begin{align*}
\POconst &= 8 \cdot \CostBound{\mcost,\timeouts} \cdot \CostBound{\msteps,\timeouts} \cdot |\statesM| \\
&= 8 \cdot \left( \left( \frac{|\allact|}{\minPst \cdot \min \{1, \lambda \cdot \delayBound \}} \right)^{|\states|} \cdot \e^{\lambda \dmax \cdot |\states \setminus \allact|} \right)^2 \cdot |\statesM|^3 \cdot (\max\{ 1/\lambda, \; \dmax \cdot \lambda + \dmax \} + 1) \cdot \maxRew.
\end{align*}
Obviously $\POconst \in \exp((\size{\fdC}\cdot\size{\dmin}\cdot \dmax)^{\mathcal{O}(1)}).$
\end{proof}

\subsection{Proof of Lemma~\ref{lem:rounding}}

\begin{reflemma}{lem:rounding}
 Let $\eps>0$ and fix $\kappa=(\eps/8 \cdot \CostBound{\mcost,\timeouts} \cdot \CostBound{\msteps,\timeouts} \cdot |\statesM|)^2$. Then it holds that
	$$ \left\lvert \; \Value{\contMdp,\paramspace(\delta,\dmin, \dmax, \equiv)} - \Value{\matchmdp} \; \right\rvert 
	\;\; \leq \;\; \frac{\eps}{2}.$$
\end{reflemma}

\begin{proof}
We proceed similarly to proof of Proposition~\ref{prop:rounding-error}.  The only difference is that from Lemma~\ref{lem:value-bound} and Lemma~\ref{lem:cost-bound} we have bounds for arbitrary delay functions $\timeouts, \timeouts_\kappa \in \paramspace(\dmin, \dmax, \equiv)$, thus the argumentation is much simpler. We show only equation \eqref{eq:final-1} (for equation \eqref{eq:final-2} the argumentation is symmetric):
\begin{align*}
E_1 &= 2 \cdot \kappa \cdot\CostBound{\msteps,\timeouts}(1+\CostBound{\mcost,\timeouts}\cdot|\states'|)\\
&\leq 2 \cdot \left( \frac{\eps}{8 \cdot \CostBound{\mcost,\timeouts} \cdot \CostBound{\msteps,\timeouts} \cdot |\statesM|} \right)^2 \cdot\CostBound{\msteps,\timeouts}(1+\CostBound{\mcost,\timeouts}\cdot|\states'|)\\
&\leq 2 \cdot \frac{\eps}{16 \cdot \CostBound{\mcost,\timeouts} \cdot \CostBound{\msteps,\timeouts} \cdot |\statesM|} \cdot \CostBound{\msteps,\timeouts} \cdot (1 + \CostBound{\mcost,\timeouts}\cdot |\brambory|) \\
&\leq \frac{\eps}{8 \cdot \CostBound{\mcost,\timeouts} \cdot |\statesM|} \cdot  2 \cdot \CostBound{\mcost,\timeouts}\cdot |\brambory| \\
&\leq \frac{\eps}{4}.
\end{align*}
Note that for $\POconst$ from Proposition~\ref{prop:multi-entry-aprox} holds
\begin{align*}
\POconst &= 8 \cdot \CostBound{\mcost,\timeouts} \cdot \CostBound{\msteps,\timeouts} \cdot |\statesM| \\
&= 8 \cdot \left( \left( \frac{|\allact|}{\minPst \cdot \min \{1, \lambda \cdot \delayBound \}} \right)^{|\states|} \cdot \e^{\lambda \dmax \cdot |\states \setminus \allact|} \right)^2 \cdot |\statesM|^3 \cdot (\max\{ 1/\lambda, \; \dmax \cdot \lambda + \dmax \} + 1) \cdot \maxRew.
\end{align*}
Obviously $\POconst \in \exp((\size{\fdC}\cdot\size{\dmin}\cdot \dmax)^{\mathcal{O}(1)}).$
\end{proof}

\subsection{Proof of Correctness of the NP-Algorithm in Unary Case} \label{subsec:correctness}
Let us show that the presented algorithm is correct for the problem in Theorem~\ref{thm:np-complete}.
 First, assume that $\Value{\fdC,\paramspacePOShort} < \computedValue - \eps$. The optimal (MD) strategy $\timeouts$ in $\matchmdp_\paramspacePOShort$ from the set $\paramspacePO$ hence satisfies $\expred_{\matchmdp_\paramspacePOShort(\timeouts)} < \computedValue$, due to Proposition~\ref{prop:multi-entry-aprox}. The algorithm accepts as it non-deterministically guesses $\timeouts$. Similarly if $\Value{\fdC,\paramspacePOShort} > \computedValue + \eps$, no strategy from the set $\paramspacePOShort$ guarantees in $\matchmdp_\paramspacePOShort$ expected total cost $\leq \computedValue$, hence the algorithm does not accept. This concludes the proof that the problem from Theorem~\ref{thm:np-complete} is in NP.

 \subsection{Proof of Proposition~\ref{prop:np-hardness} (NP Hardness)}
 
 \begin{refproposition}{prop:np-hardness}
 	For a formula $\varphi$ in CNF with $k$ literals, $\fdC_\varphi$ and $\costFdC_\varphi$ are constructed in time polynomial in $k$ and, furthermore,
 	$$\Value{\fdC_\varphi,\paramspacePOShort} < 17 k^2 \;\; \text{if $\varphi$ is satisfiable \;\;\; and} \;\;\;\; \Value{\fdC_\varphi,\paramspacePOShort} > 17k^2 + 1 \text{, otherwise.}$$
 \end{refproposition}
 
 \begin{proof}
 	If $\varphi$ is satisfiable, let $\nu$ denote the satisfying truth assignment. Recall that we set $\dmin = 0.01$ and $\dmax = 16k$. We set $\timeouts(\initstate) := \dmin$ and $\timeouts(s_{i,j}^0) := \dmax$ if $\nu(X) = 1$ and $\timeouts(s_{i,j}^0) := \dmin$ if $\nu(X) = 1$, where $X$ is the variable of the literal $l_{i,j}$ and arbitrarily in other states.
 	In $\fdC_\varphi(\timeouts)$ in the component for any TRUE literal $l_{i,j}$, i.e. with $\nu(l_{i,1}) = 1$, the goal is reached from $s_{i,j}^0$ with probability $>0.99$ before leaving the component. Indeed, the probability to take no exponential transition within time $0.01$ is $>0.99$ and the probability to take at least $8k$ exponential transitions within time $16k$ is $>0.99$ for any $k\in\Nset$.
 	As each clause $\varphi_i$ has at most $k$ literals and at least one TRUE literal, the expected cost incurred in the component for $\varphi_i$ is at most $(16k\cdot k)/0.99 < 17k^2$.
 	
 	As regards the other implication, let $\varphi$ be not satisfiable such that $k \geq 7$ (note that we can assume it as when fixing the number of literals, the SAT problem becomes polynomial) and let $\timeouts \in \paramspacePO$. We need to show that $\expred_{\fdC_\varphi(\timeouts)} > 17k^2+1$. Based on $\timeouts$, we construct a truth assignment $\nu$ such that $\nu(X) = 0$ iff for all literals $l_{i,j}$ with variable $X$, we have $\timeouts(s_{i,j}) \leq k$. Since $\varphi$ is unsatisfiable, there must be at least one clause $i$ that is not satisfied w.r.t. $\nu$. It suffices to show that for the initial state of the component of the $i$-th clause we have $E_i := \expred_{\fdC_\varphi[s^0_{i,1}](\timeouts)} > k\cdot(17k^2+1)$. Indeed, even for the unrealistic worst case that the expected cost in components of other clauses is $0$, the overall expected cost is still $> 17k^2+1$.
 	
 	We know that all negative literals have the delay above $k\geq 7$ and all the positive literals have the delay below $k$. Let us assume the worst case that all the delays are $k$. 
 	With delay equal $k$, let us by $p_{pos}$ and $p_{neg}$ denote the probabilities that from the initial state of a component for a positive and negative literal, respectively, the goal is reached before leaving the component. It is easy show that for all $k \geq 7$:
 	$$
 	p_{pos} := 1-e^{-k}\cdot \sum_{n=0}^{8k-1}\frac{k^n}{n!} \;\;\leq\;\; \frac{1}{17k^2 +1} \text{, and}
 	\qquad\qquad
 	p_{neg} := 
 	e^{-k}
 	\;\;\leq\;\; \frac{1}{17k^2 +1}.
 	$$
 	Based on this observation, we can under-approximate $E_i$ by 
 	$$E_i 
 	\;\geq\; 
 	\frac{k}{\frac{1}{17k^2+1}}
 	\;= \;
 	k\cdot(17k^2+1).$$ \qed
 \end{proof}

\end{document}